\documentclass{article}
\usepackage{graphicx} 
\usepackage{waingarten}
\title{Locally Approximating the Top Eigenvector \\
of Bounded Entry Matrices}

    \author { 
      Nicolas Menand\thanks{University of Pennsylvania. Supported by the National Science Foundation (NSF) under Grant No. CCF-2337993, as well as No. CCF-2045128. \texttt{nmenand@cis.upenn.edu}.}
      \and
      Erik Waingarten\thanks{University of Pennsylvania. Supported by the National Science Foundation (NSF) under Grant No. CCF-2337993. \texttt{ewaingar@seas.upenn.edu}. }
    }

\date{}
\begin{document}
\maketitle

\begin{abstract}
We provide a local computation algorithm to approximate the top eigenvector $\bx \in \R^n$ of a symmetric matrix $A \in \R^{n \times n}$ with entries between $-1$ and $1$, building on the work of Swartworth and Woodruff~\cite{SW25} who show how to approximate the eigenvalues up to additive-$\eps n$ error using $\tilde{O}(1/\eps^4)$ queries. 

Our local computation algorithm has a preprocessing complexity of \smash{$\tilde{O}(1/\eps^4)$} and per-coordinate query complexity of \smash{$\tilde{O}(1/\eps^2)$} for an additive-$\eps n$ approximation whenever \smash{$|\lambda_{\min}(A)| = O(\lambda_{\max}(A))$}. When \smash{$\lambda_{\min}(A)$} greatly exceeds \smash{$\lambda_{\max}(A)$}, our complexity degrades to at most \smash{$\tilde{O}(1/\eps^{6.\ol{6}})$} in preprocessing and \smash{$\tilde{O}(1/\eps^{3.\ol{3}})$} per query. Furthermore, we show a lower bound of $\Omega(n/\eps^2)$ on the total number of queries needed to output an approximately top eigenvector (implying that the per-coordinate query complexity of $\Omega(1/\eps^2)$ is necessary).

As an application, we use our algorithm to provide local computation algorithms for the sparsest-cut and max-cut problems in the dense graph model of Goldreich, Goldwasser, Ron~\cite{GGR98}. By accessing the top eigenvectors (of an approximate normalized adjacency), we implement local versions of Cheeger's inequality and Trevisan's algorithm~\cite{L12} to obtain ``square-root-opt'' approximations in polynomial time (as opposed to exponential-in-$\poly(1/\eps)$ time which is incurred in~\cite{GGR98}).
\end{abstract}
\newpage

\begin{singlespace}
\setcounter{tocdepth}{2}
\tableofcontents
\end{singlespace}

\newpage


\section{Introduction}

Computing the top eigenvector of a symmetric matrix is a fundamental algorithmic primitive and often the starting point of spectral algorithms (see~\cite{KV09c} for a survey on spectral algorithms). A canonical such algorithm receives as input a matrix and utilizes a polynomial-time algorithm to find the top (or top-few) eigenvectors and eigenvalues (to high precision)~\cite{GV00}. This computation, while running in polynomial time, must read the entire input matrix and can be prohibitive when the matrix becomes large (or is only implicitly maintained). This has motivated work on sublinear algorithms for approximation problems in numerical linear algebra (e.g., low-rank approximation~\cite{MM17,MW17, BW18, BCW20, MMS26}, property testing of spectral properties~\cite{BLWZ19, BCJ20}, and eigenvalue approximation~\cite{AN13, SW23, BDDMR24, SW25}, among many others).

We work in the \emph{bounded-entry model}~\cite{BLWZ19}, where algorithms can selectively probe individual entries of an $n\times n$ matrix $A$ satisfying $\|A\|_{\infty} \leq 1$.\footnote{Unconditionally and without the bounded-entry assumption, sublinear-time spectral approximation via queries is often impossible, since simply distinguishing an all-zeros matrix from one containing a single large entry requires $\Omega(n^2)$ queries.} The model not only captures the natural settings of graph adjacency matrices and many well-studied kernel matrices (e.g., Gaussian~\cite{CS17,CKNS20} and smooth~\cite{BCIS18,CKW24} kernels), but also hits a ``sweet spot'' for sublinear algorithms, often enabling query complexity which is independent of the matrix size. For example,~\cite{BDDMR24,SW25} show that, by querying the entries of a (randomly chosen) $\tilde{O}(1/\eps^2) \times \tilde{O}(1/\eps^2)$ principal submatrix of a bounded-entry symmetric matrix, one can estimate all eigenvalues of the matrix up to an additive $\eps n$ error.\footnote{The bounded-entry model fixes the ``scale'' of the problem. All eigenvalues of an $n\times n$ symmetric matrix $A$ with $\|A \|_{\infty} \leq 1$ must lie within $[-n,n]$, and at most $1/\eps^2$ of them can be larger than $\eps n$ in magnitude.} 
\begin{quote}
    Do there exist sublinear algorithms for outputting an approximate top eigenvector $x\in \R^n$ of a bounded-entry symmetric matrix $A$?
\end{quote}
The above question requires a careful notion of ``sublinear time.'' A top eigenvector is the result of a global optimization problem over the entire matrix $A$, i.e., a maximizing $\langle x, A x \rangle / \|x\|_2^2$, and an inherently $n$-dimensional object (requiring at least time $n$ to output). Hence, we design constant-time algorithms in the model of local computation algorithms~\cite{RTVX11, ARVX12}, where the goal is to provide localized oracle access to the individual entries of an approximate top eigenvector (see Definition~\ref{def:lca}). Local computation algorithms have had multiple historical successes (e.g., maximum independent set~\cite{G22} and maximum matching~\cite{BRR23}), and this work brings local computation algorithms into realm of numerical linear algebra. 

The perspective of local computation algorithms directly leads to new sketching questions in numerical linear algebra, as well as new applications in algorithms for the dense graph model~\cite{GGR98}. The key algorithmic challenge (which we further elaborate on in the technical overview) can be succinctly described as follows. In~\cite{SW25}, the main algorithm shows that the maximum eigenvalue (i.e., the maximizing \emph{value} of Rayleigh quotient with $A$) is preserved when ``sketching'' to a constant-size principal submatrix (roughly speaking, by carefully combining structural properties of bounded-entry matrices with matrix Bernstein inequalities). However, matrix Bernstein inequalities, which show that a small $\bt \times \bt$ ``sketched'' matrix behaves like a large $n \times n$ ``original'' matrix with respect to its large eigenvalues, does not give any actionable correspondence for the eigenvectors---indeed, the eigenvectors in the ``original'' and ``sketched'' matrices lie in completely different spaces, $\R^n$ and $\R^{\bt}$ respectively. In this work, the local computation algorithm must explicitly create a correspondence between the ``sketched'' and ``original'' matrix, so that it may lift the top eigenvector found in the sketched space into an approximately top eigenvector in the original space. With this primitive for locally approximating the top eigenvector, we give local versions of spectral algorithms which rely on top eigenvectors (namely, Cheeger's inequality~\cite{AM85} for sparse cuts, and Trevisan's algorithm~\cite{L12} for max-cut), resulting in local partitioning algorithms with a polynomial dependence on the accuracy parameter, albeit ``square-root-opt'' guarantees.\footnote{The prior classic works~\cite{GGR98} obtain approximability guarantees with respect to the optimum (not square-root-opt), but run in time exponential in the accuracy parameter.}

\subsection{Our Contributions}

Our main result is a local computation algorithm which receives query access to a symmetric $n\times n$ matrix $A$ with bounded entries, and can provide query access to an approximately top eigenvector. For the theorem below, recall that a symmetric $n\times n$ matrix $A$ has $n$ real eigenvalues which we arrange as $\lambda_1(A) \geq \lambda_2(A) \geq \dots \geq \lambda_n(A)$, and we assume without loss of generality that $\lambda_{1}(A)$ is a large constant factor larger than $\eps n$,\footnote{Otherwise, we'll show that a random vector satisfies the desired approximation guarantees (see Remark~\ref{app:lambda-max-large}).} and we let $\lambda_{\max}(A) = \lambda_1(A)$ and $\lambda_{\min}(A) = |\lambda_n(A)|$. Our goal is to output a (randomized) vector $\bx \in \R^n$ whose Rayleigh quotient with $A$ is at least $\lambda_{\max}(A) - \eps n$. In other words, the output vector $\bx$ is the certificate for an additive $\eps n$ approximation on the top eigenvalue.

\begin{theorem}[Main Upper Bound--Informal (see Theorem~\ref{thm:main})]\label{thm:main-intro}
    There exists a (randomized) local computation algorithm $\calA$ which receives query access to a symmetric $n\times n$ matrix $A$ with $\|A \|_{\infty} \leq 1$ and an accuracy parameter $\eps > 0$. 
    \begin{itemize}
        \item $\calA$ makes $\tilde{O}(1/\eps^4)$ non-adaptive queries to $A$ during preprocessing, and for any $i \in [n]$, $\tilde{O}(1/\eps^2)$ non-adaptive queries to output $\bx_i \in \R$.\footnote{The algorithm's queries depend on the randomness the algorithm uses and the index $i \in [n]$ on the query, but not on the specific contents of $A$.}
        \item Letting $\bx \in \R^n$ where $\bx_i$ is the output of $\calA$ on $i \in [n]$, 
        \[ \Prx_{\calA}\left[ \lambda_{\max}(A) - \dfrac{\langle \bx, A \bx\rangle}{\|\bx\|_2^2} \leq \eps n \right] \geq 0.99, \]
        whenever $\|A\|_2 = O(\lambda_{\max}(A))$.
    \end{itemize}
    Furthermore, $\calA$ runs in $\poly(1/\eps)$ time, and uses $\poly(1/\eps) \cdot \log n$ bits of shared randomness.
\end{theorem}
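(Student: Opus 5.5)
The plan is to sketch $A$ to a random principal submatrix, find its top eigenvector, and lift that vector back to $\R^n$ through one matrix--vector product that can be evaluated one coordinate at a time.

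\textbf{Preprocessing.} Sample a random multiset $\bS \subset [n]$ of size $\bt = \tilde{O}(1/\eps^2)$ and query the $\bt\times\bt$ principal submatrix $A_{\bS}$; this costs $\tilde{O}(1/\eps^4)$ queries. Following \cite{SW25}, $\frac{n}{\bt} A_{\bS}$ approximates the large part of the spectrum of $A$ to additive $\eps n$, and in particular $\lambda_{\max}(\frac{n}{\bt}A_{\bS}) \ge \lambda_{\max}(A) - \eps n$. Compute the top eigenvector $\bu \in \R^{\bt}$ of $A_{\bS}$ and store it.

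\textbf{Lifting.} Lifting by $\bu$ directly (zero off $\bS$) fails, because $\bu$ lives on only $\bt$ coordinates. Instead I would use one step of power iteration through the sampled columns:
\[ \bx_i = \frac{1}{\bt}\sum_{j \in \bS} A_{i j}\, \bu_j , \qquad \text{i.e. } \bx = \tfrac{1}{\bt} A_{[n],\bS}\,\bu. \]
Each coordinate needs the $\bt = \tilde{O}(1/\eps^2)$ entries $A_{i,\bS}$. These queries are non-adaptive and depend only on $i$ and the shared randomness, which is $\bS$, i.e.\ $\poly(1/\eps)\log n$ bits.

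\textbf{Analysis.} Write $\bx = \frac{1}{\bt} A P_{\bS}\bu$, where $P_{\bS}$ embeds $\R^{\bt}$ into $\R^n$ with weights $n/\bt$. Then
\[ \langle \bx, A\bx\rangle = \bt^{-2}\, \bu^\top A_{\bS,[n]} A A_{[n],\bS}\, \bu, \qquad \|\bx\|_2^2 = \bt^{-2}\, \bu^\top A_{\bS,[n]} A_{[n],\bS}\, \bu. \]
The argument would proceed in three steps.

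1. Decompose $A = A_{\mathrm{big}} + A_{\mathrm{small}}$, where $A_{\mathrm{big}}$ spans the $O(1/\eps^2)$ eigenvectors with $|\lambda| \ge \eps n$ and $A_{\mathrm{small}}$ has operator norm at most $\eps n$.

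2. Using matrix Bernstein together with incoherence bounds on the eigenvectors of bounded-entry matrices (the structural facts used in \cite{SW25}), show that the sampled Gram matrices concentrate with additive error $\eps n^2$-type:
\[ \tfrac{n}{\bt^2} A_{\bS,[n]}A_{[n],\bS} \approx \big(\tfrac{n}{\bt}A_{\bS}\big)^2, \qquad \tfrac{n}{\bt^2} A_{\bS,[n]} A A_{[n],\bS} \approx \big(\tfrac{n}{\bt}A_{\bS}\big)^3 . \]
Intuitively, the sample $\bS$ acts as an approximate isometry on the span of the large eigenvectors.

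3. Since $\bu$ is the top eigenvector of $A_{\bS}$, the Rayleigh quotient of $\bx$ then approximates $\lambda_{\max}^3/\lambda_{\max}^2 = \lambda_{\max}$ up to error roughly $\eps n \cdot \|A\|_2^2/\lambda_{\max}^2$. Under the hypothesis $\|A\|_2 = O(\lambda_{\max}(A))$ this is $O(\eps n)$, and rescaling $\eps$ finishes the proof.

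\textbf{Main obstacle.} The hard step is the cubic term. The vector $\bu$ may carry mass on eigen-directions of $A_{\bS}$ with large negative eigenvalue, and the third power preserves that sign. So one must bound how much of $\bu$ aligns with the very negative eigenvectors, and this is exactly where $|\lambda_{\min}| = O(\lambda_{\max})$ enters. I would first try to control the error in the denominator $\|\bx\|^2$ relative to $\lambda_{\max}^2$. If the cross terms resist, I would fall back on shifting, working with $A + cn I$, or with a projection onto the positive part of $A_{\bS}$'s spectrum before lifting.
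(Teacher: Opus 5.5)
Your algorithm is the paper's algorithm: up to scaling, $\frac{1}{\bt}A_{[n],\bS}\bu$ is the lift $\bx = A\bT^{\intercal}\by/\hat{\blambda}_{\max}$. Your analysis, however, takes a genuinely different route, and for this statement it goes through.

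\textbf{How the routes differ.} The paper works with the specific vector. It writes $\bx = U\Lambda^{1/2}\tilde{\bw}/\hat{\blambda}_{\max}$ and first proves that $\tilde{\bw}$ has little mass on $U_N$ and on $U_{-0}^+$ (Lemmas~\ref{lem:neg-small-pos-large} and~\ref{lem:small-bottom}). Only then does it pair the $\Lambda^{1/2}$-weighted distortion with $\tilde{\bw}$. It needs this structure because that matrix has norm up to $\sqrt{r\lambda_{\max}L}$, which is too large against $\|\tilde{\bw}\|_2^2\approx\lambda_{\max}$. You compare matrices instead. With $P=\bT^{\intercal}\bT$ one has exactly
\[ \bT A^2\bT^{\intercal}-\bA_1^2=\bT A(I-P)A\bT^{\intercal}, \qquad \bT A^3\bT^{\intercal}-\bA_1^3=\bT A(I-P)A^2\bT^{\intercal}+\bA_1\,\bT A(I-P)A\bT^{\intercal}. \]

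\textbf{Why step 2 holds.} On the paper's event $\calbE$, these two differences have norms $E_2=O(\|A\|_2L\sqrt{r})$ and $E_3=O(\|A\|_2^2L\sqrt{r})$. To see this, split $A=A_H+A_L$ with $A_H=U_H\Lambda_HU_H^{\intercal}$.
\begin{itemize}
\item Mixed terms collapse because $A_HA_L=0$. For example, $\bT A_H(I-P)A_L\bT^{\intercal}=-\bT A_H\bT^{\intercal}\,\bT A_L\bT^{\intercal}$, which has norm $O(\|A\|_2L)$ by the third condition of $\calbE$.
\item Pure-$A_L$ terms are $O(L^2)$ and $O(L^3)$.
\item For the $A_H$ part, group eigenvalues of both signs into dyadic levels below $\|A\|_2$. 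The $(h_1,h_2)$ block of $\Lambda_H(I-(\bT U_H)^{\intercal}\bT U_H)\Lambda_H$ has norm at most $\frac{\|A\|_2^2}{2^{h_1+h_2}}\cdot\frac{2\alpha L\,2^{\max(h_1,h_2)}}{\|A\|_2}=2\alpha\|A\|_2L\,2^{-\min(h_1,h_2)}$. Summing in Frobenius norm over blocks gives $O(\|A\|_2L\sqrt{r})$, and likewise with $\Lambda_H^2$ on the right. Combine this with $\|\bT U_H\|_2^2\le 1+\alpha$.
\end{itemize}
Since $\by^{\intercal}\bA_1^k\by=\hat{\blambda}_{\max}^k$, the Rayleigh quotient of $\bx$ is at least $\hat{\blambda}_{\max}-O(\|A\|_2^2L\sqrt{r}/\lambda_{\max}^2)$, which is your step 3.

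\textbf{What each route buys.} Pairing the $\Lambda$-weighted distortion with $(\bT U)^{\intercal}\by$, which has norm $O(1)$, lets you skip the coarse-structure lemmas entirely. The price is the factor $\|A\|_2^2/\lambda_{\max}^2$, which forces $L\approx\eps n\lambda_{\max}^2/\lambda_{\min}^2$ when $\lambda_{\min}\gg\lambda_{\max}$. The paper's vector-level error $L(1+\gamma^4)$ is what yields the stronger Theorem~\ref{thm:main}.

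\textbf{What your sketch of step 2 gets wrong.} First, it is not concentration of independent summands, since $\bS$ appears on both sides. It is the deterministic identity above on $\calbE$.

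Second, the distortion must be graded: $\alpha L/\lambda$ on the span of eigenvectors with eigenvalue magnitude at least $\lambda$, which is what the row-norm bound $\|(U_{\ge\lambda})_i\|_2^2\le n/\lambda^2$ provides. A uniform ``approximate isometry'' above $\eps n$ would need distortion about $\eps n/\|A\|_2$, which is about $\eps$ when $\|A\|_2\approx n$. With leverage scores up to $1/(\eps^2n)$, that costs $\bt\approx 1/\eps^4$, i.e.\ about $1/\eps^8$ preprocessing queries.

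Third, the accuracies you need are $\eps n\|A\|_2$ and $\eps n\|A\|_2^2$, not ``$\eps n^2$-type''.

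Fourth, the bounds on $\|A_L\bT^{\intercal}\|_2$ and $\|\bT A_L\bT^{\intercal}\|_2$ need the $\log(nq)$ argument of Appendix~\ref{app:spectral-norm-decay}. Plain matrix Bernstein in dimension $n$ reintroduces a $\log n$ factor.

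\textbf{Your ``main obstacle'' is misdiagnosed.} $\bu$ is an exact eigenvector of $A_{\bS}$, so it has no mass on the other eigendirections of $A_{\bS}$. The real effect is that $\by$ correlates with $\bT U_N$. In your route this is absorbed into $E_3$ through the factor $\|A\|_2^2$, so no alignment bound is needed. The shift $A+cnI$ would actively hurt: the entries are no longer bounded, and $\bT(cnI)\bT^{\intercal}=(cn/q)I$.

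Finally, handle $\lambda_{\max}(A)\lesssim\eps n$ separately, since the lift divides by $\hat{\blambda}_{\max}$.
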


We elaborate on the assumption that $\|A\|_2 = O(\lambda_{\max}(A))$ in Theorem~\ref{thm:main-intro}. When the matrix $A$ is psd (i.e., all eigenvalues are non-negative), we always satisfy $\lambda_{\max}(A) = \|A\|_2$ so the assumption becomes unnecessary. The algorithm does work for (non-psd) symmetric matrices, but one of the technical challenges is that negative eigenvalues of large magnitude may ``interfere'' with the top (positive) eigenvalue (we elaborate on this aspect in the technical overview). In Theorem~\ref{thm:main}, the more expressive query complexity degrades gracefully as a function of how large $\lambda_{\min}(A)$ is compared to $\lambda_{\max}(A)$ (still independent of $n$, but with weaker $\poly(1/\eps)$-dependence). Theorem~\ref{thm:main-intro} is stated for $\lambda_{\min}(A) = O(\lambda_{\max}(A))$, as this assumption natively holds in many desired applications (for example, in spectral graph partitioning). It is an interesting open problem to locally handle the ``negative eigenvalue interference'' without any complexity degradation (for example, when $\lambda_{\min}(A) = \Theta(n)$ and $\lambda_{\max}(A) = \Theta(\eps n)$); in Appendix~\ref{app:negative-interfere} we give a concrete example where our analysis degrades.

\begin{restatable}{theorem}{mainthmlb}\label{thm:main-lb-intro}
    Fix any $\eps \in (1/\sqrt{n}, 1)$ and suppose $\calA$ is an algorithm with access to an $n\times n$ symmetric matrix $A$ with $\| A \|_{\infty} \leq 1$, and one may also assume $\|A\|_2 = O(\lambda_{\max}(A))$, satisfying
    \begin{align*}
        \Prx_{\calA}\left[ \lambda_{\max}(A) - \dfrac{\langle \bx, A \bx\rangle}{\|\bx\|_2^2} \leq \eps n\right] \geq 0.99,
    \end{align*}
    where $\bx \in \R^n$ is the output of $\calA$. Then, $\calA$ must make $\Omega(n/\eps^2)$ queries to $A$.
\end{restatable}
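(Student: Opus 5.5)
We will prove the lower bound with a planted rank-one construction and Yao's principle. Take a hidden sign vector $\bz \in \{\pm 1\}^n$ uniformly at random and set
\[ A = \eps\, \bz \bz^{\top} + \mathbf{W}, \]
where $\mathbf{W}$ is a symmetric random-sign noise matrix. Since $\|\mathbf{W}\|_2 = O(\sqrt{n})$ and $\eps \geq 1/\sqrt{n}$, we scale constants so that $\lambda_{\max}(A) \approx C\eps n$ and $\|A\|_2 = O(\lambda_{\max}(A))$. Entries must stay in $[-1,1]$, so a cleaner choice is to let each entry $A_{ij}$, for $i \leq j$, be an independent $\pm 1$ sign with mean $c\eps\, \bz_i \bz_j$. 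The eigenvalue guarantee then forces $\bx$ to correlate with $\bz$. Since $\langle \bx, A\bx\rangle \le c\eps \langle \bx,\bz\rangle^2 + O(\sqrt n)\|\bx\|_2^2$ with high probability, achieving $\langle \bx, A \bx\rangle/\|\bx\|_2^2 \geq c\eps n - \eps n$ requires $\langle \bx, \bz\rangle^2 \geq \Omega(n \|\bx\|_2^2)$, once $c$ is a large enough constant and $\eps\sqrt n$ is large. The case $\eps \approx 1/\sqrt n$ is handled by adjusting constants, because then $\Omega(n/\eps^2) = \Omega(n^2)$ and a cruder argument suffices. So the algorithm must output a vector with constant normalized correlation with $\bz$, i.e., it must recover a constant fraction of the signs of $\bz$ up to a global sign.

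The information-theoretic step bounds how well each $\bz_i$ can be learned from $q$ queries. Fix $i$ and condition on all other coordinates $\bz_{-i}$, giving the algorithm this information for free. Each queried entry $A_{ij}$ in row $i$ is a sign whose bias is $\pm c\eps\bz_j$, so it carries only $O(\eps^2)$ bits (KL/Hellinger) about $\bz_i$. Adaptivity is handled by the chain rule for KL between the transcripts under $\bz_i = 1$ and $\bz_i = -1$. The divergence is then at most $O(\eps^2)\cdot \E[q_i]$, where $q_i$ counts the queries touching row or column $i$. Summing over $i$ gives $\sum_i \E[q_i] \leq 2q$, so if $q = o(n/\eps^2)$, then for most $i$ the posterior advantage for $\bz_i$ is $o(1)$.

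Finally, we convert this to the correlation bound. The global sign symmetry of $\bz$ is removed by revealing, say, $\bz_1$; this only strengthens the algorithm. We then show $\E[\langle \bx,\bz\rangle^2]/\|\bx\|_2^2$ is $o(n)$. Write $\langle \bx, \bz\rangle = \sum_i \bx_i \bz_i$ and use that, conditioned on the transcript, each $\bz_i$ for a low-query $i$ is nearly uniform.

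The main obstacle is that the $\bz_i$ are not independent given the transcript, and the cross terms $\bx_i\bx_j\bz_i\bz_j$ matter. I would handle this by a genie argument: reveal $\bz_{-i}$ per coordinate and bound $\E[\bx_i \bz_i \mid \text{transcript}, \bz_{-i}]$. To aggregate, I would instead use a mutual-information bound, $I(\bz; \text{transcript}) \leq O(\eps^2 q)$, together with a Fano-type statement: outputting a vector with $\Omega(1)$ normalized correlation with a uniform $\bz$ requires $\Omega(n)$ bits about $\bz$, because few sign vectors correlate with a fixed $\bx$ (a Hoeffding/volume count). Together these give $q = \Omega(n/\eps^2)$.
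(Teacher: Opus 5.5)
Your proposal is correct. Its final route, a mutual-information bound combined with a Fano-type bound, is genuinely different from the paper's.

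\textbf{The paper's route.} The paper uses the same planted model under Yao's principle: off-diagonal signs with mean $2\eps$ times the product of the hidden signs. It never bounds information about all of $\bz$. Instead it does the following.
\begin{itemize}
\item It embeds a single two-point coin problem into a uniformly random coordinate $\ell$, gives the simulator $\bz_{-\ell}$, and answers queries in row/column $\ell$ with coin draws.
\item Since $\ell$ is independent of $\bA$, Markov caps the number of coins used at $200q/n$ with probability $0.99$.
\item On the output side, it converts high correlation with the hidden vector into sign agreement on a $0.99$ fraction of coordinates. It fixes the global sign using $\bz_{-\ell}$ and reads off $\mathrm{sign}(\sigma \by_\ell)$.
\item The classical $\Omega(1/\eps^2)$ coin bound then gives $q = \Omega(n/\eps^2)$.
\end{itemize}

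\textbf{Your route.} You upper bound $I(\bz;\text{transcript}) \le O(\eps^2 q)$ by the chain rule. Each fresh off-diagonal answer is a sign whose conditional bias lies in $[-c\eps, c\eps]$, so it contributes $O(\eps^2)$ nats regardless of adaptivity; repeated and diagonal queries contribute nothing. You lower bound the same quantity by data processing. For a fixed output $x$, Hoeffding shows that at most a $2e^{-\rho^2 n/2}$ fraction of sign vectors satisfy $\langle x, z\rangle^2 \ge \rho^2 n \|x\|_2^2$. Succeeding with probability $0.98$ therefore forces $I \ge 0.98\,\rho^2 n/2 - O(1)$.

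\textbf{What each buys.} Your route removes the random-coordinate embedding, the Markov step, the sign-agreement claim and the global-sign bookkeeping. Revealing $\bz_1$ is unnecessary, since the good set is symmetric under $z \mapsto -z$. It also degrades gracefully, giving $q = \Omega(\rho^2 n/\eps^2)$ for any constant correlation $\rho$ and any constant success probability. It is essentially the information-complexity proof the paper says a reviewer suggested. The paper's route is fully elementary and needs only Le Cam's two-point bound.

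\textbf{The per-coordinate detour.} Your per-coordinate discussion and the worry about cross terms in $\mathbb{E}\langle \bx, \bz\rangle^2$ are unnecessary once you take the Fano route. If you did want a per-coordinate proof, the way around the cross terms is the paper's. Never bound $\mathbb{E}\langle \bx,\bz\rangle^2$; instead deduce high sign agreement from high correlation, and decode each $\bz_i$ from the transcript together with $\bz_{-i}$. Then average the per-coordinate success bounds $\frac{1}{2} + O(\eps\sqrt{\mathbb{E}\, q_i})$, using concavity and $\sum_i \mathbb{E}\, q_i \le 2q$.
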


The above lower bound implies that the per-coordinate complexity of our local computation algorithm, i.e., the query complexity of $\tilde{O}(1/\eps^2)$ for outputting the $i$-th entry $\bx_i$ in Theorem~\ref{thm:main-intro}, perfectly amortizes over the $n$ coordinates of $\bx$, matching the lower bound and optimal (up to poly-logarithmic factors in $1/\eps$). Indeed, the total query complexity for outputting the top eigenvector using Theorem~\ref{thm:main-intro} is~$\tilde{O}(1/\eps^4) + n \cdot \tilde{O}(1/\eps^2)$, and Theorem~\ref{thm:main-lb-intro} says that the total query complexity must be $\Omega(n/\eps^2)$. One may hope for improvements to the $\tilde{O}(1/\eps^4)$-query complexity during the preprocessing time, however, Theorem~\ref{thm:main-intro} runs~\cite{SW25} during preprocessing, which is known to require $\Omega(1/\eps^4)$ non-adaptive queries (see the discussion on page 6 of~\cite{SW25}). 

Theorem~9.5 of~\cite{SW25} also gives an algorithm (not a local computation algorithm) which approximates the top eigenvector of a symmetric psd matrix $A$ by sampling $O(1/\eps)$ columns of $A$. Importantly, the algorithm needs to make $O(n/\eps)$ queries to read the entirety of the sampled columns, and cannot give localized access to the approximate top eigenvector. Crucially, their global optimization maximizes the Rayleigh quotient over the subspace spanned by $O(1/\eps)$ columns, so it must read these columns explicitly. At the same time, however, it does achieve linear in $1/\eps$ query complexity for psd matrices, whereas Theorem~\ref{thm:main-lb-intro} shows that quadratic in $1/\eps^2$ is necessary; this implies a separation between psd matrices and (general) symmetric matrices.

\subsection{Applications to Graph Partitioning in Dense Graphs}

An important application of Theorem~\ref{thm:main-intro} is to approximate graph partitioning in the dense graph model, introduced in the seminal work of Goldreich, Goldwasser, and Ron~\cite{GGR98} in the context of property testing and approximation (see Chapter~8 of~\cite{G17}). Theorem~8.12 in Chapter 8 of~\cite{G17} gives an algorithm for testing general partition problems with $\poly(1/\eps)$ query complexities and $\exp(\poly(1/\eps))$ time complexity. In particular, the algorithms sample a few vertices of the graph, and after an initial exponential-in-$\poly(1/\eps)$ preprocessing time (which searches over all possible partitions of the sample), provide oracle access to an underlying cut in $\poly(1/\eps)$ time per query (as in local computation algorithms). A super-polynomial dependence on $1/\eps$ is necessary (assuming $\mathsf{P} \neq \mathsf{NP}$) because graph partitioning problems are often $\mathsf{NP}$-hard. However, what is possible and was open (until this work), is a Cheeger-like approximation guarantee in polynomial time. As we discuss in Theorem~\ref{thm:cheeger-intro} and Theorem~\ref{thm:trevisan} below, this will be a direct consequence of using Theorem~\ref{thm:main-intro} to give local versions of algorithms from spectral graph theory.

\begin{definition}[Sparsity of a Graph]\label{def:sparsity}
    For a fixed graph $G = ([n], E)$ and a subset $S \subset [n]$, we let
    \begin{itemize}
        \item $\vol_G(S) = \sum_{i \in S} \deg_G(i)$, where $\deg_G(i)$ is the degree of vertex $i$.
        \item $E(S, \ol{S}) = \sum_{(i,j) \in E} \ind\left\{ i \in S, j \in \ol{S} \right\}$ is the number of edges in $E$ which cross $S$.
        \item The sparsity of a cut is $\phi_G(S) = E(S, \ol{S}) / \min\{\vol_G(S), \vol_G(\ol{S}) \}$.
    \end{itemize}
    For a density parameter $\beta \geq 0$, the sparsity of $G$ over $\beta$-dense cuts is defined by 
    \[ \phi_{\beta}(G) = \min \{ \phi_G(S) : S \subset [n] \text{ s.t. }\min\{ \vol_G(S), \vol_G(\ol{S}) \} \geq \beta n^2 \}. \]
\end{definition}

The standard proof of Cheeger's inequality~\cite{AM85} gives a constructive spectral algorithm which performs a ``sweep cut'' over level sets of the second largest eigenvector of the normalized adjacency matrix. The following theorem applies Theorem~\ref{thm:main-intro}, to provide localized access to a sparse cut obtained from this sweep where all running times are polynomial in $1/\eps$. Note, the sparsity of the returned cut will compete with $\phi_{\eps}(G)$; one cannot approximate $\phi_0(G)$ (with no density constraints) because it is too sensitive, and a single disconnected vertex could make $\phi_0(G) = 0$ and be undetectable to a constant-query algorithm.

\begin{restatable}{theorem}{localcheeger}\label{thm:cheeger-intro}
    There exists a (randomized) local computation algorithm $\calA$ with query access to the adjacency matrix of a graph $G = ([n], E)$ and an accuracy parameter $\eps > 0$.
    \begin{itemize}
        \item For any $i \in [n]$, $\calA$ makes $\poly(1/\eps)$ queries and outputs access ``$i \in \bS$'' for an underlying cut $\bS \subset [n]$.
        \item Letting $\bS \subset [n]$ be the underlying cut that $\calA$ generates,
        \begin{align*}
            \Prx_{\calA}\left[ \phi_G(\bS) \leq O(\sqrt{\phi_{\eps}(G)}) + \eps  \right] \geq 0.99.
        \end{align*}
    \end{itemize}
    Furthermore, $\calA$ always runs in $\poly(1/\eps)$ time and uses $\poly(1/\eps)$ space.
\end{restatable}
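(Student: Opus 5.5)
We would prove Theorem~\ref{thm:cheeger-intro} by making the standard constructive proof of Cheeger's inequality local, using Theorem~\ref{thm:main-intro} to get the spectral vector.

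\textbf{Step 1: local access to degrees and a proxy matrix.} Sample $\poly(1/\eps)$ random vertices once, shared across queries. Then each degree $\deg_G(i)$ can be estimated to within additive $\eps^{c} n$ with $\poly(1/\eps)$ queries. Vertices with degree below $\eps^{c}n$ have total volume $O(\eps^c n^2)$, which is negligible next to $\beta n^2$ for $\beta=\eps$. We put such vertices on a fixed side of the cut and clamp their degrees to $\eps^c n$. The proxy matrix is
\[
M \;=\; n\,\tilde D^{-1/2}\Big(A - \tfrac{\tilde d \tilde d^{\top}}{\tilde{\vol}}\Big)\tilde D^{-1/2}\cdot \eps^{c'},
\]
rescaled so that $\|M\|_\infty\le 1$. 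This is the normalized adjacency with the trivial eigenvector deflated. An entry of $M$ costs $O(1)$ queries plus degree estimates. We would compute the degree estimates from a fixed shared sample, so they are consistent across calls. Alternatively, we would shift by $I$ and take $M \propto I + \tilde D^{-1/2}A\tilde D^{-1/2} - (\text{deflation})$. This makes the matrix psd, so the hypothesis $\|M\|_2=O(\lambda_{\max}(M))$ of Theorem~\ref{thm:main-intro} holds automatically.

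\textbf{Step 2: spectral bound.} Let $S^*$ achieve $\phi_\eps(G)$. The test vector $y = \tilde D^{1/2}(\mathbb{1}_{S^*}/\vol(S^*) - \mathbb{1}_{\bar S^*}/\vol(\bar S^*))$ is orthogonal to the deflated direction. Its Rayleigh quotient on the shifted matrix is at least $2-2\phi_\eps(G)$ minus degree-estimation error. Because both sides have volume at least $\eps n^2$, this error is controlled after normalizing. Theorem~\ref{thm:main-intro}, applied with accuracy $\eps'=\poly(\eps)$, gives local access to $\bx$. We project out the trivial direction locally, since its inner products are estimable by sampling. The result has Rayleigh quotient certifying
\[
\frac{\sum_{(i,j)\in E}(z_i-z_j)^2}{\sum_i \deg(i) z_i^2} \;\le\; 2\phi_\eps(G)+\poly(\eps),
\]
where $z=\tilde D^{-1/2}\bx$.

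\textbf{Step 3: local sweep.} The usual sweep over all $n$ thresholds is not local. Instead, we pick a random threshold $\bt$ with density proportional to $|t|$, as in the randomized proof of Cheeger's inequality, after centering $z$ at a weighted median. We output $i\in\bS$ iff $z_i\le \bt$. The median and the normalization are estimated once from the shared sample. The randomized proof gives
\[
\E[E(\bS,\bar\bS)] \le \sqrt{2R}\,\E[\min\{\vol\}],
\]
where $R$ is the quotient from Step 2. Repeating $O(1)$ times and keeping the cut with the best sampled sparsity estimate yields the $0.99$ probability bound. Small-volume cuts are handled by adding the additive $\eps$ slack.

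\textbf{Main obstacle.} The hardest step is Step 3's estimation of sparsity from samples. We would estimate $E(\bS,\bar\bS)$ and $\vol(\bS)$ by sampling vertex pairs, which gives additive $\eps n^2$ accuracy. The $\sqrt{R}$ bound holds only in expectation, so the output cut may have small volume, which would make the additive error large relative to $\min\vol$. We would first try truncating thresholds so that $\min\vol\ge \eps n^2$, charging the lost mass to the $+\eps$ term.
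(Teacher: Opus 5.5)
\textbf{There is a genuine gap.} It sits exactly at the step you call the main obstacle, and the truncation fix does not close it.

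The missing ingredient is a \emph{delocalization} bound on the vector returned by Theorem~\ref{thm:main}. The paper reads this off the form of the lift: since $\bx_i=\langle (A\bT^{\intercal})_i,\by\rangle/\hat{\blambda}_{\max}$ and $\|A\|_\infty\le 1$, Cauchy--Schwarz gives $|\bx_i|\le\sqrt{\bt/q}\,\|\by\|_2/\hat{\blambda}_{\max}$. For input $(\alpha n/2)\hat M$ this is $\|\bx\|_\infty\lsim 1/(\alpha\sqrt n)$, while $\|\bx\|_2=\Theta(1)$ by Lemma~\ref{lem:2} (see Lemma~\ref{lem:output-vec}). In Cheeger rounding the expected volume of the threshold cut is $\sum_i\deg(i)(z_i-\tau)^2/\max_i(z_i-\tau)^2$. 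The $\ell_\infty$ bound therefore guarantees a good threshold cut of volume $\Omega(\alpha^3 n^2)$, via $\vol\ge\Omega(mb/\|y\|_\infty^2)$ in Theorem~\ref{thm:cheeger}. Only then can the sparsities of the $\poly(1/\eps)$ discretized thresholds be estimated by sampling and the best one kept.

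Without this bound, restricting to thresholds with $\min\vol\ge\eps n^2$ fails precisely when the approximate eigenvector is localized. If most of $\sum_i\deg(i)z_i^2$ sits on a set of volume $\ll\eps n^2$, the surviving thresholds see almost none of that mass, so there is nothing to charge the loss against. The additive $+\eps$ only pays for $\eps\cdot\vol(\bS)$ crossing edges, which is useless for small cuts.

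The same bound is also needed silently elsewhere in your Steps 2--3:
\begin{itemize}
\item estimating $\langle\bx,\tilde d^{1/2}\rangle$, the median and the normalization from samples;
\item controlling the $\xi n$ vertices whose degree estimates are wrong. A shared $\poly(1/\eps)$ sample cannot estimate all $n$ degrees simultaneously, contrary to Step~1.
\end{itemize}

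\textbf{Second, the output volume is too small for your handling of low-degree vertices.} Spectral rounding guarantees only $\vol(\bS)\ge\poly(\eps)n^2$, with an exponent worse than your clamping level (about $\eps^{3c}n^2$ for clamping at $\eps^c n$). Putting all vertices of degree below $\eps^c n$ on a fixed side can add up to $\eps^c n^2$ crossing edges, which is not negligible relative to $\vol(\bS)$. Your comparison with $\eps n^2$ presumes a balanced output that you never obtain.

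The paper handles this differently:
\begin{itemize}
\item It adds $\bN\sim\calG(n,4\alpha)$ so that every large induced subgraph of $G\cup\bN$ is $(\alpha,\beta)$-dense.
\item It runs the dense-graph algorithm, removes the cut found, and repeats on the remaining induced subgraph.
\item It stops once the removed volume, estimated by sampling, exceeds $\eps n^2/8$. Only at that volume can sparsity be transferred from $G\cup\bN$ back to $G$ (Lemma~\ref{lem:noise-addition}).
\end{itemize}
Keeping low-degree vertices inside the sweep with upward-clamped degrees might avoid both the noise and the peeling, since $z^{\intercal}Dz\ge z^{\intercal}Az$ controls the denominator. But again this works only once the $\ell_\infty$ bound is in hand.

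\textbf{Two smaller points.}
\begin{itemize}
\item The psd shift by $I$ is not admissible. After rescaling by $\Theta(\eps^c n)$ it puts $\Theta(\eps^c n)$ on the diagonal, violating the bounded-entry assumption. It is also unnecessary, since Theorem~\ref{thm:main} handles the unshifted matrix.
\item From an in-expectation ratio bound, a random threshold is good only with probability about $\Ex[\vol(\bS)]/\max\vol(\bS)=\poly(\eps)$. So $O(1)$ repetitions do not give success probability $0.99$.
\end{itemize}
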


We use a similar methodology to provide local access to an approximate max-cut (with analogous ``square-root-opt'' guarantees), by giving a local implementation of Trevisan's algorithm~\cite{L12}. As in Cheeger's inequality, Trevisan's algorithm performs a sweep cut over the level sets of the smallest eigenvector of the normalized adjancency matrix in order to find an approximately bipartite subgraph (and recurses on the remainder). As above, we will use Theorem~\ref{thm:main-intro} to find the top eigenvector to $-A$, and then show that we may (i) provide local access to the approximate bipartite subgraph, and (ii) recurse for at most $\poly(1/\eps)$ iterations on the remainder (which suffices for dense graphs).

\begin{restatable}{theorem}{localtrevisan}\label{thm:trevisan}
    There exists a (randomized) local computation algorithm $\calA$ with query access to the adjacency matrix of a graph $G = ([n], E)$ and accuracy parameter $\eps > 0$.
    \begin{itemize}
        \item For any $i \in [n]$, $\calA$ makes $\poly(1/\eps)$ queries and outputs access ``$i \in \bS$'' for an underlying cut $\bS \subset [n]$.
        \item If the max-cut of $G$ cuts $(1-\varphi)|E|$ edges, and $\bS \subset [n]$ is the underlying cut that $\calA$ generates,
        \[ \Prx_{\calA}\left[ E(\bS, \ol{\bS}) \geq (1 - \sqrt{\varphi}) |E| - \eps n^2 \right] \geq 0.99.\]
    \end{itemize}
    Furthermore, $\calA$ always runs in $\poly(1/\eps)$ time and uses $\poly(1/\eps)$ space.
\end{restatable}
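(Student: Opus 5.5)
The plan is to implement Trevisan's recursive spectral algorithm~\cite{L12} locally, with Theorem~\ref{thm:main-intro} as the eigenvector oracle. Let $D$ be the degree matrix and $N = D^{-1/2} A_G D^{-1/2}$ the normalized adjacency. In a dense graph we do not have exact degrees, so we first estimate them. During preprocessing, and on demand for each queried vertex, we estimate $\deg(i)$ to additive $\poly(\eps) n$ accuracy by sampling $\poly(1/\eps)$ entries of row $i$. Vertices with $\deg(i) \leq \eps' n$ are treated as low-degree. They touch at most $\eps' n^2$ edges, so we can assign them arbitrarily (say to $\ol{\bS}$) and charge them to the $\eps n^2$ slack. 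On the remaining vertices we form a rescaled matrix $M_{ij} = -n\, A_{ij}/\sqrt{\hat d_i \hat d_j}$. Its entries are bounded by $1/\eps'$, so after dividing by $1/\eps'$ it is a bounded-entry symmetric matrix. Any entry of it can be simulated with $O(\poly(1/\eps))$ queries: one adjacency query plus the sampled degree estimates, with the degree-estimation randomness shared through the seed so that the estimates are consistent across queries.

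The single round proceeds as follows. Trevisan's guarantee says that if the max-cut leaves a $\varphi$ fraction of edges uncut, then $\lambda_{\max}(-N) \geq 1 - 2\varphi$. A vector $y$ with Rayleigh quotient $\geq 1-2\varphi - \delta$ for $I + N$ (in the $D$-weighted sense) then yields, by a threshold sweep on $y_i^2$ with random sign-rounding, a tripartition $(L, R, \text{undecided})$. The number of "bad" edges (inside $L$, inside $R$, or from $L\cup R$ to the undecided set, counted with half weight) is at most $O(\sqrt{\varphi+\delta})$ times the volume of $L \cup R$. We apply Theorem~\ref{thm:main} to the scaled $M$ with accuracy $\poly(\eps)$. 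Its $\eps n$ additive error translates, via the density lower bound $\hat d_i \ge \eps' n$, into a $\delta = \poly(\eps)$ error in the normalized Rayleigh quotient. The hypothesis $\|M\|_2 = O(\lambda_{\max}(M))$ holds, since $\|N\|_2 \le 1$ and $\lambda_{\max}(-N) \geq 1-2\varphi$ whenever $\varphi < 1/4$; when $\varphi \ge 1/4$ the theorem is trivial via a random cut.

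To make the sweep local, we choose the threshold $\bt$ in preprocessing. We sample $\poly(1/\eps)$ random vertices, query their coordinates $\bx_i$ and degrees, and estimate for each candidate threshold on a $\poly(\eps)$ grid the quantities (bad edges, volume) from sampled pairs. Then we pick the best threshold. Membership of a vertex $i$ is decided from $\bx_i$ and $\bt$ alone, which gives $\poly(1/\eps)$ per query.

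For the recursion, Trevisan recurses on the undecided set. We stop once the remaining volume is below $\eps n^2$, and that remainder is assigned arbitrarily. Each round removes a set whose volume is at least a $\poly(\eps)$ fraction of $n^2$ unless the remaining graph is already sparse. So after $\poly(1/\eps)$ rounds the remainder has volume $\leq \eps n^2$. Answering "$i \in \bS$" means simulating the rounds in order: in round $r$, check whether $i$ was decided. Queries to the round-$r$ matrix require knowing which vertices survived previous rounds, which is itself a recursive local query. This blows up multiplicatively over rounds, and I expect it to be the main obstacle. The first fix to try is to make each round discard a constant fraction of the remaining volume, using Trevisan's analysis with volume-weighted accounting, so that only $O(\log(1/\eps))$ rounds are needed. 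The per-query cost is then $\poly(1/\eps)^{O(\log 1/\eps)}$, and to get a genuine $\poly(1/\eps)$ I would instead run all rounds on the fixed $\poly(1/\eps)$-size sample inside preprocessing, so that the surviving set of round $r$ is defined by thresholds on previously computed coordinates. Finally, we sum the errors. Concavity of $\sqrt{\cdot}$ over rounds gives total uncut edges $\leq \sqrt{\varphi}|E| + \eps n^2$ after adjusting constants and the $\delta$, degree, and sampling errors. A union bound over $\poly(1/\eps)$ rounds, each failing with probability $\poly(\eps)$, gives success probability $0.99$.
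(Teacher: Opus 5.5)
Your skeleton matches the paper's: Trevisan's peeling, Theorem~\ref{thm:main} on a rescaled noisy negative normalized adjacency, a threshold from a $\poly(1/\eps)$ grid chosen by sampling, $\poly(1/\eps)$ rounds, and concavity of $\sqrt{\cdot}$. The gap is the claim that drives the recursion: that each round removes $\poly(\eps)n^2$ volume ``unless the remaining graph is already sparse.'' You give no argument for it.

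As written, you remove low-degree vertices once, with respect to $G$. Round $\ell$ has to work with the normalized adjacency of the residual graph $G[\bV_\ell]$, with degrees measured inside $\bV_\ell$. If you keep $G$-degrees, you lose the lower bound $1-2\varphi|E|/m_\ell$ on the top eigenvalue (here $m_\ell$ is the number of edges inside $\bV_\ell$) once many edges of $\bV_\ell$ leave $\bV_\ell$. Yet $G[\bV_\ell]$ can have $\Omega(\eps n^2)$ edges while many of its vertices have $o(n)$ degree inside $\bV_\ell$. Then the rescaled matrix is not bounded-entry, and no per-round progress follows.

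Progress also needs delocalization, which you never invoke. The sweep only guarantees volume $\Omega(mb/\|y\|_\infty^2)$. The paper gets $\|\bx\|_\infty\lsim 1/(\alpha\sqrt{n})$ (Lemma~\ref{lem:trevisan-output-vec}) from the explicit lift $\bx=A\bT^{\intercal}\by/\hat{\lambda}_{\max}$. Combined with a degree floor $d(i)\geq\alpha n$, this yields volume $\gtrsim\alpha^3n^2$ per round (Theorem~\ref{thm:trevisan-dense}).

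The ingredient you are missing is Lemma~\ref{lem:trevisan-noise-addition}. Add $\bN\sim\calG(n,4\alpha)$ once, so that with high probability every induced subgraph of $G\cup\bN$ on $\Omega(\eps n)$ vertices is $(\alpha,\beta)$-dense. This costs $O(\alpha n^2)$ edges, and you then clip degree estimates at $\alpha n$. Re-trimming inside every residual graph would also work: a vertex is discarded only once, with at most $\eps' n$ edges into the current residual set, so the total loss over all rounds is $O(\eps' n^2)$.

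The obstacle you single out is not actually one, but your fix needs care. Store round $\ell'$'s preprocessing: its sketch, $\by$, $\hat{\lambda}_{\max}$, threshold, and the survival status and degree estimates of its sample vertices. Then deciding whether one vertex $j$ is removed in round $\ell'$ costs only the probes of row $j$ against that sample, plus $j$'s own status in earlier rounds; nothing about other vertices is recomputed. A query therefore costs $\sum_{\ell'}|\bQ_{\ell'}|$, and preprocessing costs $\sum_{\ell}\bigl(|\bQ^{(0)}_\ell|\sum_{\ell'<\ell}|\bQ_{\ell'}|+|\bQ_\ell|^2\bigr)$. Costs add rather than multiply; this is exactly the paper's count. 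No $O(\log(1/\eps))$-round argument is needed, and Trevisan's analysis does not supply one anyway.

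What does matter is that each round gets a fresh sample. The paper draws independent $\bQ^{(0)}_1,\dots,\bQ^{(0)}_t$ up front and uses $\bQ_\ell=\bQ^{(0)}_\ell\cap\bV_\ell$, an i.i.d.\ sample of $\bV_\ell$ that is independent of $\bV_\ell$. If all rounds run on one fixed sample, the round-$\ell$ matrix depends on the sketch. Theorem~\ref{thm:main} is stated for a fixed matrix and an independent $\bT$, so it no longer applies.

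Lastly, your check of $\|M\|_2=O(\lambda_{\max}(M))$ covers only the first round. In round $\ell$ the relevant ratio is $\varphi|E|/m_\ell$, which can exceed $1/4$. There you need either the general parameter setting of Theorem~\ref{thm:main}, or to stop once the estimated top eigenvalue drops below $1/2$. In the latter case $m_\ell<4\varphi|E|$, and the remaining edges can be charged directly.
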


\subsection{Technical Overview}

\subsubsection{Upper Bound}

We give an overview of the algorithm and analysis behind Theorem~\ref{thm:main-intro}, where the goal is to design a local computation algorithm with query access to a bounded-entry matrix $A$ that can provide query access to entries of an approximately top eigenvector $\bx$, whose Rayleigh quotient with $A$ is at least $\lambda_{\max}(A) - \eps n$. The starting point is an algorithm of~\cite{SW25} who prove that, if $\bT$ is an $\bt \times n$ row-sampling matrix (i.e., each of the $n$ rows is included in $\bT A$ with probability $q$ and re-scaled by $1/\sqrt{q}$), then $\lambda_{\max}(\bT A \bT^{\intercal})$ and $\lambda_{\max}(A)$ differ by at most $\eps n$ with high probability whenever $q \gsim 1/(\eps^2 n)$ (up to poly-logarithmic in $1/\eps$ factors).\footnote{In the technical overview, we will ignore $\polylog(1/\eps)$-factors.} Hence, we may assume an algorithm has found the top (unit) eigenvector $\by \in \R^{\bt}$ satisfying
\[ \hat{\blambda}_{\max} = \dfrac{\langle \by, \bT A \bT^{\intercal} \by \rangle}{\| \by\|_2^2} \qquad \text{where} \qquad \left|\hat{\blambda}_{\max} - \lambda_{\max}(A) \right| \leq \eps n. \]
Note, however, that while $\by \in \R^{\bt}$, our task is to locally produce a vector $\bx \in \R^n$ where $\langle \bx, A \bx \rangle / \|\bx\|_2^2$ is at least $\lambda_{\max}(A) - O(\eps n)$; ideally, using the information learned from $\by$ and $\bT A \bT^{\intercal}$. 

\paragraph{The Lift.} We aim to convince the reader in the next few paragraphs that there is a natural choice of vector $\bx \in \R^{n}$, a ``lift'' of the vector $\by \in \R^{\bt}$ from the sketch-space to the original $n$-dimensional space, for which there is a local computation algorithm for individual entries, and one may hope satisfies
\begin{align} 
\left| \dfrac{\langle \by, \bT A \bT^{\intercal} \by \rangle}{\|\by\|_2^2} - \dfrac{\langle \bx, A \bx\rangle}{\|\bx\|_2^2}\right| \lsim \eps n. \label{eq:rayleigh-diff-int}
\end{align}
We consider the vector $\bx = A \bT^{\intercal} \by / \hat{\blambda}_{\max} \in \R^{n}$, which we call the \emph{lift} of $\by$. Note, producing the $i$-th entry of $\bx$ requires reading the $i$-th row of $A \bT^{\intercal}$ (which contains $\bt$ numbers), and we claim is a natural candidate for (\ref{eq:rayleigh-diff-int}) for the following reason. First, we set up the vector $\bx$ such that $\bT \bx \in \R^{\bt}$ is exactly $\bT A \bT^{\intercal} \by / \hat{\blambda}_{\max} = \by$ (since $\by$ is an eigenvector of $\bT A \bT^{\intercal}$ with eigenvalue $\hat{\blambda}_{\max}$). Re-expressing the inner products $\langle \bx, A \bx\rangle$ and $\langle \by, \bT A \bT^{\intercal} \by\rangle$ as the individual contributions from the eigenvectors of $A$, $u_1, \dots, u_n \in \R^n$ with eigenvalues $\lambda_1, \dots, \lambda_n \in \R$, it becomes natural to compare the ``numerators'' in (\ref{eq:rayleigh-diff-int})
\begin{align*}
\langle \by, \bT A \bT^{\intercal} \by \rangle = \sum_{i=1}^n \lambda_i \cdot \langle \bT u_i, \bT \bx\rangle^2 \qquad\text{with} \qquad \langle \bx, A \bx \rangle = \sum_{i=1}^n \lambda_i \cdot \langle u_i, \bx \rangle^2, 
\end{align*}
as well as the ``denominators'' $\| \bT \bx\|_2^2$ and $\|\bx\|_2^2$ in (\ref{eq:rayleigh-diff-int}). The intuition is that, if the vector $\bx \in \R^n$ lies mostly in the subspace of eigenvectors of $A$ with large magnitude eigenvalues, then, the subspace embedding condition of $\bT$ (used in~\cite{SW25}, that $\bT$ has low distortion on large magnitude eigenvectors) would apply to $\bx$, and one would conclude $\| \bT \bx \|_2^2 \approx \| \bx \|_2^2$, and that for large $|\lambda_i|$, $\langle \bT u_i, \bT \bx\rangle^2$ is close enough to $\langle u_i, \bx\rangle^2$. Roughly speaking, the differences in the ``numerator'' and ``denominator'' in the left-hand side of (\ref{eq:rayleigh-diff-int}) would suggest the upper bound. The reason this does not work, of course, is that the vector $\bx$ depends on $\by$ and $\bT$, and $\by$ also heavily depends on $\bT$, so one cannot a posteriori consider a vector $\bx$ where $\bT \bx = \by$ and apply the properties of $\bT$ (which are distributional in nature). 

\paragraph{Analyzing the Lift Vector $\bx$.} The analysis in Section~\ref{sec:analysis} proceeds via a two-step process. First, we show that, because $\by$ is an eigenvector of $\bT A \bT^{\intercal}$ with eigenvalue $\hat{\blambda}_{\max}$ which is positive and large, it must necessarily impose a coarse structure on $\bx$; then, we will use the coarse structure to obtain more refined estimates. Below, we give a high-level sketch for how this is used to compare $\| \bx \|_2^2$ and $\| \bT \bx\|_2^2$ (see proof of Theorem~\ref{thm:main} and Lemma~\ref{lem:2}). Writing $\bx = U\Lambda^{1/2} \tilde{\bw} / \hat{\blambda}_{\max}$ where $U$ is matrix whose columns are the eigenvectors of $A$, $\Lambda$ is the diagonal matrix of eigenvalues, and $\tilde{\bw} = \sign(\Lambda) \Lambda^{1/2} (\bT U)^{\intercal} \by$, we must upper-bound
\begin{align}
\left|\|\bx\|_2^2 - \|\bT \bx\|_2^2 \right| &= \frac{1}{\hat{\blambda}_{\max}^2} \left| \| U\Lambda^{1/2} \tilde{\bw} \|_2^2 - \|\bT U\Lambda^{1/2} \tilde{\bw}\|_2^2 \right| \nonumber \\
							&= \frac{1}{\hat{\blambda}_{\max}^2} \left| \langle \tilde{\bw}, \Lambda^{1/2} \left( U^{\intercal} U - (\bT U)^{\intercal} (\bT U) \right) \Lambda^{1/2} \tilde{\bw} \rangle \right|. \label{eq:inner-p-int}
\end{align}
Notice that, the inner-most matrix $\bM = \Lambda^{1/2} (U^{\intercal} U - (\bT U)^{\intercal} (\bT U))\Lambda^{1/2}$ is an $n\times n$ matrix where rows and columns correspond to eigenvectors, and which will give a weighted measure of the distortion of $\bT$ on the subspaces spanned by eigenvectors of $A$. Importantly, a spectral norm bound on $\bM$ would be very weak (since $\bT$ necessarily distorts some vectors in $\R^n$ significantly), however, it suffices for us to bound the inner-product (\ref{eq:inner-p-int}) which only cares about how $\bM$ acts on $\tilde{\bw}$. In a coarse analysis in Subsection~\ref{sec:structural-vecs}, we establish two facts: Lemma~\ref{lem:neg-small-pos-large} first shows that most of the squared-$\ell_2$-mass of $\tilde{\bw}$ is along the high-positive eigenvectors, and in Lemma~\ref{lem:small-bottom} that this estimate can be refined (using Lemma~\ref{lem:neg-small-pos-large}) so that most of the squared-$\ell_2$-mass is along the eigenvectors whose eigenvalue is $\Theta(\lambda_{\max}(A))$. Hence, the inner product in (\ref{eq:inner-p-int}) ``focuses-in" on the principal sub-matrix of $\bM$ corresponding to eigenvectors which are positive and have magnitude $\Omega(\lambda_{\max}(A))$; using the subspace embedding conditions of~\cite{SW25}, $\bT$ will have sufficiently small distortion on this subspace.

More generally, the proofs of Lemma~\ref{lem:2} and Lemma~\ref{lem:3} utilize this approach. On the one hand, the subspace embedding condition of~\cite{SW25} shows that $\bT$ acts as a low-distortion subspace embedding on subspaces spanned by high-magnitude eigenvectors (the higher the magnitude, the lower the distortion). This allows us to upper bound the spectral norm of sub-matrices of $\bM$. On the other hand, the distribution of squared-$\ell_2$-mass on $\tilde{\bw}$ will necessarily be biased toward the positive eigenvectors of large magnitude, so bounds on inner-products will ``focus-in'' on regions of $\bM$ which we can more tightly control.

\paragraph{Negative Eigenvalue Interference.} The complexity degradation from negative eigenvalues of large magnitude happens because the subspace embedding condition $\bT$ inevitably ``mixes'' some of large negative eigenspace $U_N$ and the large positive eigenspace $U_P$. In particular, even though every vector in $U_N$ and every vector in $U_P$ are orthogonal, $\bT U_N$ and $\bT U_P$ will not be. Hence, when the algorithm computes the top eigenvector of $\by$ of $\bT A \bT^{\intercal}$, $\by$ will aim to be aligned with the very top eigenvectors of $\bT U_P$; however, doing so inevitably also aligns $\by$ with $\bT U_N$ as well. When we lift $\bx = A \bT^{\intercal} \by = U \Lambda (\bT U)^{\intercal} \by$, the alignment between $\by$ and the very top of $\bT U_N$ is further multiplied by $-\lambda_{\min}(A)$. In Appendix~\ref{app:negative-interfere}, we give a concrete example of this phenomenon, and why an analysis relying solely on the subspace embedding condition of~\cite{SW25} cannot achieve query complexity $\tilde{O}(1/\eps^2)$ when $\lambda_{\min}(A)$ greatly exceeds $\lambda_{\max}(A)$.

\subsubsection{Lower Bound}

The lower bound proceeds via a reduction to the classic ``coin problem.'' For a uniformly random draw $\bb \sim \{-1,1\}$, an algorithm is provided $s$ independent draws $\bc_1, \dots, \bc_s$ from a distribution supported on $\{-1,1\}$ whose mean is $\eps \bb$. The algorithm only sees the draws $\bc_1,\dots, \bc_s$ and must learn $\bb$, and it is a standard fact this requires $s = \Omega(1/\eps^2)$ draws. Theorem~\ref{thm:main-lb-intro} will show that a $q$-query algorithm for the top eigenvector could be used to solve the coin problem with only $q/n$ draws, giving the desired lower bound.

By Yao's minimax principle, it suffices to assume that a (deterministic) $q$-query algorithm can approximate the top eigenvector of a random symmetric matrix $\bA \in \{-1,1\}^{n\times n}$ with ones on the diagonals and whose off-diagonal entries have expectation is $2\eps \bx \bx^{\intercal}$ where $\bx \sim \{-1,1\}^n$. Perhaps unsurprisingly, with high probability over the draw of $\bA$, any approximately top eigenvector $\by$ of $\bA$ must be heavily correlated with the hidden vector $\bx$ (up to a global sign). The key in our argument, which must boost the $\Omega(1/\eps^2)$ lower bound on the coin problem to an $\Omega(n/\eps^2)$ lower bound on the eigenvector problem, is to ``embed'' the coin problem into a single random coordinate of $\bx$. In particular, we consider the following:
\begin{itemize}
    \item We draw a target index $\bell \sim [n]$, and we generate a random vector $\bx \sim \{-1,1\}^n$ where every entry $\bx_{i} \sim \{-1,1\}$ for $i \neq \ell$ is known to the algorithm, and where the algorithm will pretend $\bx_{\ell} = \bb$ (even though it does not know $\bb$ and can only access the coins $\bc_1, \dots, \bc_s$). 
    \item We may now generate a matrix $\bA$ according to our distribution. For any entry $(i,j)$ not in the $\ell$-th row or column, we may explicitly generate $\bA_{i,j}$ using $\bx_i$ and $\bx_j$. For any off-diagonal entry in the $\ell$-th row or column, even though we do not know $\bx_{\ell} = \bb$, a single coin $\bc_{j}$ of expectation $\eps \bb$ may be used to determine $\bA_{\ell,j}$ (see Definition~\ref{def:coin-distribution}).
\end{itemize}
Importantly, the top-eigenvector algorithm is completely unaware of the index $\bell \sim [n]$ so among the $q$-queries to the matrix $O(q/n)$ will fall in the $\ell$-th row or column in expectation, and as a result, only $O(q/n)$ draws of coins are needed to generate the probes to the matrix $\bA$. Once the algorithm finds the approximately top eigenvector $\by$ of $\bA$, we know that $\by$ must correlate with $\bx$ (and hence fix a global sign such that most coordinates of $\by$ and $\bx$ must agree). Once more, the algorithm which produced $\by$ was unaware of $\bell$ and must have $\sign(\by_i) = \sign(\bx_i)$ for most $i \in [n]$; therefore, what we formalize in Section~\ref{sec:lb} is that this implies the value $\by_{\ell}$ will agree with $\bb$ with high constant probability (thereby giving the lower bound).

\textbf{Lower Bound via Direct-Sum Theorems for Information Complexity.} In a prior submission of this work, an anonymous reviewer suggested that Theorem~\ref{thm:main-lb-intro} be proved via direct-sum theorems for information complexity (with respect to the uniform distribution~\cite{CKW12}). The direct-sum approach of ``embedding'' a problem within a collection of problems to boost the lower bound is exactly what happens here. While we agree with the reviewer, we keep the current presentation since it is elementary and entirely self-contained.

\ignore{\subsubsection{Applications to Local Graph Partitioning}

The proof of Theorem~\ref{thm:cheeger} and Theorem~\ref{thm:trevisan} proceeds by utilizing Theorem~\ref{thm:main-intro} to give local versions of Cheeger's inequality~\cite{AM85}, using the second-largest eigenvector, and Trevisan's algorithm~\cite{L12}, using smallest eigenvector, of the normalized adjacency matrix $M$ of an undirected graph $G = ([n], E)$,
\[ M_{ij} = \dfrac{\ind\{(i,j) \in E\}}{\sqrt{\deg_G(i) \cdot \deg_{G}(j)}}, \]
which have all eigenvalues between $-1$ and $1$. For example, the proof of Cheeger's inequality gives an approximately sparsest cut via a ``coordinate-wise'' rounding procedure which takes a second largest eigenvector and decides which side of a cut a vertex is on by solely looking at the coordinate corresponding to that vertex. Our local algorithm for approximating the top eigenvector is well-suited for providing the needed coordinate-wise local access, as long as the ``scale'' of the error incurred in Theorem~\ref{thm:main} does not significantly degrade the guarantees of the cut. 

The simplest case to consider is when all degrees are $\Omega(n)$ and $m = \Omega(n^2)$. In this case, we use Theorem~\ref{thm:main} on a noisy version (where we use sampling to estimate degrees) of $M' = M - uu^{\intercal}/(2m)$, where $u \in \R^n$ is the top eigenvector and is given by $u_i = \sqrt{\deg_G(i)}$ for each $i \in [n]$.\footnote{Here, the top eigenvector of $M'$ is the second top eigenvector of $M$.} The matrix $M'$ has every entry $O(1/n)$ so, after re-scaling by $n$, the error of Theorem~\ref{thm:main} would incur an additive $\eps$ error on the Rayleigh quotient in $M'$, which becomes an additive $O(\eps)$ error on the sparsity.\footnote{One aspect which shows up in our analysis is that }

\begin{itemize}
    \item The normalized adjacency matrix $M$ depends on the degrees, $\deg_G(i)$, which we cannot hope to know exactly with $\poly(1/\eps)$ query complexity.
    \item Furthermore, the guarantees of Theorem~\ref{thm:main} are for obtaining an additive-$\eps n$ approximation to the Rayleigh quotient of a bounded-entry matrix.
\end{itemize}
In the case that all the degrees were $\Omega(n)$, then by sampling $\poly(1/\eps)$ entries, we could estimate the degree of all but $(1-\xi)$-fraction of vertices to $(1+\xi)$-factor, for any $\xi = \poly(\eps)$. In addition, the entries of $M$ would all be $O(1/n)$, so that after scaling by $\Theta(n)$, we'd get a bounded-entry matrix and additive-$\poly(\eps) n$ errors would correspond to the right ``scale'' for Rayleigh quotients with $M$. The last technical hurdle which arises is that one iteration of rounding the (second top or smallest) eigenvector results in cuts which may 

in the case of max-cut, Theorem~\ref{thm:main} would provide coordinate-wise access to the top eigenvector of $-M$ (which would correspond to the smallest eigenvector), and for sparsest cut, Theorem~\ref{thm:main} would provide coordinate-wise access to the top eigenvector of $M - uu^{\intercal}/ (2m)$, where $u = \sqrt{\deg_G} \in \R^n$ is the top eigenvector of $M$ (which removes the top eigenvector). In Section~\ref{sec:sparse-cut} and Section~\ref{sec:max-cut}, the above approach will proceed as follows:
\begin{itemize}
    \item First, the normalized adjacency matrix $M$ (as well as the deflated $M - uu^{\intercal} / (2m)$) depends on the degrees $\deg_G(i)$, which would be too expensive to exactly simulate. Instead, we approximate the number of edges $\hat{m}$, as well as the degrees, $d(i)$, via sampling $\poly(1/\eps)$; recall, in the dense graph model, we will obtain degree estimates up to $\poly(\eps) n$, and we may assume $\poly($
    \item Second, we ensure that the proof of Cheeger's inequality and Trevisan's algorithm is robust to the errors incurred from Theorem~\ref{thm:main}. 
\end{itemize}}

\section{Preliminaries}

\subsection{Local Computation Algorithms} 

We begin by describing the model of local computation algorithms adapted from the definition of~\cite{RTVX11}, and tailored to the problems which we will study in this paper. In all cases, the problem can be phrased as that of receiving query access to the entries of an $n\times n$ matrix (either the bounded entry matrix directly, or the adjancency matrix of an underlying graph), and the goal is to provide local access to the entries of a vector $\bx \in \R^n$ (which will either correspond to a top eigenvector, or encode a cut of the underlying vertices of the graph).

\begin{definition}[Local Computation Algorithm for Matrices]\label{def:lca}
A local computation algorithm $\calA$ is a (randomized) algorithm which has:
\begin{itemize}
    \item Query access to a large object, which in this work, will be the entries of a symmetric $n\times n$ matrix $A$ with $\|A\|_{\infty} \leq 1$,
    \item Access to a shared randomness tape $\brho \in \{0,1\}^{*}$, and
    \item A query index $i \in [n]$, an accuracy parameter $\eps \in (0, 1)$, and a failure probability $\delta \in (0,1)$.
\end{itemize}
Letting $\bx \in \R^n$ be the random vector where
\[ \bx_i = \text{ output of $\calA$ on input $i \in [n]$, $\eps,\delta \in (0,1)$, with access to $A$ and $\brho$,}\]
we will prove properties of the vector $\bx$ while ensuring that $\calA$ makes few queries to the underlying matrix $A$. 
\end{definition}
The query complexity $q(\eps,\delta)$ of the algorithm $\calA$ is the maximum number of entries of the matrix $A$ that $\calA$ may inspect on any particular input $i \in [n]$, and the time complexity $t(\eps,\delta)$ is the maximum amount of time $\calA$ needs to produce an output in the word RAM model, where we have dropped the parameter $n$ since our algorithms will have query and time complexity independent of $n$. Our algorithm $\calA$, on input $i \in [n]$, $\eps,\delta \in (0,1)$ will proceed by first executing a preprocessing phase on the matrix $A$ which does not depend on the index $i$, and is shared for all possible queries $i \in [n]$. Hence, it is conceptually useful to divide the algorithm $\calA$ into two phases, the preprocessing phase, and the query phase. As a result, we let $q_{p}(\eps,\delta), q_{q}(\eps,\delta)$ denote the query complexity of preprocessing and querying, respectively, and $t_{p}(\eps,\delta), t_q(\eps,\delta)$ for the time complexity accordingly. 

\subsection{Randomized Linear Algebra}

For fixed $n \in \N$ and a parameter $q$, a row-sampling matrix distribution $\calR(n,q)$ is given by the following procedure to generate $\bT \sim \calR(n,q)$: starting with an empty set of rows, each $i \in [n]$ is included independently with probability $q$, and if it is included, the row $e_i / \sqrt{q} \in \R^n$ is added as a row of $\bT$ (here, $e_i$ denotes the $i$-th basis vector). Notice that the resulting matrix $\bT$ is a $\bt \times n$ matrix, where $\bt$ is a binomial with parameters $n$ and $q$.

\begin{definition}[Subspace Embedding]
For any $\eps \in (0,1/2)$ and an $n \times d$ matrix $X$, we say that a $t \times n$ matrix $T$ is an $\eps$-distortion subspace embedding of $X$ if for all $v \in \R^d$,
\[ \left| \| X v \|_2^2 - \| T X v \|_2^2 \right| \leq \eps \| X v\|_2^2. \]
\end{definition}
\begin{definition}[Leverage Scores]\label{def:lev-scores}
For an $n\times d$ matrix $X$, the leverage score for row $i$ of $X$ is $\tau_i = e_i^{\intercal} X (X^{\intercal} X)^{\dagger} X^{\intercal} e_i$, where $(X^{\intercal} X)^{\dagger}$ denotes the Moore-Penrose pseudo-inverse of $X^{\intercal} X$, and we will make use of the following equivalent characterization,\footnote{Writing $X = U \Sigma V^{\intercal}$ as its singular value decomposition, $V$ is an orthonormal basis of $\R^d$ and $U$ an orthonormal basis of the column span of $X$. The Moore-Penrose pseudo-inverse of $X^{\intercal} X$ becomes $V \Sigma^{-2} V^{\intercal}$, and hence $\tau_i = \| U^{\intercal} e_i\|_2^2$. The fact that $U$ is an orthonormal basis of the column span of $X$ implies the maximum $(Xv)_i / \| Xv\|_2^2$ over $v \in \R^d$ is equivalent to maximizing $\langle e_i, Uz\rangle^2 / \| Uz\|_2^2$ over $z$, which is at most $\| U^{\intercal} e_i\|_2^2$ by Cauchy-Schwarz and also attains it by $z = U^{\intercal} e_i$.}
\[ \tau_i = \max_{v \in \R^d} \frac{(Xv)_i^2}{\| Xv\|_2^2}. \]
\end{definition}

The following theorem appears in~\cite{SW25} as Theorem 3.4 in the preliminaries, and follows from the matrix Chernoff bound~\cite{K18b, T12}. 

\begin{theorem}\label{thm:sampling-subspace}
Fix any $n\times d$ matrix $X$. For any $\eps,\delta \in (0,1)$ and $p \in (0,1)$ which satisfies $p \geq \min(2\tau_i \log(d/\delta) / \eps^2, 1)$ for all $i \in [n]$, a draw $\bT \sim \calR(n,p)$ is an $\eps$-distortion subspace embedding of $X$ with probability at least $1-\delta$.
\end{theorem}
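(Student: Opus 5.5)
The plan is to prove Theorem~\ref{thm:sampling-subspace} by reducing it to the matrix Chernoff bound applied to a sum of independent rank-one psd matrices. First I normalize. Write the thin SVD $X = U\Sigma V^{\intercal}$, where $U \in \R^{n\times r}$ has orthonormal columns spanning the column space of $X$ and $r \le d$. Every vector $Xv$ has the form $Uz$ with $\|Xv\|_2 = \|z\|_2$, so the embedding condition becomes $\|U^{\intercal}\bT^{\intercal}\bT U - I_r\|_2 \le \eps$. By the footnote to Definition~\ref{def:lev-scores}, $\tau_i = \|U^{\intercal}e_i\|_2^2$. The task therefore reduces to a spectral-norm concentration statement for an $r\times r$ matrix.

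Next I expand $\bT^{\intercal}\bT = \sum_i (\bxi_i/p)\, e_ie_i^{\intercal}$, where the $\bxi_i \sim \mathrm{Bernoulli}(p)$ are independent. This gives
\[ U^{\intercal}\bT^{\intercal}\bT U = \sum_{i=1}^n \bY_i, \qquad \bY_i = \frac{\bxi_i}{p}\, u_iu_i^{\intercal}, \]
where $u_i = U^{\intercal}e_i$. Each $\bY_i$ is psd with $\E[\sum_i \bY_i] = U^{\intercal}U = I_r$, so $\mu_{\min} = \mu_{\max} = 1$. Rows with $p = 1$ contribute deterministically and can be absorbed. For every other row, the hypothesis gives $\|\bY_i\|_2 \le \tau_i/p \le \eps^2/(2\log(d/\delta)) =: R$.

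The matrix Chernoff bound (Tropp) then gives
\[ \Pr\left[\lambda_{\max} \ge 1+\eps\right] \le r\, e^{-\eps^2/(3R)} \qquad\text{and}\qquad \Pr\left[\lambda_{\min} \le 1-\eps\right] \le r\, e^{-\eps^2/(2R)}. \]
Substituting $R$, both exponents are at least $\tfrac{2}{3}\log(d/\delta)$. The main obstacle is this constant bookkeeping. The stated constant $2$ is tight enough that I would need to use the sharper Chernoff form $\left[e^{\eps}/(1+\eps)^{1+\eps}\right]^{1/R}$, or else accept that the theorem as cited from~\cite{SW25} hides an absolute constant, and take the constant in $p$ large enough, for example $3$ in place of $2$, to make the failure probability $r(\delta/d)^{c} \le \delta$ after a union bound over the two tails. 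With that constant fixed, the remaining steps are routine, and the bound on $\|U^{\intercal}\bT^{\intercal}\bT U - I_r\|_2$ yields the subspace embedding for all $v$ at once.
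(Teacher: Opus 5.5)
Your reduction to $\|U^{\intercal}\bT^{\intercal}\bT U - I_r\|_2\le\eps$, followed by the matrix Chernoff bound with $R=\max_i \tau_i/p$, is correct and is the route the paper has in mind; the paper gives no proof of its own, and simply quotes the result as Theorem~3.4 of~\cite{SW25}, attributing it to the matrix Chernoff bound. You are right that the two-tail bound does not yield the constant $2$ literally (even the sharp upper tail has exponent only about $(1-\eps/3)\log(d/\delta)$), but your suggested constant $3$ alone still gives a failure bound of $r\delta/d + r(\delta/d)^{3/2}$, which exceeds $\delta$ when $r=d$; using constant $3$ with $\log(2d/\delta)$ in place of $\log(d/\delta)$ gives at most $\delta/2$ per tail, and the change is harmless since the paper only applies this theorem up to constant factors.
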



\section{Local Computation Algorithm for the Top Eigenvector}

We receive as input query access to a symmetric $n \times n$ matrix $A$ with $\|A \|_{\infty} \leq 1$, and accuracy parameter $\eps \in (0, 1)$, as well as an index $i \in [n]$. The algorithm uses a public source of randomness (which is shared for all queries $i \in [n]$), makes queries to the matrix $A$, and will output a number $\bx_i \in \R$. Using the public source of randomness, the algorithm samples (for some parameter $q \in (0,1)$ which we specify later), a matrix $\bT \sim \calR(n,q)$ of size $\bt \times n$.

The algorithm is divided into a preprocessing phase which is completely independent of $i$, and the query phase, which produces the output coordinate $\bx_i$ for the value of $i$. In a local implementation of the algorithm, one may execute the preprocessing phase prior to each query with the same source of randomness.

\begin{figure}[h]
\begin{framed}
\textbf{Preprocessing Phase:} In the preprocessing phase independent of the input $i \in [n]$, the algorithm proceeds by:
\begin{enumerate}
\item Query the $\bt^2$ entries of $\bA_1 = \bT A \bT^{\intercal}$.
\item\label{ln:3} Find the top eigenvector $\by \in \R^{\bt}$ of $\bA_1$ with eigenvalue $\hat{\blambda}_{\max}$, and store $\by$ and $\hat{\blambda}_{\max}$.
\end{enumerate}
\end{framed}
\caption{Preprocessing Algorithm}\label{fig:preprocess}
\end{figure}

Notice that the query complexity of the preprocessing phase is at most $\bt^2$ which is at most $O((nq)^2)$ with high constant probability. Furthermore, the running time of the preprocessing phase consists of the time needed to compute a top eigenvector $\by$ which can be done in $\poly(\bt)$ time up to high precision in the Word-RAM model.\footnote{We will ignore bit-complexity dependencies involves in rounding eigenvectors and eigenvalues.} Having stored the vector $\by \in \R^{\bt}$, the (full) output vector is $\bx = A\bT^{\intercal} \by / \hat{\blambda}_{\max}$, which means that on query $i \in [n]$, it suffices for the algorithm to know the $i$-th row of $A \bT^{\intercal}$.

\begin{figure}[h]
\begin{framed}
\textbf{Query Phase:} On the query for index $i \in [n]$, the algorithm will:
\begin{enumerate}
\item Query the $\bt$ entries at the $i$-th row of $A$ for the columns selected by $\bT$, to compute $(A\bT^{\intercal})_i$. 
\item Output $\bx_i = \langle (A\bT^{\intercal})_i, \by\rangle / \hat{\blambda}_{\max}$.
\end{enumerate}
\end{framed}
\caption{Query Algorithm}\label{fig:query}
\end{figure}

The above completes the description of the algorithm, and the remainder of the section is devoted to proving the following theorem, which gives the parameter setting and guarantees for approximating the top eigenvector. In what follows, we refer to $\lambda_{\max}(A)$ as the maximum eigenvalue and $\lambda_{\min}(A)$ as the magnitude of the minimum eigenvalue (which is smaller than $0$, or $0$ if the matrix is psd). Since we seek additive $\eps n$-approximations, we will assume throughout that $\lambda_{\max}(A) \geq c \cdot \eps n$ for a large constant factor $c$ (see Remark~\ref{app:lambda-max-large}). Furthermore, even though the parameter settings below depend on $\lambda_{\max}(A)$ and $\lambda_{\min}(A)$, it suffices to know an additive $\eps n$ approximation to these, which one can obtain without any asymptotic overhead using the eigenvalue estimation algorithm of~\cite{SW25}.
\begin{theorem}\label{thm:main}
For any $n \times n$ symmetric matrix $A$ with $\|A \|_{\infty} \leq 1$ and any $\eps > 0$, let $\bx \in \R^n$ be the output of the algorithm on $A$, initialized with
\[ q = \max\left\{ \frac{1}{\eps^2 n}, \frac{n^{1/3} \cdot \lambda_{\min}(A)^{4/3}}{\eps^{2/3} \cdot \lambda_{\max}(A)^{8/3}} \right\} \cdot \polylog(1/\eps), \]
Then,
\begin{align*}
\Prx\left[ \max_{x_0 \in \R^n} \dfrac{\langle x_0, Ax_0\rangle}{\|x_0\|_2^2} - \dfrac{\langle \bx, A \bx \rangle}{\|\bx\|_2^2} \leq \eps n\right] \geq  0.99 
\end{align*}
\end{theorem}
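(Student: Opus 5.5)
We plan to compare the Rayleigh quotient of the lift $\bx = A\bT^{\intercal}\by/\hat{\blambda}_{\max}$ with $\hat{\blambda}_{\max}$ and then invoke~\cite{SW25}, which gives $|\hat{\blambda}_{\max} - \lambda_{\max}(A)| \leq \eps n$ whenever $q \gsim \polylog(1/\eps)/(\eps^2 n)$. Since $\bT\bx = \by$ and $\by$ is a unit eigenvector, we have $\langle \by, \bT A\bT^{\intercal}\by\rangle = \hat{\blambda}_{\max}$ and $\|\bT\bx\|_2^2 = 1$. Also $\langle \bx, A\bx\rangle = \langle \by, \bT A^3 \bT^{\intercal}\by\rangle/\hat{\blambda}_{\max}^2$ and $\|\bx\|_2^2 = \langle \by, \bT A^2\bT^{\intercal}\by\rangle/\hat{\blambda}_{\max}^2$. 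Hence it suffices to show two bounds:
\[
\big|\|\bx\|_2^2 - 1\big| \lsim \eps n/\lambda_{\max}(A),
\qquad
\big|\langle \bx, A\bx\rangle - \hat{\blambda}_{\max}\big| \lsim \eps n.
\]
We write everything in the eigenbasis $A = U\Lambda U^{\intercal}$. Setting $\bz = (\bT U)^{\intercal}\by \in \R^n$ gives $\bx = U\Lambda\bz/\hat{\blambda}_{\max}$, $\hat{\blambda}_{\max} = \sum_i \lambda_i \bz_i^2$, $\langle \bx, A\bx\rangle = \sum_i \lambda_i^3 \bz_i^2/\hat{\blambda}_{\max}^2$ and $\|\bx\|_2^2 = \sum_i \lambda_i^2\bz_i^2/\hat{\blambda}_{\max}^2$.

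\textbf{Coarse structure.} We split eigenvalues into dyadic magnitude bands $\{i : |\lambda_i| \in [2^{-k-1},2^{-k})\cdot n\}$. Band $k$ has at most $4^{k+1}$ eigenvalues, since $\sum\lambda_i^2 \leq n^2$, and hence its leverage scores (for $U_k$, or for $U_k\Lambda_k^{1/2}$) are small; for the top band this follows from $\|u_i\|_\infty^2 \leq 1/\lambda_i^2 \cdot \|A e_j\|^2 \leq n/\lambda_i^2$. Using Theorem~\ref{thm:sampling-subspace} with a union bound over $O(\log(1/\eps))$ bands, $\bT$ is a distortion-$\eta_k$ subspace embedding of each band's span with $\eta_k \approx \sqrt{2^{2k}/(qn)}$, that is, distortion of order $|\lambda|^{-1}\sqrt{n/q}$. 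The tail (all $|\lambda_i| \leq \eps n$) we handle by treating the bounded tail matrix $A_{\text{tail}}$ as a perturbation: $\|\bT A_{\text{tail}}\bT^{\intercal}\|_2 \lsim \eps n$ with constant probability via the~\cite{SW25} norm bound. From $\hat{\blambda}_{\max} = \sum \lambda_i \bz_i^2 \geq \lambda_{\max} - \eps n$ and the fact that $\sum_{i\in\text{band}} \bz_i^2 \leq (1+\eta)\|\by\|^2$ on each band, we first show that the weighted mass $\sum_{\lambda_i<0}|\lambda_i|\bz_i^2$ is small and the positive mass concentrates on eigenvalues $\Theta(\lambda_{\max})$; this is the analogue of Lemmas~\ref{lem:neg-small-pos-large} and~\ref{lem:small-bottom}.

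\textbf{Refined estimate.} With $\tilde{\bw} = \sign(\Lambda)\Lambda^{1/2}\bz$, both differences above become quadratic forms $\langle \tilde{\bw}, \bM \tilde{\bw}\rangle$, where $\bM = \Lambda^{1/2}(I - U^{\intercal}\bT^{\intercal}\bT U)\Lambda^{1/2}$ (times an extra $\Lambda$ for the numerator). We bound these blockwise over pairs of bands $(k,k')$. Each block has norm at most about $\sqrt{|\lambda_k\lambda_{k'}|}\,\max(\eta_k,\eta_{k'})$, obtained from the subspace embedding on the union of the two bands. This is weighted by $\|\tilde{\bw}_k\|\|\tilde{\bw}_{k'}\|$, which the coarse structure makes small away from the top positive band. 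Summing geometric series gives an error of order $\sqrt{n/q}$ from the positive part. This is at most $\eps n$ once $q \gsim 1/(\eps^2 n)$.

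\textbf{Main obstacle: negative interference.} The hard case is the cross terms between large negative bands and the top positive band. There $\bT U_N$ and $\bT U_P$ are not orthogonal, and the lift multiplies the negative alignment by $\lambda_{\min}$. I would bound $\|\tilde{\bw}_N\|^2$ using the mixing $\|U_N^{\intercal}\bT^{\intercal}\bT U_P\| \lsim \sqrt{n/q}/\sqrt{\lambda_{\min}\lambda_{\max}}$. This gives $|\bz_N| \lsim$ mixing $\cdot |\bz_P|$, which in turn controls $\lambda_{\min}^2\|\bz_N\|^2/\lambda_{\max}^2$. I would then balance the resulting error against $\eps n$. I expect this balance to force the second term of $q$, namely $n^{1/3}\lambda_{\min}^{4/3}/(\eps^{2/3}\lambda_{\max}^{8/3})$, and I anticipate this is where the exponents need the most care. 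Finally, we combine both bounds to get $\langle\bx,A\bx\rangle/\|\bx\|^2 \geq \hat{\blambda}_{\max} - O(\eps n) \geq \lambda_{\max} - O(\eps n)$ and rescale $\eps$.
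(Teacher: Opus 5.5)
\textbf{Verdict.} Your architecture is the paper's, and for $\lambda_{\min}(A)=O(\lambda_{\max}(A))$ it is essentially the paper's proof. The two comparisons you isolate are Lemmas~\ref{lem:2} and~\ref{lem:3}, your coarse structure is Lemmas~\ref{lem:neg-small-pos-large} and~\ref{lem:small-bottom}, and your blockwise band estimates are Lemma~\ref{lem:spectral-norm-bounds}. The gap is in the negative-interference step, which is the only source of the second term of $q$.

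\textbf{First gap: the mixing bound.} The bound $\|U_N^{\intercal}\bT^{\intercal}\bT U_P\|_2\lsim\sqrt{n/q}/\sqrt{\lambda_{\min}\lambda_{\max}}$ does not follow from your tools. Even for the top band, the leverage scores of $[U_0^+\,|\,U_N]$ can only be bounded by $O(n/\lambda_{\max}^2)$ (Claim~\ref{cl:1}). So Theorem~\ref{thm:sampling-subspace} gives distortion, and hence mixing, of order $\sqrt{n/q}/\lambda_{\max}$. Your own block estimate $\sqrt{|\lambda_k\lambda_{k'}|}\max(\eta_k,\eta_{k'})$ gives the same thing. The geometric-mean bound is exactly the improvement that Appendix~\ref{app:negative-interfere} says the subspace-embedding analysis cannot deliver. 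In the rank-two example there, the Rayleigh-quotient loss is of order $\lambda_{\min}\langle\bT u,\bT v\rangle^2$. Your bound would make it $O(L^2/\lambda_{\max})\le O(L)$, i.e.\ no degradation at all. That contradicts your own expectation that the balance forces the second term of $q$.

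\textbf{Second gap: every accuracy is fixed at $\eps n$.} In the interference regime this does not work.
\begin{itemize}
\item \emph{Tail term.} As in the fourth term in the proof of Lemma~\ref{lem:n-bound}, project $\hat{\blambda}_{\max}\by=\bT A\bT^{\intercal}\by$ onto $\bT U_N$ and solve for $\Lambda_N(\bT U_N)^{\intercal}\by=-\hat{\blambda}_{\max}U_N^{\intercal}\bx$. This picks up $(\bT U_N)^{\intercal}\bT A_{\mathrm{tail}}\bT^{\intercal}\by$, which your tools bound only by $\|\bT A_{\mathrm{tail}}\bT^{\intercal}\|_2\approx\eps n$. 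The resulting loss $\lambda_{\min}(\eps n)^2/\lambda_{\max}^2$ exceeds $\eps n$ exactly when $\eps n\,\lambda_{\min}>\lambda_{\max}^2$, which is exactly when the second term of $q$ is the larger one. Raising $q$ does not push $\|\bT A_{\mathrm{tail}}\bT^{\intercal}\|_2$ below $\|A_{\mathrm{tail}}\|_2$.
\item \emph{Eigenvalue accuracy.} Using only $\hat{\blambda}_{\max}\ge\lambda_{\max}-\eps n$ gives $\|\tilde{\bw}_N\|_2^2=O(\eps n)$ however large $q$ is.
\end{itemize}

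\textbf{What the paper does instead.} It ties every accuracy to $L=\sqrt{n/q}/\polylog(1/\eps)$. The event $\calbE$ requires:
\begin{itemize}
\item distortion $\alpha L/\lambda$ on each eigenspace of magnitude at least $\lambda$, for every $\lambda\ge L$;
\item eigenvalue accuracy $\alpha L$, obtained by running \cite{SW25} at accuracy $\alpha L/n$;
\item $\|\bT A_L\bT^{\intercal}\|_2,\ \|A_L\bT^{\intercal}\|_2\lsim L$, via the $\log(nq)$ bounds of Appendix~\ref{app:spectral-norm-decay}.
\end{itemize}
With the correct mixing $L/\lambda_{\max}$, the factor $\gamma=\sqrt{L\lambda_{\min}}/\lambda_{\max}$ then propagates as follows:
\begin{itemize}
\item $\|\bw_{-0}^+\|_2^2\lsim(Lr/\lambda_{\max})(1+\gamma)^2$;
\item the tail term contributes $L\gamma^2$;
\item the $U_{-0}^+$--$U_N$ block, of norm $O(\sqrt{\lambda_{\min}L})$, paired with $\|\bw_{-0}^+\|_2$, gives $\sqrt{Lr}\,\gamma^2$ in $\|\Lambda_N\tilde{\bw}_N\|_2/\hat{\blambda}_{\max}$, hence an $L\gamma^4$ error in Lemma~\ref{lem:3}.
\end{itemize}
Solving $L\gamma^4\le\eps n$ gives $L=(\eps n\,\lambda_{\max}^4/\lambda_{\min}^2)^{1/3}$, and $q=n/L^2$ is exactly the second term in the statement. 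This is the computation your ``balance'' step has to perform, and it cannot be carried out with the tail cutoff and eigenvalue accuracy fixed at $\eps n$.
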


We note that any setting of $\lambda_{\min}(A) = O(\lambda_{\max})$ results in a bound on $q$ which is at most $1/(\eps^2 n) \cdot \polylog(1/\eps)$ since $\lambda_{\max}(A) \geq \eps n$. Furthermore, the worst-case occurs when $\lambda_{\max}(A) = \Theta(\eps n)$ and $\lambda_{\min}(A) = \Theta(n)$, in which case the above bound on $q$ becomes $1/(\eps^{3.\ol{3}} n) \cdot \polylog(1/\eps)$. It is an interesting problem to handle this ``negative-eigenvalue interference'' without any complexity degradation, since this would require new ideas. We refer the reader to Appendix~\ref{app:negative-interfere}, where we provide a concrete example where our analysis does incur some ``negative-eigenvalue interference.''

\begin{remark}[Assuming $\lambda_{\max}(A) \geq c \cdot \eps n$ for a large $c > 0$]\label{app:lambda-max-large} First, suppose $\lambda_{\max}(A)$ is positive, and let $c_0 \in \R_{\geq 0}$ such that $\lambda_{\max}(A) = c_0 \cdot \eps n$. If $c_0$ is larger than an absolute constant (say, $10^{-10}$), then we consider executing the algorithm with $\eps_0 = (c_0 / c) \eps$, and hence we are in the situation where $\lambda_{\max}(A) = c_0 \cdot \eps n \geq c \cdot \eps_0 n$. Otherwise, in order to meet our desired guarantee, it suffices to output query access to a vector $\bx$ which satisfies $\langle \bx, A \bx\rangle / \|\bx\|_2^2 \geq - (\eps/2) n$. We let $\bx$ be a uniform random vector from the unit sphere, and by rotational invariance, we may assume that our Rayleigh quotient is $\sum_{i=1}^n \lambda_i \bx_i^2$. Note, the expected Rayleigh quotient is exactly $\Tr(A) / n \in [-1, 1]$. For the variance, we use the fact that the fourth moment $\Ex[\bx_i^4]$, as well as $\Ex[\bx_i^2 \bx_j^2]$ for a uniform random unit vector $\bx$ are both $O(1/n^2)$. As a result, the expected square of the Rayleigh quotient is at most $\sum_{i=1}^n \lambda_i^2 + \sum_{i \neq j} \lambda_{i} \lambda_j \leq n^2 + \Tr(A)^2$ divided by $n^2$, giving us a constant second moment. By Chebyshev's inequality, $\bx$ is very likely to have Rayleigh quotient which is at least $-O(1) \geq -\eps n$ with high constant probability.
\end{remark}

In our applications to the sparsest cut and max cut, it will become important to ``boost'' the error probability to $1-\delta$. Even though Theorem~\ref{thm:main} is stated for success probability $0.99$, a benefit of the fact that our algorithm outputs query access to an eigenvector means that we can verify a lower bound on $\langle \bx, A \bx \rangle / \|\bx\|_2^2$ via sampling. This allows us to ``boost'' our error probability to $1-\delta$ via multiple independent executions of Theorem~\ref{thm:main} and verifying the outputs via sampling. Specifically, let $k = O(\log(1/\delta)$ and let $\bx^{(1)},\dots \bx^{(k)}$ be the result of $k$ independent calls to Theorem \ref{thm:main}. By independence, the probability that $i\in [k]$ outputs have Rayleigh quotient with $A$ less than $\lambda_{\max}(A) - \eps n$ is at most $0.01^k \leq \delta$.  Furthermore, we can estimate the numerator and denominator of the Rayleigh quotient up to an additive $\eps n$ and $\eps$ error respectively, using $\poly(\log(1/\delta)/\eps)$ queries to $\bx$ and $A$, since we know $\|A\|_{\infty} \leq 1$ and $\|\bx\|_{\infty}^2 \leq 1/(\eps^2 n)$.

\section{Analysis of the Algorithm}\label{sec:analysis}

\subsection{Theorem~\ref{thm:main} assuming Lemma~\ref{lem:1},~\ref{lem:2} and~\ref{lem:3}}

For the proof of Theorem~\ref{thm:main}, we consider a fixed $n \times n$ symmetric matrix $A$ with $\|A\|_{\infty} \leq 1$, as well as a fixed error parameter $\eps > 0$. We consider the following deterministic events, which we will show occur with high constant probability, and we will show these ensure the conclusion of Theorem~\ref{thm:main}. In particular, the events depend on parameters $\alpha \in (0, 1)$ which should be thought about being $1/\polylog(1/\eps)$, and $L \geq 0$ (which we set shortly) and should be thought of as being smaller than the allowable error (i.e., $o(\eps n)$). Both $n \in \N$ and $\eps > 0$ should be viewed as asymptotic parameters with $n \to \infty$ and $\eps \to 0$.
\begin{definition}[The Events]\label{def:events}
For $\alpha \in (0, 1)$ and $0 \leq L = o(\eps n)$, let $\calbE$ denote the indicator random variable that the following events (over the randomness in the draw of $\bT$) hold. 
\begin{itemize}
\item The matrix $\bT$ is an $\alpha L / \lambda$-distortion subspace embeddings for the span of eigenvectors of $A$ with eigenvalue of magnitude at least $\lambda$, for all $\lambda \geq L$.
\item The matrix $\bT$ satisfies $|\lambda_{\max}(\bT A \bT^{\intercal}) - \lambda_{\max}(A)| \leq \alpha L$ and $|\lambda_{\min}(\bT A \bT^{\intercal}) - \lambda_{\min}(A)| \leq \alpha L$.
\item Letting $A_{L}$ denote the ``portion'' of $A$ with eigenvalues of magnitude at most $L$ (see Definition~\ref{def:high-low-decomp} in Subsection~\ref{sec:decomp}), there is a fixed constant $c_0 > 0$ where $\| A_L \bT^{\intercal} \|_2$ and $\| \bT A_{L} \bT^{\intercal} \|_2$ are at most $c_0 L$.\footnote{We should note that both prior works~\cite{BDDMR24,SW25} utilize an argument of~\cite{T12} to upper bound $\| \bT A_L \bT^{\intercal}\|_2$ and lose a $O(\log n)$-factor. This logarithmic factor propagates throughout the analysis, even though~\cite{SW25} later manage to remove it via a recursive analysis. It turns out, a closer look at~\cite{T12} (as well as the precursor~\cite{RV07}) reveals that the dependence is $O(\log(nq))$ which removes the logarithmic factor in our application of $q = 1/(\poly(\eps) n)$. This seemed to have been missed in the prior work, and we give the argument in Appendix~\ref{app:spectral-norm-decay}.}
\end{itemize}
\end{definition}
For the remainder of the analysis, we will assume that the conditions of $\calbE$ hold deterministically for a parameter $L$ which we will precisely set later (see Lemma~\ref{lem:1} below). It will be useful to introduce the parameter $\gamma \geq 0$ which will measure our amount of ``negative-eigenvalue interference,'' and which depends on $\lambda_{\max}(A), \lambda_{\min}(A)$ as well as $L$; we set $\gamma$ to 
\begin{align}
\gamma \eqdef \dfrac{\sqrt{L \lambda_{\min}(A)}}{\lambda_{\max}(A)} .\label{eq:gamma-bound}
\end{align}
In our analysis, we will need to set the parameter $L$ such that the following two lemmas (Lemma~\ref{lem:2} and Lemma~\ref{lem:3}) give bounds which are at most $\eps n / \lambda_{\max}$ and $\eps n$.

\begin{lemma}\label{lem:2}
Whenever $\calbE$ holds,
\begin{align*}
\left| \| \bx \|_2^2 - \| \by \|_2^2\right| \lsim \frac{L \cdot (1 + \gamma^2)}{\lambda_{\max}(A)} \cdot \polylog(1/\eps).
\end{align*} 
\end{lemma}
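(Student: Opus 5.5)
We write $\bx = U\Lambda (\bT U)^{\intercal}\by/\hat{\blambda}_{\max}$ and use $\bT\bx = \by$ (since $\by$ is an eigenvector of $\bT A\bT^{\intercal}$). This reduces the claim to bounding the quadratic form
\[
\|\bx\|_2^2-\|\by\|_2^2 = \|\bx\|_2^2 - \|\bT\bx\|_2^2 = \frac{1}{\hat{\blambda}_{\max}^2}\langle \tilde{\bw}, \bM \tilde{\bw}\rangle,
\]
with $\tilde{\bw}=\sign(\Lambda)\Lambda^{1/2}(\bT U)^{\intercal}\by$ and $\bM = \Lambda^{1/2}(I - (\bT U)^{\intercal}\bT U)\Lambda^{1/2}$, as in (\ref{eq:inner-p-int}). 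We normalize $\|\by\|_2=1$. Under $\calbE$ we have $\hat{\blambda}_{\max} \geq \lambda_{\max}(A) - \alpha L \approx \lambda_{\max}(A)$, so it suffices to show $|\langle \tilde{\bw},\bM\tilde{\bw}\rangle| \lsim L\lambda_{\max}(A)(1+\gamma^2)\polylog(1/\eps)$.

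We partition the eigenvectors into dyadic magnitude classes $B_k = \{i : |\lambda_i|\in[2^k L, 2^{k+1}L)\}$ for $0\le k\le \log(n/L)=O(\log(1/\eps))$, together with the low part $|\lambda_i|<L$, i.e., $A_L$. We further split each class into positive and negative eigenvalues. We expand $\langle\tilde{\bw},\bM\tilde{\bw}\rangle$ into blocks $\langle \tilde{\bw}_S, \bM_{S,S'}\tilde{\bw}_{S'}\rangle$ and bound each block by $\|\bM_{S,S'}\|_2\|\tilde{\bw}_S\|_2\|\tilde{\bw}_{S'}\|_2$. There are only $\polylog(1/\eps)$ blocks.

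For two high classes, the subspace embedding in $\calbE$ bounds the block norm. On the span of eigenvalues of magnitude at least $\lambda = 2^{\min(k,k')}L$, $\bT$ has distortion $\alpha L/\lambda$, so $\|I-(\bT U_{\ge\lambda})^{\intercal}\bT U_{\ge\lambda}\|_2 \le \alpha L/\lambda$. Off-diagonal blocks are sub-blocks of this matrix. Hence $\|\bM_{B_k,B_{k'}}\|_2 \lsim \sqrt{2^k 2^{k'}}\,L\cdot \alpha L/(2^{\min}L)$, which is at most $L\sqrt{2^{\max}/2^{\min}}$. For blocks involving the low part we use $\|A_L\bT^{\intercal}\|_2\le c_0L$ and $\|\bT A_L\bT^{\intercal}\|_2\le c_0L$. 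Since $\Lambda_L^{1/2}(\bT U_L)^{\intercal}$ has norm $\lsim\sqrt{L}\cdot$(stuff), the low block of $\bM$ contributes $O(L)$ times the squared mass of $\tilde{\bw}$ there. Its cross terms are handled by Cauchy--Schwarz.

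The key input is the mass distribution of $\tilde{\bw}$, i.e., the analogues of Lemmas~\ref{lem:neg-small-pos-large} and~\ref{lem:small-bottom}. First, $\sum_i \sign(\lambda_i)\tilde{\bw}_i^2 = \langle\by,\bT A\bT^{\intercal}\by\rangle = \hat{\blambda}_{\max}$. Second, the total positive mass is at most $\lambda_{\max}(\bT A_+\bT^{\intercal})\lsim\lambda_{\max}(A)$ by the eigenvalue closeness in $\calbE$ applied to the positive part. Together these give total mass $\|\tilde{\bw}\|_2^2 \lsim \lambda_{\max}(A)$, and the negative mass is small, about $\alpha L$ plus error. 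The main obstacle is controlling how mass sits in classes far from $\lambda_{\max}(A)$, because the block bound above grows with $\sqrt{2^{\max}/2^{\min}}$.

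For the positive side, I would argue the following. Since $\by$ is the top eigenvector, mass in positive classes with $2^kL\ll\lambda_{\max}(A)$ is penalized. Precisely, $\hat{\blambda}_{\max} = \sum_{\text{pos}}\tilde{\bw}_i^2 - \sum_{\text{neg}}\tilde{\bw}_i^2$, and $\tilde{\bw}$ restricted to class $B_k$ has mass at most about $2^{k+1}L\,\|\bT U_{B_k}^{\intercal}\by\|^2$, which is at most $2^{k+1}L(1+\alpha)$. This gives class-$k$ mass $\lsim 2^kL$, so $\sum_{k,k'}\sqrt{2^{\max}/2^{\min}}\,L\sqrt{m_k m_{k'}}$ with $m_k\lsim \min(2^kL,\lambda_{\max})$ telescopes to $L\lambda_{\max}(A)\polylog(1/\eps)$.

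For the negative side, the mass in a negative class with $|\lambda|$ up to $\lambda_{\min}(A)$ is bounded by the total negative mass $\lsim L$ (using the second event). Its interaction with the top positive classes (size $\lambda_{\max}$) produces cross terms of order $L\sqrt{\lambda_{\min}/L}\cdot\sqrt{L\lambda_{\max}}$, together with negative-negative terms of order $L\cdot(\lambda_{\min}/L)\cdot L$ scaled appropriately. After dividing by $\hat{\blambda}_{\max}^2$, these become $L\gamma^2/\lambda_{\max}$ and $L\gamma/\lambda_{\max}$, and $\gamma\le 1+\gamma^2$. Summing all blocks and dividing by $\lambda_{\max}(A)^2$ yields the stated bound $L(1+\gamma^2)\polylog(1/\eps)/\lambda_{\max}(A)$.
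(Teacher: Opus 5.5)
Your route differs from the paper's and, once repaired, it works. The paper first establishes the coarse structure of $\tilde{\bw}$. Lemma~\ref{lem:neg-small-pos-large} gives positive mass $\lambda_{\max}\pm O(Lr)$ and negative mass $O(Lr)$. Lemma~\ref{lem:small-bottom} shows that, after normalization, only $O(Lr(1+\gamma)^2/\lambda_{\max})$ of the mass lies off $U_0^+$. The paper then peels off the low part (Claim~\ref{cl:intermediate}) and splits $\bw_H$ into just three pieces, $\bw_0^+,\bw_{-0}^+,\bw_N$, using the Frobenius-type block bounds of Lemma~\ref{lem:spectral-norm-bounds}. Lemma~\ref{lem:small-bottom} is needed there because the $(U_{-0}^+,U_{-0}^+)$ block is only bounded by $O(\sqrt{r\lambda_{\max}L})$, and its $(1+\gamma)^2$ factor is where the $\gamma^2$ comes from.

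You instead keep every dyadic class separate. You pair each block norm $\lsim \alpha L\sqrt{2^{b-a}}$ (classes $a\le b$) with the per-class bound $\|\tilde{\bw}_{B_a}\|_2^2\le 2^{a+1}L(1+\alpha)$. I checked that this avoids Lemma~\ref{lem:small-bottom} entirely; you still need the negative-mass half of Lemma~\ref{lem:neg-small-pos-large}.
\begin{itemize}
\item If class $b$ is positive, the block contributes $\lsim \alpha L^2 2^b\lsim \alpha L\lambda_{\max}$.
\item If class $b$ is negative, using $\|\tilde{\bw}_{B_b}\|_2^2\lsim Lr$, the block contributes $\lsim \alpha L^2\sqrt{r2^b}\lsim \alpha\sqrt{r}\,L^{3/2}\sqrt{\lambda_{\min}}=\alpha\sqrt{r}\,\gamma L\lambda_{\max}$.
\end{itemize}
Summing over the $\polylog(1/\eps)$ blocks and dividing by $\hat{\blambda}_{\max}^2$ gives $L(1+\gamma)\polylog(1/\eps)/\lambda_{\max}$. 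This implies the statement, and for large $\gamma$ improves it.

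Three steps need repair as written.

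\textbf{Negative mass.} The negative-mass bound needs the additive estimate $\|\tilde{\bw}_P\|_2^2\le\lambda_{\max}(\bT A_P\bT^{\intercal})\le\lambda_{\max}+O(Lr)$. A bound of the form $\lsim\lambda_{\max}$ only gives negative mass $O(\lambda_{\max})$, which is far too weak. The additive estimate is not part of $\calbE$, whose second bullet concerns $\bT A\bT^{\intercal}$ only. The paper cites Lemma~5.6 of~\cite{SW25} for it. It also follows from the subspace embeddings by absorbing the cross terms into the gap $\lambda_{\max}(1-2^{-h})$ via AM--GM. The resulting negative mass is $O(Lr)$, not ``about $\alpha L$''.

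\textbf{Negative-side orders.} Your negative-side estimates do not give what you claim. The quantity $L\sqrt{\lambda_{\min}/L}\cdot\sqrt{L\lambda_{\max}}$, divided by $\lambda_{\max}^2$, exceeds $L\gamma/\lambda_{\max}$ by a factor $\sqrt{\lambda_{\max}/L}$. You paired the worst block ratio, $\lambda_{\min}/L$, with the largest mass, $\lambda_{\max}$. Similarly, $L\lambda_{\min}/\lambda_{\max}^2=\gamma^2$, not $L\gamma^2/\lambda_{\max}$. The per-block accounting in the bullets above is what actually yields the bound.

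\textbf{Low block.} Do not bound the low block through $\|\bM_{LL}\|_2$ or $\|\tilde{\bw}_L\|_2$. $\calbE$ controls only $A_L\bT^{\intercal}$ and $\bT A_L\bT^{\intercal}$, not $\bT U_L$. Instead use $U_L\Lambda_L^{1/2}\tilde{\bw}_L=A_L\bT^{\intercal}\by$ and $\bT U_L\Lambda_L^{1/2}\tilde{\bw}_L=\bT A_L\bT^{\intercal}\by$, both of norm at most $c_0L$, exactly as in Claim~\ref{cl:intermediate}.
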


\begin{lemma}\label{lem:3}
Whenever $\calbE$ holds,
\begin{align*}
\left| \langle \by, \bA_1 \by\rangle - \langle \bx, A \bx \rangle\right| \lsim  L \cdot (1 + \gamma^4) \cdot \polylog(1/\eps).
\end{align*}
\end{lemma}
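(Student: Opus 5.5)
The plan is to reduce the difference in Lemma~\ref{lem:3} to the comparison of a single quadratic form evaluated at $\bx$ before and after applying $\bT$. That comparison is then controlled exactly as in Lemma~\ref{lem:2}, using dyadic bands of eigenvalues plus the structural facts about where the mass of $\bx$ lives.

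\textbf{Step 1: an exact identity.} Let $\bz = A\bT^{\intercal}\by$, so that $\bx = \bz/\hat{\blambda}_{\max}$ and $\bT\bx = \by$. Take $\by$ to be a unit vector. Then
\[ \langle \bT\bz, \bA_1 \bT\bz\rangle = \hat{\blambda}_{\max}^2\langle \by,\bA_1\by\rangle = \hat{\blambda}_{\max}^3 . \]
Therefore
\[ \langle \by,\bA_1\by\rangle - \langle \bx,A\bx\rangle = \langle \bT\bx, \bT A\bT^{\intercal}\bT\bx\rangle - \langle \bx, A\bx\rangle . \]
Write $\bx = U\ba$ and $\bG = (\bT U)^{\intercal}(\bT U) = I - \bE$. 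The difference then becomes
\[ \ba^{\intercal}(\bG\Lambda\bG - \Lambda)\ba = -\ba^{\intercal}(\bE\Lambda + \Lambda\bE)\ba + \ba^{\intercal}\bE\Lambda\bE\ba . \]
This is the same matrix $\bE$ that appears in~(\ref{eq:inner-p-int}), but now weighted by $\Lambda$ instead of $\Lambda^{1/2}\cdot\Lambda^{1/2}$.

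\textbf{Step 2: split off the low part.} Decompose $A = A_H + A_L$ as in Definition~\ref{def:high-low-decomp}. For the $A_L$ contribution I bound the two pieces directly. First, $|\langle \bx, A_L\bx\rangle| \le L\|\bx\|_2^2$. Second, $|\langle \by,\bT A_L\bT^{\intercal}\by\rangle| \le c_0 L$ by $\calbE$. Lemma~\ref{lem:2} gives $\|\bx\|_2^2 \le 1 + o(1)$, so both pieces are $O(L)$.

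It remains to handle the high part $\langle \bT\bx,\bT A_H\bT^{\intercal}\bT\bx\rangle - \langle \bx,A_H\bx\rangle$. I would rewrite it as $\|\Lambda_H^{1/2}$-weighted $\bT$-images$\|$ so that only the blocks $\bE_{jk}$ between high-magnitude bands appear. When one rewrites $\bT\bx$ through $\bT U_H\ba_H + \bT U_L\ba_L$, the low-to-high cross blocks also appear. These are controlled by $\|A_L\bT^{\intercal}\|_2 \le c_0L$, together with the identity $\bx = A\bT^{\intercal}\by/\hat{\blambda}_{\max}$, which expresses $\ba_L$ as $\Lambda_L(\bT U_L)^{\intercal}\by/\hat{\blambda}_{\max}$.

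\textbf{Step 3: dyadic bands for the high part.} Partition the eigenvalues of magnitude in $[L, n]$ into $O(\log(n/L)) = O(\log(1/\eps))$ bands $B_k = \{i : |\lambda_i| \in [2^kL, 2^{k+1}L)\}$, separately for signs. By $\calbE$, $\bT$ is an $\alpha L/\lambda$-distortion embedding on the span of eigenvalues of magnitude at least $\lambda$. This gives $\|\bE_{B_j,B_k}\|_2 \lsim \alpha L/2^{\max(j,k)}L$ (the distortion on the union of the two bands). So the block $\Lambda_{B_k}\bE_{B_k,B_j}$ has norm $\lsim \alpha L\cdot 2^{k-\max(j,k)} \le \alpha L$. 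Summing $\|\ba_{B_j}\|\|\ba_{B_k}\|$ over the $\polylog(1/\eps)$ pairs gives the $L\cdot\polylog(1/\eps)$ term for the linear pieces $\bE\Lambda + \Lambda\bE$. For the quadratic piece $\bE\Lambda\bE$, the positive bands are fine by the same argument. Each band contributes $\lambda\cdot(\alpha L/\lambda)^2 \le L$.

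\textbf{Main obstacle.} The main difficulty is the negative high bands in $\bE\Lambda\bE$ and in the positive--negative cross blocks. Here the weight can be $\lambda_{\min}(A)$ while the distortion is only governed by $L/\lambda_{\max}$-scale quantities, coming from the positive side where $\ba$ lives. My approach is to invoke the structural lemmas of Subsection~\ref{sec:structural-vecs} (Lemma~\ref{lem:neg-small-pos-large} and Lemma~\ref{lem:small-bottom}), which show that $\ba$ carries only a small mass on negative eigenvectors: roughly $\|\ba_N\|_2^2 \lsim L\lambda_{\min}/\lambda_{\max}^2 = \gamma^2$ after scaling. Then $\|\bE_{N,P}\ba_P\|$ is of order $\gamma\sqrt{L/\lambda_{\min}}$. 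Multiplying by $\lambda_{\min}$ and squaring, and adding the cross terms, should produce $L(\gamma^2 + \gamma^4)\polylog(1/\eps)$. Since $\gamma^2 \le 1 + \gamma^4$, this yields the claimed bound $L(1 + \gamma^4)\polylog(1/\eps)$. Making this bookkeeping tight, so that exactly $\gamma^4$ and not a higher power appears, is the step I expect to require the most care.
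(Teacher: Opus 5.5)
Your identity $\langle \by,\bA_1\by\rangle-\langle\bx,A\bx\rangle=\boldsymbol{a}^{\intercal}(\boldsymbol{G}\Lambda\boldsymbol{G}-\Lambda)\boldsymbol{a}$ is correct, but the negative part has a genuine gap. The missing ingredient is the content of the paper's Lemma~\ref{lem:n-bound}.

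\textbf{The negative part.} Your only input there is the unweighted bound $\|\boldsymbol{a}_N\|_2^2\lsim\gamma^2 r$. This is what Lemma~\ref{lem:neg-small-pos-large} gives once you write $\boldsymbol{a}_N=\Lambda_N^{1/2}\tilde{\bw}_N/\hat{\blambda}_{\max}$. The linear terms $\boldsymbol{a}_N^{\intercal}\Lambda_N(\boldsymbol{E}\boldsymbol{a})_N$, however, need the $\Lambda_N$-\emph{weighted} mass $\|\Lambda_N^{1/2}\boldsymbol{a}_N\|_2=\|\Lambda_N^{1/2}U_N^{\intercal}\bx\|_2$. The crude estimate $\sqrt{\lambda_{\min}}\,\|\boldsymbol{a}_N\|_2$ for it loses a factor of roughly $\lambda_{\max}/L$.

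Concretely, your own estimates give $|\boldsymbol{a}_N^{\intercal}\Lambda_N\boldsymbol{E}_{N,P}\boldsymbol{a}_P|\lsim\gamma\cdot\lambda_{\min}\cdot\gamma\sqrt{L/\lambda_{\min}}=\gamma^3\lambda_{\max}$, and the $N$--$N$ cross-band linear blocks scale the same way. Take $\lambda_{\min}=n$ and $\lambda_{\max}=\sqrt{\eps}\,n$. There the paper's choice of $L$ is $\eps n$ up to polylogarithmic factors and $\gamma$ is constant up to polylogarithmic factors. So this term is about $\sqrt{\eps}\,n$, far above the target $L(1+\gamma^4)\approx\eps n$. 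The problem is therefore not which power of $\gamma$ survives: the bound is not of the form $L\cdot\poly(\gamma)$ at all.

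The fix is to control $\|\Lambda_N^{1/2}U_N^{\intercal}\bx\|_2=\|\Lambda_N\tilde{\bw}_N\|_2/\hat{\blambda}_{\max}$ through $\bT\bx=\by$, as Lemma~\ref{lem:n-bound} does. In your notation, $\tilde{\bw}_N=-\Lambda_N^{1/2}(\boldsymbol{G}\boldsymbol{a})_N$. Hence $\Lambda_N^{1/2}\boldsymbol{a}_N=-\tilde{\bw}_N+\boldsymbol{Y}$ with $\boldsymbol{Y}=\Lambda_N^{1/2}(\boldsymbol{E}\boldsymbol{a})_N$. The whole negative contribution then equals $\|\boldsymbol{Y}\|_2^2-2\langle\tilde{\bw}_N,\boldsymbol{Y}\rangle$. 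You already have $\|\tilde{\bw}_N\|_2^2\le c_1Lr$. Substituting $\boldsymbol{a}=\Lambda^{1/2}\tilde{\bw}/\hat{\blambda}_{\max}$ into the weighted block bounds of Lemma~\ref{lem:spectral-norm-bounds}, together with Lemma~\ref{lem:small-bottom}, gives $\|\boldsymbol{Y}\|_2\lsim\sqrt{Lrk}\,(1+\gamma+\gamma^2)$. This yields the $Lrk(1+\gamma^4)$ bound.

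\textbf{Step~3 distortion.} The distortion claim is backwards. $\calbE$ guarantees distortion $\alpha L/\lambda$ only on the span of eigenvectors of magnitude at least $\lambda$. On the span of $B_j\cup B_k$ the distortion is therefore $\alpha/2^{\min(j,k)}$, set by the \emph{lower} band; this is why the paper uses $\max\{2^{h_1},2^{h_2}\}$ in its indexing. Consequently $\|\Lambda_{B_k}\boldsymbol{E}_{B_k,B_j}\|_2$ can be of order $\alpha L2^{k-j}\approx\alpha\lambda_{\max}$ for the top versus the bottom positive band. Even using $\|\boldsymbol{a}_{B_j}\|_2\le\sqrt{2^{j+1}L}\,\|\tilde{\bw}_P\|_2/\hat{\blambda}_{\max}$, that cross term is only bounded by about $\alpha\sqrt{L\lambda_{\max}}$. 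This exceeds $L\,\polylog(1/\eps)$ whenever $\lambda_{\max}/L$ is polynomially large.

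So the positive linear terms are not ``fine by the same argument.'' You need Lemma~\ref{lem:small-bottom} there as well, i.e.\ that $\tilde{\bw}_P$ is concentrated on $U_0^+$, which is exactly what Lemma~\ref{lem:p-bound} uses. With both fixes, your $\boldsymbol{G}=I-\boldsymbol{E}$ bookkeeping essentially reproduces the paper's argument.
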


Given that the above two lemmas should be $\eps n /\lambda_{\max}$ and $\eps n$, respectively, we set $L$ according to
\begin{align}
L = \min\left\{ \eps n,\left( \eps n \cdot \left(\frac{\lambda_{\max}(A)^2}{\lambda_{\min}(A)}\right)^{2}\right)^{1/3} \right\} \cdot \dfrac{1}{\polylog(1/\eps)}. \label{eq:L-setting}
\end{align}
Note, the above setting is such that Lemma~\ref{lem:2} is at most $\eps n / \lambda_{\max}(A)$ and Lemma~\ref{lem:3} is at most $\eps n$. As we will see (in the next lemma), the setting of $L$ is directly tied to the query complexity, and smaller values of $L$ lead to larger query complexities. From (\ref{eq:L-setting}), $L$ is set to $\eps n / \polylog(n)$ whenever $\lambda_{\min}(A)$ is not too much larger than $\lambda_{\max}(A)$---in general, when $\lambda_{\min}(A)$ is much larger than $\lambda_{\max}(A)$, $L$ will be a ``geometric mean'' of $\eps n$ and $\lambda_{\max}(A)^2/\lambda_{\min}(A)$ (twice), which leads to a worse query complexity whenever $\lambda_{\max}(A)^2 / \lambda_{\min}(A)$ is smaller than $\eps n$.

\begin{lemma}\label{lem:1}
Suppose we let $q$ such that
\[ q = \frac{n}{L^2} \cdot \polylog(n/L). \] 
the event $\calbE$ holds with probability $0.99$.
\end{lemma}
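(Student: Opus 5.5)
We prove Lemma~\ref{lem:1} by handling the three conditions of $\calbE$ separately, each with failure probability at most $0.003$, and then taking a union bound.

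\textbf{Subspace embeddings.} We use Theorem~\ref{thm:sampling-subspace} at dyadic scales. For $k = 0,1,\dots,K$, let $\lambda^{(k)} = 2^k L$ with $K = \lceil \log_2(n/L)\rceil$; since all eigenvalues lie in $[-n,n]$, these scales cover every $\lambda\ge L$. Let $X_k$ be the $n\times d_k$ matrix of eigenvectors $u_i$ with $|\lambda_i|\ge \lambda^{(k)}$. The leverage score of row $j$ of $X_k$ is $\sum_{i:|\lambda_i|\ge\lambda^{(k)}} u_{ij}^2$. Because $A_{jj'}\le 1$, we have $\sum_i \lambda_i^2 u_{ij}^2 = \|A e_j\|_2^2 \le n$. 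Hence $\tau_j \le n/(\lambda^{(k)})^2$, and $d_k \le \|A\|_F^2/(\lambda^{(k)})^2 \le n^2/L^2$. We want distortion $\alpha L/(2\lambda^{(k)})$ at scale $k$. Theorem~\ref{thm:sampling-subspace} then requires
\[ q \ge \frac{2n}{(\lambda^{(k)})^2}\cdot \frac{4(\lambda^{(k)})^2}{\alpha^2L^2}\log(d_kK/\delta) = \frac{8n}{\alpha^2L^2}\log(d_kK/\delta). \]
This expression is independent of $k$, so the cancellation of $\lambda^{(k)}$ is exactly why a single $q$ suffices. With $\alpha=1/\polylog$, the required $q$ is $\frac{n}{L^2}\polylog(n/L)$. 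For a general $\lambda \ge L$, we have $\lambda^{(k)}\le\lambda<\lambda^{(k+1)}$ for some $k$, and the eigenspace for $\lambda$ is a subspace of $X_k$'s span. The distortion there is at most $\alpha L/(2\lambda^{(k)})\le \alpha L/\lambda$. A union bound over the $K+1$ scales with $\delta = 0.001/(K+1)$ completes this part.

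\textbf{Spectral norm of the low part.} Write $A = A_H + A_L$, with $A_L$ built from eigenvalues of magnitude less than $L$. We have $\|A_L\|_2\le L$ and $\|A_L\|_F^2\le \|A\|_F^2 \le n^2$. Moreover, each row satisfies $\|A_L e_j\|_2^2 \le \|A e_j\|_2^2 \le n$, since projection does not increase norm. For $\|A_L\bT^\intercal\|_2^2 = \|\bT A_L^2\bT^\intercal\|_2$, write $\bT^\intercal\bT = \sum_j \frac{\bdelta_j}{q} e_je_j^\intercal$. Then $A_L\bT^\intercal\bT A_L = \sum_j \frac{\bdelta_j}{q}(A_Le_j)(A_Le_j)^\intercal$, a sum of independent psd terms. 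Its expectation is $A_L^2$, with norm at most $L^2$, and each term has norm at most $n/q \lsim L^2/\polylog$. Matrix Chernoff then bounds the norm by $O(L^2)$ with good probability. A dimension factor appears, but it can be reduced to $\log$ of the effective rank $n^2/L^2$, which is $\polylog(n/L)$.

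For $\|\bT A_L\bT^\intercal\|_2$, we invoke the decoupling/Rudelson–Vershynin-type bound of~\cite{T12,RV07}. The footnote to Definition~\ref{def:events} says this yields $O(\log(nq))$ rather than $O(\log n)$, and $nq = \polylog\cdot n^2/L^2$. The bound is
\[ \Ex\|\bT A_L\bT^\intercal\|_2 \lsim \|A_L\|_2 + \sqrt{\log(nq)/q}\,\max_j\|A_Le_j\|_2 + \frac{\log(nq)}{q}\max_{j,j'}|(A_L)_{jj'}|. \]
The middle term is at most $\sqrt{n/q}\,\polylog \lsim L$. For the last term, $|(A_L)_{jj'}| \le \|A_Le_j\|\|A_Le_{j'}\|$ is too weak. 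Instead we use $|(A_L)_{jj'}|\le |A_{jj'}| + |(A_H)_{jj'}|$. The first part is at most $1$. For the second, $|(A_H)_{jj'}| \le \sum_{|\lambda_i|\ge L}|\lambda_i||u_{ij}||u_{ij'}| \le \frac1L\sqrt{\sum\lambda_i^2u_{ij}^2}\sqrt{\sum\lambda_i^2u_{ij'}^2}\le n/L$. So the last term is $\lsim (n/(qL))\polylog \lsim L$, using $q \ge n\,\polylog/L^2$. Markov's inequality then gives $c_0L$.

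\textbf{Extreme eigenvalues.} For $|\lambda_{\max}(\bT A\bT^\intercal) - \lambda_{\max}(A)|\le\alpha L$ we follow~\cite{SW25}. Write $\bT A\bT^\intercal = \bT A_H\bT^\intercal + \bT A_L\bT^\intercal$. For the high part, $\bT U_H\Lambda_H U_H^\intercal\bT^\intercal$ has the same nonzero eigenvalues as $\Lambda_H^{1/2}(\bT U_H)^\intercal(\bT U_H)\Lambda_H^{1/2}$ (up to signs). The graded embedding bound then shows this differs from $\Lambda_H$ in a weighted sense: entries at scale $\lambda$ are perturbed by relative error $\alpha L/\lambda$, i.e.\ additive $\alpha L$. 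However, the low part only has norm $c_0 L$, not $\alpha L$.

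This is the main obstacle. Reaching $\alpha L$ rather than $O(L)$ requires SW25's refined argument (their recursive / ``$L$ at smaller scale'' analysis). To get it, we plan to apply the first two parts at a smaller threshold $L' = \alpha L$, which costs only a $\polylog$ factor in $q$. Alternatively, we can cite SW25's main theorem directly with accuracy $\alpha L/n$, whose sample size $\tilde O(n/(\alpha L)^2)$ fits within the budget.
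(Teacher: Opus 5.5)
Your handling of the subspace-embedding condition and of the bounds on $\|A_L\bT^{\intercal}\|_2$ and $\|\bT A_L\bT^{\intercal}\|_2$ follows the paper. You use the leverage-score bound $\tau_j\le n/\lambda^2$ with at most $n^2/L^2$ columns and a union bound over dyadic scales. You then use the decoupled Tropp/Rudelson--Vershynin estimate with $\|A_L\|_{1\to2}\le\sqrt n$ and $\|A_L\|_\infty\le 1+n/L$, followed by Markov. Halving the distortion at each dyadic scale to cover non-dyadic $\lambda$ is a small improvement; the paper leaves this implicit. Your matrix-Chernoff route for $\|A_L\bT^{\intercal}\|_2$ is valid if you use the intrinsic-dimension form of matrix Chernoff. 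Apply it with a psd upper bound $M\succeq A_L^2$ having $\|M\|_2=\Theta(L^2)$ and trace at most $n^2+L^2$, e.g.\ $A_L^2$ plus a rank-one $L^2$ term. This gives the same polylog as the paper's use of Theorem~\ref{thm:rv}.

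For the eigenvalue condition, your fallback is exactly what the paper does: Theorem~2.2 of~\cite{SW25} with accuracy $\alpha L/n$, which fits the budget $q\gsim (n/(\alpha L)^2)\log^2(n/(\alpha L))$. Commit to it, because your primary plan does not work on its own. Your diagnosis is off. $\bT A_L\bT^{\intercal}$ is handled additively by Weyl; the real difficulty is $\lambda_{\max}(\bT A_H\bT^{\intercal})$. The graded embeddings give additive $O(\alpha L)$ control only on the diagonal dyadic blocks. A cross-scale block $(\Lambda_{h_1}^+)^{1/2}(\bT U_{h_1}^+)^{\intercal}(\bT U_{h_2}^+)(\Lambda_{h_2}^+)^{1/2}$ is only bounded by about $\alpha L\,2^{|h_1-h_2|/2}$, i.e.\ up to $\alpha\sqrt{\lambda_{\max}L}$ (compare Lemma~\ref{lem:spectral-norm-bounds}). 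There are positive--negative cross blocks as well. Lowering the threshold to $L'=\alpha L$ does not change this picture: the cross blocks are then of order $\alpha\sqrt{\lambda_{\max}L'}$, still far above $\alpha L$ whenever $\lambda_{\max}\gg L/\alpha$. Showing that such cross terms move the extreme eigenvalues by only $O(\alpha L)$ needs a genuinely second-order argument. That argument is precisely the eigenvalue analysis of~\cite{SW25}, so your primary plan reduces to citing it anyway.

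One remaining detail: Theorem~2.2 of~\cite{SW25} is stated with success probability $2/3$, while your union bound needs failure probability around $0.003$. Since the event concerns a single draw of $\bT$, it cannot be amplified by repetition. It has to come from tightening the failure-probability steps inside their proof, which for a constant target should cost only a constant factor in $q$. The paper likewise just asserts this ``boosting.''
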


We note that the above setting of $q$ leads to a query complexity of $(nq)^2$ for the preprocessing stage and $nq$ for the query phase. This evaluates to a preprocessing complexity of $1/\eps^4 \cdot \polylog(1/\eps)$ and query complexity of $1 / \eps^2 \cdot \polylog(1/\eps)$ whenever $\lambda_{\min}(A) = O(\lambda_{\max}(A))$ as desired. Note, an extremal case for the worst-possible query complexity above happens when $\lambda_{\min}(A)$ is as large as possibe and $\lambda_{\max}(A)$ as small as possible (since this makes the geometric mean in $L$ as small as possible, and leads to higher query complexity). For example, when $\lambda_{\min}(A) = \Theta(n)$ and $\lambda_{\max}(A) = \Theta(\eps n)$, $L$ gets set to $\eps^{5/3} n \cdot \polylog(1/\eps)$, and leads to preprocessing complexity $1/\eps^{6.\ol{6}}$ and query complexity $1/\eps^{3.\ol{3}}$ up to poly-logarithmic factors.

\begin{proof}[Proof of Theorem~\ref{thm:main} assuming Lemmas~\ref{lem:1}~\ref{lem:2},~\ref{lem:3}]
We consider the setting of $L$ in~(\ref{eq:L-setting}) so as to assume the bound of Lemma~\ref{lem:2} is $\eps n / \lambda_{\max}$ and the bound of Lemma~\ref{lem:3} is $\eps n$, and $q = n / L^2 \cdot \polylog(n/L)$ results in a bound of:
\[ q = \max\left\{ \frac{1}{\eps^2 n}, \frac{n^{1/3} \lambda_{\min}(A)^{4/3}}{\eps^{2/3} \lambda_{\max}(A)^{8/3}} \right\} \cdot \polylog(1/\eps), \]
and assume that Lemma~\ref{lem:1} holds, so we may assume that event $\calbE$ holds. We note that Lemmas~\ref{lem:2},~\ref{lem:3} allows us to compare the Rayleigh quotients of $\by$ in $\bA_1$ and $\bx$ in $A$. Namely, we upper bound (using the fact $\|\by\|_2 = 1$),
\begin{align*}
\left| \dfrac{\langle \by, \bA_1 \by\rangle}{\|\by\|_2^2} - \dfrac{\langle \bx, A \bx\rangle}{\|\bx\|_2^2}\right| &\leq \left| \langle \by, \bA_1 \by\rangle - \langle \bx, A \bx\rangle \right| + \left| \frac{\langle \bx, A \bx \rangle}{\| \bx \|_2^2} \right| \left| \|\bx\|_2^2 - \| \by\|_2^2  \right|\\
		&\lsim \eps n + \left|\dfrac{\langle \bx, A \bx\rangle}{\| \bx\|_2^2}\right| \cdot \frac{\eps n}{\lambda_{\max}(A)} \lsim \eps n + \left|\frac{\langle \by, \bA_1 \by\rangle}{\|\by\|_2^2} - \frac{\langle \bx, A \bx\rangle}{\|\bx\|_2^2} \right| \cdot \frac{\eps n}{\lambda_{\max}(A)},
\end{align*}
where the final inequality added and subtracted $\langle \by, \bA_1 \by \rangle$, and we know $|\langle \by, \bA_1 \by \rangle| = O(\lambda_{\max}(A))$, from the third condition of Definition~\ref{def:events}. Re-arranging the above terms, we obtain
\[ \left| \dfrac{\langle \by, \bA_1 \by \rangle}{\| \by\|_2^2} - \dfrac{\langle \bx, A \bx\rangle}{\|\bx\|_2^2} \right| \cdot \left(1 - O(\eps n/\lambda_{\max}(A)) \right) \lsim \eps n \qquad \Longrightarrow \qquad \left| \dfrac{\langle \by, \bA_1 \by \rangle}{\| \by\|_2^2} - \dfrac{\langle \bx, A \bx\rangle}{\|\bx\|_2^2} \right| \lsim \eps n, \]
since $\eps$ can be set so that $\eps n$ is a small enough constant factor of $\lambda_{\max}(A)$. The fact that $\by$ is an eigenvector of $\bA_1$ which obtains eigenvalue at least $\lambda_{\max}(A) - \alpha L$, by the third condition of Definition~\ref{def:events}, implies
\begin{align*}
\max_{x_0 \in \R^n} \dfrac{\langle x_0, A x_0 \rangle}{\|x_0\|_2^2} - \dfrac{\langle \bx, A \bx\rangle}{\|\bx\|_2^2} \leq \alpha L + \dfrac{\langle \by, \bA_1 \by \rangle}{\| \by\|_2^2} - \dfrac{\langle \bx, A \bx \rangle}{\| \bx\|_2^2} \lsim \alpha L + \eps n.
\end{align*}
Notice that, the above argument concludes that the difference in Rayleigh quotients is $O(\eps n)$, and the claimed $\eps n$ bound can be done by instantiating $\eps$ to be a fixed constant factor smaller.
\end{proof}

It remains to prove Lemmas~\ref{lem:1},~\ref{lem:2} and~\ref{lem:3}, which we do in Subsections~\ref{sec:proof-lem-2},~\ref{sec:proof-lem:3} below. In order to prove these, we introduce the following structural properties and auxiliary definitions.

\subsection{Proof of Lemma~\ref{lem:1}}

The first bullet point of Definition~\ref{def:events} will follow directly from~\cite{SW25}; we re-create their proof (making minor modifications to parameter settings). For any $\lambda \geq L$, let $U_{\geq \lambda}$ denote the orthonormal matrix whose columns are eigenvectors of $A$ with eigenvalue having magnitude at least $\lambda$. 
\begin{claim}\label{cl:1}
The number of columns in $U_{\geq \lambda}$ is at most $n^2/L^2$, and for every $i \in [n]$, $\| (U_{\geq \lambda})_i \|_2^2 \leq n / \lambda^2$.
\end{claim}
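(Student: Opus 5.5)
The plan is to use two elementary facts about bounded-entry matrices: the Frobenius norm controls how many eigenvalues can be large, and each row of $A$ has small Euclidean norm.

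For the first statement (counting columns), I would use $\|A\|_F^2 = \sum_{j} \lambda_j^2 = \sum_{i,k} A_{ik}^2 \leq n^2$, since $\|A\|_\infty \leq 1$. Every column of $U_{\geq \lambda}$ corresponds to an eigenvalue with $\lambda_j^2 \geq \lambda^2 \geq L^2$. So the number of such columns $d$ satisfies $d \cdot \lambda^2 \leq n^2$, which gives $d \leq n^2/\lambda^2 \leq n^2/L^2$.

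For the row-norm bound, I would write $U_{\geq\lambda}^{\intercal} A = \Lambda_{\geq \lambda} U_{\geq \lambda}^{\intercal}$, where $\Lambda_{\geq\lambda}$ is the diagonal matrix of the corresponding eigenvalues. Equivalently, $(U_{\geq \lambda})_i = \Lambda_{\geq\lambda}^{-1} U_{\geq\lambda}^{\intercal} A e_i$ as a column vector. Since every diagonal entry of $\Lambda_{\geq \lambda}^{-1}$ has magnitude at most $1/\lambda$, and $U_{\geq\lambda}^{\intercal}$ has orthonormal rows (so it is a contraction), we get
\[ \|(U_{\geq\lambda})_i\|_2 \leq \frac{1}{\lambda}\, \|U_{\geq\lambda}^{\intercal} A e_i\|_2 \leq \frac{1}{\lambda}\, \|A e_i\|_2 \leq \frac{\sqrt{n}}{\lambda}. \]
The last inequality holds because column $i$ of $A$ has $n$ entries, each bounded by $1$ in magnitude. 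Squaring yields $\|(U_{\geq\lambda})_i\|_2^2 \leq n/\lambda^2$.

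There is no real obstacle here. The only care needed is to apply the identity $A U = U\Lambda$ restricted to the chosen eigenvectors, and to note that eigenvalues of either sign are handled uniformly because only magnitudes enter.
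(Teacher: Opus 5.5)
Your proof is correct and follows the paper: the column count is the same Frobenius-norm argument ($\|A\|_F^2 = \sum_j \lambda_j^2 \leq n^2$). For the row-norm bound the paper only cites Lemma~3.3 of~\cite{SW25}, and your inline derivation $(U_{\geq\lambda})_i = \Lambda_{\geq\lambda}^{-1} U_{\geq\lambda}^{\intercal} A e_i$ with $\|A e_i\|_2 \leq \sqrt{n}$ is the standard eigen-equation argument behind that lemma, so your version is simply self-contained.
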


The first part of Claim~\ref{cl:1} follows from the fact $\| A \|_F^2 \leq n^2$ is also the sum of squares of all eigenvalues. Hence, the number of eigenvalues which are larger than $\lambda \geq L$ is at most $n^2/\lambda^2 \leq n^2/L^2$. For the second part, we use Lemma~3.3 of~\cite{SW25}, the squared Euclidean norm of the rows is bounded by $n/\lambda^2$, i.e., for any $i \in [n]$, $\| (U_{\geq \lambda})_i\|_2^2 \leq n/\lambda^2$. 
\begin{claim}\label{cl:2}
For any $i \in [n]$, the $i$-th leverage score of $U_{\geq \lambda}$ (recall, Definition~\ref{def:lev-scores}) is at most
\[ \max_{v \in \R^d} \dfrac{\langle (U_{\geq \lambda})_i, v\rangle^2}{\| U_{\geq \lambda} v\|_2^2} \leq \max_{v \in \R^d} \dfrac{\| (U_{\geq \lambda})_i \|_2^2 \cdot \| v\|_2^2}{\| v\|_2^2} \leq \frac{n}{\lambda^2}, \]
where the numerator bound follows from Cauchy-Schwarz and the denominator from the fact $U_{\geq \lambda}$ is orthonormal.
\end{claim}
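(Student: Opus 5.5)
The claim bounds the $i$-th leverage score of $U_{\geq \lambda}$ in two steps. First, rewrite the leverage score via the variational characterization of Definition~\ref{def:lev-scores}. Then bound the numerator and denominator separately, using the row-norm bound from Claim~\ref{cl:1} and orthonormality of the columns.

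Write $X = U_{\geq \lambda}$, an $n \times d$ matrix with orthonormal columns. Definition~\ref{def:lev-scores} gives $\tau_i = \max_{v} (Xv)_i^2 / \|Xv\|_2^2$, and the $i$-th entry of $Xv$ is the inner product of the $i$-th row $X_i$ with $v$, i.e.\ $(Xv)_i = \langle X_i, v\rangle$. This yields the first expression in the claim; the maximum may be taken over nonzero $v$, since $Xv \neq 0$ whenever $v \neq 0$ by orthonormality.

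For the numerator, Cauchy--Schwarz gives $\langle X_i, v\rangle^2 \leq \|X_i\|_2^2 \|v\|_2^2$. For the denominator, orthonormality gives $X^{\intercal}X = I_d$, hence $\|Xv\|_2^2 = v^{\intercal}X^{\intercal}Xv = \|v\|_2^2$. The ratio is therefore at most $\|X_i\|_2^2$ for every $v$, independently of $v$. Finally, the second part of Claim~\ref{cl:1} (Lemma~3.3 of~\cite{SW25}) gives $\|X_i\|_2^2 \leq n/\lambda^2$, which completes the bound.

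There is no real obstacle here; the substantive input is the row-norm bound $\|(U_{\geq\lambda})_i\|_2^2 \leq n/\lambda^2$, which we take as given from Claim~\ref{cl:1}. For intuition, that bound follows because $U_{\geq\lambda}\Lambda_{\geq\lambda} = A U_{\geq \lambda}$. Hence $\lambda^2 \|(U_{\geq\lambda})_i\|_2^2 \leq \|e_i^{\intercal}AU_{\geq\lambda}\|_2^2 \leq \|A_i\|_2^2 \leq n$, using that projecting onto the orthonormal columns of $U_{\geq\lambda}$ does not increase norm and that $\|A\|_\infty \leq 1$.

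This claim will later be combined with Theorem~\ref{thm:sampling-subspace}: it suffices to take $q \gsim (n/\lambda^2)\log(d/\delta)/\eta^2$ with $\eta = \alpha L/\lambda$, which is at most $(n/L^2)\cdot\polylog$, uniformly in $\lambda \geq L$.
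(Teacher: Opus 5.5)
Your proof is correct and matches the paper's argument: Cauchy--Schwarz bounds the numerator, $\|U_{\geq\lambda}v\|_2 = \|v\|_2$ (orthonormal columns) handles the denominator, and the row-norm bound $\|(U_{\geq\lambda})_i\|_2^2 \leq n/\lambda^2$ from Claim~\ref{cl:1} finishes it. Your sketch of that row-norm bound via $U_{\geq\lambda}\Lambda_{\geq\lambda} = AU_{\geq\lambda}$ is also sound, and it is the same projection argument the paper uses when bounding $\|A_H\|_\infty$.
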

 By Theorem~\ref{thm:sampling-subspace} with $\eps = \alpha L / \lambda$, a setting of 
 \[ q \gsim (n/L^2) \log(n/(L\delta)) / \alpha^2 \geq \min\left\{ \tau_i \log(|U_{\geq \lambda}|/\delta') / (\alpha L)^2 \cdot \lambda^2, 1  \right\}, \]
suffices to ensure that for any fixed $\lambda \geq L$, a draw of $\bT$ gives a $\alpha L / \lambda$-distortion subspace embedding with probability $1-\delta$. Considering the collection of $\lambda$ of the form $L \cdot 2^r$ for $r = 0, \dots, \lceil \log_2 n/L \rceil$, and setting $\delta = 1/ (1000\log_2(n/L))$ suffices to union bound over all $\lambda$ as well as $\bT$ and obtain success probability $1-1 /1000$ for the first bullet point. This completes the proof that the first bullet point occurs with high constant probability.

For the second bullet point, we directly apply Theorem 2.2 of~\cite{SW25} (reproduced below) with an error parameter $\eps'$ so that $\eps' n$ is at most $\alpha L$ (and boosting the success probability from the claimed $2/3$ to $1 - 1 / 1000$). Note, this occurs so long as $q$ is at least some constant factor of $1/(n \cdot (\eps')^2) \log^2(1/\eps') \geq (n/L^2) \log^2(n/(\alpha L)) /\alpha^2$, and this completes the third bullet point.
\begin{theorem}[Theorem~2.2 of~\cite{SW25}]
For any symmetric $n\times n$ matrix $A$ with $\|A\|_{\infty} \leq 1$, any $\eps \in (0,1)$, and any $q \gsim \log^2(1/\eps) / (\eps^2 n)$, the eigenvalues of $\bT A \bT^{\intercal}$ (and additional $0$'s) give additive $\pm \eps n$ approximation to the eigenvalues of $A$ with probability at least $2/3$.
\end{theorem}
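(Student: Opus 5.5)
The plan is to prove this eigenvalue-approximation statement by a high/low spectral decomposition of $A$ followed by Weyl's inequality. Fix a threshold $L = c\,\eps n$ for a small constant $c$. Write $A = A_H + A_L$, where $A_H = U_H \Lambda_H U_H^{\intercal}$ collects the eigenvectors with $|\lambda_i| \geq L$ and $A_L$ is the rest, so $\|A_L\|_2 \leq L$. Then
\[ \bT A \bT^{\intercal} = \bT A_H \bT^{\intercal} + \bT A_L \bT^{\intercal}. \]
By Weyl's inequality, each eigenvalue of $\bT A\bT^{\intercal}$ is within $\|\bT A_L \bT^{\intercal}\|_2$ of the corresponding eigenvalue of $\bT A_H\bT^{\intercal}$. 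The eigenvalues of $A$ that are below $L$ in magnitude are matched with the padded zeros at cost at most $L$. It therefore suffices to show two things: (i) the eigenvalues of $\bT A_H \bT^{\intercal}$ are within $O(\eps n)$ of the eigenvalues of $A_H$, and (ii) $\|\bT A_L \bT^{\intercal}\|_2 = O(\eps n)$.

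For (i), note that the nonzero eigenvalues of $\bT U_H \Lambda_H U_H^{\intercal} \bT^{\intercal}$ coincide with those of $|\Lambda_H|^{1/2}\, \mathrm{sign}(\Lambda_H)^{1/2}\, G\, \mathrm{sign}(\Lambda_H)^{1/2} |\Lambda_H|^{1/2}$, where $G = (\bT U_H)^{\intercal}(\bT U_H)$. Equivalently, one can compare $\Lambda_H^{1/2}$-congruences of $G$ versus the identity. By Claim~\ref{cl:1} and Claim~\ref{cl:2}, the leverage scores of $U_{\geq\lambda}$ are at most $n/\lambda^2$. Theorem~\ref{thm:sampling-subspace} with $q \gsim \log(\cdot)/(\eps^2 n)$ then makes $\bT$ an $(\eps n/\lambda)$-distortion subspace embedding of $U_{\geq \lambda}$, simultaneously for the dyadic scales $\lambda = L2^r$. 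The graded distortion is what controls eigenvalue shifts: a scale-$\lambda$ eigenvalue is perturbed by distortion times $\lambda$, i.e.\ by $O(\eps n)$. To make this rigorous I would apply a Courant--Fischer argument scale by scale, or bound $\|\,|\Lambda_H|^{1/2}(G-I)|\Lambda_H|^{1/2}\|_2$ blockwise over dyadic shells and sum. The geometric decay should absorb the off-diagonal blocks, since the block between shells $\lambda,\lambda'$ costs about $\sqrt{\lambda\lambda'}\cdot \eps n/\min(\lambda,\lambda')$. This needs care: a crude sum loses a $\log(1/\eps)$ factor, which is tolerable given the $\log^2(1/\eps)$ allowance in $q$.

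For (ii), $\bT A_L \bT^{\intercal}$ is $1/q$ times a random principal submatrix of $A_L$. I would use the Rudelson--Vershynin/Tropp bound on random principal submatrices:
\[ \Ex\|\bT A_L\bT^{\intercal}\|_2 \lsim \|A_L\|_2 + \sqrt{\log(nq)/q}\,\max_i \|(A_L)_i\|_2 + \log(nq)\max_{ij}|(A_L)_{ij}|/q , \]
as in the appendix on spectral norm decay. The row norms of $A_L$ are at most $\sqrt n$, since the rows of $A$ have norm at most $\sqrt n$ and projection does not increase them. The entries satisfy $|(A_L)_{ij}| \leq |A_{ij}| + |(A_H)_{ij}| \leq 1 + \sum_{|\lambda|\geq L}|\lambda|\,|u_i||u_j|$. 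The last term is bounded using the row-norm bound $n/\lambda^2$ shell by shell, giving $O(n/L)$ up to logarithmic factors. Plugging in $q \approx \log^2(1/\eps)/(\eps^2 n)$, each term is $O(\eps n)$.

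I expect the main obstacle to be the entrywise term in (ii) together with the logarithmic bookkeeping. The estimate $\max|(A_L)_{ij}| \lsim n/L = O(1/\eps)$ gives $\log(nq)\cdot(1/\eps)/q \approx \eps^2 n\cdot\log(1/\eps)/\eps$, which is just barely $\eps n$ up to logarithms, so the constants and logarithmic factors must be tracked carefully. If this fails, I would fall back on the recursive refinement SW25 use: apply the argument at a coarser threshold, then recurse on the low part with a smaller threshold. Finally, a Markov/Chebyshev step turns the expectation bounds into probability $2/3$, and the three error terms are combined through Weyl's inequality.
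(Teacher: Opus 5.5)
You do not prove step (i) of your plan, so the proposal has a genuine gap; steps (ii) and the embedding setup are fine. Step (ii) is essentially what the paper does in the proof of Lemma~\ref{lem:1} and Appendix~\ref{app:spectral-norm-decay}: the $\log(nq)$ form of the random-principal-submatrix bound, with $\|A_L\|_{1\to2}\le\sqrt n$ and $\|A_L\|_\infty\le 1+n/L$. With the $\log^2(1/\eps)$ in $q$, each of the three terms is at most a small constant times $\eps n$, so your ``just barely'' worry does not arise. Note that the paper does not prove this statement; it imports it from~\cite{SW25}.

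The method you spell out for step (i) is to bound $\Delta=|\Lambda_H|^{1/2}(G-I)|\Lambda_H|^{1/2}$ in operator norm block by block over dyadic shells, then perturb. This cannot reach $O(\eps n)$. The graded embedding (distortion $\alpha L/\lambda$ on $U_{\geq\lambda}$) bounds the block between shells at $2^aL$ and $2^bL$ by $\sqrt{2^aL\cdot 2^bL}\cdot\alpha L/(2^{\min(a,b)}L)=\alpha L\,2^{|a-b|/2}$. Your factor $\sqrt{\lambda\lambda'}/\min(\lambda,\lambda')=\sqrt{\max(\lambda,\lambda')/\min(\lambda,\lambda')}$ therefore \emph{grows} with the separation of the shells; it does not decay.

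The best these block bounds give is $\|\Delta\|_2\lsim\sqrt{rL\lambda_{\max}}$, which is exactly the second and fourth bullets of Lemma~\ref{lem:spectral-norm-bounds}. For $\lambda_{\max}=\Theta(n)$ and $L=\Theta(\eps n)$ this is $\sqrt{\eps}\,n$ up to logarithms, far above $\eps n$. Forcing $\|G-I\|_2\le\eps$ uniformly instead would cost $q\gsim 1/(\eps^4 n)$, because the bottom-shell leverage scores are about $n/L^2\approx 1/(\eps^2 n)$.

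There is a second problem. For indefinite $\Lambda_H$, the matrix whose eigenvalues you need is $\mathrm{sign}(\Lambda_H)(|\Lambda_H|+\Delta)$, which is similar to $\Lambda_H G$. It is not symmetric, so even a small $\|\Delta\|_2$ would not plug into Weyl's inequality to give a sorted matching.

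The missing idea is that cross-shell couplings move eigenvalues only to second order. Take a coupling of size $\alpha L\,2^{|a-b|/2}$ between shells whose eigenvalues differ by a factor of at least $2$. It is divided by a gap of order $2^{\max(a,b)}L$, so the shift is of order $\alpha^2L\,2^{-\min(a,b)}$, and these shifts sum to $O(\alpha^2Lr)$.

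Turning this heuristic into a bound on every $k$-th eigenvalue requires a Courant--Fischer argument in which the gap pays for the cross terms via AM--GM. Concretely, a vector with mass $\|w_{\mathrm{low}}\|_2^2$ on shells a factor $2$ below the current scale $\lambda$ loses at least $\lambda\|w_{\mathrm{low}}\|_2^2/2$ in the quadratic form, and that loss absorbs the coupling. This is the mechanism behind Lemma~5.6 of~\cite{SW25}, which is used in Lemma~\ref{lem:neg-small-pos-large}, and behind the proof of Lemma~\ref{lem:small-bottom} for the top direction. It must be carried out for all $k$, with the positive and negative parts treated separately, since $\bT U_P$ and $\bT U_N$ are not orthogonal.

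You name ``a Courant--Fischer argument scale by scale'' but give no mechanism for it, and that is the nontrivial content of the theorem.
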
 

For the fourth bullet point, we apply Theorem~\ref{thm:rv} and Theorem~\ref{thm:tropp}, as well as Markov's inequality. Namely,
\begin{align*}
\Ex_{\bT}\left[ \| A_L \bT^{\intercal} \|_{2}\right] &\lsim \| A_L \|_2 + \sqrt{\log(nq)/q} \cdot \|A_{L} \|_{1\to2}, \\
\Ex_{\bT}\left[ \| \bT  A_L \bT^{\intercal} \|_2 \right] &\lsim \|A_L\|_2 + \sqrt{\log(nq)/q} \cdot \|A_L \|_{1\to2} + \log(nq)/q \cdot \|A_{L}\|_{\infty},
\end{align*}
and we claim both bounds are $O(L)$ with appropriate choice of $q$ by bounding each of the above terms.
\begin{claim}
The value $\| A_L \|_{1\to2}$, i.e., maximum column norm of $A_L$, is at most $\sqrt{n}$.
\end{claim}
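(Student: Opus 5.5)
We need to bound the maximum column norm of $A_L$, the portion of $A$ with eigenvalues of magnitude at most $L$. The plan is to write $A_L = U_L \Lambda_L U_L^{\intercal}$, where $U_L$ collects the orthonormal eigenvectors of $A$ whose eigenvalues have magnitude at most $L$, and $\Lambda_L$ is the corresponding diagonal matrix (as in Definition~\ref{def:high-low-decomp}). The $j$-th column of $A_L$ is then $A_L e_j = P_L A e_j$, where $P_L = U_L U_L^{\intercal}$ is the orthogonal projection onto the span of those eigenvectors. This identity holds because $A$ and $P_L$ commute: both are diagonal in the eigenbasis $U$, so $P_L A = U \,\mathrm{diag}(\ind\{|\lambda_i| \leq L\})\, \Lambda U^{\intercal} = U_L \Lambda_L U_L^{\intercal} = A_L$.

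The bound then follows from two observations:
\begin{itemize}
\item Projections are contractions, so $\|A_L e_j\|_2 = \|P_L (A e_j)\|_2 \leq \|A e_j\|_2$.
\item The $j$-th column $Ae_j$ of $A$ has $n$ entries, each of magnitude at most $1$ since $\|A\|_\infty \leq 1$, so $\|Ae_j\|_2 \leq \sqrt{n}$.
\end{itemize}
Taking the maximum over $j \in [n]$ gives $\|A_L\|_{1\to 2} \leq \sqrt{n}$.

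There is essentially no obstacle here; the only point to check is that $A_L$ is indeed the spectral truncation (so that it equals $P_L A$). If the decomposition instead uses a smooth or shifted cutoff, the same argument goes through whenever $A_L = f(A)$ for some $f$ with $|f(\lambda)| \leq |\lambda|$. In that case I would bound
\[ \|A_L e_j\|_2^2 = \sum_i f(\lambda_i)^2 \langle u_i, e_j\rangle^2 \leq \sum_i \lambda_i^2 \langle u_i, e_j\rangle^2 = \|Ae_j\|_2^2 \leq n. \]
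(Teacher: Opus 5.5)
Your proof is correct and is essentially the paper's argument: each column of $A_L$ is the orthogonal projection of the corresponding column of $A$ onto the span of the eigenvectors with eigenvalue magnitude at most $L$, so its norm is at most $\|Ae_j\|_2 \leq \sqrt{n}$. The only difference is that you explicitly check the identity $P_L A = A_L$, which the paper uses without comment.
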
 
This follows from the fact columns of $A_L$ are projections of columns of $A$ onto span of eigenvectors with eigenvalue of magnitude at most $L$. This must only decrease the norm, so the maximum column norm of $A_L$ is at most that the maximum column norm of $A$, which is at most $\sqrt{n}$ since $\|A \|_{\infty} \leq 1$.
\begin{claim}
The value $\|A_{L} \|_{\infty} \leq 1 + n/L$.
\end{claim}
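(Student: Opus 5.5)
We need to bound the entrywise maximum of $A_L$, the part of $A$ built from eigenvectors whose eigenvalues have magnitude at most $L$. The plan is to write $A_L = A - A_{>L}$, where $A_{>L} = U_{>L}\Lambda_{>L}U_{>L}^{\intercal}$ collects the eigen-components with $|\lambda_i| > L$ (so $A_{>L}$ is the high part of the decomposition in Definition~\ref{def:high-low-decomp}). Then for any entry,
\[ |(A_L)_{jk}| \leq |A_{jk}| + |(A_{>L})_{jk}| \leq 1 + |(A_{>L})_{jk}|, \]
using $\|A\|_\infty \leq 1$. So it suffices to show $|(A_{>L})_{jk}| \leq n/L$.

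For that, I write $(A_{>L})_{jk} = \sum_{i: |\lambda_i|>L} \lambda_i (u_i)_j (u_i)_k$. Applying Cauchy--Schwarz with the weights split as $\sqrt{|\lambda_i|}\cdot\sqrt{|\lambda_i|}$ gives
\[ |(A_{>L})_{jk}| \leq \Bigl(\sum_{|\lambda_i|>L} |\lambda_i| (u_i)_j^2\Bigr)^{1/2}\Bigl(\sum_{|\lambda_i|>L} |\lambda_i| (u_i)_k^2\Bigr)^{1/2}, \]
so it is enough to bound the diagonal-type quantity $\sum_{|\lambda_i|>L}|\lambda_i|(u_i)_j^2$ by $n/L$ for every row $j$.

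The key step is controlling this row-weighted sum. Write $|\lambda_i| = \lambda_i^2/|\lambda_i| \leq \lambda_i^2 / L$ for each $i$ with $|\lambda_i|>L$. Then
\[ \sum_{|\lambda_i|>L}|\lambda_i|(u_i)_j^2 \leq \frac{1}{L}\sum_{i=1}^n \lambda_i^2 (u_i)_j^2 = \frac{1}{L}(A^2)_{jj} = \frac{\|A_j\|_2^2}{L} \leq \frac{n}{L}, \]
where the last inequality again uses $\|A\|_\infty\leq 1$. Plugging this into the Cauchy--Schwarz bound gives $|(A_{>L})_{jk}| \leq n/L$, and hence $\|A_L\|_\infty \leq 1 + n/L$.

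I don't expect any real obstacle. The only step needing care is the weighting trick $|\lambda_i|\leq \lambda_i^2/L$, which turns the sum into the diagonal of $A^2$. This is the same mechanism behind Claim~\ref{cl:1}'s row-norm bound $n/\lambda^2$ (Lemma~3.3 of~\cite{SW25}). One could alternatively use that bound through a dyadic decomposition over eigenvalue magnitudes, but that would lose a logarithmic factor, so I would use the direct argument above.
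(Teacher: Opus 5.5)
Your proof is correct and is essentially the paper's argument: write $A_L = A - A_H$, apply Cauchy--Schwarz to each entry of $A_H$, and finish with $\|A_j\|_2\le\sqrt n$. The only difference is the split. You weight both factors by $\sqrt{|\lambda_i|}$ and prove the row bound inline, whereas the paper pairs the row of $U_{\geq L}$ (norm at most $\sqrt n/L$ by Claim~\ref{cl:1}) with the row of $U_{\geq L}\Lambda_{\geq L} = AU_{\geq L}$ (norm at most $\|A_j\|_2$).

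Your closing remark about a logarithmic loss is inaccurate, though it does not affect your proof. The paper uses Claim~\ref{cl:1} directly with no loss at all. Even a dyadic decomposition would cost only a constant factor, since the level-$h$ contributions are at most $2n/(2^hL)$ and sum geometrically.
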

The argument appears in Lemma~4 of~\cite{BDDMR24}, using the fact $\|A_L\|_{\infty} \leq \|A \|_{\infty} + \|A_H \|_{\infty}$ (here, $A_H = A - A_L = U_{\geq L} \Lambda_{\geq L} U_{\geq L}^{\intercal}$ is the ``portion'' of $A$ whose eigenvectors have eigenvalue of magnitude at least $L$---see Definition~\ref{def:high-low-decomp}). It remains to bound $\|A_H \|_{\infty}$, so consider any fixed $i,j\in [n]$, and notice
\begin{align*}
|(A_H)_{i,j}| = \langle (U_{\geq L})_i, (U_{\geq L} \Lambda_{\geq L})_j \rangle &= \langle (U_{\geq L})_i, (A U_{\geq L})_j \rangle \\
	&\leq \| (U_{\geq L})_i \|_2 \cdot \| (A U_{\geq L})_j \|_2 \leq \sqrt{n} / L \cdot \|A_j\|_2 = n/L,
\end{align*}
where the first inequality is Cauchy-Schwarz, and the second uses Claim~\ref{cl:1} and the fact $U_{\geq L}$ is orthonormal. Ensuring that $q \geq (n / L^2) \cdot \log(nq)$ guarantees both expectations are at most $O(L)$, which allows us to apply Markov's inequality.

\subsection{Spectral Decompositions}\label{sec:decomp}

\newcommand{\bSigma}{\boldsymbol{\Sigma}}
\newcommand{\bLambda}{\boldsymbol{\Lambda}}
\newcommand{\Sbb}{\mathbb{S}}

We first define a few useful decompositions of the matrix $A$ which we will use throughout the analysis. These involve decomposing the spectrum of $A$ into parts. 

\begin{definition}[Spectral Decomposition]\label{def:spectral-decomp}
We express $A = U \Lambda U^{\intercal}$ as its singular value decomposition.
\begin{itemize}
\item $U$ is a matrix whose columns form the eigenvectors of $A$, and 
\item $\Lambda$ is an $n\times n$ diagonal matrix of eigenvalues. 
\end{itemize}
Let $\Lambda^{1/2}$ is the $n \times n$ diagonal matrix whose entries are $\sqrt{|\lambda_i|}$ , and $\sign(\Lambda)$ is the $n\times n$ diagonal matrix whose entries are $\sign(\lambda_i) \in \{-1,1\}$ for each of the eigenvalues $\lambda_i$ of $A$.
\end{definition}

In what follows, we define some decompositions of $A$. These will always involve selecting a subset of the eigenvectors and corresponding eigenvalues of $A$. 

\begin{definition}[Positive-Negative-Low Decomposition]\label{def:high-low-decomp}
We let $A = A_{P} - A_{N} + A_{L}$, where $A_{P} = U_{P} \Lambda_{P} U_{P}^{\intercal}$ and $A_{N} = U_N \Lambda_{N} U_{N}^{\intercal}$ and $A_L = U_L \Lambda_{L} U_{L}^{\intercal}$, and let $A_H = A_P - A_N$.
\begin{itemize}
\item The matrix $U_{P}$ has columns given by the subset of eigenvectors from $U$ whose corresponding eigenvalue is positive and larger than $L$.
\item The matrix $U_N$ has columns given by the subset of eigenvectors from $U$ whose corresponding eigenvalue is negative and smaller than $-L$, and let $\Lambda_{N}$ be the diagonal matrix of magnitudes of negative eigenvalues.
\item The matrix $U_{L}$ has columns given by the subset of eigenvectors from $U$ whose corresponding eigenvalue has magnitude at most $L$. 
\end{itemize}
We let $\lambda_{\max} = \lambda_{\max}(A)$, and $r = \lceil \log(\lambda_{\max} / L)\rceil$,\footnote{In the parameter settings, $\lambda_{\max}$ is at most $n$ and $L$ will be on the order of $\poly(\eps) n$; hence, one should consider $r$ as being logarithmic in $1/\eps$, with no dependence on $n$.} and write
\[ A_P = \sum_{h=0}^{r} U_{h}^{+} \Lambda_h^{+} (U_h^+)^{\intercal}, \]
where the matrix $U_h^+$ contains columns of $U_P$ whose corresponding eigenvalue has magnitude between $\lambda_{\max} / 2^h$ and $\lambda_{\max} / 2^{h+1}$. The matrix $U_{-0}^+$ contains columns for all eigenvectors of $U_P$ which are not in $U_0^+$, and $\Lambda_{-0}^+$ is the corresponding diagonal matrix of eigenvalues.

Similarly, we let $\lambda_{\min} = \lambda_{\min}(A_N)$, and $k = \lceil \log(\lambda_{\min}/L)\rceil$, and write 
\[ A_N = \sum_{\ell=0}^k U_{\ell}^- \Lambda_{\ell}^- ( U_{\ell}^- )^{\intercal}, \]
where the matrix $U_{\ell}^-$ contains columns of $U_N$ whose corresponding eigenvalue has magnitude between $\lambda_{\min} / 2^{\ell}$ and $\lambda_{\min} / 2^{\ell+1}$, and $\Lambda_{\ell}^-$ is the corresponding diagonal matrix of eigenvalues.
\end{definition}

The following estimates will be used throughout the paper, and are particularly important for Lemma~\ref{lem:small-bottom}. We prove them here by applying the conditions of $\calbE$. It is convenient in the lemma below to think of the easier case of $\lambda_{\min} = O(\lambda_{\max})$, as this would simplify some of the expressions below.

\begin{lemma}\label{lem:spectral-norm-bounds}
There exists a fixed constant $c_2 > 0$ such that whenever $\calbE$ holds
\begin{itemize}
\item $\| (\Lambda_0^+)^{1/2} (((U_0^+)^{\intercal} U_{0}^+) - (\bT U_0^+)^{\intercal} (\bT U_0^+)) (\Lambda_0^+)^{1/2} \|_2 \leq c_2 L$.
\item $\| (\Lambda_{-0}^+)^{1/2} (((U_{-0}^+)^{\intercal} U_{-0}^+) - (\bT U_{-0}^+)^{\intercal} (\bT U_{-0}^+)) (\Lambda_{-0}^+)^{1/2} \|_2 \leq c_2 \sqrt{r\lambda_{\max}L}$.
\item $\| \Lambda_N^{1/2} ((U_N^{\intercal} U_N) - (\bT U_N)^{\intercal} (\bT U_N)) \Lambda_N^{1/2} \|_2 \leq c_2 \sqrt{kL \lambda_{\min}}$.
\item $\| (\Lambda_0^+)^{1/2} (\bT U_0^+)^{\intercal} (\bT U_{-0}^+) (\Lambda_{-0}^+)^{1/2} \|_2 \leq c_2 \sqrt{\lambda_{\max} L}$.
\item $\| (\Lambda_0^+)^{1/2} (\bT U_0^+)^{\intercal} (\bT U_{N}) \Lambda_{N}^{1/2} \|_2 \leq c_2 \sqrt{\lambda_{\max} L} \left(1 + \gamma \right)$.
\item $\| (\Lambda_{-0}^+)^{1/2} (\bT U_{-0}^+)^{\intercal} (\bT U_{N}) \Lambda_{N}^{1/2} \|_2 \leq c_2 \left(\sqrt{\lambda_{\max} L} + \sqrt{\lambda_{\min} L}\right)$.
\end{itemize}
\end{lemma}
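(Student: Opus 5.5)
Every bound is obtained by reducing to the first condition of $\calbE$, that $\bT$ is an $\alpha L/\lambda$-distortion subspace embedding of $U_{\geq\lambda}$ for all $\lambda\geq L$, applied to each dyadic block separately. Fix a block $U_h^+$ (eigenvalues in $[\lambda_{\max}/2^{h+1},\lambda_{\max}/2^h]$) or $U_\ell^-$. Its columns lie in $U_{\geq \lambda_{\max}/2^{h+1}}$, and restricting a subspace embedding to a subspace preserves it. Moreover, $\bT$ being an $\eta$-distortion embedding of an orthonormal $V$ is equivalent to $\|V^{\intercal}V-(\bT V)^{\intercal}(\bT V)\|_2\le\eta$; the quadratic form $\|Vv\|_2^2-\|\bT Vv\|_2^2$ is symmetric, so its spectral norm equals its maximal value on unit vectors. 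Hence for the block with eigenvalues at most $\mu=\lambda_{\max}/2^h$,
\[ \bigl\|\Lambda_h^{1/2}\bigl((U_h^+)^{\intercal}U_h^+-(\bT U_h^+)^{\intercal}\bT U_h^+\bigr)\Lambda_h^{1/2}\bigr\|_2\le \mu\cdot \frac{2\alpha L}{\mu}=O(L). \]
This gives the first bullet directly, with $h=0$ and $U_0^+\subset U_{\geq\lambda_{\max}/2}$.

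For an off-diagonal pair of blocks $V_1$ (scale $\mu_1$) and $V_2$ (scale $\mu_2\le\mu_1$), I bound the cross term via the union: both lie in $U_{\geq \mu_2/2}$, so $[V_1\,V_2]$ has distortion $O(\alpha L/\mu_2)$. Because $V_1^{\intercal}V_2=0$, this yields $\|(\bT V_1)^{\intercal}\bT V_2\|_2\le O(L/\mu_2)$. Scaling by $\Lambda^{1/2}$ then gives at most $\sqrt{\mu_1\mu_2}\cdot O(L/\mu_2)=O(L\sqrt{\mu_1/\mu_2})$. This is too lossy when the ratio $\mu_1/\mu_2$ is large. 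A better bound comes from Cauchy--Schwarz: $\|(\bT V_1)^{\intercal}\bT V_2\|_2\le\|\bT V_1\|_2\|\bT V_2\|_2\le (1+o(1))$, which gives $O(\sqrt{\mu_1\mu_2})$. Taking the minimum, $\min(L\sqrt{\mu_1/\mu_2},\sqrt{\mu_1\mu_2})\le \sqrt{\mu_1 L}$, with the crossover at $\mu_2=L$.

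So every block pair whose larger scale is $\mu_1$ contributes $O(\sqrt{\mu_1 L})$. It remains to assemble the blocks. For the second and third bullets I write the matrix as a block matrix over the $r$ (respectively $k$) dyadic blocks. I bound it in spectral norm by combining diagonal blocks (each $O(L)$) with off-diagonal blocks. For the latter, rather than summing over pairs, I group by the smaller scale: fixing $V_2$, the entire upper set $[V_1]$ of larger-scale blocks lies in $U_{\geq\mu_2/2}$, and the same two-sided argument with $\Lambda$ weights bounded by the top scale gives $O(\sqrt{\lambda L})$ per row-strip. Summing over $r$ strips by a triangle inequality would lose a factor $r$, whereas the claimed bound loses only $\sqrt r$. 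To recover $\sqrt r$, I use that a block-lower-triangular matrix whose row-strips $R_j$ satisfy $\|R_j\|_2\le a$ has norm at most $\sqrt{\sum_j\|R_j\|_2^2}\le \sqrt r\, a$. This gives $c_2\sqrt{r\lambda_{\max}L}$ and $c_2\sqrt{kL\lambda_{\min}}$.

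The fourth through sixth bullets are cross terms between $U_0^+$ and the rest. For the fourth, the strip argument with $V_1=U_0^+$ at scale $\lambda_{\max}$ gives $\sqrt{\lambda_{\max}L}$ per block of $U_{-0}^+$. I would again combine these by the square-sum over blocks, or more cleverly use $U_0^+\cup U_{-0}^+\subset U_{\geq L}$ with the weighted estimate $\min(\cdot,\cdot)$ directly on the concatenation. For the fifth and sixth, positive and negative eigenvectors are still orthogonal, so the same method bounds the cross terms by $O(\sqrt{\max(\lambda_{\max},\lambda_{\min})L})$; the stated form $\sqrt{\lambda_{\max}L}(1+\gamma)$ equals $\sqrt{\lambda_{\max}L}+\sqrt{L\lambda_{\min}}\cdot\sqrt{L/\lambda_{\max}}$. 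The main obstacle is removing the logarithmic factors $r,k$ in bullets four through six, and I expect this step to need care. My first attempt would be to exploit that the distortion bound improves geometrically with scale, so the contributions of dyadic blocks form a geometric series in $\sqrt{\mu}$ rather than a flat sum.
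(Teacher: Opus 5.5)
\textbf{There is a genuine gap in bullets four through six.} Your handling of the first three bullets is sound. You group the off-diagonal blocks into strips, bound each strip by $O(\sqrt{\lambda_{\max}L})$ via the union embedding, and then use $\|[C_1|\cdots|C_r]\|_2^2\le\sum_h\|C_h\|_2^2$. This is a valid alternative to the paper's route, which bounds the spectral norm by the Frobenius norm of the $r\times r$ matrix of block norms. Bullets four through six, however, are not actually proved:
\begin{itemize}
\item For bullet four, the square-sum route gives $\sqrt{r\lambda_{\max}L}$.
\item For bullet four, the concatenation $U_0^+\cup U_{-0}^+\subset U_{\geq L}$ has distortion $\alpha$ and gives only $O(\alpha\lambda_{\max})$, which is not $O(\sqrt{\lambda_{\max}L})$ in general.
\item For bullets five and six, your target $O(\sqrt{\max(\lambda_{\max},\lambda_{\min})L})$ is too weak even for a single block, not merely off by logarithms. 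When $\lambda_{\min}\gg\lambda_{\max}$, bullet five demands $\sqrt{\lambda_{\max}L}+L\sqrt{\lambda_{\min}/\lambda_{\max}}$. Since $L\ll\lambda_{\max}$, we have $L\sqrt{\lambda_{\min}/\lambda_{\max}}\ll\sqrt{\lambda_{\min}L}$.
\end{itemize}

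\textbf{The missing idea is to not flatten the per-pair estimate before summing.} Every block has eigenvalues of magnitude above $L$, so the Cauchy--Schwarz alternative is never the active branch of your minimum. The union-embedding bound $L\sqrt{\mu_{\mathrm{large}}/\mu_{\mathrm{small}}}$, which you dismissed as too lossy, is already at most $\sqrt{\mu_{\mathrm{large}}L}$. Crucially, it decays geometrically away from the finest scale. Keep it, and bound the spectral norm by the Frobenius norm of the matrix of block norms; this is your strip inequality applied once, or twice for bullet six.
\begin{itemize}
\item Bullet four: the $(0,h)$ block is $O(L2^{h/2})$, so the squares sum to $L^2\sum_{h\le r}2^h=O(L\lambda_{\max})$, with no factor of $r$.
\item Bullet five: $[U_0^+\,|\,U_\ell^-]$ has distortion $O(\alpha L(1/\lambda_{\max}+2^{\ell}/\lambda_{\min}))$. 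The weighted block is therefore $O(L(\sqrt{\lambda_{\min}/\lambda_{\max}}\,2^{-\ell/2}+\sqrt{\lambda_{\max}/\lambda_{\min}}\,2^{\ell/2}))$. Summing squares over $\ell\le k$ gives $O(L^2\lambda_{\min}/\lambda_{\max}+L\lambda_{\max})=O(L\lambda_{\max}(1+\gamma^2))$. The first summand is exactly where the $\gamma$ term comes from.
\item Bullet six: the same computation over the $r\times k$ grid gives a squared bound of $O(L\lambda_{\min}+L\lambda_{\max})$.
\end{itemize}
Your closing remark about a geometric series points in this direction. As written, though, the proposal replaces each block bound by its worst case $\sqrt{\mu_{\mathrm{large}}L}$ before summing. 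That is precisely what loses both the $\sqrt{r},\sqrt{k}$ factors and the correct $\gamma$-dependence.
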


\begin{proof}
The first bound follows from the fact $\bT$ gives an $\alpha L / \lambda_{\max}$-distortion subspace embedding of $U_{0}^+$. The subsequent bounds follow the same structure as~\cite{SW25}. For the second bullet point, the spectral norm is upper bounded by the spectral norm of the following $r \times r$ matrix $M$, where the $(h_1, h_2)$ entry (when $h_1 \neq h_2$) is given by
\[ M_{h_1, h_2} = \| (\Lambda_{h_1}^+)^{1/2} (\bT U_{h_1}^+)^{\intercal} (\bT U_{h_2}^+) (\Lambda_{h_2}^+)^{1/2}  \|_2 \leq \frac{2 \cdot \lambda_{\max}}{2^{h_1/2} \cdot 2^{h_2/2}} \cdot \frac{\alpha L \cdot \max\{ 2^{h_1}, 2^{h_2}\}}{\lambda_{\max}} \lsim L \cdot \sqrt{2^{|h_1 - h_2|}},\]
since $\bT$ is a $\alpha L\cdot \max\{2^{h_1}, 2^{h_2}\} / \lambda_{\max}$-distortion subspace embedding of $[U_{h_1}^+|U_{h_2}^+]$. On the other hand, the diagonal entries for $h$ are 
\[ M_{h,h} = \| (\Lambda_{h}^+)^{1/2} \left( (U_{h}^+)^{\intercal} (U_h^+) - (\bT U_{h_1}^+)^{\intercal} (\bT U_{h_2}^+)\right) (\Lambda_{h_2}^+)^{1/2}  \|_2  \leq \frac{2 \lambda_{\max}}{2^{h}} \cdot \frac{\alpha L \cdot 2^h}{\lambda_{\max}} \lsim L.\]
The squared Frobenius norm of $M$ (which upper bounds the square of the spectral norm) is at most, up to constant factor, $L^2 \sum_{h_1=1}^r \sum_{h_2=1}^r 2^{|h_1 - h_2|} \leq rL  \lambda_{\max}$. The analogous argument implies the third spectral norm bound. For the fourth bullet, consider the $1 \times r$ matrix $M$ where $M_{1, h}$ is given by
\[ M_{1,h} = \| (\Lambda_{0}^{+})^{1/2} (\bT U_{0}^+)^{\intercal} (\bT U_{h}^+) (\Lambda_{h}^+)^{1/2} \|_2 \leq \frac{2\lambda_{\max}}{2^{h/2}} \cdot \frac{\alpha L \cdot 2^h}{\lambda_{\max}} \lsim L 2^{h/2},   \]
so that $\| M \|_F^2 \lsim L^2 \sum_{h=1}^r 2^h \lsim L \lambda_{\max}$. Similarly, the fifth bullet sets the $1 \times k$ matrix $M$ where $M_{1,\ell}$ is given by $\| (\Lambda_{0}^+)^{1/2} (\bT U_0^+)^{\intercal} (\bT U_{\ell}^-) (\Lambda_{\ell}^-)^{1/2} \|_2$, which is at most $\sqrt{\lambda_{\max} \lambda_{\min}} / 2^{\ell/2} \cdot \alpha L \left( 1/\lambda_{\max} + 2^{\ell} / \lambda_{\min}\right)$, so that $\| M \|_F^2 \lsim L^2 (\lambda_{\min} / \lambda_{\max}) + L \lambda_{\max}$; we obtain the desired bound by taking the square-root and plugging in the value of $\gamma$ from (\ref{eq:gamma-bound}). For the final spectral norm bound, the analogous argument bounds the squared Frobenius norm of the $r \times k$ matrix $M$ where 
\begin{align*} 
M_{h, \ell} = \| (\Lambda_{h}^+)^{1/2} (\bT U_{h}^+)^{\intercal} (\bT U_{\ell}^-) (\Lambda_{\ell}^-)^{1/2}\|_2 &\lsim \frac{\sqrt{\lambda_{\max} \lambda_{\min}}}{2^{h/2} \cdot 2^{\ell/2}} \cdot \alpha L \left( \frac{2^h}{\lambda_{\max}} + \frac{2^{\ell}}{\lambda_{\min}}\right) \\
	&\lsim L \left( \sqrt{\frac{\lambda_{\min}}{\lambda_{\max}}} \cdot \frac{2^{h/2}}{2^{\ell/2}} + \sqrt{\frac{\lambda_{\max}}{\lambda_{\min}}} \cdot \frac{2^{\ell/2}}{2^{h/2}} \right),
\end{align*}
so the squared Frobenius norm is at most, up to constant factors, $L \lambda_{\min} + L \lambda_{\max}$.
\end{proof}

\subsection{The Lift}\label{sec:analytic-lift}

We now define the key (random) vectors which form part of our analysis, and which allow us to go from the top eigenvector $\by \in \Sbb^t$ computed in the ``subsampled'' or ``sketched'' space, and lift it to the vector $\bx \in \R^n$ the algorithm outputs. We define a vector $\bx \in \R^n$ which we call the ``lift'' of $\by$. Note, randomness is with respect to the draws of $\bT$.

\begin{definition}[The Lift]\label{def:lift}
As specified in Figure~\ref{fig:preprocess}, $\bA_1 = \bT A \bT^{\intercal}$ and $\by \in \Sbb^{\bt}$ is the top eigenvector of $\bA_1$ with eigenvalue $\lambda_{\max}(\bT A \bT^{\intercal}) = \hat{\blambda}_{\max}$. The lift $\bx \in \R^n$ is given by
\[ \bx = \frac{1}{\hat{\blambda}_{\max}} \cdot A \bT^{\intercal} \by,\]
where we defined $\bx$ such that $\bT \bx = \bT A \bT^{\intercal} \by / \hat{\blambda}_{\max} = \by$. It will be helpful in the analysis to define the following vectors:
\begin{itemize}
\item $\tilde{\bw} \in \R^n$ is given by $\tilde{\bw} = \sign(\Lambda) \Lambda^{1/2} (\bT U)^{\intercal} \by$, and note that $\bT U \Lambda^{1/2} \tilde{\bw} = \bA_1 \by = \hat{\blambda}_{\max}\by$. 
\item $\tilde{\bx} \in \R^n$ is the vector given by $\tilde{\bx} = U \Lambda^{1/2} \tilde{\bw}$, and again, note $\bT \tilde{\bx} = \hat{\blambda}_{\max} \by$.
\item Let $\bx = \tilde{\bx} / \hat{\blambda}_{\max}$, which implies $\bT \bx = \by$.
\end{itemize}
Note that coordinates of $\tilde{\bw} \in \R^n$ correspond to eigenvectors of $U$, so we may decompose the vector $\tilde{\bw}$ following the decompositions of Subsection~\ref{sec:decomp}. In particular,
\begin{itemize}
\item $\tilde{\bw} = \tilde{\bw}_{H} + \tilde{\bw}_L$, where $\tilde{\bw}_{H} = \sign(\Lambda_H) \Lambda_{H}^{1/2} (\bT U_{H})^{\intercal} \by$, and $\tilde{\bw}_{L} = \sign(\Lambda_L) \Lambda_{L}^{1/2} (\bT U_{L})^{\intercal} \by$.\footnote{As written, there is a minor type-check issue, since $\tilde{\bw}_{H}$ and $\tilde{\bw}_{L}$ will not be in $\R^n$ (the sum of their dimensionalities will be $n$), since they are non-zero in a disjoint set of coordinates; the sum is interpreted as placing the non-zero entries in their corresponding positions.}
\item $\tilde{\bw}_H = \tilde{\bw}_P + \tilde{\bw}_{N}$, where $\tilde{\bw}_{P} = \Lambda_{P}^{1/2} (\bT U_{P})^{\intercal} \by$, $\tilde{\bw}_{N} = - \Lambda_{N}^{1/2} (\bT U_N)^{\intercal} \by$.
\item $\tilde{\bw}_P = \sum_{h=0}^{r} \tilde{\bw}_h^+$ where $\tilde{\bw}_h^+ = (\Lambda_{h}^+)^{1/2} (\bT U_{h}^+)^{\intercal} \by$.
\item $\tilde{\bw}_N = \sum_{\ell=0}^{k} \tilde{\bw}_{\ell}^{-}$ where $\tilde{\bw}_{\ell}^- = -(\Lambda_{\ell}^-)^{1/2} (\bT U_{\ell}^-)^{\intercal} \by$.
\end{itemize}
We define the normalization $\bw_H = \tilde{\bw}_H / \| \tilde{\bw}_H\|_2$, and analogously define the decompositions $\bw_H = \bw_P + \bw_N$ and $\bw_P = \sum_{h=0}^r \bw_h^+$ and $\bw_N = \sum_{\ell=0}^k \bw_{\ell}^-$, and $\bw_{-0} = \sum_{h=1}^r \bw_{h}^+$.
\end{definition}

\subsubsection{Coarse Structure of the Lift}\label{sec:structural-vecs}

The subsequent lemmas show that, the top eigenvector $\by \in \Sbb^{\bt}$ found by our algorithm in Line~\ref{ln:3} of Figure~\ref{fig:preprocess} will necessarily impose coarse structure on the lift $\bx$ and $\bw$. This coarse structure will be necessary when relating the Rayleigh quotients of the analytic lift in $A$ with $\hat{\blambda}_{\max}$ (which is the Rayleigh quotient of $\by$ in the ``sketch'' matrix $\bT A \bT^{\intercal}$). The first lemma below shows that the lift $\bx$ must have significantly more ``$\ell_2$-mass'' along eigenvectors of $U_P$, as opposed to eigenvectors of $U_N$. The second lemma is a more refined statement, that the lift $\bx$ must not only have most of the ``$\ell_2$-mass'' along eigenvectors of $U_P$, but also most of the ``$\ell_2$-mass'' along eigenvectors in $U_{0}^+$.

\begin{lemma}\label{lem:neg-small-pos-large}
There is a fixed constant $c_1 > 0$, such that whenever $\calbE$ holds, 
\begin{itemize}
\item $\lambda_{\max} - c_1 L \leq \| \tilde{\bw}_{P} \|_2^2 \leq \lambda_{\max} + c_1 Lr$
\item $\|\tilde{\bw}_{N}\|_2^2 \leq c_1 Lr$.
\item $\| \tilde{\bw}_H\|_2^2 \leq \lambda_{\max} + c_1 Lr$.
\end{itemize}
\end{lemma}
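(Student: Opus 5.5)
The plan is to read everything off the single scalar identity for the Rayleigh quotient of $\by$ in $\bA_1$. Since $\bA_1 = \bT A \bT^{\intercal} = \bT A_P \bT^{\intercal} - \bT A_N \bT^{\intercal} + \bT A_L \bT^{\intercal}$, and by Definition~\ref{def:lift} $\by^{\intercal}\bT A_P \bT^{\intercal}\by = \|\Lambda_P^{1/2}(\bT U_P)^{\intercal}\by\|_2^2 = \|\tilde{\bw}_P\|_2^2$ (and likewise for $N$), we get
\[ \hat{\blambda}_{\max} = \|\tilde{\bw}_P\|_2^2 - \|\tilde{\bw}_N\|_2^2 + \langle \by, \bT A_L \bT^{\intercal}\by\rangle . \]
The events $\calbE$ give $|\hat{\blambda}_{\max} - \lambda_{\max}| \leq \alpha L$ and $|\langle \by, \bT A_L\bT^{\intercal}\by\rangle| \leq \|\bT A_L \bT^{\intercal}\|_2 \leq c_0 L$, since $\|\by\|_2=1$. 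Hence
\[ \|\tilde{\bw}_P\|_2^2 - \|\tilde{\bw}_N\|_2^2 \in [\lambda_{\max} - (\alpha + c_0)L,\ \lambda_{\max} + (\alpha+c_0)L]. \]
The lower bound $\|\tilde{\bw}_P\|_2^2 \geq \lambda_{\max} - (\alpha+c_0)L$ follows immediately, because $\|\tilde{\bw}_N\|_2^2 \geq 0$.

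The main step, and the expected obstacle, is the upper bound $\|\tilde{\bw}_P\|_2^2 \leq \lambda_{\max} + O(Lr)$. One cannot simply use a single subspace embedding of $U_P$: its distortion is only guaranteed to be $\alpha$ (at level $\lambda = L$), which would cost $\alpha\lambda_{\max}$. The dyadic block/Frobenius bounds of Lemma~\ref{lem:spectral-norm-bounds} are also too lossy here, giving $\sqrt{r\lambda_{\max}L}$. Instead I would use a layer-cake decomposition of $A_P$. Let $P_{\geq\lambda} = U_{\geq \lambda}^{+}(U_{\geq\lambda}^{+})^{\intercal}$ be the projection onto positive eigenvectors with eigenvalue at least $\lambda$. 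Then
\[ A_P = L\, P_{\geq L} + \int_L^{\lambda_{\max}} P_{\geq \lambda}\, d\lambda , \]
so that
\[ \|\tilde{\bw}_P\|_2^2 = L\,\|P_{\geq L}\bT^{\intercal}\by\|_2^2 + \int_L^{\lambda_{\max}} \|(U^+_{\geq\lambda})^{\intercal}\bT^{\intercal}\by\|_2^2\, d\lambda . \]
For each $\lambda \geq L$, the quantity $\|(U^+_{\geq\lambda})^{\intercal}\bT^{\intercal}\by\|_2^2$ is at most $\|\bT U^+_{\geq\lambda}\|_2^2$. The first bullet of $\calbE$ says $\bT$ is an $\alpha L/\lambda$-distortion embedding of the span of eigenvectors with $|\text{eigenvalue}| \geq \lambda$, which contains the positive ones, so this is at most $1 + \alpha L/\lambda$. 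Integrating gives
\[ \|\tilde{\bw}_P\|_2^2 \leq L(1+\alpha) + (\lambda_{\max} - L) + \alpha L \ln(\lambda_{\max}/L) \leq \lambda_{\max} + O(Lr). \]
A technical point to check is that the embedding guarantee is stated for every $\lambda \geq L$, not just dyadic values. If only dyadic $\lambda = L2^j$ are available, I would round each $\lambda$ down to the nearest dyadic value. This only enlarges the subspace, and the distortion $\alpha L/(L2^j) \leq 2\alpha L/\lambda$ is still fine.

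The remaining bullets then follow from the sandwich above. We have
\[ \|\tilde{\bw}_N\|_2^2 \leq \|\tilde{\bw}_P\|_2^2 - \lambda_{\max} + (\alpha+c_0)L \leq O(Lr) + (\alpha + c_0)L = O(Lr), \]
using $r \geq 1$. Finally, $\tilde{\bw}_P$ and $\tilde{\bw}_N$ have disjoint supports, so
\[ \|\tilde{\bw}_H\|_2^2 = \|\tilde{\bw}_P\|_2^2 + \|\tilde{\bw}_N\|_2^2 \leq \lambda_{\max} + O(Lr). \]
Choosing $c_1$ as a large enough multiple of $1+\alpha+c_0$ covers all three bullets simultaneously.
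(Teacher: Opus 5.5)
Your proof is correct. For the lower bound on $\|\tilde{\bw}_P\|_2^2$ and for the second and third bullets it is the paper's argument: the same scalar identity for $\hat{\blambda}_{\max}$ (which the paper writes as an inequality), the same rearrangement, and disjointness of supports.

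The real difference is the upper bound $\|\tilde{\bw}_P\|_2^2 \le \lambda_{\max} + O(Lr)$.
\begin{itemize}
\item The paper observes $\|\tilde{\bw}_P\|_2^2 \le \lambda_{\max}(\bT A_P \bT^{\intercal})$ and imports Lemma~5.6 of~\cite{SW25} as a black box.
\item You derive it from the first bullet of $\calbE$ alone, by writing $A_P = L P_{\geq L} + \int_L^{\lambda_{\max}} P_{\geq \lambda}\, d\lambda$ as a nonnegative superposition of nested spectral projections.
\end{itemize}
This is exactly why your route succeeds where Lemma~\ref{lem:spectral-norm-bounds} would be too lossy, as you correctly note. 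Each level only uses the one-sided bound $\|\bT U^+_{\geq \lambda}\|_2^2 \le 1 + \alpha L/\lambda$, and no cross terms between eigenvalue scales ever appear. The level-wise distortions then integrate to a harmonic sum, $O(\alpha L \log(\lambda_{\max}/L))$.

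What you gain:
\begin{itemize}
\item The argument is self-contained and uses $\calbE$ exactly as stated. Your dyadic-rounding remark matches how the first bullet is actually established in the proof of Lemma~\ref{lem:1}.
\item The bound holds for every unit vector, not just $\by$, so it in fact reproves $\lambda_{\max}(\bT A_P \bT^{\intercal}) \le \lambda_{\max} + O(\alpha L \log(\lambda_{\max}/L))$.
\item It is sharper by a factor of $\alpha$ on the $Lr$ term. This would even give $\|\tilde{\bw}_N\|_2^2 = O(L)$ once $\alpha \lsim 1/r$, which costs only $\polylog(1/\eps)$ in $q$.
\end{itemize}
The paper's version is shorter, but it relies on matching the parameters of an external lemma to the event $\calbE$.
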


\begin{proof}
The fact $\by \in \Sbb^{\bt}$ is an eigenvector of $\bT A \bT^{\intercal}$ with eigenvalue $\hat{\blambda}_{\max}$ implies that
\begin{align}
\hat{\blambda}_{\max} = \langle \by, \bT A \bT^{\intercal} \by \rangle &\leq \langle \by, \bT A_P \bT^{\intercal} \by \rangle - \langle \by, \bT A_N \bT^{\intercal} \by \rangle + \| \bT A_L \bT^{\intercal} \|_2 \nonumber \\
&\leq \left\| \Lambda^{1/2}_{P} (\bT U_P)^{\intercal} \by \right\|_2^2 - \left\| \Lambda^{1/2}_{N} (\bT U_{N})^{\intercal} \by \right\|_2^2 + c_0L = \| \tilde{\bw}_{P} \|_2^2 - \| \tilde{\bw}_N \|_2^2 + c_0L. \label{eq:ter}
\end{align}
Note that this directly gives the lower bound for the first inequality, since the second condition of $\calbE$ in Definition~\ref{def:events} implies $\hat{\blambda}_{\max} \geq \lambda_{\max} - \alpha L$; we obtain the desired bound so long as $c_1 \geq c_0 + \alpha$. Furthermore, $\| \tilde{\bw}_P \|_2^2 \leq \lambda_{\max}(\bT A_P \bT^{\intercal})$, and the latter is at most $\lambda_{\max} + c_0 Lr$ by directly applying Lemma 5.6 of~\cite{SW25}. This establishes the upper bound of the first item when $c_1 \geq c_0$. Re-arranging terms in (\ref{eq:ter}), we obtain the third inequality whenever $c_1 \geq 2c_0 + \alpha$. The final inequality is a direct consequence of $\| \tilde{\bw}_H \|_2^2 = \| \tilde{\bw}_P \|_2^2 + \| \tilde{\bw}_{N} \|_2^2$, whenever $c_1 \geq 3c_0 + \alpha$. 
\end{proof}

\begin{lemma}\label{lem:small-bottom}
There exists a fixed constant $c_3 > 0$ such that whenever $\calbE$ holds, $\|\bw_{-0}^+\|_2^2 \leq c_3Lr / \lambda_{\max} (1 +\gamma)^2$.
\end{lemma}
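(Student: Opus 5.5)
We will use the eigenvector equation for $\by$: since $\bT U \Lambda^{1/2} \tilde{\bw} = \hat{\blambda}_{\max} \by$ and $\|\by\|_2=1$, taking the inner product with $\by$ gives $\hat{\blambda}_{\max} = \langle \by, \bA_1 \by\rangle$. Splitting this along the decomposition of Definition~\ref{def:high-low-decomp} yields
\[ \hat{\blambda}_{\max} \le \|\tilde{\bw}_0^+\|_2^2 + \|\tilde{\bw}_{-0}^+\|_2^2 - \|\tilde{\bw}_N\|_2^2 + c_0 L. \]
On its own this is not enough. The key point is that mass on the lower positive eigenvalues $\lambda \le \lambda_{\max}/2$ only contributes at rate $\lambda$ to the Rayleigh quotient, while it costs $\ell_2$ norm. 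We therefore work with $\|\by\|_2^2 = 1$ expressed through $\tilde{\bw}$: $\hat{\blambda}_{\max}\by = \bT U\Lambda^{1/2}\tilde{\bw}$, hence
\[ \hat{\blambda}_{\max}^2 = \|\bT U \Lambda^{1/2}\tilde{\bw}\|_2^2. \]
Also $\hat{\blambda}_{\max} = \langle \by, \bT U \Lambda^{1/2} \tilde{\bw}\rangle$, and this equals $\|\Lambda^{1/2}\sign(\Lambda)^{1/2}\ldots\|$; the cleaner identity is $\hat{\blambda}_{\max} = \langle \tilde{\bw}, \sign(\Lambda)\tilde{\bw}\rangle$ up to the low part. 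So the plan compares two quadratic forms in $\tilde{\bw}_H$:
\[ Q_1 = \langle \tilde{\bw}_H, \sign(\Lambda_H)\tilde{\bw}_H\rangle \approx \hat{\blambda}_{\max}, \qquad Q_2 = \|\bT U_H \Lambda_H^{1/2}\tilde{\bw}_H\|_2^2 \approx \hat{\blambda}_{\max}^2. \]

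The first step is to write $Q_2 = \langle \tilde{\bw}_H, \Lambda_H \tilde{\bw}_H\rangle + \langle \tilde{\bw}_H, \bM \tilde{\bw}_H\rangle$, where $\bM$ is the distortion matrix of (\ref{eq:inner-p-int}) restricted to $H$. The low part is controlled by $\|A_L\bT^{\intercal}\|_2 \le c_0L$ and $\|\bT A_L \bT^\intercal\|_2\le c_0 L$, so it is absorbed as an $O(L\sqrt{\lambda_{\max}})$ cross term. Then
\[ \langle \tilde{\bw}_H,\Lambda_H\tilde{\bw}_H\rangle \le \lambda_{\max}\|\tilde{\bw}_0^+\|_2^2 + \tfrac{\lambda_{\max}}{2}\|\tilde{\bw}_{-0}^+\|_2^2 + \lambda_{\min}\|\tilde{\bw}_N\|_2^2. \]
Combining $\hat{\blambda}_{\max}^2 \approx Q_2$ with $\lambda_{\max}\hat{\blambda}_{\max} \approx \lambda_{\max}(\|\tilde{\bw}_P\|_2^2 - \|\tilde{\bw}_N\|_2^2)$, the difference gives
\[ \tfrac{\lambda_{\max}}{2}\|\tilde{\bw}_{-0}^+\|_2^2 \lesssim \lambda_{\max}\cdot O(L) + (\lambda_{\min}+\lambda_{\max})\|\tilde{\bw}_N\|_2^2 + |\langle\tilde{\bw}_H,\bM\tilde{\bw}_H\rangle| + O(\lambda_{\max}L). \]
Here we use $\hat{\blambda}_{\max}\ge\lambda_{\max}-\alpha L$.

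The second step bounds $|\langle \tilde{\bw}_H,\bM\tilde{\bw}_H\rangle|$ blockwise using Lemma~\ref{lem:spectral-norm-bounds}. The $(0,0)$ block gives $c_2L\|\tilde{\bw}_0^+\|_2^2 \lesssim L\lambda_{\max}$. The $(-0,-0)$ block gives $\sqrt{r\lambda_{\max}L}\,\|\tilde{\bw}_{-0}^+\|_2^2$, which is a small multiple of $\lambda_{\max}\|\tilde{\bw}_{-0}^+\|_2^2$ since $L\ll\lambda_{\max}$ and can be absorbed into the left side. The cross blocks $(0,-0)$, $(0,N)$, and $(-0,N)$ give products such as $\sqrt{\lambda_{\max}L}\,\|\tilde{\bw}_0^+\|_2\|\tilde{\bw}_{-0}^+\|_2 \lesssim \lambda_{\max}\sqrt{L}\,\|\tilde{\bw}_{-0}^+\|_2$, handled by AM–GM with absorption; and $\sqrt{\lambda_{\max}L}(1+\gamma)\sqrt{\lambda_{\max}}\sqrt{Lr}$ using $\|\tilde{\bw}_N\|_2^2\le c_1Lr$ from Lemma~\ref{lem:neg-small-pos-large}. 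The $(N,N)$ block gives $\sqrt{kL\lambda_{\min}}\,Lr$. The term $\lambda_{\min}\|\tilde{\bw}_N\|_2^2 \le \lambda_{\min}Lr$ is where $\gamma$ enters: dividing by $\lambda_{\max}^2$, it contributes $Lr\gamma^2/\lambda_{\max}$.

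Collecting everything gives $\|\tilde{\bw}_{-0}^+\|_2^2 \lesssim Lr(1+\gamma)^2$. Since $\|\tilde{\bw}_H\|_2^2 \ge \lambda_{\max}-c_1L$ by Lemma~\ref{lem:neg-small-pos-large}, normalizing gives $\|\bw_{-0}^+\|_2^2 \lesssim Lr(1+\gamma)^2/\lambda_{\max}$.

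The main obstacle is precisely tracking the $\lambda_{\min}\|\tilde{\bw}_N\|_2^2$ term and the cross terms involving $N$. The bound $\|\tilde{\bw}_N\|_2^2\le c_1Lr$ may need to be sharpened; if it carries an extra $\log$ factor it can be folded into $r$. The absorption arguments also require the constant in $L\le\lambda_{\max}/\polylog$ to be small enough.
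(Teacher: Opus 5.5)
There is a genuine gap, in exactly the term you flagged as the main obstacle. Comparing $Q_2=\|\bT U_H\Lambda_H^{1/2}\tilde{\bw}_H\|_2^2$ with $\lambda_{\max}\hat{\blambda}_{\max}$ leaves the negative diagonal contribution $\langle \tilde{\bw}_N,\Lambda_N\tilde{\bw}_N\rangle$ on the right-hand side. The only control you have on it at this stage is $\lambda_{\min}\|\tilde{\bw}_N\|_2^2\le c_1\lambda_{\min}Lr$.

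Divide by $\lambda_{\max}/2$ and then normalize by $\|\tilde{\bw}_H\|_2^2\approx\lambda_{\max}$. This term then contributes $\lambda_{\min}Lr/\lambda_{\max}^2=r\gamma^2$ to $\|\bw_{-0}^+\|_2^2$, not $Lr\gamma^2/\lambda_{\max}$ as you wrote. You lose a factor $\lambda_{\max}/L$, which is $\poly(1/\eps)$.

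So your argument proves the lemma only when $\lambda_{\min}=O(\lambda_{\max})$. In the regime the lemma is built for ($\gamma\gtrsim 1$, e.g.\ $\lambda_{\max}=\Theta(\eps n)$, $\lambda_{\min}=\Theta(n)$), the bound $r\gamma^2$ is vacuous. The sharpening you would need is not a logarithmic factor. You would need $\langle\tilde{\bw}_N,\Lambda_N\tilde{\bw}_N\rangle\lsim\lambda_{\max}Lr(1+\gamma)^2$, i.e.\ $\|U_N^{\intercal}\bx\|_2^2\lsim Lr(1+\gamma)^2/\lambda_{\max}$. The paper's control of the lift's mass on the negative eigenspace (Lemma~\ref{lem:n-bound}) is itself derived from Lemma~\ref{lem:small-bottom}, so that route is circular.

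The paper avoids this by never forming the full squared norm. It applies $\Lambda_P^{1/2}(\bT U_P)^{\intercal}$ to the eigen-equation $\bT U\Lambda^{1/2}\tilde{\bw}=\hat{\blambda}_{\max}\by$ and pairs the result with $\bw_P$.
\begin{itemize}
\item \emph{Lower bound.} The right side becomes $\hat{\blambda}_{\max}\|\tilde{\bw}_P\|_2^2/\|\tilde{\bw}_H\|_2$. After normalization this gives $\lambda_{\max}-O(Lr)$.
\item \emph{Upper bound.} The undistorted part of the left side is $\langle\bw_P,\Lambda_P^{1/2}U_P^{\intercal}U_H\Lambda_H^{1/2}\bw_H\rangle=\|\Lambda_P^{1/2}\bw_P\|_2^2\le\lambda_{\max}-\tfrac{\lambda_{\max}}{2}\|\bw_{-0}^+\|_2^2$. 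There is no negative diagonal term, because $U_P^{\intercal}U_N=0$.
\item \emph{Where $\gamma$ enters.} The negative coordinates appear only through the off-diagonal distortion blocks $(\bT U_P)^{\intercal}\bT U_N$ of Lemma~\ref{lem:spectral-norm-bounds}. These have size $\sqrt{\lambda_{\max}L}(1+\gamma)$ and $\sqrt{\lambda_{\max}L}+\sqrt{\lambda_{\min}L}$, and they are multiplied by $\|\bw_N\|_2\lsim\sqrt{Lr/\lambda_{\max}}$. That is what yields the $(1+\gamma)^2$ factor.
\end{itemize}
Your blockwise bounds on the distortion terms, including absorbing the $(-0,-0)$ block and the AM--GM step on the cross terms, carry over unchanged to this one-sided form.
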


\begin{proof}
First, we note that the vector $\tilde{\bw}_H = \sign(\Lambda_H) \Lambda_H^{1/2} (\bT U_H)^{\intercal} \by$, which means that the following sequence of (in)equalities is satisfied:
\begin{align}
\langle \bw_P,  \Lambda_{P}^{1/2} (\bT U_P)^{\intercal} \bT U_H \Lambda_H^{1/2} \tilde{\bw}_H \rangle &\geq  \langle \bw_P, \Lambda_{P}^{1/2} (\bT U_{P})^{\intercal} \bT U \Lambda^{1/2} \tilde{\bw} \rangle - \left\| \Lambda_P^{1/2} (\bT U_P)^{\intercal} \bT U_L \Lambda_L^{1/2} \tilde{\bw}_L \right\|_2 \nonumber \\
&= \hat{\blambda}_{\max} \langle \bw_P,  \Lambda_P^{1/2} (\bT U_P)^{\intercal} \by  \rangle - \left\|\Lambda_{P}^{1/2} (\bT U_P)^{\intercal} \right\|_2 \cdot \left\|\bT A_L \bT^{\intercal} \by \right\|_2 \nonumber \\
&\geq \hat{\blambda}_{\max} \cdot \frac{\| \tilde{\bw}_P \|_2^2}{\| \tilde{\bw}_H \|_2} - 2c_0 L\sqrt{\lambda_{\max}}, \label{eq:growth}
\end{align}
where we used $\bT U \Lambda^{1/2} \tilde{\bw} = \hat{\blambda}_{\max} \by$, as well as the fact $\| \Lambda_P^{1/2} U_P^{\intercal} \bT^{\intercal} \|_2 \leq \sqrt{\lambda_{\max}} \| \bT U_P \|_2 \leq 2 \sqrt{\lambda_{\max}}$, since $\bT$ is a $\alpha$-distortion subspace embedding of $U_{P}$. Then, Lemma~\ref{lem:neg-small-pos-large} implies that the right-hand side of (\ref{eq:growth}), since $\sqrt{\lambda_{\max}} \leq 2 \| \tilde{\bw}_P\|_2$ because $c_1 L = o(\eps n) \leq \lambda_{\max}/2$, is at least
\begin{align*}
\hat{\blambda}_{\max} \cdot \frac{\| \tilde{\bw}_P \|_2^2}{\| \tilde{\bw}_H \|_2} - 4c_0L \cdot \| \tilde{\bw}_P \|_2 &\geq \| \tilde{\bw}_H \|_2 \left(\hat{\blambda}_{\max} (1 - \|\tilde{\bw}_N\|_2^2 / \|\tilde{\bw}_H\|_2^2) - 4c_0 L\right) \\
	&\geq  \| \tilde{\bw}_H \|_2 \left( \lambda_{\max} - cL r \right),
\end{align*}
for some constant $c$, since $\|\tilde{\bw}_N \|_2^2 / \| \tilde{\bw}_H \|_2^2 \leq O(Lr/\lambda)$. Normalizing (since $\bw_H$ is a unit vector), we obtain the inequality:
\begin{align}
\lambda_{\max} - c L r \leq \langle \bw_P, \Lambda_P^{1/2} (\bT U_P)^{\intercal} \bT U_H \Lambda_H^{1/2} \bw_H \rangle. \label{eq:desired-eq-1}
\end{align}
On the other hand, we may also upper bound the right-hand side above by, 
\begin{align*}
\langle \bw_P,  \Lambda_P^{1/2} (\bT U_P)^{\intercal} \bT U_H \Lambda_H^{1/2} \bw_H \rangle &\leq \| \Lambda_{P}^{1/2} \bw_P \|_2^2 \\
	&\qquad +  \| (\Lambda_{0}^+)^{1/2} (( U_{0}^+)^{\intercal} U_0^+ - (\bT U_{0}^+)^{\intercal} (\bT U_{0}^+)) (\Lambda_{0}^+)^{1/2} \|_2 \| \bw_{0}^+\|_2^2\\
	&\qquad + \| (\Lambda_{-0}^+)^{1/2} (( U_{-0}^+)^{\intercal} U_{-0}^+ - (\bT U_{-0}^+)^{\intercal} (\bT U_{-0}^+)) (\Lambda_{-0}^+)^{1/2} \|_2 \| \bw_{-0}^+\|_2^2 \\
	&\qquad + 2\| (\Lambda_0^+)^{1/2} (\bT U_0^+)^{\intercal} (\bT U_{-0}^+) (\Lambda_{-0}^+)^{1/2} \|_2 \cdot \| \bw_{0}^+\|_2 \| \bw_{-0}^+\|_2 \\
	&\qquad + 2\| (\Lambda_0^+)^{1/2} (\bT U_0^+)^{\intercal} (\bT U_{N}) \Lambda_{N}^{1/2} \|_2 \cdot \| \bw_{0}^+\|_2 \| \bw_{N}\|_2 \\
	&\qquad + 2\| (\Lambda_{-0}^+)^{1/2} (\bT U_{-0}^+)^{\intercal} (\bT U_{N}) \Lambda_{N}^{1/2} \|_2 \cdot \| \bw_{-0}^+\|_2 \| \bw_{N}\|_2,
\end{align*}
which we upper bound from Lemma~\ref{lem:spectral-norm-bounds} by,
\begin{align*}
& \left( \lambda_{\max} + c_2 L\right)\cdot \|\bw_0^+\|_2^2 + \left( \frac{\lambda_{\max}}{2} + c_2 \sqrt{r\lambda_{\max} L}\right) \cdot \| \bw_{-0}^+ \|_2^2 + 2c_2 \sqrt{\lambda_{\max} L}\| \bw_0^+\|_2 \| \bw_{-0}^+\|_2 \\
&\qquad + 2 c_2 \sqrt{\lambda_{\max} L} \left( 1 + \gamma \right)\cdot \| \bw_{0}^+\|_2 \| \bw_N \|_2 + 2 c_2 \left( \sqrt{\lambda_{\max} L} + \sqrt{\lambda_{\min} L}\right) \| \bw_{-0}^+\|_2 \|\bw_N \|_2.
\end{align*}
and using that $\|\bw_{0}^+\|_2^2 \leq 1 - \| \bw_{-0}^+\|_2^2$, and that $\|\bw_N \|_2 \leq  \sqrt{2c_1 Lr / \lambda_{\max}}$, the above expression simplifies to the following:
\[ \lambda_{\max} + c' \cdot L \sqrt{r} (1+\gamma)  - \left( \frac{\lambda_{\max}}{2} - c_2 \sqrt{r\lambda_{\max} L} \right) \| \bw_{-0}^+\|_2^2 + c' \sqrt{\lambda_{\max} Lr} (1+\gamma) \cdot \| \bw_{-0}^+ \|_2, \]
for some constant $c > 0$. Combined with (\ref{eq:desired-eq-1}) for the lower bound, we obtain:
\begin{align*}
\frac{\lambda_{\max}}{3} \cdot \| \bw_{-0}^+\|_2^2 \leq (c+c') Lr (1+\gamma) + c' \sqrt{\lambda_{\max} L r}\cdot (1+\gamma)\cdot \| \bw_{-0}^+\|_2,
\end{align*}
which implies $\|\bw_{-0}^+\|_2^2 \leq c_3 Lr / \lambda_{\max} (1+\gamma)^2$, for some constant $c_3 > 0$, which gives the desired bound plugging in $\gamma$ from (\ref{eq:gamma-bound}).
\end{proof}


\subsection{Proof of Lemma~\ref{lem:2}}\label{sec:proof-lem-2}

The proof of the lemma will proceed from the following two intermediate claims.

\begin{claim}\label{cl:intermediate}
Whenever $\calbE$ from Definition~\ref{def:events} holds, 
\begin{align*}
\left| \| \bx \|_2^2 - \| \by \|_2^2 \right| \lsim \frac{1}{\lambda_{\max}} \left| \left\| U_H \Lambda_H^{1/2} \bw_H \right\|_2^2 - \left\| \bT U_H \Lambda_H^{1/2} \bw_H \right\|_2^2 \right| + O\left(\frac{L}{\lambda_{\max}}\right).
\end{align*}
\end{claim}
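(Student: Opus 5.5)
We use the identities from Definition~\ref{def:lift}: $\bx = \tilde{\bx}/\hat{\blambda}_{\max}$ with $\tilde{\bx} = U\Lambda^{1/2}\tilde{\bw}$, and $\by = \bT\bx$. These give
\[ \|\bx\|_2^2 - \|\by\|_2^2 = \frac{1}{\hat{\blambda}_{\max}^2}\left( \|U\Lambda^{1/2}\tilde{\bw}\|_2^2 - \|\bT U\Lambda^{1/2}\tilde{\bw}\|_2^2\right). \]
Split $U\Lambda^{1/2}\tilde{\bw} = \tilde{\bx}_H + \tilde{\bx}_L$ with $\tilde{\bx}_H = U_H\Lambda_H^{1/2}\tilde{\bw}_H$ and $\tilde{\bx}_L = U_L\Lambda_L^{1/2}\tilde{\bw}_L$. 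Orthogonality of the eigenvectors gives $\|\tilde{\bx}\|_2^2 = \|\tilde{\bx}_H\|_2^2 + \|\tilde{\bx}_L\|_2^2$. Expanding the sketched side gives
\[ \|\bT\tilde{\bx}\|_2^2 = \|\bT\tilde{\bx}_H\|_2^2 + 2\langle \bT\tilde{\bx}_H, \bT\tilde{\bx}_L\rangle + \|\bT\tilde{\bx}_L\|_2^2. \]
The difference therefore equals the main term $\|\tilde{\bx}_H\|_2^2 - \|\bT\tilde{\bx}_H\|_2^2$ plus the low-part error terms $\|\tilde{\bx}_L\|_2^2$, $\|\bT\tilde{\bx}_L\|_2^2$, and the cross term.

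The main term is handled by scaling. Since $\tilde{\bx}_H = \|\tilde{\bw}_H\|_2\, U_H\Lambda_H^{1/2}\bw_H$, the main term equals $\|\tilde{\bw}_H\|_2^2$ times the bracketed quantity in the claim. Lemma~\ref{lem:neg-small-pos-large} gives $\|\tilde{\bw}_H\|_2^2 \le \lambda_{\max} + c_1Lr = O(\lambda_{\max})$. The second event of $\calbE$ gives $\hat{\blambda}_{\max} \ge \lambda_{\max} - \alpha L \ge \lambda_{\max}/2$. So the prefactor $\|\tilde{\bw}_H\|_2^2/\hat{\blambda}_{\max}^2$ is $O(1/\lambda_{\max})$, as required.

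For the low part, note $\tilde{\bx}_L = U_L\Lambda_L U_L^\intercal \bT^\intercal \by = A_L\bT^\intercal\by$, because $\tilde{\bw}_L = \sign(\Lambda_L)\Lambda_L^{1/2}(\bT U_L)^\intercal\by$. The third event of $\calbE$ then gives $\|\tilde{\bx}_L\|_2 \le \|A_L\bT^\intercal\|_2 \le c_0L$. Also $\bT\tilde{\bx}_L = \bT A_L\bT^\intercal\by$, so $\|\bT\tilde{\bx}_L\|_2 \le c_0L$. Hence both squared terms are $O(L^2)$, which is $O(L\lambda_{\max})$.

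The cross term is the step needing care. By Cauchy--Schwarz it is at most $2\|\bT\tilde{\bx}_H\|_2\|\bT\tilde{\bx}_L\|_2$. Write $\bT\tilde{\bx}_H = \bT\tilde{\bx} - \bT\tilde{\bx}_L = \hat{\blambda}_{\max}\by - \bT A_L\bT^\intercal\by$; this has norm at most $\hat{\blambda}_{\max} + c_0L = O(\lambda_{\max})$. The cross term is therefore $O(\lambda_{\max}L)$, and this bound does not need any subspace-embedding control. Dividing all low-part error terms by $\hat{\blambda}_{\max}^2 = \Theta(\lambda_{\max}^2)$ gives $O(L/\lambda_{\max})$. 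Combining this with the main term proves the claim.

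The only potential obstacle is the cross term, and the identity $\bT\tilde{\bx} = \hat{\blambda}_{\max}\by$ resolves it. The remaining inputs are the standing assumptions $L = o(\lambda_{\max})$, $r = \polylog(1/\eps)$, and $Lr \ll \lambda_{\max}$.
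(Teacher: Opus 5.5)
Your proof is correct and follows essentially the same route as the paper. You split $U\Lambda^{1/2}\tilde{\bw}$ into its $H$ and $L$ parts, use orthogonality on the unsketched side, and bound the low part through $A_L\bT^{\intercal}\by$ and $\bT A_L\bT^{\intercal}\by$ via $\calbE$. You control $\|\bT U_H\Lambda_H^{1/2}\tilde{\bw}_H\|_2 \le \hat{\blambda}_{\max}+O(L)$ using $\bT\tilde{\bx}=\hat{\blambda}_{\max}\by$, and you rescale using $\|\tilde{\bw}_H\|_2^2=O(\lambda_{\max})$. The only cosmetic difference is that you expand the square and bound the cross term by Cauchy--Schwarz, whereas the paper applies the reverse triangle inequality to the norms and factors the difference of squares; this gives the same estimate.
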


\begin{proof} 
First, $\bT \bx = \by$ and use Definition~\ref{def:lift} to re-write in terms of $\tilde{\bw}$ and $\hat{\blambda}_{\max}$ as 
\begin{align}
\left| \| \bx \|_2^2 - \| \by \|_2^2 \right| = \left| \| \bx \|_2^2 - \| \bT \bx \|_2^2 \right| &= \frac{1}{\hat{\blambda}_{\max}^2} \left| \| U \Lambda^{1/2} \tilde{\bw} \|_2^2 - \| \bT U \Lambda^{1/2} \tilde{\bw} \|_2^2 \right| \label{eq:ha2}.
\end{align}
We decompose the right-hand side of (\ref{eq:ha2}) in terms of $U_H$ and $U_L$. The term $U \Lambda^{1/2} \tilde{\bw}$ is simple to handle, since $U_H$ and $U_L$ are orthogonal. Namely, $\| U\Lambda^{1/2} \tilde{\bw}\|_2^2$ is exactly $\| U_H \Lambda_H^{1/2} \tilde{\bw}_H \|_2^2$ plus $\| U_L \Lambda_L^{1/2} \tilde{\bw}_L\|_2^2$. Furthermore, the definition of $\tilde{\bw}_L$ from Definition~\ref{def:lift} implies $\| U_L \Lambda_L^{1/2} \tilde{\bw}_L\|_2 = \| A_L \bT^{\intercal} \by\|_2$ which is at most $O(L)$ whenever $\calbE$ holds. On the other hand, the term $\bT U \Lambda^{1/2} \tilde{\bw}$ cannot proceed via the same argument since $\bT U_H$ and $\bT U_L$ are not orthogonal; thus, we resort to the triangle inequality:
\begin{align*}
\left| \| \bT U \Lambda^{1/2} \tilde{\bw} \|_2 - \| \bT U_H \Lambda^{1/2}_H \tilde{\bw}_H \|_2 \right| \leq \| \bT U_L \Lambda_L^{1/2} \tilde{\bw}_L \|_2 = \| \bT A_L \bT^{\intercal} \by \|_2 \leq O(L)
\end{align*}
whenever $\calbE$ holds. Furthermore, event $\calbE$ holding implies $\| \bT U_H \Lambda_H^{1/2} \tilde{\bw}_H \|_2 \leq \| \bT A \bT^{\intercal} \by \|_2 + \| \bT A_L \bT^{\intercal} \|_2$ which is at most $\hat{\blambda}_{\max} + O(L)$. In summary, the fact $\hat{\blambda}_{\max} = \Theta(\lambda_{\max})$, we obtain a bound on the right-hand side of (\ref{eq:ha2}), 
\begin{align*}
\frac{1}{\lambda_{\max}^2} \left| \| U_H \Lambda^{1/2}_H \tilde{\bw}_H \|_2^2 - \| \bT U_H \Lambda_H^{1/2} \tilde{\bw}_H \|_2^2 \right| + O\left( L^2 / \lambda_{\max}^2 \right) + O\left( L / \lambda_{\max}\right),
\end{align*}
and implies the desired bound, since $\bw_H = \tilde{\bw}_H / \| \tilde{\bw}_H\|_2$ and $\| \tilde{\bw}_H \|_2^2 = O(\lambda_{\max})$.
\end{proof}

\begin{lemma}\label{lem:H-part}
There exists a constant $c_3 > 0$ such that, whenever $\calbE$ holds,
\begin{align*}
\left| \| U_H \Lambda_H^{1/2} \bw_H \|_2^2 - \| \bT U_H \Lambda_H^{1/2} \bw_H \|_2^2 \right| \leq c_3 Lrk \left( 1 + \gamma +\gamma^2\right).
\end{align*}
\end{lemma}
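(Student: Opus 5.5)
We write the difference as a quadratic form in $\bw_H$:
\[ \| U_H \Lambda_H^{1/2} \bw_H \|_2^2 - \| \bT U_H \Lambda_H^{1/2} \bw_H \|_2^2 = \langle \bw_H, \Lambda_H^{1/2}\left(U_H^{\intercal} U_H - (\bT U_H)^{\intercal}(\bT U_H)\right)\Lambda_H^{1/2} \bw_H\rangle. \]
We then split $\bw_H = \bw_0^+ + \bw_{-0}^+ + \bw_N$, following Definition~\ref{def:lift}. Since $U_0^+$, $U_{-0}^+$ and $U_N$ are mutually orthogonal, every off-diagonal block of $U_H^{\intercal}U_H$ vanishes. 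The form therefore expands into three diagonal terms and three cross terms. Each diagonal term has the form $\Lambda^{1/2}(U^{\intercal}U - (\bT U)^{\intercal}\bT U)\Lambda^{1/2}$ on a single block. Each cross term has the form $2\langle \cdot, \Lambda^{1/2}(\bT U_a)^{\intercal}(\bT U_b)\Lambda^{1/2}\cdot\rangle$. We bound every term by the corresponding spectral norm from Lemma~\ref{lem:spectral-norm-bounds}, multiplied by the $\ell_2$ norms of the two pieces of $\bw_H$ that it touches.

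The norms of the pieces come from the coarse structure results. We have $\|\bw_0^+\|_2 \le 1$. Lemma~\ref{lem:small-bottom} gives $\|\bw_{-0}^+\|_2^2 \lsim Lr(1+\gamma)^2/\lambda_{\max}$. Lemma~\ref{lem:neg-small-pos-large} gives $\|\bw_N\|_2^2 = \|\tilde\bw_N\|_2^2/\|\tilde\bw_H\|_2^2 \lsim Lr/\lambda_{\max}$, using $\|\tilde\bw_H\|_2^2 \ge \lambda_{\max}-c_1L$. Substituting these bounds term by term gives the following.

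\begin{itemize}
\item The $(0,0)$ block contributes at most $c_2 L$.
\item The $(-0,-0)$ block contributes $\sqrt{r\lambda_{\max}L}\cdot Lr(1+\gamma)^2/\lambda_{\max}$. Since $L\le\lambda_{\max}$, this is at most $Lr^{3/2}(1+\gamma)^2$.
\item The $(N,N)$ block contributes $\sqrt{kL\lambda_{\min}}\cdot Lr/\lambda_{\max}$. Writing $\sqrt{L\lambda_{\min}} = \gamma\lambda_{\max}$, this is $\sqrt{k}\,\gamma\, Lr \cdot \sqrt{L/\lambda_{\max}}\cdot\sqrt{\lambda_{\max}}/\sqrt{\lambda_{\max}}$, which is at most $\sqrt{k}\gamma Lr$.
\item The $(0,-0)$ cross term contributes $\sqrt{\lambda_{\max}L}\cdot\sqrt{Lr/\lambda_{\max}}(1+\gamma) = L\sqrt r(1+\gamma)$.
\item The $(0,N)$ cross term contributes $\sqrt{\lambda_{\max}L}(1+\gamma)\sqrt{Lr/\lambda_{\max}} = L\sqrt r(1+\gamma)$.
\item The $(-0,N)$ cross term contributes $(\sqrt{\lambda_{\max}L}+\sqrt{\lambda_{\min}L})\cdot Lr(1+\gamma)/\lambda_{\max}$. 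The $\sqrt{\lambda_{\min}L}$ part equals $\gamma\lambda_{\max}$, so this is at most $Lr(1+\gamma)^2$ up to constants.
\end{itemize}

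Summing, every term is $O(Lrk(1+\gamma+\gamma^2))$ after absorbing powers of $r$ and $k$. These extra factors are polylogarithmic in $1/\eps$, so the bound stays at the claimed form $c_3Lrk(1+\gamma+\gamma^2)$ after enlarging polylog factors or tightening the $r$-exponent, e.g.\ by recomputing the Frobenius sum more carefully.

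The main obstacle is the bookkeeping in the terms with $\gamma$. The $(N,N)$ and $(-0,N)$ terms involve $\sqrt{\lambda_{\min}}$, and these must be converted into powers of $\gamma$ via $\sqrt{L\lambda_{\min}}=\gamma\lambda_{\max}$. This conversion has to gain a factor of $\lambda_{\max}$ from the norm bounds on $\bw_N$ and $\bw_{-0}^+$, which carry $1/\lambda_{\max}$. I would check carefully that each such term indeed has enough $1/\lambda_{\max}$ decay, so that no stray $\lambda_{\min}/\lambda_{\max}$ factors survive.
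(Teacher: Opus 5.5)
Your proposal follows the paper's proof: the same quadratic-form expansion, the same split $\bw_H=\bw_0^++\bw_{-0}^++\bw_N$, and the same six block terms, each bounded by Lemma~\ref{lem:spectral-norm-bounds} times the norms from Lemmas~\ref{lem:neg-small-pos-large} and~\ref{lem:small-bottom}. Your individual term bounds are correct. Your $(-0,N)$ bound $Lr(1+\gamma)^2$ is in fact more accurate than the paper's stated $O(Lr(1+\gamma))$, since the $\sqrt{\lambda_{\min}L}=\gamma\lambda_{\max}$ part contributes $Lr\gamma(1+\gamma)$. The one exception is a garbled intermediate line in the $(N,N)$ term: the product is exactly $\sqrt{k}\,\gamma Lr$, with no extra factor $\sqrt{L/\lambda_{\max}}$. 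Your final bound there is nonetheless right.

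The step that does not go through as written is the final absorption. In the $(-0,-0)$ term you discarded $\sqrt{L/\lambda_{\max}}\le 1$, leaving $Lr^{3/2}(1+\gamma)^2$. This is not $O(Lrk(1+\gamma+\gamma^2))$ unless $\sqrt{r}\lsim k$, which can fail (e.g.\ $k=1$ while $r=\Theta(\log(1/\eps))$). ``Enlarging polylog factors'' would prove a weaker statement than the one claimed.

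The fix is to keep the factor you threw away, as the paper implicitly does (its bound is $(Lr)^{3/2}(1+\gamma)^2/\sqrt{\lambda_{\max}}$). Since $r=\lceil\log(\lambda_{\max}/L)\rceil$, we have $L/\lambda_{\max}<2^{1-r}$. Hence $r^{3/2}\sqrt{L/\lambda_{\max}}\le r\cdot\sqrt{r}\,2^{-(r-1)/2}=O(r)$, and the term is $O(Lr(1+\gamma)^2)$, which fits.

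Your alternative suggestion also works. The Frobenius sum in the second bullet of Lemma~\ref{lem:spectral-norm-bounds} satisfies $L^2\sum_{h_1,h_2\le r}2^{|h_1-h_2|}=O(L^2 2^r)=O(L\lambda_{\max})$. So that spectral norm is $O(\sqrt{\lambda_{\max}L})$ without the $\sqrt{r}$. You would need to carry out this computation rather than just mention it.
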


\begin{proof}
We expand the above squared norms as inner products, and divide $\bw_H$ into three vectors $\bw_0^+$, $\bw_{-0}^+$ and $\bw_N$ and apply the bounds from Lemma~\ref{lem:spectral-norm-bounds}, Lemma~\ref{lem:neg-small-pos-large}, and Lemma~\ref{lem:small-bottom}. The resulting expression consists of the following the following terms:
\begin{itemize}
\item The term $\| \bw_{0}^+ \|_2^2 \leq 1$ times $\| (\Lambda_0^+)^{1/2} \left( (U_0^+)^{\intercal} (U_0^+) - (\bT U_0^+)^{\intercal} (\bT U_0^+) \right) (\Lambda_0^+)^{1/2} \|_2 \leq O(L)$.
\item The term $\| \bw_{-0}^+ \|_2^2 \lsim Lr / \lambda_{\max} (1+\gamma)^2$ (from Lemma~\ref{lem:small-bottom}) is multiplied by the spectral norm bound $\| (\Lambda_{-0}^+)^{1/2} \left( (U_{-0}^+)^{\intercal} (U_{-0}^+) - (\bT U_{-0}^+)^{\intercal} (\bT U_{-0}^+)\right) (\Lambda_{-0}^+)^{1/2} \|_2 = O(\sqrt{r\lambda_{\max} L})$. This results in the upper bound of $O((Lr)^{3/2} / \sqrt{\lambda_{\max}} (1+\gamma)^2)$.
\item The term $\| \bw_N \|_2^2 \lsim Lr / \lambda_{\max}$ (from Lemma~\ref{lem:neg-small-pos-large}) times $\| \Lambda_{N}^{1/2}\left(U_N^{\intercal} U_N - (\bT U_N)^{\intercal} (\bT U_N) \right) \Lambda_N^{1/2} \|_2 = O(\sqrt{k\lambda_{\min} L})$. This evaluates to $O(Lr\sqrt{k} \cdot \gamma)$ by the setting of $\gamma$ in (\ref{eq:gamma-bound}).
\item The term $2\| \bw_{0}^+ \|_2 \| \bw_{-0}^+\|_2$ which is at most $O(Lr / \lambda_{\max})^{1/2}(1+\gamma)$ (once, more from Lemma~\ref{lem:small-bottom}) times $\| (\Lambda_{0}^+)^{1/2} (\bT U_0^+)^{\intercal} (\bT U_{-0}^+) (\Lambda_{-0}^+)^{1/2}\|_2 = O(\sqrt{\lambda_{\max} L})$ (from Lemma~\ref{lem:spectral-norm-bounds}), which contributes $O(L\sqrt{r} \cdot (1+\gamma))$.
\item The term $2\| \bw_{0}^+ \|_2 \| \bw_{N}\|_2$ which is at most $O(Lr / \lambda_{\max})^{1/2}$ (from Lemma~\ref{lem:neg-small-pos-large}) times the spectral norm bound $\| (\Lambda_{0}^+)^{1/2} (\bT U_0^+)^{\intercal} (\bT U_N) \Lambda_N^{1/2} \|_2$, which is at most $O(\sqrt{\lambda_{\max} L} (1 + \gamma))$ from Lemma~\ref{lem:spectral-norm-bounds}. This contributes $O(L \sqrt{r}\cdot (1+\gamma))$.
\item The final term $2 \| \bw_{-0}^+\|_2 \| \bw_N\|_2$ is at most $O(Lr / \lambda_{\max}) (1+\gamma)$ (using both Lemma~\ref{lem:neg-small-pos-large} and Lemma~\ref{lem:small-bottom} once more) which is multiplied by $\| (\Lambda_{-0}^+)^{1/2} (\bT U_{-0}^+)^{\intercal} (\bT U_{N}) \Lambda_N^{1/2} \|_2$, which is at most $O(\sqrt{\lambda_{\max} L} + \sqrt{\lambda_{\min} L})$, which is at most $O(Lr(1+\gamma))$.
\end{itemize}
Summing these up, we can upper bound all terms by $Lrk (1 + \gamma + \gamma^2)$.
\end{proof}
Putting Claim~\ref{cl:intermediate} and Lemma~\ref{lem:H-part} together, we obtain a bound of
\begin{align*}
\left| \|\bx\|_2^2 - \|\by\|_2^2 \right| &\lsim \frac{1}{\lambda_{\max}} \left| \left\| U_H \Lambda_H^{1/2} \bw_{H} \right\|_2^2 - \left\| \bT U_H \Lambda_H^{1/2} \bw_H \right\|_2^2 \right| + \frac{L}{\lambda_{\max}} \\
			&\lsim \frac{Lrk}{\lambda_{\max}} \left( 1 + \gamma + \gamma^2\right) + \frac{L}{\lambda_{\max}}.
\end{align*}

\subsection{Proof of Lemma~\ref{lem:3}}\label{sec:proof-lem:3}

Similarly to the case of Lemma~\ref{lem:2}, we aim to upper bound:
\begin{align}
\left| \langle \by, \bA_1 \by \rangle - \langle \bx, A \bx \rangle \right| &\leq \left| \left\| \Lambda^{1/2}_P (\bT U_P)^{\intercal} \by \right\|_2^2 - \left\| \Lambda^{1/2}_P U_P^{\intercal} \bx \right\|_2^2 \right| + \left\| \Lambda_N^{1/2} U_N^{\intercal} \bx \right\|_2^2 + \| \Lambda_{N}^{1/2} (\bT U_N)^{\intercal} \by \|_2^2 \label{eq:rayleigh-diff} \\
    &\qquad \qquad + \| \bT A_L \bT^{\intercal}\|_2 + \| A_L \|_2 \cdot \|\bx\|_2^2 \nonumber 
\end{align}
where we drop the final term $\| \Lambda_N^{1/2} (\bT U_N)^{\intercal} \by \|_2^2$ as it is exactly $\| \tilde{\bw}_N \|_2^2 \leq c_1 Lr$, and we drop $\|\bT A_L \bT^{\intercal}\|_2$ and $\|A_L \|_2 \cdot \|\bx\|_2^2$, since these are at most $O(L)$. We prove upper bounds on the remaining two terms above in the next lemmas.
\begin{lemma}\label{lem:p-bound}
Whenever $\calbE$ holds
\begin{align*}
\left\| \Lambda_P^{1/2} (\bT U_P)^{\intercal} \by\right\|_2^2 - \left\| \Lambda_P^{1/2} U_P^{\intercal} \bx \right\|_2^2 = O\left(Lr (1+\gamma + \gamma^2) \right).
\end{align*}
\end{lemma}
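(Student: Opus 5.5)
We rewrite both terms through the vector $\tilde{\bw}$ of Definition~\ref{def:lift}. The first term is $\|\tilde{\bw}_P\|_2^2$ by definition. For the second, $\bx = U\Lambda^{1/2}\tilde{\bw}/\hat{\blambda}_{\max}$ gives $U_P^{\intercal}\bx = \Lambda_P^{1/2}\tilde{\bw}_P/\hat{\blambda}_{\max}$, since $U$ is orthonormal. Hence
\[ \left\| \Lambda_P^{1/2} U_P^{\intercal} \bx \right\|_2^2 = \frac{\|\Lambda_P \tilde{\bw}_P\|_2^2}{\hat{\blambda}_{\max}^2}, \]
and we must compare $\|\tilde{\bw}_P\|_2^2$ with $\|\Lambda_P\tilde{\bw}_P\|_2^2/\hat{\blambda}_{\max}^2$.

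We split $\tilde{\bw}_P = \tilde{\bw}_0^+ + \sum_{h\ge1}\tilde{\bw}_h^+$ and use the coarse structure.

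For the $U_{-0}^+$ part, eigenvalues are at most $\lambda_{\max}$ and $\hat{\blambda}_{\max}\ge\lambda_{\max}-\alpha L$. So each coordinate's factor $\lambda_i^2/\hat{\blambda}_{\max}^2$ is at most $1+O(L/\lambda_{\max})$, and the contribution of $\tilde{\bw}_{-0}^+$ to either side is at most $O(\|\tilde{\bw}_{-0}^+\|_2^2)$. By Lemma~\ref{lem:small-bottom} and Lemma~\ref{lem:neg-small-pos-large},
\[ \|\tilde{\bw}_{-0}^+\|_2^2 = \|\bw_{-0}^+\|_2^2\,\|\tilde{\bw}_H\|_2^2 \lsim Lr(1+\gamma)^2. \]

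For the $U_0^+$ part, eigenvalues lie in $[\lambda_{\max}/2,\lambda_{\max}]$, so a crude ratio bound fails. This is the main obstacle, and we must use that $\by$ is an eigenvector. The first approach would be to write the difference as
\[ \sum_i \tilde{w}_i^2\left(1-\frac{\lambda_i^2}{\hat{\blambda}_{\max}^2}\right). \]
The terms with $\lambda_i\le\hat{\blambda}_{\max}$ are nonnegative, and those with $\lambda_i>\hat{\blambda}_{\max}$ are at least $-O(L/\lambda_{\max})\,\tilde{w}_i^2$ because $\lambda_i\le\lambda_{\max}\le\hat{\blambda}_{\max}+\alpha L$. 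This gives the one-sided bound (difference $\ge -O(L)$), which we would then pair with an upper bound.

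For the upper bound we use that $\sum_i \tilde w_i^2(1-\lambda_i/\hat{\blambda}_{\max})$ is small. Indeed,
\[ \langle \tilde{\bw}_P, \Lambda_P^{1/2}(\bT U_P)^{\intercal}\bT U\Lambda^{1/2}\tilde{\bw}\rangle = \hat{\blambda}_{\max}\|\tilde{\bw}_P\|_2^2, \]
while the identity part of $(\bT U_P)^{\intercal}\bT U$ contributes $\sum_i\lambda_i\tilde w_i^2$. The deviation is controlled exactly as in the upper-bound computation of Lemma~\ref{lem:small-bottom}: the spectral-norm bounds of Lemma~\ref{lem:spectral-norm-bounds}, combined with $\|\bw_N\|_2^2\lsim Lr/\lambda_{\max}$, the bound on $\|\bw_{-0}^+\|_2$, and the $A_L$ cross term $O(L\sqrt{\lambda_{\max}})$. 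This yields
\[ \left|\sum_i \tilde w_i^2(\hat{\blambda}_{\max}-\lambda_i)\right|\lsim \lambda_{\max}\,Lr(1+\gamma+\gamma^2). \]
Since $1-\lambda_i^2/\hat{\blambda}_{\max}^2=(1-\lambda_i/\hat{\blambda}_{\max})(1+\lambda_i/\hat{\blambda}_{\max})$ and the second factor is $\Theta(1)$ on $U_0^+$, this bounds the $U_0^+$ part.

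The delicate point is sign control. The factor $(1+\lambda_i/\hat{\blambda}_{\max})$ varies with $i$, so we use that the negative-sign terms (those with $\lambda_i>\hat{\blambda}_{\max}$) total only $O(L/\lambda_{\max})\|\tilde{\bw}_0^+\|_2^2=O(L)$. Once those are bounded separately, the remaining nonnegative terms are bounded by a constant times the linear sum. Collecting the $O(Lr(1+\gamma)^2)$, $O(L)$ and cross terms gives $O(Lr(1+\gamma+\gamma^2))$.
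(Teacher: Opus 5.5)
Your proof is correct, but it reaches the bound by a different mechanism than the paper's.

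\textbf{The paper's route.} The paper never isolates the diagonal form $\sum_{i\in P}\tilde{w}_i^2(1-\lambda_i^2/\hat{\blambda}_{\max}^2)$. It substitutes $\by=\bT\bx$, strips off the $A_L$ contribution with a triangle inequality, normalizes to $\bw_H$, and compares $\|\Lambda_P^{1/2}(\bT U_P)^{\intercal}\bT U_H\Lambda_H^{1/2}\bw_H\|_2^2$ with $\|\Lambda_P\bw_P\|_2^2$ one row-block at a time. The $U_{-0}^+$ rows are bounded by splitting $\bw_H=\bw_0^++\bw_{-0}^++\bw_N$, even though that vector is just $\hat{\blambda}_{\max}\bw_{-0}^+$ up to the $A_L$ term (which you use directly). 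For the $U_0^+$ rows, the paper adds and subtracts $\Lambda_0^+\bw_0^+$, shows the two vector norms differ by $O(L\sqrt{r}(1+\gamma))$, and squares. In effect it controls the $\ell_2$ norm of the residual $(\hat{\blambda}_{\max}I-\Lambda_0^+)\bw_0^+$. This is a vector-level estimate, so no sign bookkeeping is needed.

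\textbf{Your route.} You pair the eigen-equation with $\tilde{\bw}_P$. This yields only the signed scalar $\sum_{i\in P}\tilde{w}_i^2(\hat{\blambda}_{\max}-\lambda_i)$, via the same deviation computation as the upper bound in Lemma~\ref{lem:small-bottom}. You compensate for this weaker control using $\hat{\blambda}_{\max}\geq\lambda_{\max}-\alpha L$, which makes every negative term at most $O(L/\lambda_{\max})\,\tilde{w}_i^2$ in magnitude.

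\textbf{What each buys.} Your route is more transparent and gives an explicit two-sided statement: the lower bound $-O(L)$ is immediate. The paper's chain, as written, only argues the upper direction. Both routes end at the same dominant $Lr(1+\gamma)^2$ term, which comes from $\tilde{\bw}_{-0}^+$.

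\textbf{A small simplification.} Since $1+\lambda_i/\hat{\blambda}_{\max}\in[1,2+o(1)]$ for every $i\in P$, your sign argument can be run over all of $P$ at once. So the separate treatment of the $U_{-0}^+$ coordinates in the final conversion step is unnecessary. Lemma~\ref{lem:small-bottom} is still needed inside the deviation bound.
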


\begin{proof}
Substitute $\by = \bT \bx$ where $\bx = U \Lambda^{1/2} \tilde{\bw} / \hat{\blambda}_{\max}$. Then, by the triangle inequality, 
\begin{align*}
\left\| \Lambda_P^{1/2} (\bT U_P)^{\intercal} \bT \bx \right\|_2^2 &\leq \frac{1}{\hat{\blambda}_{\max}^2} \left( \| \Lambda_{P}^{1/2} (\bT U_P)^{\intercal} \bT U_H \Lambda_H^{1/2} \tilde{\bw}_H \|_2 + \| \Lambda_{P}^{1/2} (\bT U_P)^{\intercal} \bT U_L \Lambda_L^{1/2} \tilde{\bw}_L \|_2\right)^2,
\end{align*}
and we can simplify the right-most term above using $\| \bT U_L \Lambda_L^{1/2} \tilde{\bw}_L \|_2 = \| \bT A_L \bT^{\intercal} \by \|_2 = O(L)$ to obtain the bound
\begin{align*}
	&\frac{1}{\hat{\blambda}_{\max}^2} \left(\| \Lambda_P^{1/2} (\bT U_P)^{\intercal} \bT U_H \Lambda_H^{1/2} \tilde{\bw}_H\|_2 + O(\sqrt{\lambda_{\max}} L) \right)^2 \\
    &\qquad \qquad \leq \frac{1}{\hat{\blambda}_{\max}^2} \left\| \Lambda_P^{1/2} (\bT U_P)^{\intercal} \bT U_H \Lambda_H^{1/2}\tilde{\bw}_H \right\|_2^2 + \frac{1}{\hat{\blambda}_{\max}^2} \cdot O\left(\lambda_{\max}^{3/2} + \sqrt{\lambda_{\max}} L \right) \cdot \sqrt{\lambda_{\max}} L \\
	&\qquad\qquad\leq \frac{1}{\hat{\blambda}_{\max}^2} \left\| \Lambda_P^{1/2} (\bT U_P)^{\intercal} \bT U_H \Lambda_H^{1/2} \tilde{\bw}_H \right\|_2^2 + O(L).
\end{align*}
where we use the fact $\| \Lambda_P^{1/2} (\bT U_P)^{\intercal} \bT U_H \Lambda_H^{1/2} \tilde{\bw}_H \|_2 = O(\lambda_{\max}^{3/2})$. Hence, up to an additive $O(L)$, and using the fact $\bw_H = \tilde{\bw}_H / \| \tilde{\bw}_H \|_2$ with $\| \tilde{\bw}_H \|_2 = O(\sqrt{\lambda_{\max}})$,  it suffices to upper bound
\begin{align*}
&\frac{1}{\lambda_{\max}} \left( \left\| \Lambda_P^{1/2} (\bT U_P)^{\intercal} \bT U_H \Lambda_H^{1/2} \bw_H \right\|_2^2 - \left\| \Lambda_P \bw_P \right\|_2^2\right),
\end{align*}
which is, in turn, at most the following sum of two terms (ignoring the scaling by $1/\lambda_{\max}$),
\begin{align}
\left( \left\| (\Lambda_0^+)^{1/2} (\bT U_0^+)^{\intercal} \bT U_H \Lambda_H^{1/2} \bw_H \right\|_2^2 - \left\| \Lambda_0^+ \bw_0^+ \right\|_2^2 \right) &+ \left\| (\Lambda_{-0}^+)^{1/2} (\bT U_{-0}^+)^{\intercal} \bT U_{H} \Lambda_H^{1/2} \bw_H \right\|_2^2. \label{eq:two-terms}
\end{align}
The second term above is simpler to bound, since we may use the triangle inequality and divide $\bw_H = \bw_0^+ + \bw_{-0}^+ + \bw_N$, resulting in the bound (once more, up to constant factors) of the terms:
\begin{itemize}
\item The contribution of $\bw_0^+$, given by $\left\| (\Lambda_{-0}^+)^{1/2} (\bT U_{-0}^+)^{\intercal} (\bT U_0^+) (\Lambda_0^+)^{1/2} \right\|_2^2 \| \bw_{0}^+ \|_2^2$, which is at most $O(\lambda_{\max} L)$ from Lemma~\ref{lem:spectral-norm-bounds}.
\item The contribution of $\bw_{-0}^+$, given by $\left\| (\Lambda_{-0}^+)^{1/2} (\bT U_{-0}^+)^{\intercal} (\bT U_{-0}^+) (\Lambda_{-0}^+)^{1/2} \right\|_2^2 \| \bw_{-0}^+\|_2^2$, which is at most $O(\lambda_{\max}^2)$ (using the fact that $\bT$ is a $O(1)$-subspace embedding of $U_{-0}^+$) times $O(Lr/ \lambda_{\max})(1+\gamma)^2$ from Lemma~\ref{lem:small-bottom}, giving us $O(\lambda_{\max} Lr \cdot (1+\gamma)^2)$. 
\item Finally, the contribution of $\bw_N$, given by $\| \Lambda_{-0}^{1/2} (\bT U_{-0}^+)^{\intercal} (\bT U_N) \Lambda_N^{1/2} \|_2^2 \| \bw_N \|_2^2$, which is at most $O(\lambda_{\max}L + \lambda_{\min}L)$ from Lemma~\ref{lem:spectral-norm-bounds} once more, times $O(Lr/\lambda_{\max})$ from Lemma~\ref{lem:neg-small-pos-large}, and giving a bound of $O(L^2r + \lambda_{\min} L^2r / \lambda_{\max})$.
\end{itemize}
This bounds the second term of (\ref{eq:two-terms}) by $O\left(\lambda_{\max} Lr (1+\gamma)^2\right)$. Turning back to the first term of (\ref{eq:two-terms}), we also apply the triangle inequality and decompose $\bw_H = \bw_{0}^+ + \bw_{-0}^+ + \bw_N$, but we expand the square carefully. Namely, we upper bound
\begin{align*}
 \left\| (\Lambda_0^+)^{1/2} (\bT U_0^+)^{\intercal} \bT U_H \Lambda_H^{1/2} \bw_H \right\|_2 &\leq  \left\| \Lambda_0^+\bw_0^+ \right\|_2 \\
 	&\qquad + \left\| (\Lambda_0^+)^{1/2} \left( (U_0^+)^{\intercal} U_{0}^+ - (\bT U_{0}^+)^{\intercal} (\bT U_0^+) \right)(\Lambda_0^+)^{1/2} \right\| \| \bw_0^+\|_2 \\
	&\qquad + \left\| (\Lambda_0^+)^{1/2} (\bT U_{0}^+)^{\intercal} (\bT U_{-0}^{+}) (\Lambda_{-0}^+)^{1/2} \right\|_2 \| \bw_{-0}^+ \|_2 \\
	&\qquad + \left\| (\Lambda_0^+)^{1/2} (\bT U_{0}^+)^{\intercal} (\bT U_{N}) \Lambda_{N}^{1/2} \right\|_2 \|\bw_{N} \|_2 \\
	&\leq \| \Lambda_{0}^+ \bw_{0}^+\|_2 + O(L) + O(\sqrt{\lambda_{\max} L}) \cdot \sqrt{Lr/\lambda_{\max}} (1+\gamma) \\
	&\qquad\qquad\qquad + O(\sqrt{\lambda_{\max} L}(1+\gamma)) \cdot \sqrt{L r / \lambda_{\max}}
\end{align*}
where we use Lemma~\ref{lem:spectral-norm-bounds}, along with the bounds on $\| \bw_N \|_2, \| \bw_{-0}^+\|_2$ from Lemma~\ref{lem:neg-small-pos-large} and Lemma~\ref{lem:small-bottom}. This gives a bound on the difference (note that (\ref{eq:two-terms}) considers square differences) of
\[ \left| \left\| (\Lambda_{0}^+)^{1/2} (\bT U_0^+)^{\intercal} \bT U_H \Lambda_H^{1/2} \bw_H \right\|_2 - \left\| \Lambda_0^+ \bw_0^+ \right\|_2 \right| \lsim L \sqrt{r}(1+\gamma). \]
Since $\| \Lambda_{0}^+ \bw_{0}^+\|_2 \leq \lambda_{\max}$, squaring the above and subtracting $\| \Lambda_{0}^+ \bw_{0}^+\|_2$ gives us a bound of $O(\lambda_{\max} L\sqrt{r})$. Hence, (\ref{eq:two-terms}) is at most $O(\lambda_{\max} L \sqrt{r} (1+\gamma) + L^2 r (1 + \gamma)^2)$, and incorporating the scaling of $1/\lambda_{\max}$ gives the upper bound (after simplifying, using the fact $\lambda_{\max}$ is always at least $L$).
\end{proof}

\begin{lemma}\label{lem:n-bound}
Whenever $\calbE$ holds,
\begin{align*}
\left\| \Lambda_N^{1/2} U_N^{\intercal} \bx \right\|_2^2 = O(Lrk \cdot (1 + \gamma + \gamma^2)^2).
\end{align*}
\end{lemma}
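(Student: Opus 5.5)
We will bound $\| \Lambda_N^{1/2} U_N^{\intercal} \bx \|_2^2$ directly from the formula for the lift. Since $\bx = U \Lambda^{1/2} \tilde{\bw}/\hat{\blambda}_{\max}$ and $U_N^{\intercal} U = [\,0 \; I \; 0\,]$ picks out the negative coordinates, we have $U_N^{\intercal} \bx = \Lambda_N^{1/2} \tilde{\bw}_N / \hat{\blambda}_{\max}$ up to sign. This identity is exact, because $U$ is orthonormal and the cross terms with $U_P, U_L$ vanish. Hence
\[ \left\| \Lambda_N^{1/2} U_N^{\intercal} \bx \right\|_2^2 = \frac{1}{\hat{\blambda}_{\max}^2} \left\| \Lambda_N \tilde{\bw}_N \right\|_2^2 = \frac{\|\tilde{\bw}_H\|_2^2}{\hat{\blambda}_{\max}^2} \left\| \Lambda_N \bw_N \right\|_2^2 \lsim \frac{1}{\lambda_{\max}} \left\| \Lambda_N \bw_N \right\|_2^2, \]
using $\hat{\blambda}_{\max} = \Theta(\lambda_{\max})$ and $\|\tilde{\bw}_H\|_2^2 = O(\lambda_{\max})$ from Lemma~\ref{lem:neg-small-pos-large}. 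The trivial bound $\|\Lambda_N \bw_N\|_2^2 \leq \lambda_{\min}^2 \|\bw_N\|_2^2 \lsim \lambda_{\min}^2 Lr/\lambda_{\max}$ is too weak, so the real work is to show that $\bw_N$ carries little mass on the most negative eigenvectors.

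To get the better bound, we mimic the proof of Lemma~\ref{lem:small-bottom}, now projecting onto the negative block. Write $\tilde{\bw}_N = -\Lambda_N^{1/2} (\bT U_N)^{\intercal} \by$ and use the eigen-equation $\hat{\blambda}_{\max} \by = \bT U \Lambda^{1/2}\tilde{\bw}$. Multiplying by $-\Lambda_N^{1/2}(\bT U_N)^{\intercal}$ gives
\[ \hat{\blambda}_{\max} \tilde{\bw}_N = -\Lambda_N^{1/2}(\bT U_N)^{\intercal} \bT U_H \Lambda_H^{1/2} \tilde{\bw}_H - \Lambda_N^{1/2}(\bT U_N)^{\intercal} \bT U_L \Lambda_L^{1/2}\tilde{\bw}_L . \]
We multiply this further by $\Lambda_N^{1/2}$, so that the left side becomes $\hat{\blambda}_{\max}\Lambda_N^{1/2}\tilde{\bw}_N$. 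The right side then splits into blocks:
\begin{itemize}
\item The diagonal $N$-$N$ block. We write $(\bT U_N)^{\intercal}(\bT U_N) = I + (\text{distortion})$. The identity part contributes $-\Lambda_N \tilde{\bw}_N$, which has the favorable sign and can be moved to the left. This yields $(\hat{\blambda}_{\max} I + \Lambda_N)\Lambda_N^{1/2}\tilde{\bw}_N$ on the left, and this operator has all eigenvalues at least $\lambda_{\max}$ and at least the magnitude of each negative eigenvalue. The distortion part is controlled by the third bullet of Lemma~\ref{lem:spectral-norm-bounds}.
\item The $N$-$P$ cross blocks, split into $U_0^+$ and $U_{-0}^+$. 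These are controlled by the fifth and sixth bullets, together with Lemma~\ref{lem:small-bottom} for $\bw_{-0}^+$.
\item The low part. This is bounded by $\|\bT A_L\bT^{\intercal}\|_2 = O(L)$ times $\|\Lambda_N^{1/2}(\bT U_N)^{\intercal}\|_2 = O(\sqrt{\lambda_{\min}})$.
\end{itemize}

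Inverting $(\hat{\blambda}_{\max} + \Lambda_N)$ then bounds $\|\Lambda_N\bw_N\|_2$ by a sum of terms, each of the form (spectral norm from Lemma~\ref{lem:spectral-norm-bounds}) times (mass bound from Lemmas~\ref{lem:neg-small-pos-large}, \ref{lem:small-bottom}). After multiplying out, each term should be $O(\sqrt{\lambda_{\max} L rk}\,(1+\gamma+\gamma^2))$. Squaring and dividing by $\lambda_{\max}$ gives the claimed $O(Lrk(1+\gamma+\gamma^2)^2)$.

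The main obstacle is to do this inversion carefully enough, working dyadic block by dyadic block over $\ell$, that the factor $\lambda_{\min}$ from the extra $\Lambda_N^{1/2}$ is absorbed by the $(\hat{\blambda}_{\max}+\lambda_{\min}/2^\ell)^{-1}$ gain rather than surviving as $\lambda_{\min}/\lambda_{\max}$. If the direct operator inversion is lossy, I would instead take the inner product of the identity with $\Lambda_N^{1/2}\tilde{\bw}_N$, which gives a quadratic inequality in $\|\Lambda_N\bw_N\|_2$. I would then solve it exactly as the final step of Lemma~\ref{lem:small-bottom} does.
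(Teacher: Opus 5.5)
Your proposal is correct and is essentially the paper's argument: the paper starts from the same identity (obtained from $U_N^{\intercal}U_N = I$ and $\bT U\Lambda^{1/2}\tilde{\bw} = \hat{\blambda}_{\max}\by$) and bounds the same $N$--$P$ cross, low, and $N$--$N$ distortion terms using the third, fifth and sixth bullets of Lemma~\ref{lem:spectral-norm-bounds} together with Lemmas~\ref{lem:neg-small-pos-large} and~\ref{lem:small-bottom}. The only difference is that the paper leaves $\hat{\blambda}_{\max}\tilde{\bw}_N$ on the right and bounds it directly by $\hat{\blambda}_{\max}\|\tilde{\bw}_N\|_2 = O(\lambda_{\max}\sqrt{Lr})$, and the ``main obstacle'' you flag for your inversion is not real: if you drop the extra multiplication by $\Lambda_N^{1/2}$, your identity reads $(\hat{\blambda}_{\max}I+\Lambda_N)\tilde{\bw}_N = -(\text{the three remaining terms})$, and because $\Lambda_N$ and $(\hat{\blambda}_{\max}I+\Lambda_N)^{-1}$ are commuting diagonal matrices with $\|\Lambda_N(\hat{\blambda}_{\max}I+\Lambda_N)^{-1}\|_2 \le 1$, $\|\Lambda_N\tilde{\bw}_N\|_2$ is immediately at most the norm of those three terms, so no dyadic-block inversion or quadratic inequality is needed.
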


\begin{proof}[Proof of Lemma~\ref{lem:n-bound}]
We first substitute $\bx = U \Lambda^{1/2} \tilde{\bw} / \hat{\blambda}_{\max}$ which allows us to upper bound $\| \Lambda_N^{1/2} U_N^{\intercal} \bx \|_2$ by $\left\| \Lambda_N \tilde{\bw}_N \right\|_2 / \hat{\blambda}_{\max}$. By expanding $\bT U \Lambda^{1/2} \tilde{\bw}$ into a sum of four terms, $\bT U_0^+ (\Lambda_0^{+})^{1/2} \tilde{\bw}_0^+$, $\bT U_{-0}^+ (\Lambda_{-0}^+)^{1/2} \tilde{\bw}_{-0}^+$, $\bT U_{N} \Lambda_N^{1/2} \tilde{\bw}_N$ and $\bT U_{L} \Lambda_L^{1/2} \tilde{\bw}_L$, as well as the identity $U_N^{\intercal} U_N = I$, we can isolate $\Lambda_N \tilde{\bw}_N$, and adding and subtracting cross terms. In particular,
\begin{align*}
\left\| \Lambda_N \tilde{\bw}_N \right\|_2 &\leq \left\| \Lambda_N^{1/2} (\bT U_N)^{\intercal} \bT U \Lambda^{1/2} \tilde{\bw} \right\|_2 \\
				&\qquad + \left\| \Lambda_N^{1/2} (\bT U_N)^{\intercal} \bT U_{0}^+ (\Lambda_0^{+})^{1/2} \tilde{\bw}_{0}^+ \right\|_2 \\
				&\qquad + \left\| \Lambda_N^{1/2} (\bT U_N)^{\intercal} \bT U_{-0}^+ (\Lambda_{-0}^+)^{1/2} \tilde{\bw}_{-0}^+\right\|_2 \\
				&\qquad + \left\| \Lambda_{N}^{1/2} (\bT U_N)^{\intercal} \bT U_L \Lambda_L^{1/2} \tilde{\bw}_L \right\|_2 \\
				&\qquad + \left\| \Lambda_N^{1/2} \left( U_N^{\intercal} U_N - (\bT U_N)^{\intercal} (\bT U_N) \right) \Lambda_N^{1/2} \tilde{\bw}_N\right\|_2.
\end{align*}
Note, the first term $\| \Lambda_N^{1/2} (\bT U_N)^{\intercal} \bT U \Lambda^{1/2} \tilde{\bw} \|_2$, multiplied by $1/\hat{\blambda}_{\max}$ is exactly $\| \Lambda_N^{1/2} (\bT U_N)^{\intercal} \by \|_2$, which we know is at most $O(\sqrt{Lr})$ from Lemma~\ref{lem:neg-small-pos-large}. The second term, using Lemma~\ref{lem:spectral-norm-bounds}, becomes at most
\begin{align*}
O(\sqrt{\lambda_{\max}}) \cdot \left\| \Lambda_N^{1/2} (\bT U_N)^{\intercal} \bT U_{0}^+ (\Lambda_0^{+})^{1/2} \right\|_2 = O(\sqrt{L} \cdot \lambda_{\max} (1+\gamma)),
\end{align*}
so the scaling by $1/\hat{\blambda}_{\max}$ makes its contribution $O(\sqrt{L} (1+\gamma))$. The third term is similarly bounded by the spectral norm $\| \Lambda_N^{1/2} (\bT U_N)^{\intercal} \bT U_{-0}^+ (\Lambda_{-0}^+)^{1/2}\|_2$, which is bounded by $O(\sqrt{\lambda_{\max} L} + \sqrt{\lambda_{\min} L})$ from Lemma~\ref{lem:spectral-norm-bounds} times $O(\sqrt{\lambda_{\max}})$ (for the scaling of $\| \tilde{\bw}_H\|_2$), times $\| \bw_{-0}^+\|_2$, which is at most $O(\sqrt{Lr/\lambda_{\max}} (1+\gamma))$ from Lemma~\ref{lem:small-bottom}. Once more, after re-scaling by $1/\hat{\blambda}_{\max}$, we obtain a bound on the third term (up to constant factors) of
\[ \frac{1}{\lambda_{\max}} \cdot O(\sqrt{\lambda_{\max} L} + \sqrt{\lambda_{\min} L}) \cdot O(\sqrt{\lambda_{\max}}) \cdot O\left( \sqrt{\frac{Lr}{\lambda_{\max}}} (1+\gamma)\right) \lsim L \sqrt{r} (1 + \gamma) + \sqrt{L r} \cdot \gamma^2. \]
The fourth term becomes $O(\sqrt{\lambda_{\min}} \cdot L)$, and after scaling by $1/\hat{\blambda}_{\max}$ is $O(\sqrt{L} \cdot \gamma)$, and finally, the last term is $O(\sqrt{\lambda_{\max}})$ times $O(\sqrt{k\lambda_{\min} L})$ times $\| \bw_N \|_2 = O(\sqrt{Lr/\lambda_{\max}})$, which is $O(L\sqrt{rk\lambda_{\min}})$, which after the re-scaling by $1/\hat{\blambda}_{\max}$ is similarly $O(\sqrt{Lrk} \cdot \gamma)$.
\end{proof}

Putting Lemma~\ref{lem:p-bound} and Lemma~\ref{lem:n-bound} together, we conclude that (\ref{eq:rayleigh-diff}) is at most
\[ O\left(Lrk \cdot (1 + \gamma + \gamma^2 + \gamma^3 + \gamma^4) \right), \]
which means we must set $L$ so as to ensure the above bound is $O(\eps n)$.

\section{Lower Bound}\label{sec:lb}

In this section, we prove a lower bound on the query complexity required to output an approximately top eigenvector of an $n \times n$ symmetric matrix $A$ with $\|A\|_{\infty} \leq 1$. As we will see, the inputs we use will satisfy the additional assumption that $\|A\|_2 = O(\lambda_{\max}(A))$.

\mainthmlb*

The proof of Theorem~\ref{thm:main-lb-intro} follows by showing a reduction to the problem of estimating the bias of a biased coin. We first begin by formally defining the coin problem, which is known to require $\Omega(1/\eps^2)$ samples, and then we show how an algorithm satisfying the guarantees of Theorem~\ref{thm:main-lb-intro} with query complexity $q$ implies an algorithm for the coin problem which uses $O(q/n)$ samples.

\begin{problem}[Coin Problem]\label{prob:coin}
    For a fixed parameter $\eps \in (0, 1/2)$ and $b \in \{-1,1\}$, let $\calC(\eps b)$ denote the distribution supported on $\{-1,1\}$ with $\Ex_{\bc \sim \calC(\eps b)}[\bc] = 2\eps b.$ The coin problem refers to the following statistical task:
    \begin{itemize}
        \item A hidden variable $\bb\sim \{-1,1\}$ is drawn uniformly at random, and $s$ independent draws are generated $\bc_1 ,\dots, \bc_s \sim \calC(\eps \bb)$.
        \item An algorithm for the coin problem is specified by a function $f\colon \{-1,1\}^s \to \{-1,1\}$, and the algorithm outputs $\hat{\bb} = f(\bc_1, \dots, \bc_n)$. 
    \end{itemize}
    It is well-known that any distribution over functions $\boldf \colon \{-1,1\}^n \to \{-1,1\}$ (which may utilize randomness independent of $\bb, \bc_1,\dots, \bc_s$) which satisfies
    \[ \Prx_{\substack{\bb \sim \{-1,1\} \\ \bc_1,\dots, \bc_s \sim \calC(\eps \bb)}}\left[ \boldf(\bc_1,\dots, \bc_s) = \bb \right] \geq 0.9\]
    must have $s = \Omega(1/\eps^2)$.\footnote{This fact follows from Le Cam's method and the fact $\dtv(\calC(\eps)^s, \calC(-\eps)^s)$ is an arbitrarily small constant when $s$ is a small enough constant factor of $1/\eps^2$, see~\cite{C20}.}
\end{problem}

Hence, our task will be to extract, from a randomized algorithm for approximating the top eigenvector, a function $f$ which estimates the bias of a coin. To do so, we will first define a distribution over bounded-entry symmetric $n\times n$ matrices $A$ in order to use Yao's principle and consider a deterministic algorithm. 

\begin{definition}\label{def:hard-distribution} 
For a fixed $\eps \in (0,1/2)$ and $x \in \{-1,1\}^n$, let $\calD_1(\eps,x)$ be the following distribution over $n\times n$ symmetric matrices $\bA$, defined according to the following procedure:
\begin{itemize}
    \item Draw a symmetric $n\times n$ matrix $\bB \in \{-1,1\}^{n\times n}$ with diagonal entries $\bB_{ii} = 1$ uniformly at random. 
    \item The output $\bA \in \{-1,1\}^{n\times n}$ is the symmetric matrix given by letting, for each $i, j$, 
    \[\bA_{i,j} = \bA_{j,i} \sim \begin{cases}
    x_i \cdot x_j & \text{with probability }2 \eps \\
    \bB_{i,j} & \text{with probability } 1-2\eps
\end{cases} .\]
\end{itemize}
\end{definition}
Note for any $\eps > 0$ and $x \in \{-1,1\}^n$, a draw $\bA \sim \calD_1(\eps,x)$ always satisfies $\|\bA\|_{\infty} \leq 1$ and has spectral norm $\| \bA \|_2$ which is at most $2\eps n + O(\sqrt{n})$ with high probability, and $\lambda_{\max}(\bA) \geq 2\eps n - O(\sqrt{n})$.\footnote{This fact follows from Corollary~4.4.7 of~\cite{V18}, see, also the proof of Claim~\ref{cl:spectral-norm}.} Hence, we consider $\eps$ which is at least a large constant factor of $1/\sqrt{n}$---this best possible since there are at most $n^2$ entries---and we also satisfy the desired promise $\|\bA\|_2 = O(\lambda_{\max}(\bA)$. We assume the existence of a randomized algorithm $\calA$ satisfying the requirements of Theorem~\ref{thm:main-lb-intro} which may be run with accuracy parameter $\eps_0 = \eps / 100$. This implies, via Yao's principle, a deterministic algorithm $\calA$ which succeeds at finding an approximately top eigenvector from the distribution over matrices given by first sampling $\bx \sim \{-1,1\}^n$ and then $\bA \sim \calD_1(\eps,\bx)$. Since we will deal with general (possibly) adaptive query algorithms $\calA$, it will be useful to formally specify an algorithm, and note that it suffices to define algorithms operating on $n\times n$ symmetric and bounded entry matrices whose entries are $\{-1,1\}$.
\begin{definition}[Deterministic Query Algorithm for the Hard Distribution]\label{def:det-alg}
Fix $\eps_0 = \eps / 100$, a deterministic algorithm $\calA$ for approximate eigenvectors of symmetric matrices in $\{-1,1\}^{n\times n}$ is a binary decision tree;
\begin{itemize}
    \item Every non-leaf vertex is labeled with a pair of indices $(i,j) \in [n]\times [n]$, and has two outgoing edges which are labeled with $\{-1,1\}$.
    \item Every leaf vertex is labeled with an output vector $y \in \R^n$.
\end{itemize} 
An execution on input matrix $A \in \{-1,1\}^{n\times n}$ is a root-to-leaf which (recursively) proceeds by following the edge labeled $A_{ij} \in \{-1,1\}$ for index pair $(i,j)$ at a node, and upon reaching a leaf, outputting the stored vector $y$. We consider a deterministic algorithm $\calA$ which satisfies:
\begin{align} 
\Prx_{\substack{\bx \sim \{-1,1\}^n \\ \bA \sim \calD_1(\eps,\bx)}}\left[ \lambda_{\max}(\bA) - \dfrac{\langle \by, \bA \by \rangle}{\|\by\|_2^2} \leq \eps_0 n \right] \geq 0.99,  \label{eq:guarantee-lb}
\end{align}
where $\by = \calA(\bA) \in \R^n$ is the output vector. Let $q(A) \in \N$ denote the query complexity of the algorithm on input $A \in \{-1,1\}^{n\times n}$ (i.e., the depth of the root-to-leaf path on $A$), and for any $\ell \in [n]$, $q_{\ell}(A) \in \N$ the number of queries to entries in row/column $\ell$. We seek to lower bound $q = \max_{A \in \{-1,1\}^{n\times n}} q(A)$.
\end{definition}

\ignore{\begin{definition}[Randomized Algorithm]  Let a randomized algorithm $\textsc{ALG}_R$ be a distribution $\calR$ over a set of decision trees $\calT$. An execution of randomized algorithm consists of sampling a decision tree $T\sim \calR$ and then running $T$ on an input $A$.
\end{definition}

\begin{lemma}[Yao's Lemma] \label{lem:yaos-lemma}  Given a set of matrices $\calM$ and a randomized decision tree $\calR$ (distribution over deterministic trees $\calT$ of depth $D$) which for any input in $\calM$ outputs an approximate top eigenvector with probability $0.99$, then for any distribution $\calD$ of matrices $\calM$, there exists a deterministic decision tree $T \in \calT$ of depth at most $D$ which outputs an approximate top eigenvector with probability $0.99$ over inputs sampled from $\calD$.
\end{lemma}

For any symmetric $A \in \{-1,1\}^{n\times n}$ and $\eps \in (\frac{1200}{\sqrt{n}},1)$, let $\textsc{ALG}_R(A,\eps)$ be a randomized algorithm which outputs a vector $\by$  such that 
$ \Pr_{\by}\left [ \frac{\la \by, A\by \ra}{\|\by\|_2^2}  - \lambda_{\max}(A) \leq \eps n/100 \right ] \geq 0.99 $ and makes no more than $n/1000\eps^2$ queries to $A$. 
We start by applying Yao's Lemma on $\textsc{ALG}_R$ to draw a deterministic decision tree $\textsc{ALG}$ for the distribution $\calD_1(\bx,\eps)$, where $\bx \sim \{-1,1\}^n$.}

\subsection{Equivalent Method for Sampling from Hard Distribution}

Recall that, in our reduction, the desired goal is to extract a function $f$ from the deterministic decision tree $\calA$ satisfying (\ref{eq:guarantee-lb}), and for this purpose, it will be useful to define another (as we will later prove) equivalent mechanism for generating a sample $\bx \sim \{-1,1\}^n$ and $\bA \sim \calD_1(\eps,\bx)$.

\begin{definition}\label{def:coin-distribution} 
For a fixed $\eps \in (0,1/2)$, $\ell \in [n]$, $z\in \{-1,1\}^{[n]\setminus \{\ell\}}$ and a sequence $c_1,c_2,\dots c_n \in \{-1,1\}$, let $\calD_2(\eps,\ell,z,c_1,c_2,\dots c_n)$ be the following distribution over symmetric matrices $\bA$, defined according to the following procedure:
\begin{enumerate}
    \item Draw a symmetric $n\times n$ matrix $\bB \in \{-1,1\}^{n\times n}$ with diagonal entries $\bB_{ii} = 1$ uniformly at random.
    \item The output $\bA \in \{-1,1\}^{n\times n}$ is the symmetric matrix given by letting, for each $i, j$,
    \[\bA_{i,j} = \bA_{j,i} \sim \begin{cases}
    1 & \text{if $i=j=\ell$}\\
    c_{j} z_j & \text{if $i=\ell$ or $j=\ell$}\\
    z_i z_j &\text{else, with probability } 2\eps\\
    \bB_{i,j} &\text{else, with probability } 1 - 2\eps\\
    \end{cases} .
    \]
\end{enumerate}
\end{definition}

The equivalence, which we will heavily rely on in our reduction, is captured by the following lemma.

\begin{lemma} \label{clm:random-process}For any fixed $\eps \in (0,1/2)$, consider the following random process.
\begin{enumerate}
    \item Sample $\bb \sim \{-1,1\}$, then sample $\bc_1,\dots,\bc_n \sim \calC(\eps \bb)$.
    \item Draw $\bell \sim [n]$ and $\bz \sim \{-1,1\}^{[n]\setminus\{\bell \} }$. Now draw $\bA \sim \calD_2(\eps,\bell,\bz,\bc_1,\dots,\bc_n)$.
    \item Let $\bx= (\bz_1,\dots, \bz_{\bell-1},\bb,\bz_{\bell+1},\dots,\bz_{n})$.
    \item Output: $\left (\bb, \bA, \bell,\bz, \bc_1,\dots, \bc_n, \bx\right )$.
\end{enumerate}

The marginal distribution of  $(\bb,\bx,\bell,\bA)$ has the following properties 
\begin{itemize}
    \item $\bx \sim \{-1,1\}^n$ and $\bA \sim \calD_1(\eps,\bx)$
    \item $\bell \sim [n]$ and independent of $\bx$ and $\bA$ \item $\bx_{\bell} = \bb$
\end{itemize}  
\end{lemma}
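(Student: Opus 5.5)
We prove the lemma by computing the joint law of $(\bell,\bx,\bA)$ produced by the process and checking that it factors as $\mathrm{Unif}([n]) \times \mathrm{Unif}(\{-1,1\}^n) \times \calD_1(\eps,\cdot)$. The third property, $\bx_{\bell}=\bb$, holds by construction in step 3, so it needs no argument.

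\textbf{Step 1: the pair $(\bell,\bx)$.} The variable $\bb$ is uniform on $\{-1,1\}$. The vector $\bz$ is uniform on $\{-1,1\}^{[n]\setminus\{\bell\}}$, and $\bell$ is uniform on $[n]$. These three are mutually independent. Hence, conditioned on $\bell=\ell$, the vector $\bx$ obtained by inserting $\bb$ at coordinate $\ell$ is uniform on $\{-1,1\}^n$. This conditional law does not depend on $\ell$, so $\bx$ is uniform and independent of $\bell$.

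\textbf{Step 2: the conditional law of $\bA$.} Fix $\ell$ and $x$, which fixes $\bb=x_\ell$ and $\bz=x_{-\ell}$. We show that $\bA$, conditioned on $(\bell,\bx)=(\ell,x)$, has law exactly $\calD_1(\eps,x)$, independent of $\ell$. The entries are independent across unordered pairs in both distributions, so it suffices to match each pair's marginal.

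Pairs with $i,j\neq \ell$ are generated identically in both definitions: $x_ix_j$ with probability $2\eps$, and a uniform $\pm1$ value otherwise. The diagonal entries equal $1$ in both; the $\calD_2$ definition puts $1$ at $(\ell,\ell)$ explicitly, and since $x_\ell^2=1$, the $\calD_1$ diagonal entries equal $1$ regardless of the mixture.

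For the off-diagonal pair $(\ell,j)$, $\calD_2$ sets $\bA_{\ell j}=\bc_j z_j$. Conditioned on $\bb=x_\ell$, the coin $\bc_j$ is a $\pm1$ variable with mean $2\eps x_\ell$. It is distributed as the mixture "$x_\ell$ with probability $2\eps$, a uniform sign with probability $1-2\eps$", because that mixture also has mean $2\eps x_\ell$, and a $\pm1$ variable is determined by its mean. Multiplying by $z_j=x_j$ shows that $\bA_{\ell j}$ equals $x_\ell x_j$ with probability $2\eps$ and is otherwise a uniform sign, since a uniform sign times $x_j$ is still uniform. This is exactly the $\calD_1$ entry law.

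The coins $\bc_j$ are independent across $j$ and independent of $\bB$, so entry independence is preserved. Hence, conditioned on $(\bell,\bx)$, $\bA\sim\calD_1(\eps,\bx)$, which gives the first property.

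\textbf{Step 3: independence of $\bell$.} The joint law is $\Pr[\bell=\ell]\cdot\Pr[\bx=x]\cdot\calD_1(\eps,x)(A) = \frac1n\,2^{-n}\,\calD_1(\eps,x)(A)$. This factors with a uniform $\bell$ independent of $(\bx,\bA)$, which is the second property.

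The only mildly delicate point is the coin identity in Step 2. It relies on $\calC(\eps b)$ being defined by its mean $2\eps b$, so that it coincides with the mixture description. We should also note that the diagonal coin $\bc_\ell$ is unused.
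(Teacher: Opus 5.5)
Your proof is correct and follows the same route as the paper. You condition on $\bell=\ell$ to get that $\bx$ is uniform and independent of $\bell$, then match $\calD_2$ to $\calD_1$ entry by entry, with the key check that $\bA_{\ell j}=\bc_j\bz_j$ equals $\bx_{\ell}\bx_j$ with probability $1/2+\eps$. Your explicit remark that the conditional law of $\bA$ given $(\bell,\bx)$ does not depend on $\ell$, and that entries stay independent, makes the independence of $\bell$ from $(\bx,\bA)$ slightly more rigorous than the paper's one-line deduction.
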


\begin{proof}
Note that by construction $\bx_{\bell} = \bb$. We start by proving $\bx \sim \{-1,1\}^n$. Consider any fixed setting of $\bell=\ell$, now we compute the probability that $\bx = x$ for any $x \in \{-1,1\}^n$. Note that in the process each $\bz_i$ is chosen independently and uniformly in $\{-1,1\}$, and $\bb$ is chosen uniformly from $\{-1,1\}$ and independently from $\bz$. 
\begin{align*}
    \Prx_{\bb,\bz}[\bx = x \vert \bell=\ell]  = \Prx_{\bb}[\bx_{\ell} = x_{\ell}] \cdot \prod_{\substack{i=1\\i\neq\ell}}^n \Pr_{\bz}[\bx_i = x_i] = \Prx_{\bb}[\bb = x_{\ell}] \cdot \prod_{\substack{i=1\\i\neq\ell}}^n \Prx_{\bz}[\bz_i = x_i] =\frac{1}{2^n},
\end{align*}
and this means $\bx$ is drawn uniformly from $\{-1,1\}^n$. To show that $\bx$ and $\bell$ are independent, we evaluate the joint probability of $\bx = x$ for any $x \in \{-1,1\}^n$ and $\bell = \ell$ for any $\ell \in [n]$:
\begin{align*}
    \Prx_{\bb,\bz,\bell}[\bx = x \wedge \bell = \ell] &= \Prx_{\bb,\bz}[\bx = x \vert  \bell = \ell] \cdot \Prx_{\bell}[\bell = \ell] = \frac{1}{2^n} \cdot \Prx_{\bell}[\bell = \ell] = \Prx_{\bx\sim \{-1,1\}^n}[\bx = x] \cdot \Prx_{\bell}[\bell = \ell],
\end{align*}
which means $\bx\sim\{-1,1\}^n$ and $\bell$ are independent. It remains to show that $\bx\sim \{-1,1\}^n$ and $\bA \sim \calD_1(\eps,\bx)$. 

Fix any setting of $\bell = \ell$, and recall we always sample triples $(\bx,\bz,\bb)$ which set $\bx= (\bz_1,\dots, \bz_{\ell-1},\bb,\bz_{\ell+1},\dots,\bz_{n})$. If $i=j$, the entries $\bA_{i,i}$  are always set to one, which is also the case in $\calD_1(\eps,\bx)$. Consider the distribution of entries in $\bA_{i,j}$ for any $i,j$ where neither is equal to $\ell$. By Definition~\ref{def:coin-distribution},  $\bA_{i,j} = \bz_i\bz_j$, which is equal to $\bx_i \bx_j$, with probability $2\eps$ and $\bB_{i,j}$ with probability $1-2\eps$. Since $\bB_{i,j}$ is drawn from the same distribution in both $\calD_1(\eps,\bx)$ and $\calD_2(\eps,\bell,\bz,\bc_1,\dots \bc_n)$. This completes the case of $i \neq j$, and neither equals $\ell$. The final case considers $i=\ell$ and $j\neq \ell$ (since $\bA$ is always symmetric). In this case, Definition~\ref{def:coin-distribution} sets $\bA_{\ell,j}$ to $\bc_{j}\bz_j$, so the probability that $\bA_{\ell,j} = \bb \bz_j = \bx_{\ell} \bx_{j}$ is
\begin{align*}
    \Prx_{\bb,\bc_1,\dots,\bc_n,\bA}[\bA_{\ell,j} = \bb \bz_j] = \Prx_{\bc_1,\dots,\bc_n,\bA}[\bc_{j} 
    \bz_j = \bb \bz_j] = \Prx_{\bb, \bc_j}\left[\bb = \bc_j \right] = 1/2 + \eps.
\end{align*}
In a similar vein, we using Definition~\ref{def:hard-distribution}, we directly have $\bA_{\ell,j} = \bx_{\ell} \bx_j$ with probability $2\eps$ and uniform (from the fact its set to $\bB_{\ell,j}$) with probability $1-2\eps$. Hence, $\bA_{\ell,j} = \bx_{\ell} \bx_{j}$ with probability $1/2 + \eps$ as well. We can  conclude that $\bA \sim \calD_1(\eps,\bx)$ for $\bx\sim\{-1,1\}^n$ and since $\bx$  independent of $\bell$, then $\bA$ must also be independent of $\bell$. 
\end{proof}

\subsection{Using the Deterministic Algorithm $\calA$ to solve the Coin Problem}\label{sec:reduction}

The following algorithm is given a sequence of i.i.d. draws $\bc_1,\dots \bc_n$ from a distribution $\calC(\eps \bb)$ with known parameter $\eps$ which is a large enough constant factor of $1/\sqrt{n}$ and at most $1/2$ and uniform $\bb\sim \{-1,1\}$, and has access to a deterministic algorithm $\calA$ satisfying (\ref{eq:guarantee-lb}); we show how to output a random variable $\smash{\hat{\bb}}$ which agrees with $\bb$ with probability at least $0.9$.

\begin{figure}[H]
\begin{framed}
\textbf{Reduction.}\label{alg:lb-reduction}
Given an accuracy parameter $\eps$ which is a large enough constant factor of $1/\sqrt{n}$ and random variables $\bc_1,\dots \bc_n\sim \calC(\eps \bb)$, as well as a deterministic algorithm $\calA$ satisfying (\ref{eq:guarantee-lb}), we proceed as follows:
\begin{enumerate}
    \item Draw $\bell \sim [n]$ and $\bz\sim\{-1,1\}^{[n]\setminus \{\ell\}}$.
    \item Draw $\bA\sim \calD_2(\eps,\bell,\bz,\bc_1,\dots \bc_n)$ as described by Definition \ref{def:hard-distribution}.

    \item Execute $\calA$ on $\bA$ while ensuring that the number of queries made to the $\bell$-th row/column is no greater than $200q/n$. If the number of queries exceeds $200q/n$, output \smash{$\hat{\bb} \sim\{-1,1\}$}. Otherwise, let the vector $\by \in \R^n$ be the output of $\calA$.

    \item Select $\sigma \in \{-1,1\}$ such that $\sum_{i\in [n]\setminus \{\bell\}} \sigma \by_i \cdot \bz_i \geq 0$.
    
    \item  Output \smash{$\hat{\bb} = \sgn(\sigma \by_{\bell})$} as the estimate of $\bb$.
\end{enumerate}
\end{framed}
\caption{Reduction from $\calA$ satisfying (\ref{eq:guarantee-lb}) to the Coin Problem.} \label{fig:reduction}
\end{figure}

\ignore{Given the above claims, we prove that running Algorithm \ref{alg:lb-reduction} on $\bA \sim \calD_2(\eps,\bell,\bz,\bc_1,\dots \bc_n)$ outputs $\hat{\bb}\in\{-1,1\}$ such that $\Pr_{\substack{\bb, \bz, \bell, \bA \\ \bc_1, \dots, \bc_n}}[\hat{\bb} = \bb] \geq 0.9$ using fewer than $1/10\eps^2$ samples of $\bc_1,\dots\bc_n \sim \bC(2\eps\bb)$. We start by analyzing the probability of failure.
\begin{align*}
    \Pr_{\substack{\bb, \bz, \bell, 
    \bA \\ \bc_1, \dots, \bc_n}}[\hat{\bb} = \bb ] \geq 1 - \Pr_{\substack{\bb, \bz, \bell, 
    \bA \\ \bc_1, \dots, \bc_n}}[\hat{\bb} \neq \bb]
\end{align*}}
We first upper bound the probability that the reduction in Figure~\ref{fig:reduction} outputs a random variable $\smash{\hat{\bb}}$ which disagrees with $\bb$. The reduction can output an incorrect $\hat{\bb}$ for a few different reasons, one being that the query complexity to the $\bell$-th row/column is too high causing the algorithm to output a random sign as $\hat{\bb}$, another being that $\calA$ outputs a vector $\by$ with small Raleigh quotient, and the last being that $\calA$ output a $\sigma\by$ which was simply had the wrong sign at $\bell$. The subsequent analysis will use structural properties of $\bA$ and $\bx$, as well as the condition (\ref{eq:guarantee-lb}) to ensure $\smash{\hat{\bb}}$ is equal to $\bb$ except with probability $0.9$.

\begin{definition}
The following events are defined with respect to the random variables $\bb$, $\bc_1,\dots, \bc_n$, $\bell$, $\bz$, $\bA$, $\by$, and \smash{$\hat{\bb}$} which are generated as part of an execution of Figure~\ref{fig:reduction}. We let $\bx \in \{-1,1\}^n$ be set to
\[ \bx = (\bz_1,\dots, \bz_{\bell-1}, \bb, \bz_{\ell+1}, \dots, \bz_n) \in \{-1,1\}^n,\]
as in Lemma~\ref{clm:random-process}. The following should be thought of as failure events:
\begin{itemize}
    \item Let  $\calbE_0$ be the event where $\|\bA - 2\eps \bx \bx^{\intercal}\|_2 \geq c \sqrt{n}$, for a large enough constant $c >0$.
    \item Let $\calbE_1$ be the event where $q_{\bell}(\bA) > 200q/n$.  
    \item Let $\calbE_2$ be the event where $\la \by , \bA \by\ra / \|\by\|_2^2 \leq \lambda_{\max}(\bA) - \eps_0 n$ for $\eps_0 = \eps / 100$ (Definition~\ref{def:det-alg}). \footnote{\label{ftn:1} Notice, the vector $\by$ is only generated in Figure~\ref{fig:reduction} whenever $q_{\bell}(\bA) \leq 200q/n$, i.e., event $\calbE_1$ does not occur.}
    \item Let  $\calbE_3$ be the event where $ \sgn(\sigma \by_{\bell})\neq \bb$. \footnote{Once more (similarly to Footnote~\ref{ftn:1}), $\calbE_3$ is only defined when $\by$ is generated by Figure~\ref{fig:reduction} and occurs whenever $\calbE_1$ does not occur.}
\end{itemize}
\end{definition}

The analysis of Figure~\ref{fig:preprocess} will proceed by upper-bounding the probability that $\hat{\bb} \neq \bb$ by relating to the events $\calbE_0, \calbE_1, \calbE_2, \calbE_3$. Notice that it suffices to upper bound
\begin{align}
        \Prx[\hat{\bb} \neq \bb ] \leq &\Prx[\calbE_0] 
        + \Prx[\calbE_1] 
        + \Prx[\calbE_2  \wedge  \overline{\calbE_1} ]
        + \Prx\left[ \calbE_3   \wedge    \overline{\calbE_2} \wedge  \overline{\calbE_1} \wedge \ol{\calbE_0} \right], \label{eq:lb-faliure-prob}
\end{align}
where all of the probabilities are over the randomness generated in Figure~\ref{fig:reduction}, because whenever none of the above failure events occur, the final output $\hat{\bb}$ in Figure~\ref{fig:reduction} is equal to $\sgn(\sigma \by_{\bell})$, which is by the fact $\calbE_3$ does not occur equal to $\bb$. This leaves the four probabilistic events on the right-hand side of (\ref{eq:lb-faliure-prob}) to upper bound, and we show each is at most $0.01$ in the following claims.

\begin{claim}\label{cl:spectral-norm} $\Prx[\calbE_0] \leq 0.01$.    
\end{claim}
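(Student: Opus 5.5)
By Lemma~\ref{clm:random-process}, the marginal of $(\bx,\bA)$ produced by Figure~\ref{fig:reduction} is exactly $\bx \sim \{-1,1\}^n$ with $\bA \sim \calD_1(\eps,\bx)$. So it suffices to bound $\|\bA - 2\eps \bx\bx^{\intercal}\|_2$ under $\calD_1(\eps,x)$ for an arbitrary fixed $x$. I would first compute the mean. For $i \neq j$,
\[ \Ex[\bA_{i,j}] = 2\eps x_i x_j + (1-2\eps)\Ex[\bB_{i,j}] = 2\eps x_i x_j, \]
and every diagonal entry equals $1$. Hence $\Ex[\bA] = 2\eps x x^{\intercal} + (1-2\eps) I$. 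Define the noise $\bW = \bA - \Ex[\bA]$, which is a symmetric matrix with zero diagonal. Its upper-triangular entries are independent, mean zero, and bounded by $2$ in magnitude, since each is built from an independent coin flip together with the independent entry $\bB_{i,j}$.

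Next I apply the triangle inequality:
\[ \|\bA - 2\eps xx^{\intercal}\|_2 \leq \|\bW\|_2 + (1-2\eps)\|I\|_2 \leq \|\bW\|_2 + 1. \]
It then remains to show $\|\bW\|_2 \leq C\sqrt{n}$ with probability at least $0.99$. This is the standard norm bound for random symmetric matrices with independent bounded (hence sub-gaussian, with $O(1)$ sub-gaussian norm) entries, Corollary~4.4.7 of~\cite{V18}. It gives $\|\bW\|_2 \leq C K(\sqrt{n} + t)$ with probability at least $1-4e^{-t^2}$, where $K = O(1)$. Taking $t = \sqrt{n}$, or just a constant $t$, makes the failure probability at most $0.01$ for $n$ large.

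Combining the two steps, $\|\bA - 2\eps\bx\bx^{\intercal}\|_2 \leq C'\sqrt{n} + 1 \leq c\sqrt{n}$ once $c$ is a large enough constant. So $\Prx[\calbE_0] \leq 0.01$, after averaging over $\bx$. The only point needing care is checking the hypotheses of the cited random-matrix bound: the entries above the diagonal must be independent and centered after subtracting $\Ex[\bA]$. Both hold because the coins and the entries of $\bB$ are drawn independently for each pair $\{i,j\}$.
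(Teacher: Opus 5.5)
Your proof is correct and is essentially the paper's argument. Like the paper, you pass via Lemma~\ref{clm:random-process} to $\bx \sim \{-1,1\}^n$, $\bA \sim \calD_1(\eps,\bx)$, write $\bA = (1-2\eps)I + 2\eps\bx\bx^{\intercal} + \bZ$ with $\bZ$ a centered symmetric matrix with independent entries bounded by $2$, and combine Vershynin's norm bound for symmetric random matrices with the triangle inequality; your explicit check of the mean and of the independence hypotheses fills in details the paper leaves implicit.
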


\begin{proof}
By Lemma~\ref{clm:random-process}, it suffices to consider the matrix $\bA$ generated by letting $\bx\sim \{-1,1\}^n$ and then $\bA \sim \calD_1(\eps,\bx)$. Then, we may re-write $\bA = (1-2\eps)I + 2 \eps \bx \bx^{\intercal} + \bZ$, where $\bZ$ is a zero-centered symmetric $n\times n$ matrix whose entries are independent and bounded in magnitude by $2$. Corollary~4.4.6 of~\cite{V18} implies $\|\bZ\|_{2} = O(\sqrt{n})$ with high probability, which means that $\| \bA - 2\eps \bx\bx^{\intercal}\|_2$ is at most $1 + O(\sqrt{n})$. 
\ignore{\begin{align*}
    \Pr_{\substack{\bb, \bz, \bell, \bA \\ \bc_1, \dots, \bc_n}}[\calE_0] =  \Pr_{\bx, \bA}\left [\lambda_{i}(\bA - \E_{\bA}[\bA]) \in [-6\sqrt{n},   6 \sqrt{n}] \right]
\end{align*}
Define $\bG = \bA - \E[\bA]$, each entry $\bG_{i,j}$ is zero centered $\E[\bG_{i,j}] = 0$ and is bounded by $\vert \bG_{i,j} \vert \leq (1+2\eps)$. Hence the variance $\Var[\bG_{i,j}] \leq (1+2\eps)^2$ and by Wigner's Semi-circle law, with probability $0.99$ it's eigenvalues are bounded by $\pm 2(1+2\eps)\sqrt{n}$. Therefore 
\begin{align*}
    &\Ex_{\bx}\Pr_{\bA}\left [ \lambda_{i}(\bA - \E_{\bA}[\bA]) \notin [-6\sqrt{n},   6 \sqrt{n}] \right] \leq 
    0.01
\end{align*}
concluding the proof. }
\end{proof}

\begin{claim}\label{cl:query-complexity} $\Prx[\calbE_1] \leq 0.01$.  
\end{claim}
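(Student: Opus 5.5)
We need $\Pr[q_{\bell}(\bA) > 200q/n] \le 0.01$, where the probability is over the execution of Figure~\ref{fig:reduction}. The plan is to use Lemma~\ref{clm:random-process}, which says that the matrix $\bA$ produced in the reduction is distributed as $\bA \sim \calD_1(\eps,\bx)$ with $\bx \sim \{-1,1\}^n$, and that $\bell \sim [n]$ is independent of $\bA$. Since $\calA$ is a deterministic decision tree, the numbers $q_1(\bA),\dots,q_n(\bA)$ are functions of $\bA$ alone. We therefore fix $\bA = A$ and average over the independent uniform index $\bell$.

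For a fixed $A$, each query $(i,j)$ on the root-to-leaf path of $A$ is counted in $q_\ell(A)$ only for $\ell \in \{i,j\}$, so it contributes to at most two indices. Hence
\[ \sum_{\ell=1}^n q_\ell(A) \le 2 q(A) \le 2q. \]
Averaging over the uniform $\bell$ gives $\Ex_{\bell}[q_{\bell}(A)] \le 2q/n$. Markov's inequality then yields
\[ \Prx_{\bell}\left[q_{\bell}(A) > 200q/n\right] \le \frac{2q/n}{200q/n} = 0.01. \]
Because $\bell$ is independent of $\bA$, this bound holds conditionally on every value of $\bA$, and averaging over $\bA$ gives $\Prx[\calbE_1] \le 0.01$.

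One subtlety needs attention: in the reduction the simulation actually generates the entries of row and column $\bell$ from the coins $\bc_j$, and it halts once it exceeds $200q/n$ queries. I would handle this by coupling the truncated execution with the full execution of $\calA$ on the complete matrix $\bA \sim \calD_2$. Before truncation, the two executions follow the same root-to-leaf path, so the event that truncation occurs is exactly $\calbE_1$ for the full execution. The independence of $\bell$ from $\bA$ is the only nontrivial ingredient here, and it is supplied by Lemma~\ref{clm:random-process}.
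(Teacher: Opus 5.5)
Your proposal is correct and follows the paper's proof essentially verbatim. You use Lemma~\ref{clm:random-process} for the independence of $\bell$ from $\bA$, bound $\sum_{\ell} q_\ell(\bA) \le 2q$ because each queried pair touches at most two rows/columns, and apply Markov's inequality over the uniform $\bell$. Your extra coupling remark on the truncated simulation is a harmless clarification that the paper leaves implicit.
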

\begin{proof}
Using Lemma~\ref{clm:random-process} once more, the matrix $\bA$ is independent from $\bell \sim [n]$. The fact that $\calA$ has query complexity at most $q$ means that for any draw of $\bA$, $q_1(\bA) + q_2(\bA) + \dots + q_n(\bA) \leq 2q$ (the factor of $2$ is from the fact each query pair $(i,j)$ can contribute to at most $2$ row/columns). Since $\bell$ is independent, the expectation over $\bell$ of $q_{\bell}(\bA)$ is at most $2q/n$, so by Markov's inequality, it is at most $200q/n$ except with probability at most $0.01$.
\ignore{Recall by lemma \ref{lem:yaos-lemma} that the worst-case query complexity (depth of the tree $T$) of $\Alg$ on any $\bA\sim \calD_1(\eps,\bx)$ for $\bx \sim \{-1,1\}^n$ is at most $\frac{n}{1000\eps^2}$. Hence we can bound the probability of making at most $1/10\eps^2$ queries to the $\bell$-th row / column by $100/n$ times the worst case query complexity of the algorithm. Let $q(\Alg(\eps,\bA)$ be the number of queries $\Alg$ makes to $A$ when run with accuracy $\eps$. Let $q_{\bell}(\Alg(\eps,\bA))$ be the number of queries $\Alg$ with accuracy $\eps$ makes to the $\bell$-th row or column of A. 
\begin{align*}
\Pr_{\substack{\bb, \bz, \bell, \bA \\ \bc_1, \dots, \bc_n}}[\calE_1] &= \Pr_{\substack{\bb, \bz, \bell, \bA \\ \bc_1, \dots, \bc_n}}[q_{\bell}(\Alg(\eps,\bA)) \geq \frac{1}{10\eps^2}] = \Pr_{\substack{\bb, \bz, \bell, \bA \\ \bc_1, \dots, \bc_n}}[q_{\bell}(\Alg(\eps,\bA)) \geq \frac{100}{n} q(\Alg(\eps,\bA)) ]
\end{align*}
Now observe that by definition, the expected query complexity over a randomly sampled column $\E_{\bell' \sim [n]}\left [ q_{\bell'}(\Alg(\eps,\bA))\right ]$ is at most $1/n$ times the worst case query complexity, where $\sum_{\ell}q_{\ell}(\Alg) = q(\Alg)$.
\begin{align*}
\Pr_{\substack{\bb, \bz, \bell, \bA \\ \bc_1, \dots, \bc_n}}[q_{\bell}(\Alg(\eps,\bA)) \geq \frac{100}{n} q(\Alg(\eps,\bA)) ] = \Pr_{\substack{\bb, \bz, \bell, \bA \\ \bc_1, \dots, \bc_n}}[q_{\bell}(\Alg(\eps,\bA)) \geq 100\E_{\bell' \in [n]}\left [ q_{\bell'}(\Alg(\eps,\bA))\right ] ]
\end{align*}
Now we apply claim \ref{clm:random-process}, to instead analyze $\bx\sim \{-1,1\}^n$ and $\bA \sim \calD_1(\eps,\bx)$. Note that the probability over $\bell$ is unaffected by the change to taking the probability over $\bx\sim \{-1,1\}^n$ and $\bA \sim \calD_1(\eps,\bx)$, as it is independent from $\bx$ and $\bA$. The proof is concluded by applying Markov's inequality. 
\begin{align*}
\Pr_{\substack{\bb, \bz, \bell, \bA \\ \bc_1, \dots, \bc_n}}[q_{\bell}(\Alg(\eps,\bA)) \geq 100\E_{\bell' \in [n]}\left [ q_{\bell'}(\Alg(\eps,\bA))\right ] ]
&= \Pr_{\substack{\bx \bA,\bell}} \left [\left [q_{\bell}(\Alg(\eps,\bA)) \geq 100\E_{\bell' \in [n]}\left [ q_{\bell'}(\Alg(\eps,\bA)) \right ] \right ] \right ]\\
&= \Ex_{\substack{\bx \bA}} \left [ \Pr_{\bell} \left [q_{\bell}(\Alg(\eps,\bA)) \geq 100\E_{\bell' \in [n]}\left [ q_{\bell'}(\Alg(\eps,\bA)) \right ] \right ] \right ]\\
&\leq 0.01
\end{align*}}
\end{proof}

\begin{claim}\label{cl:good-y} $\Prx[\calbE_2 \wedge \ol{\calbE_1} ] \leq 0.01$. 
\end{claim}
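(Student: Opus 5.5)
The plan is to reduce $\calbE_2 \wedge \ol{\calbE_1}$ to the guarantee (\ref{eq:guarantee-lb}) of the deterministic algorithm $\calA$ on the distribution $\bx \sim \{-1,1\}^n$, $\bA \sim \calD_1(\eps,\bx)$, using the equivalence of Lemma~\ref{clm:random-process}.

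The key observation is that on the event $\ol{\calbE_1}$, the reduction in Figure~\ref{fig:reduction} runs $\calA$ to completion on the matrix $\bA$ without truncation. Since $\calA$ is a deterministic decision tree, the vector $\by$ it outputs is exactly $\calA(\bA)$, the output of the untruncated algorithm on $\bA$. So I first define, for every outcome of the random process, the vector $\by^{*} = \calA(\bA)$ and the event $\calbE_2^{*}$ that $\langle \by^{*}, \bA \by^{*}\rangle/\|\by^{*}\|_2^2 \leq \lambda_{\max}(\bA) - \eps_0 n$. This quantity is a deterministic function of $\bA$ alone. On $\ol{\calbE_1}$ we have $\by = \by^{*}$, so $\calbE_2 \wedge \ol{\calbE_1} \subseteq \calbE_2^{*}$, and it suffices to bound $\Prx[\calbE_2^{*}]$.

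By Lemma~\ref{clm:random-process}, the marginal distribution of $\bA$ generated in the reduction (via $\bb$, $\bc_1,\dots,\bc_n$, $\bell$, $\bz$ and $\calD_2$) is exactly that of $\bA \sim \calD_1(\eps,\bx)$ with $\bx \sim \{-1,1\}^n$. Because $\calbE_2^{*}$ depends only on $\bA$, its probability in the reduction therefore equals its probability under the hard distribution. By (\ref{eq:guarantee-lb}) that probability is at most $0.01$.

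The only subtle step is the coupling: $\by$ is formally defined only when $\calbE_1$ fails, and the truncated execution must coincide with the full execution of $\calA$. This holds because truncation only aborts the run and never alters which queries are made. The inequality in $\calbE_2$ is non-strict, whereas (\ref{eq:guarantee-lb}) bounds the complementary event with a non-strict inequality in the other direction. Any mismatch at equality can be absorbed by running $\calA$ with a slightly smaller constant in $\eps_0$, or by taking the event with strict inequality, without affecting the rest of the argument.
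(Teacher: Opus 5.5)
Your proposal is correct and follows the same route as the paper. On the complement of the query-overflow event, the reduction's output coincides with the deterministic algorithm's output on the matrix; Lemma~\ref{clm:random-process} identifies the marginal of the matrix with the hard distribution; and guarantee (\ref{eq:guarantee-lb}) then bounds the probability by $0.01$. Your auxiliary vector defined for every outcome makes the coupling explicit. Your equality-case remark is also sound: define the failure event with a strict inequality, and the later sign-agreement claim still gets the non-strict lower bound it needs. Both simply make rigorous details that the paper's short proof leaves implicit.
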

\begin{proof}
Whenever $\calbE_1$ does not occur, Figure~\ref{fig:reduction} produces the vector $\by$ which is the output of $\calA$ on input $\bA$. By Lemma~\ref{clm:random-process} once more, we may equivalently consider the matrix $\bA$ as being generated from $\bx \sim \{-1,1\}^n$ and $\bA \sim \calD_1(\eps, \bx)$, and therefore, the algorithm $\calA$ assumption (\ref{eq:guarantee-lb}) implies that $\lambda_{\max}(\bA) - \langle \by, \bA \by \rangle / \| \by\|_2^2$ is at most $\eps_0 n$, except with probability at most $0.01$, as desired.
\ignore{
The proof is fairly straightforward. We apply claim \ref{clm:random-process}, to instead analyze the probability of failure over $\bx\sim \{-1,1\}^n$ and $\bA \sim \calD_1(\eps,\bx)$. Then by lemma \ref{lem:yaos-lemma}, we have that with probability $0.01$ the vector $\by$ found by $\Alg(\eps,\bA)$ for $\bA\sim \calD(\eps,\bx)$ and $\bx\sim\{-1,1\}^{n \times n}$ has an insufficient Raleigh quotient. 
\begin{align*}
    \Pr_{\substack{\bb, \bz, \bell, \bA \\ \bc_1, \dots, \bc_n}}[\calE_2 \wedge \ol{\calE_1} ] &= 
    \Pr_{\substack{\bb, \bz, \bell, \bA \\ \bc_1, \dots, \bc_n}}[\frac{\la \by, \bA \by\ra}{\|\by\|_2^2} \leq \lambda_{\max}(\bA)-\eps n/100  \wedge \ol{\calE_1} ]\\
    &= \Pr_{\substack{\bx,\bA,\bell}} [\frac{\la \by, \bA \by\ra}{\|\by\|_2^2} \leq \lambda_{\max}(\bA)-\eps n/100 \wedge \ol{\calE_1}]\\
    &\leq \Pr_{\substack{\bx,\bA}} [\frac{\la \by, \bA \by\ra}{\|\by\|_2^2} \leq \lambda_{\max}(\bA)-\eps n/100]\\
    &\leq 0.01
\end{align*}}
\end{proof}

\begin{claim}\label{cl:few-disagree} Whenever none of the events $\calbE_0, \calbE_1, \calbE_2$ occur, $\left \vert \{i \in [n] : \sgn(\bx_i) = \sgn(\sigma \by_i) \right \vert \geq 0.99n$.
\end{claim}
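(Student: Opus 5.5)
Under the three events, the plan is to show that the unit vector $\bu = \by/\|\by\|_2$ is almost parallel to $\bx/\sqrt{n}$. A sign count then finishes the claim.

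\textbf{Step 1 (spectral comparison).} Write $\bA = 2\eps \bx\bx^{\intercal} + \bE$. Since $\calbE_0$ fails, $\|\bE\|_2 \le c\sqrt{n}$. Consequently $\lambda_{\max}(\bA) \ge \langle \bx, \bA\bx\rangle/n \ge 2\eps n - c\sqrt{n}$. Since $\calbE_2$ fails,
\[ \langle \bu, \bA \bu\rangle \ge \lambda_{\max}(\bA) - \eps n/100 \ge 2\eps n - \eps n/100 - c\sqrt{n}. \]
On the other hand, $\langle \bu, \bA\bu\rangle \le 2\eps \langle \bu, \bx\rangle^2 + c\sqrt{n}$. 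Combining the two bounds gives
\[ \frac{\langle \bu,\bx\rangle^2}{n} \ge 1 - \frac{1}{200} - \frac{c}{\eps\sqrt{n}} \ge 1 - \frac{1}{100}\cdot\frac12, \]
because $\eps$ is a large enough constant multiple of $1/\sqrt{n}$. With $\theta = \bx/\sqrt{n}$ and a suitable global sign $\sigma'$, this yields $\|\sigma'\bu - \theta\|_2^2 = 2 - 2|\langle \bu,\theta\rangle| \le 2(1-\langle\bu,\theta\rangle^2) \le 1/100$.

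\textbf{Step 2 (counting disagreements).} Suppose coordinate $i$ has $\sgn(\sigma'\bu_i) \ne \bx_i$. Then $|\sigma'\bu_i - \theta_i| \ge |\theta_i| = 1/\sqrt{n}$, so this coordinate contributes at least $1/n$ to $\|\sigma'\bu-\theta\|_2^2$. This could give up to $n/100$ disagreements, which is slightly more than allowed. The fix is to tighten the constant: take $\eps_0$ small relative to $\eps$ (here $1/100$, giving $1/200$ in Step 1) and $c/(\eps\sqrt{n})$ small. Then $\|\sigma'\bu-\theta\|_2^2 \le 2\cdot(1/200 + o(1)) < 0.0101$. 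To get strictly below $0.01n$ disagreements, I would use the sharper per-coordinate cost: whenever the sign is wrong, $|\sigma'\bu_i-\theta_i|^2 \ge 1/n$, and the total is at most $(0.01+\delta)$. Adjusting constants (for example, making $c\sqrt n \le \eps n/1000$) makes the bound below $0.01$. Zero coordinates are counted as disagreements, which is consistent with this.

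\textbf{Step 3 (matching $\sigma$ with $\sigma'$).} This is the main obstacle. The reduction chooses $\sigma$ via $\sum_{i\ne\bell}\sigma\by_i\bz_i \ge 0$, which amounts to maximizing alignment with $\bx$ restricted to $[n]\setminus\{\bell\}$. I will show $\sigma=\sigma'$. First, $\sum_{i}\sigma'\bu_i\bx_i = \sqrt{n}\langle\sigma'\bu,\theta\rangle \ge \sqrt{n}(1-1/200)$. Removing coordinate $\bell$ changes the sum by at most $|\bu_{\bell}| \le 1$, so the restricted sum with $\sigma'$ is still at least $0.99\sqrt{n} - 1 > 0$. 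Since the restricted sum with $-\sigma'$ is the negative of this, the rule must select $\sigma = \sigma'$. The only delicate case is a tie at zero, and the strict inequality rules it out. Therefore the sign pattern of $\sigma\by$ agrees with $\bx$ on at least $0.99n$ coordinates, as claimed.
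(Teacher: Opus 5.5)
Your Steps 1 and 3 match the paper's argument. The same sandwich of the Rayleigh quotient gives $\langle \by,\bx\rangle^2 \ge (1 - \frac{1}{200} - \frac{c}{\eps\sqrt n})\, n\|\by\|_2^2$, and $\sigma$ is pinned down the same way, since dropping coordinate $\bell$ moves the correlation by at most $|\by_{\bell}|$.

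The gap is in Step 2, which as written does not reach $0.99n$. Write $a = \langle \bu,\theta\rangle$. The estimate $2 - 2|a| \le 2(1-a^2)$ roughly doubles the deficiency $1-a^2$, so you get at most $n/100 + 2c\sqrt{n}/\eps$ disagreements. The $n/100$ term comes from $\eps_0 = \eps/100$, which is fixed in the definition of $\calbE_2$. That term alone uses up the entire $0.01n$ budget, so no choice of the constant $C$ in $\eps \ge C/\sqrt n$ can help. Your proposed $c\sqrt n \le \eps n/1000$ still gives $2(\frac{1}{200} + \frac{1}{1000})n = 0.012n$ disagreements. Separately, the inequality $1 - \frac{1}{200} - \frac{c}{\eps\sqrt n} \ge 1 - \frac{1}{100}\cdot\frac12$ at the end of Step 1 points the wrong way, since $c>0$.

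The repair is to not throw away that factor of $2$. There are two ways.

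\begin{enumerate}
\item Keep the exact identity $2 - 2|a| = 2(1-a^2)/(1+|a|)$. Once $c/(\eps\sqrt n) \le 0.004$, Step 1 gives $|a| \ge \sqrt{0.99}$. Then $\|\sigma'\bu-\theta\|_2^2 \le \frac{2}{1+\sqrt{0.99}}\,(\frac{1}{200} + \frac{c}{\eps\sqrt n}) < 0.01$, and the count goes through.
\item Argue as the paper does, which is cleaner. Let $S = \{ i : \sgn(\sigma\by_i) = \bx_i\}$, and use $\sigma\langle\by,\bx\rangle > 0$ from your Step 3. Cauchy--Schwarz gives $\|\by\|_2\sqrt{|S|} \ge \sum_{i \in S}|\by_i| \ge \sum_{i\in S}|\by_i| - \sum_{i \notin S}|\by_i| = \sigma\langle\by,\bx\rangle$. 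Squaring yields $|S| \ge \langle\by,\bx\rangle^2/\|\by\|_2^2 \ge 0.99n$ directly.
\end{enumerate}

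The second route bounds the number of agreements below by $a^2 n$. The $\ell_2$-distance count, even with the exact identity, can only give $(2|a|-1)n = a^2 n - (1-|a|)^2 n$. So the paper's count is never weaker than yours, and it needs no further adjustment of constants beyond what Step 1 already uses.
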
 

\begin{proof} We bound $\sgn(\bx_i) = \sgn(\sigma \by_i)$ by first bounding $
\la \by,\bx\ra$. Note,
\begin{align*}
    \frac{\la \by ,\bA \by\ra}{\|\by \|_2^2} &= \frac{\la \by , 2\eps \bx \bx^{\intercal} + (\bA - 2\eps \bx \bx^{\intercal}) \by\ra}{\|\by \|_2^2}
    \leq \frac{2\eps \cdot \la \by , \bx \ra^2}{\|\by \|_2^2} + \| \bA - 2\eps \bx \bx^{\intercal} \|_2,
\end{align*}
and the right-most term above is at most $c\sqrt{n}$ since $\calbE_0$ does not occur. Since neither $\calbE_1$ nor $\calbE_2$ occur, the left-hand side above is also lower bounded by $\lambda_{\max}(\bA) - \eps_0 n$, which implies
\begin{align*}
\langle \by, \bx \rangle^2 \geq \frac{(\lambda_{\max}(\bA) - \eps_0 n - c\sqrt{n}) }{2\eps} \cdot  \| \by\|_2^2 \geq \left( (1 - \eps_0 / (2\eps)) n - (c/\eps) \sqrt{n} \right) \|\by\|_2^2 \geq .99n \cdot \| \by\|_2^2
\end{align*}
since $\lambda_{\max}(\bA) \geq 2\eps n - \| \bA - 2\eps \bx \bx^{\intercal} \|_2 \geq 2\eps n - c\sqrt{n}$, and we use the fact $\eps_0 = \eps / 100$ and $\eps \geq 200c/\sqrt{n}$.
\ignore{$\ol{\calE_2}\wedge \ol{\calE_1}$, $\by$ satisfies: 
\[\frac{\la \by ,\bA \by\ra}{\|\by \|_2^2} \geq \lambda_{\max}(\bA) - \eps n/100\] We again
split $\bA$ into $\E[\bA] + (\bA - \E[\bA])$ and use $\ol{\calE_0}$ to bound \[\lambda_{\max}(\bA)\geq \lambda_{\max}(\E[\bA]) - \lambda_{\max}(\bA - \E[\bA]) \geq 2\eps n - 6\sqrt{n}\] 
Combining both bounds and simplifying leaves
\begin{align*}
    \la \by , \bx \ra^2 \geq \frac{\left (2\eps n - 12\sqrt{n} -\eps n/100\right ) \|\by\|_2^2}{2\eps}
\end{align*}
which for sufficiently large $n$ and $\eps \geq \frac{1200}{\sqrt{n}}$ can be reduced to 
\begin{align*}
    \la \by , \bx \ra^2 \geq \frac{99 n}{100} \cdot \|\by\|_2^2
\end{align*}}
As a result, the magnitude $|\langle \by, \bx\rangle|$ is at least $\sqrt{.99n} \cdot\|\by\|_2$, and a very loose bound implies it is larger than $10 |\by_{\ell}|$ for large enough $n$. Since $\langle \by, \bx \rangle$ differs from $\sum_{i \neq \ell} \by_i \bz_{i}$ by at most $|\by_{\ell}|$, the choice of $\sigma \in \{-1,1\}$ in Figure~\ref{fig:reduction} will ensure $\sigma \langle \by, \bx\rangle$ is positive. Therefore, letting $\bS = \{ i \in [n] : \sgn(\sigma \by_i) = \bx_i\}$, we have that an application of Cauchy-Schwarz, as well as the fact $\bx \in \{-1,1\}^n$ implies 
\begin{align*}
\| \by\|_2 \sqrt{|\bS|} \geq \sum_{i \in \bS} |\by_i| \geq \sum_{i \in \bS} |\by_i| - \sum_{i \notin \bS} |\by_i| = \sigma \langle \by, \bx\rangle,
\end{align*}
and since the right-hand side is positive, we may square both sides without changing the direction of the inequality. This implies that $\|\by\|_2^2 |\bS|$ is at least $\langle \by, \bx\rangle^2$, which is at least $.99n \cdot \|\by\|_2^2$. This implies the lower bound on $|\bS|$ as desired.
\ignore{
Now, since $\sigma \in \{-1,1\}$ is set such that $\sum_{i\in [n]\setminus \{\bell\}} \sigma \by_i \cdot \bz_i >0 $, this also implies that $\la \sigma \by , \bx\ra \geq 0$.  (Otherwise $\vert \sigma\by_{\bell}\bx_{\bell}\vert > \frac{99 \sqrt{n}}{100} \cdot \|\by\|_2$). Now observe that the inner product is at most the sum of coordinates $i$ where $ \sgn(\sigma \by_i) = \sgn(\bx_i)$. Let $S \{i \in [n] : \sgn(\bx_i) = \sgn(\sigma \by_i)\}$  \[\la \sigma \by , \bx\ra = \sum_i \sigma \by_i \bx_i \leq \sum_{i \in S} \vert \by_i \vert \leq \sqrt{\vert S\vert}\|\by\|_2^2 \]
Applying the lower bound and squaring proves $\vert S\vert \geq \frac{0.98 n}{100}$. }
\end{proof}

\begin{claim} $\Prx[\calbE_3 \wedge \ol{\calbE_2} \wedge \ol{\calbE_1}\wedge \ol{\calbE_0}] \leq 0.01$.
\end{claim}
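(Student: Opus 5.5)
The plan is to exploit the fact that, by Lemma~\ref{clm:random-process}, the index $\bell$ is uniform on $[n]$ and independent of the pair $(\bx, \bA)$. Combined with Claim~\ref{cl:few-disagree}, which says that on the good event at most $0.01n$ coordinates are ``bad'', this gives that $\bell$ lands on a bad coordinate with probability at most $0.01$. The only subtlety is that the quantities in the reduction (the output $\by$, which is only produced when $\ol{\calbE_1}$ holds, and the sign $\sigma$, which is chosen using a sum over $i \neq \bell$) formally depend on $\bell$. So the first step is to replace them by versions that are functions of $(\bx, \bA)$ alone.

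Concretely, let $\by^* = \calA(\bA)$ be the output of the deterministic decision tree run to completion on $\bA$. This is a function of $\bA$ only. On $\ol{\calbE_1}$ the reduction never truncates, so $\by = \by^*$ there. Events $\calbE_0$ and $\calbE_2$ (the latter stated for $\by^*$) are then functions of $(\bx,\bA)$. Next let $\sigma^* = \sgn\langle \by^*, \bx\rangle$, again a function of $(\bx, \bA)$. The proof of Claim~\ref{cl:few-disagree} shows that on $\ol{\calbE_0} \wedge \ol{\calbE_1} \wedge \ol{\calbE_2}$ we have $|\langle \by, \bx\rangle| \geq \sqrt{.99n}\,\|\by\|_2 > 10|\by_{\bell}|$. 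Hence the sign chosen in Figure~\ref{fig:reduction} from $\sum_{i \neq \bell} \by_i \bz_i$ agrees with $\sigma^*$, i.e.\ $\sigma = \sigma^*$ on this event.

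Now define the set
\[ \bS^* = \{ i \in [n] : \sgn(\sigma^* \by^*_i) = \bx_i \}, \]
which is determined by $(\bx, \bA)$. Let $\calbG$ be the event that $\calbE_0$ fails and $\calbE_2$ fails for $\by^*$; this is also determined by $(\bx,\bA)$. Applying Claim~\ref{cl:few-disagree} to $\by^*$ and $\sigma^*$ (its argument uses only $\ol{\calbE_0}$ and the Rayleigh quotient guarantee, never $\bell$) gives $|\bS^*| \geq 0.99n$ whenever $\calbG$ holds. Using $\bx_{\bell} = \bb$ from Lemma~\ref{clm:random-process}, the event $\calbE_3 \wedge \ol{\calbE_2} \wedge \ol{\calbE_1} \wedge \ol{\calbE_0}$ is contained in $\calbG \wedge \{\bell \notin \bS^*\}$. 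If $\by_{\bell} = 0$ the sign is not $\pm 1$, so such an index is automatically counted outside $\bS^*$, which is consistent.

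Finally, condition on $(\bx, \bA)$. By the independence of $\bell$ from $(\bx, \bA)$ in Lemma~\ref{clm:random-process}, we get $\Prx[\bell \notin \bS^* \mid \bx, \bA] = (n - |\bS^*|)/n \leq 0.01$ on $\calbG$. Averaging over $(\bx,\bA)$ yields the claimed bound of $0.01$.

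The main obstacle is exactly the decoupling step: making sure that $\sigma$, $\by$, and the bad set do not secretly depend on $\bell$. This is handled by showing that $\sigma = \sigma^*$ via the large-correlation bound, and that $\by = \by^*$ on $\ol{\calbE_1}$.
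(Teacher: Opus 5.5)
Your proposal is correct and follows the paper's argument: on the good event, Claim~\ref{cl:few-disagree} leaves at most $0.01n$ coordinates with $\sgn(\sigma \by_i) \neq \bx_i$, and since $\bell$ is uniform and independent of $(\bx, \bA)$ by Lemma~\ref{clm:random-process}, it lands on one of those coordinates with probability at most $0.01$. Your replacement of $\by$ and $\sigma$ by the $\bell$-free versions $\by^*$ and $\sigma^* = \sgn\langle \by^*, \bx\rangle$ is justified by the no-truncation observation and the bound $|\langle \by, \bx\rangle| > 10|\by_{\bell}|$; it makes rigorous a decoupling step that the paper's one-line proof leaves implicit.
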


\begin{proof}
Whenever none of $\calbE_0, \calbE_1, \calbE_2$ occur, Claim~\ref{cl:few-disagree} implies that for $0.99$-fraction of coordinates $i \in [n]$, $\sgn(\sigma \by_i) = \bx_i$, and since $\bx_{\bell} = \bb$ and $\bell$ is independent and $\bx$ and $\bA$ (using Claim~\ref{clm:random-process} once more), $\bell \sim [n]$ must fall  among the $0.01$-fraction of coordinates that $\sgn(\sigma \by_i) \neq \bx_i$ for $\calbE_3$ to occur while $\calbE_0, \calbE_1, \calbE_2$ do not occur, and this concludes the proof.
\ignore{
The proof is again straightforward. We first note that $\bx_{\bell} = \bb$ so the event $\calE_3$ can be rewritten as $\sgn(\sigma \by_{\bell}) \neq \bx_{\bell}$. We then apply claim \ref{clm:random-process}, to instead analyze $\bx\sim \{-1,1\}^n$ and $\bA \sim \calD_1(\eps,\bx)$. Note that the probability over $\bell$ is unaffected by the change to taking the probability over $\bx\sim \{-1,1\}^n$ and $\bA \sim \calD_1(\eps,\bx)$, as it is independent from $\bx$ and $\bA$. 
\begin{align*}
\Pr_{\substack{\bb, \bz, \bell, \bA \\ \bc_1, \dots, \bc_n}}[\calE_3 \wedge \ol{\calE_2} \wedge \ol{\calE_1}\wedge \ol{\calE_0}] 
&= \Pr_{\substack{\bb, \bz, \bell, \bA \\ \bc_1, \dots, \bc_n}}[\calE_3 \wedge \ol{\calE_2}\wedge \ol{\calE_1}\wedge \ol{\calE_0}] \\
&= \Pr_{\substack{\bb, \bz, \bell, \bA \\ \bc_1, \dots, \bc_n}}[\sgn(\sigma \by_{\bell}) \neq \bx_{\bell}\wedge \ol{\calE_2} \wedge \ol{\calE_0} ] \\
&= \Pr_{\substack{\bA,\bx,\bell}}[\sgn(\sigma \by_{\bell}) \neq \bx_{\bell}\wedge \ol{\calE_2}\wedge \ol{\calE_0} ]
\end{align*}
Now we can apply claim 1.5 using $\ol{\calE_0}$ and $\ol{\calE_2}$ to argue that $\vert \{i \in [n] \ : \  \sgn(\sigma \by_{i}) = \bx_{i}\} \vert \geq 0.98 n$. Let $\bD$ be the set of coordinates where $\{i \in [n] \ : \  \sgn(\sigma \by_{i}) \neq \bx_{i}\}$. Clearly $\vert\bD \vert = n -  \vert \{i \in [n] \ : \  \sgn(\sigma \by_{i}) = \bx_{i}\}\vert \leq 0.02n$. The conclusion then follows from noting that $\bell$ is drawn uniformly and independently from $\bA$ and $\bx$. 
\begin{align*}
\Pr_{\substack{\bA,\bx,\bell}}[\sgn(\sigma \by_{\bell}) \neq \bx_{\bell}\wedge \ol{\calE_2}\wedge \ol{\calE_0} ] 
&= \Pr_{\substack{\bA,\bx,\bell}}[\bell \in \bD  \wedge \ol{\calE_2} \wedge \ol{\calE_0}] \\
&\leq \E_{\bA, \bx}\Pr_{\substack{\bell}}[\bell \in \bD  \cdot \mathbbm{1} \left \{ \ol{\calE_2} \wedge \ol{\calE_0} \right \} ] \\
&\leq \E_{\bA, \bx}[\frac{|\bD|}{n}  \cdot \mathbbm{1} \left \{ \ol{\calE_2} \wedge \ol{\calE_0} \right \} ] \\
&\leq 0.02 
\end{align*}}
\end{proof}

\subsection{Putting Everything Together: Proof of Theorem~\ref{thm:main-lb-intro}}

For a fixed $\eps$ to be at least a large constant-factor of $1/\sqrt{n}$, we consider the distribution over input matrix $\bA \in \{-1,1\}^{n\times n}$ given by sampling $\bx \sim \{-1,1\}^n$ and then letting $\bA \sim \calD_1(\eps, \bx)$ from Definition~\ref{def:hard-distribution}. If, for $\eps_0 = \eps / 100$, there exists a randomized algorithm which outputs a vector whose Rayleigh quotient is within $\eps_0 n$ of maximum using at most $q$ queries, we may consider a deterministic algorithm (in the sense of Definition~\ref{def:det-alg}) satisfying (\ref{eq:guarantee-lb}) and using at most the same query complexity $q$. Then, the reduction in Figure~\ref{fig:reduction} in Subsection~\ref{sec:reduction} gives a procedure that given access to $n$ draws $\bc_1,\dots, \bc_n \sim \calC(\eps \bb)$ for $\bb \sim \{-1,1\}$, can output a random variable $\hat{\bb}$ which is guaranteed to agree with $\bb$ with probability at least $0.96$.

The crucial observation is the following:
\begin{itemize}
    \item In Figure~\ref{fig:reduction}, the query complexity to the row/column $\bell$ is guaranteed to be at most $200q/n$. 
    \item Furthermore, only entries of $\bA$ in row/column $\bell$ are, by the construction of $\bA$ (see Definition~\ref{def:coin-distribution}), dependent on $\bc_1,\dots, \bc_n$.
\end{itemize}
Thus, even though Figure~\ref{fig:reduction} has access to $n$ draws $\bc_1,\dots, \bc_n \sim \calC(\eps \bb)$, it only reads at most $200 q/n$ distinct values of $\bc_1,\dots, \bc_n$ on any execution. Since $\bc_1,\dots, \bc_n$ are all identically distributed, it suffices for the algorithm to read the first $s = 200q/n$ values and maintain a mapping $\pi \colon [s] \to [n]$ which re-assigns the $k$-th coin to $\bc_{\pi(k)}$. This results in a (randomized) function $\boldf \colon \{-1,1\}^{s} \to \{-1,1\}$ given by the output of Figure~\ref{fig:reduction} for solving Problem~\ref{prob:coin}, and therefore,
\[ \Omega(1/\eps^2) = s = \frac{200 q}{n} \qquad\Longrightarrow \qquad q = \Omega(n/\eps^2),\]
where $q$ is the query complexity of outputting an vector whose Rayleigh quotient is at least $\eps_0 n = \eps n /100$ from maximum, and thus, $q = \Omega(n/\eps_0^2)$.

\ignore{
\subsection{Learning $\hat{\bb}$ in few samples}
As described in \ref{alg:lb-reduction}, the reduction takes in a sequence of random variables $\bc_1,\dots \bc_n$, but instead one could equivalently considered the following random experiment, where the reduction receives a sample from $\calC(\eps \bb)$ when $\Alg(\bA,\eps)$ only when the queried a coordinate in $\bell$-th row or column.
\begin{framed}
\textbf{Experiment:}   Given accuracy parameter $\eps \in (\frac{1200}{\sqrt{n}},1)$, oracle $C$ which samples a $\bc \sim \calC(\eps \bb)$, and deterministic algorithm $\textsc{ALG}$ obtained from Lemma \ref{lem:yaos-lemma} 

\begin{enumerate}
\item Draw $\bell \sim [n]$ and $\bz\sim\{-1,1\}^{[n]\setminus \{\ell\}}$

    \item Run $\textsc{ALG}(\eps,\bA)$ while only drawing one entry of $\bA$ at a time. The algorithm also counts the number of queries made to the $\bell$-th row / column, and if it is ever greater than $1/(10\eps^2)$, output $\hat{\bb} \sim\{-1,1\}$.
    \begin{itemize}
        \item If $\bA_{i,i}$ is the query, set $\bA_{i,i}=1$ for any $i \in [n]$.
        \item Else if $\bA_{i,j}$ for $i=\bell$ or $j=\bell$ is the $k$-th query to the $\bell$-th row / column, use $C$ to sample $\bc_k\sim \calC(\eps\bb)$ and set $\bA_{\bell,j} = \bc_k \bz_j$ (with $\bA_{\bell,\bell} =1$).
        \item Else $i,j\neq \bell$ and draw $\bA_{i,j} =  \begin{cases}
            \bz_i \bz_j &\text{Else with } \Pr(2\eps)\\
        \bB_{i,j} &\text{Else with Pr. } \Pr(1-2\eps)\\
        \end{cases} $
    \end{itemize}

    \item Select $\sigma \in \{-1,1\}$ such that $\sum_{i\in [n]\setminus \{\bell\}} \sigma \by_i \cdot \bz_i >0 $
    
    \item  Output $\hat{\bb} = \sgn(\sigma \by_{\bell})$ as the estimate of $b$.
\end{enumerate}
\end{framed}

 Under this view, it's clear that this experiment makes at most $\frac{1}{10\eps^2}$ many draws from $\calC(\eps,\bb)$. If the distribution of $\bA$ is the same in this experiment and in the reduction \ref{alg:lb-reduction}, then it's clear one could have learned $\bb$ in fewer than $\frac{1}{10\eps^2}$ many draws from $\calC(\eps,\bb)$ violating the well known lower bound for learning the bias of a coin.
 
 \begin{lemma}\label{lem:low-query-complexity} For the output $\hat{\bb}$ of the experiment, $\Pr_{\substack{\bb, \bz, \bell, 
    \bA}}[\hat{\bb} = \bb ] \geq 0.9$     
 \end{lemma}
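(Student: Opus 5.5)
The plan is to reduce the lemma to the analysis of the reduction in Figure~\ref{fig:reduction}. I would first show that the experiment and the reduction induce the same joint distribution on $(\bb,\bell,\bz,\bA,\by,\hat{\bb})$, once the thresholds are matched. Then I would reuse the union bound (\ref{eq:lb-faliure-prob}) and Claims~\ref{cl:spectral-norm}--\ref{cl:few-disagree}, together with the final claim bounding $\calbE_3$.

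\textbf{Step 1 (coupling the lazy coin draws).} Without loss of generality $\calA$ never queries the same entry twice, since repeated answers can be cached in the decision tree, and every diagonal query returns $1$. In the reduction, the entry $\bA_{\bell,j}$ is set to $\bc_j\bz_j$. In the experiment, it is set to $\bc_k\bz_j$, where $k$ is the index of this query among the queries to row/column $\bell$. The coins are i.i.d.\ from $\calC(\eps\bb)$ and independent of everything else, and the map $k\mapsto j$ is injective along any execution path. So I would argue by induction on the depth of the tree that the law of each answered entry, conditioned on the history so far, is identical in the two processes. Entries outside row/column $\bell$ are generated identically from $\bz$ and $\bB$ in both. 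Hence the execution path, the output $\by$, the choice of $\sigma$, and $\hat{\bb}$ have the same joint law with $(\bb,\bell,\bz)$. By Lemma~\ref{clm:random-process}, $\bA$ is then distributed as $\calD_1(\eps,\bx)$ for uniform $\bx$ with $\bx_{\bell}=\bb$, and $\bell$ is independent of $(\bx,\bA)$.

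\textbf{Step 2 (failure events).} I would decompose $\Pr[\hat{\bb}\neq\bb]$ exactly as in (\ref{eq:lb-faliure-prob}), replacing the threshold $200q/n$ in $\calbE_1$ by $1/(10\eps^2)$. Claims~\ref{cl:spectral-norm}, \ref{cl:good-y}, \ref{cl:few-disagree} and the bound on $\calbE_3$ do not depend on the threshold, so they transfer verbatim via Step 1 and contribute at most $0.03$ in total. For $\calbE_1$, the row/column query counts satisfy $\sum_\ell q_\ell(\bA)\le 2q\le n/(500\eps^2)$ under the assumed budget $q\le n/(1000\eps^2)$. Since $\bell$ is independent of $\bA$, this gives $\Ex_{\bell}[q_{\bell}(\bA)]\le 1/(500\eps^2)$. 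Markov's inequality then bounds $\Pr[q_{\bell}(\bA)>1/(10\eps^2)]$ by $0.02$. Summing, $\Pr[\hat{\bb}\neq\bb]\le 0.05$, so $\Pr[\hat{\bb}=\bb]\ge 0.9$.

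\textbf{Main obstacle.} I expect Step 1 to be the delicate point. The queries are adaptive, so the set of columns $j$ whose coins are consumed depends on earlier coins. The reduction draws coins by column index, while the experiment draws them by arrival order. The induction must therefore condition on the full transcript at each node and use that an unread coin is a fresh $\calC(\eps\bb)$ draw independent of the transcript given $\bb$. Without this conditional-independence argument the two processes could appear to differ.
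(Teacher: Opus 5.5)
Your proposal is correct and follows the paper's own route. The paper also argues by deferred decisions, conditioning on the history at each query, that the experiment's answers have the same law as those of $\calD_2(\eps,\bell,\bz,\bc_1,\dots,\bc_n)$ with $\bc_1,\dots,\bc_n\sim\calC(\eps\bb)$, and then inherits $\Pr[\hat{\bb}=\bb]\geq 0.9$ from the analysis of the reduction. Your assumption that $\calA$ never re-queries an entry (which should also cover the symmetric pair $(\bell,j)$ and $(j,\bell)$) is left implicit in the paper but is genuinely needed, since otherwise the experiment would draw a fresh coin and could answer the same entry inconsistently.
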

\begin{proof} Formally, we will prove the lemma by analyzing the distribution of $\bA$ generated by the experiment and show by the principle of deferred decisions, that it is the same as the distribution $\calD_2(\eps,\bb,\bell,\bz,\bc_1,\dots,\bc_n)$ for $\bc_1,\dots\bc_n \sim \calC(\eps \bb)$. Consider a fixed setting of $\bb \in {-1,1}$, $\bell \in [n]$, $\bz \in {-1,1}^{[n]\setminus \{\bell\}}$. Let $f_{t,k-1}(\Alg,\bell,\bb,\bz,\bc_1,\dots\bc_k)$ be the index of $t$-th query (of which $k-1$ are in the $\bell$-th row or column) made by $\Alg$. Now consider the execution of the experiment up to the $t$-th query, the randomness up to time $t$ is fixed, fixing the query made by the $\Alg$, we analyze the distribution of the queried entry of $\bA$
\begin{align*}
&\ \sum_{\substack{ \bell = \ell,\bz = z,\bb = b\\\bc_1,\dots,\bc_{k-1} = c_1,\dots,c_{k-1}}}\Pr\left [\bA_{\bell,i} = \bb z_i \ \wedge \ \substack{ f_{t,k-1}(\Alg,\bell,\bb,\bz,\bc_1,\dots \bc_{k-1}) = (\bell,i) \\ \bell = \ell,\bz = z,\bb = b\\\bc_1,\dots,\bc_{k-1} = c_1,\dots,c_{k-1}}\right ] \\
&= \sum_{\substack{ \bell = \ell,\bz = z,\bb = b\\\bc_1,\dots,\bc_{k-1} = c_1,\dots,c_{k-1}}}\Pr\left [\bc_k z_i = \bb z_i \ \wedge \ \substack{ f_{t,k-1}(\Alg,\bell,\bb,\bz,\bc_1,\dots \bc_{k-1}) = (\bell,i) \\ \bell = \ell,\bz = z,\bb = b\\\bc_1,\dots,\bc_{k-1} = c_1,\dots,c_{k-1}}\right ] \\
&=\sum_{\substack{ \bell = \ell,\bz = z,\bb = b\\\bc_1,\dots,\bc_{k-1} = c_1,\dots,c_{k-1}}}\Pr_{\bc_k}\left [\bc_k  = \bb  \big\vert \substack{ f_{t,k-1}(\Alg,\bell,\bb,\bz,\bc_1,\dots \bc_{k-1}) = (\bell,i) \\ \bell = \ell,\bz = z,\bb = b\\\bc_1,\dots,\bc_{k-1} = c_1,\dots,c_{k-1}}\right ] \cdot \Pr\left [\substack{ f_{t,k-1}(\Alg,\bell,\bb,\bz,\bc_1,\dots \bc_{k-1}) = (\bell,i) \\ \bell = \ell,\bz = z,\bb = b\\\bc_1,\dots,\bc_{k-1} = c_1,\dots,c_{k-1}}\right ] 
\end{align*}
Now using the independence of the sample $\bc_k\sim \calC(\eps,\bb)$ from the execution of the algorithm up to time step $t$, we get .
\begin{align*}
&=\sum_{\substack{ \bell = \ell,\bz = z,\bb = b\\\bc_1,\dots,\bc_{k-1} = c_1,\dots,c_{k-1}}}\Pr_{\bc_k}\left [\bc_k  = \bb \right ] \cdot \Pr\left [\substack{ f_{t,k-1}(\Alg,\bell,\bb,\bz,\bc_1,\dots \bc_{k-1}) = (\bell,i) \\ \bell = \ell,\bz = z,\bb = b\\\bc_1,\dots,\bc_{k-1} = c_1,\dots,c_{k-1}}\right ] \\
&=\sum_{\substack{ \bell = \ell,\bz = z,\bb = b\\\bc_1,\dots,\bc_{k-1} = c_1,\dots,c_{k-1}}} \left ( \frac{1}{2}+\eps \right )\cdot \Pr\left [\substack{ f_{t,k-1}(\Alg,\bell,\bb,\bz,\bc_1,\dots \bc_{k-1}) = (\bell,i) \\ \bell = \ell,\bz = z,\bb = b\\\bc_1,\dots,\bc_{k-1} = c_1,\dots,c_{k-1}}\right ] \\
&= \left (\frac{1}{2}+\eps \right ) \\
&= \Pr_{\substack{\bell,\bb,\bz,\bA \\ \bc_1,\dots\bc_n}}\left [\bA_{\bell,i} = \bb \bz_i\right ]
\end{align*}
We repeat the argument for the case where $f_{t,k-1}(\Alg,\bell,\bb,\bz,\bc_1,\dots \bc_{k-1})$ queries $\bA_{i,j}$ for $i,j\neq \bell$.
\begin{align*}
&\ \sum_{\substack{ \bell = \ell,\bz = z,\bb = b\\\bc_1,\dots,\bc_{k-1} = c_1,\dots,c_{k-1}}}\Pr\left [\bA_{i,j} = z_i z_j \ \wedge \ \substack{ f_{t,k-1}(\Alg,\bell,\bb,\bz,\bc_1,\dots \bc_{k-1}) = (i,j) \\ \bell = \ell,\bz = z,\bb = b\\\bc_1,\dots,\bc_{k-1} = c_1,\dots,c_{k-1}}\right ] \\
&=\sum_{\substack{ \bell = \ell,\bz = z,\bb = b\\\bc_1,\dots,\bc_{k-1} = c_1,\dots,c_{k-1}}}\Pr_{\bB_{i,j}}\left [\bA_{i,j}  = z_i z_j  \big\vert \substack{ f_{t,k-1}(\Alg,\bell,\bb,\bz,\bc_1,\dots \bc_{k-1}) = (i,j) \\ \bell = \ell,\bz = z,\bb = b\\\bc_1,\dots,\bc_{k-1} = c_1,\dots,c_{k-1}}\right ] \cdot \Pr\left [\substack{ f_{t,k-1}(\Alg,\bell,\bb,\bz,\bc_1,\dots \bc_{k-1}) = (i,j) \\ \bell = \ell,\bz = z,\bb = b\\\bc_1,\dots,\bc_{k-1} = c_1,\dots,c_{k-1}}\right ] \\
&=\sum_{\substack{ \bell = \ell,\bz = z,\bb = b\\\bc_1,\dots,\bc_{k-1} = c_1,\dots,c_{k-1}}}\left (2\eps+ (1-2\eps)\Pr_{\bB_{i,j}}\left [\bB_{i,j}  = z_i z_j  \vert \substack{ f_{t,k-1}(\Alg,\bell,\bb,\bz,\bc_1,\dots \bc_{k-1}) = (i,j) \\ \bell = \ell,\bz = z,\bb = b\\\bc_1,\dots,\bc_{k-1} = c_1,\dots,c_{k-1}}\right ] \right )\cdot \Pr\left [\substack{ f_{t,k-1}(\Alg,\bell,\bb,\bz,\bc_1,\dots \bc_{k-1}) = (i,j) \\ \bell = \ell,\bz = z,\bb = b\\\bc_1,\dots,\bc_{k-1} = c_1,\dots,c_{k-1}}\right ] \\
&=\sum_{\substack{ \bell = \ell,\bz = z,\bb = b\\\bc_1,\dots,\bc_{k-1} = c_1,\dots,c_{k-1}}} \left ( \frac{1}{2}+\eps \right )\cdot \Pr\left [\substack{ f_{t,k-1}(\Alg,\bell,\bb,\bz,\bc_1,\dots \bc_{k-1}) = (i,j) \\ \bell = \ell,\bz = z,\bb = b\\\bc_1,\dots,\bc_{k-1} = c_1,\dots,c_{k-1}}\right ] \\
&= \left (\frac{1}{2}+\eps \right ) = \Pr_{\substack{\bell,\bb,\bz,\bA \\ \bc_1,\dots\bc_n}}\left [\bA_{i,j} = \bz_i \bz_j\right ]
\end{align*}
The final case of $f_{t,k-1}(\Alg,\bell,\bb,\bz,\bc_1,\dots \bc_{k-1}) = (i,i)$ results in a fixed draw of $\bA_{i,i}=1$. Since $\bA_{i,j}$ is supported on $\{-1,1\}$, we can conclude that the distribution of $\bA_{i,j}$ generated by the experiment at step $t$ conditioned on the prior draw of $\bc_1,\dots\bc_{k-1}$ is equivalent the distribution of $\bA_{i,j}$ generated by $\calD_2(\eps,\bb,\bell,\bz,\bc_1,\dots,\bc_n)$ for $\bc_1,\dots\bc_n \sim \calC(\eps \bb)$. By the distributional equivalence of the two random processes and lemma \ref{lem:lb-main-reduction} that the probability that $\hat{\bb}=\bb$ is at least $0.9$, the proof of the lemma is concluded:
\begin{align*}
    \Pr_{\bb, \bz, \bell, 
    \bA}[\hat{\bb} = \bb] =  \Pr_{\substack{\bb, \bz, \bell, 
    \bA\\ \bc_1,\dots\bc_n}}[\hat{\bb} = \bb] \geq 0.9
\end{align*}
\end{proof}

By construction, the experiment makes no more than $1/10\eps^2$ queries to the $\bell$-th row or column of $\bA$, meaning it uses at most $1/10\eps^2$ draws from $\calC(\eps,\bb)$. This contradicts the well known lower bound for learning the bias of a random coin, concluding the proof of Theorem \ref{thm:general-lower-bound}  
}

\section{Local Computation Algorithm for Sparsest Cut}\label{sec:sparse-cut}

This section gives an application of Theorem~\ref{thm:main-intro} to providing local access to sparse cuts in dense graphs. We do this by giving a local computation algorithm which can execute the constructive proof of Cheeger's inequality~\cite{AM85}. In particular, we prove theorem quoted in the introduction (restated below -- see Definition~\ref{def:sparsity}).

\localcheeger*

\subsection{Reduction to $(\alpha, \beta)$-Dense Graphs, for $\alpha, \beta = \poly(\eps)$: Proof of Theorem~\ref{thm:cheeger-intro}}

In this section, we show that it suffices to construct a local computation algorithm for certain dense graphs (i.e., graphs where most of the vertices have a high degree). This will be important for our use of Theorem~\ref{thm:main-intro}, since we will need the normalized adjacency matrix to be a bounded-entry matrix. For this section, consider a fixed input graph $G = ([n], E)$ which we provide query access to, as well as a fixed accuracy parameter $\eps > 0$ (notice, we may assume that $G$ contains at least $\eps n^2$ edges, as otherwise, $\phi_{\eps}(G)$ is trivially $1$). The goal will be to obtain the conclusions of Theorem~\ref{thm:main-intro} while assuming a local computation algorithm for certain ``dense'' graphs (Definition~\ref{def:dense-graph} below). 

\begin{definition}\label{def:dense-graph}
For $\alpha, \beta \in (0, 1)$, a graph is $(\alpha,\beta)$-dense if at most $\beta$-fraction of vertices are connected to fewer than $\alpha$-fraction of vertices.
\end{definition}

\begin{lemma}\label{lem:noise-addition}
For any $\alpha, \beta \in (\eps^{c}, \eps / c)$ for large enough constant $c$. Let $\bN \sim \calG(n, 4\alpha)$ be an Erd\"{o}s-Reyni graph and let $\bH = G \cup \bN$. The following occurs with high probability over $\bN$:
\begin{itemize}
\item For any $T \subset [n]$ where $\vol_{\bH}(T) \leq \eps n^2/2$, let $\bH'$ be the subgraph of $\bH$ induced by $\ol{T}$. Then, $\bH'$ is $(\alpha, \beta)$-dense and satisfies $\phi_{\eps/2}(\bH') \leq 2\phi_{\eps}(G) + O(\alpha / \eps)$.
\item Any $S \subset [n]$ with $\eps n^2/8 \leq \vol_{\bH}(S) \leq \vol_{\bH}(\ol{S})$ will satisfy $\phi_{G}(S) \leq \phi_{\bH}(S) + O(\alpha/\eps)$.
\end{itemize}
\end{lemma}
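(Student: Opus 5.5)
The whole argument rests on one concentration step: with high probability the noise graph $\bN \sim \calG(n,4\alpha)$ is ``uniformly spread,'' so that for every pair of (not necessarily disjoint) sets $S,T\subset[n]$,
\[ \bigl|E_{\bN}(S,T) - 4\alpha |S||T|\bigr| \leq O\bigl(\sqrt{\alpha}\, n^{3/2}\bigr), \]
and every degree satisfies $\deg_{\bN}(i) \in [3\alpha n, 5\alpha n]$. This is a Chernoff bound combined with a union bound over the $4^n$ pairs $(S,T)$: the deviation $\sqrt{\alpha n^2 \cdot n}$ beats the $n$ bits of entropy. For degrees it is Chernoff with a union bound over $n$ vertices, using $\alpha n \gg \log n$ because $\alpha \geq \eps^c$ and $n$ is large. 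The rest is deterministic counting, using that every relevant volume is $\Omega(\eps n^2)$ while the noise contributes only $O(\alpha n^2)=O(\alpha/\eps)\cdot \eps n^2$.

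For the second bullet, take $S$ with $\eps n^2/8 \leq \vol_{\bH}(S)\leq\vol_{\bH}(\ol S)$. Since $G\subset \bH$, we have $E_G(S,\ol S)\leq E_{\bH}(S,\ol S)$. The volumes satisfy $\vol_G(S) \geq \vol_{\bH}(S) - \vol_{\bN}(S) \geq \vol_{\bH}(S) - 5\alpha n|S|$, and likewise for $\ol S$. Because $5\alpha n|S| \leq 5\alpha n^2 = O(\alpha/\eps)\cdot\vol_{\bH}(S)$, the minimum volume in $G$ is at least $(1-O(\alpha/\eps))$ times the minimum volume in $\bH$. This gives $\phi_G(S) \leq \phi_{\bH}(S)(1+O(\alpha/\eps)) \leq \phi_{\bH}(S)+O(\alpha/\eps)$, using $\phi\leq 1$.

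For the first bullet, fix $T$ with $\vol_{\bH}(T)\leq \eps n^2/2$. Every vertex has $\bN$-degree at least $3\alpha n$, so $|T| \leq \eps n/(6\alpha)$. This bound alone is too weak, so instead I would argue directly. \emph{Density:} a vertex $i\notin T$ has $\bH'$-degree at least $\deg_{\bN}(i) - E_{\bN}(\{i\},T)$. The number of vertices $i$ with $E_{\bN}(\{i\},T) \geq 2\alpha n$ is small by applying the pair-concentration to $S=\{\text{such } i\}$ and $T$, which forces $|S|\cdot 2\alpha n \lesssim 4\alpha|S||T| + \sqrt{\alpha}n^{3/2}$. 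It is here that I would actually need $|T|$ small, say $|T| \leq n/4$, and I expect this to be the main obstacle. To get $|T|\leq n/4$, I would use $\vol_{\bN}(T)\geq 3\alpha n|T|$, which needs $\alpha \gtrsim \eps$ to be useful; with $\alpha\leq \eps/c$ this fails. So I would instead argue through the $G$-volume: $\vol_G(T)\leq \eps n^2/2$ together with a Markov-type bound on how many low-$G$-degree vertices $T$ can absorb. If that is insufficient, I would fall back to concluding density only for the vertices whose $\bN$-neighbourhood is spread, and lower $\beta$ accordingly by the sparse set bound $|S| = O(n/(\sqrt{\alpha}\, n)^{\ldots})$, i.e.\ $|S| \leq \beta n$ for $n$ large.

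\emph{Sparsity:} let $S^*$ be the optimal $\eps$-dense cut of $G$ and consider $S' = S^*\setminus T$ inside $\bH'$. The crossing edges satisfy $E_{\bH'}(S',\ol{S'})\leq E_G(S^*,\ol{S^*}) + E_{\bN}(S^*,\ol{S^*})$, and the noise term is at most $O(\alpha n^2)$. Removing $T$ lowers each side's volume by at most $2\vol_{\bH}(T)\leq \eps n^2$. Removal of $T$ lowers a side's volume only by edges incident to $T$; I will need a factor-two slack here, which is where the ``$2\phi_\eps(G)$'' and the $\eps/2$ density threshold come from. Concretely, I would argue that each side keeps volume at least $\eps n^2 - \vol_{\bH}(T) \geq \eps n^2/2$, using $\vol(S^*)\geq \eps n^2$; this might require dividing volumes in $\bH$ versus $G$ by constants. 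Dividing the crossing-edge bound by a minimum volume that is at least half of the original gives $\phi_{\bH'}(S') \leq 2\phi_\eps(G) + O(\alpha n^2/(\eps n^2)) = 2\phi_\eps(G) + O(\alpha/\eps)$.
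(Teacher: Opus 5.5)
Your second bullet is correct and is essentially the paper's argument: the paper bounds the noise by the $O(\alpha n^2)$ total edges of $\bN$, and you bound it by per-vertex degrees. The first bullet, however, has two genuine gaps.

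\textbf{Density.} Your route compares $\deg_{\bN}(i)-E_{\bN}(\{i\},T)$ with a threshold of order $\alpha n$. That requires $|T|$ to be a small fraction of $n$, and no argument can give this. A vertex isolated in $G$ has $\bH$-degree at most $5\alpha n$, so up to $\eps n/(10\alpha)\ge n$ such vertices fit into $\bH$-volume $\eps n^2/2$ (for $c\ge 10$).

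For example, take $G$ to be a clique on about $\sqrt{2\eps}\,n$ vertices plus isolated vertices, and let $T$ be the isolated vertices. Then every $i\notin T$ has $E_{\bN}(\{i\},T)\approx 4\alpha|T|>2\alpha n$. Your $G$-volume/Markov fallback cannot help, since this $T$ has no $G$-volume at all.

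The missing idea is that $(\alpha,\beta)$-density of $\bH'$ is measured against its own vertex count $t=|\ol{T}|$. Moreover $t\ge \eps n/2$, because $\vol_{\bH}(\ol{T})\ge \eps n^2/2$ and degrees are at most $n$. So it suffices that, with high probability, every induced subgraph of $\bN$ on $t\ge \eps n/2$ vertices is $(\alpha,\beta)$-dense; the size of $T$ never enters.

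The paper proves exactly this. For fixed $L$ with $|L|=t$ and $B\subset L$ with $|B|=\beta t$, it bounds the probability that every vertex of $B$ has degree below $\alpha t$ inside $L$ by $e^{-\Omega(\alpha t|B|)}$, after decorrelating the degrees. Since $\alpha\beta t^2\ge \alpha\beta\eps^2n^2/4\gg n$, this survives a union bound over the $4^n$ pairs $(L,B)$.

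Your own pair-concentration also works if you apply it to $(B,\ol{T})$ instead of $(B,T)$. It gives $3\alpha|B||\ol{T}|\lsim \sqrt{\alpha}\,n^{3/2}$, i.e.\ $|B|=O(\sqrt{n}/(\sqrt{\alpha}\,\eps))\le \beta|\ol{T}|$ for large $n$.

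\textbf{Sparsity.} You correctly observe that $S^*\setminus T$ can lose up to $2\vol_{\bH}(T)$ of volume: it loses the degrees of $S^*\cap T$ and also its own edges into $T$. You then use the bound $\eps n^2-\vol_{\bH}(T)$, which does not follow. With $\vol_G(S^*)=\eps n^2$ and $\vol_{\bH}(T)=\eps n^2/2$, nothing survives.

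This cannot be patched, because the first bullet as stated is false. Let $G$ be disjoint cliques on $A$, with $|A|\approx\sqrt{\eps}\,n$, and on $[n]\setminus A$, so $\phi_{\eps}(G)=0$. Let $T\subset A$ with $|T|$ just under $|A|/2$, chosen so that $\vol_{\bH}(T)\le \eps n^2/2$. In $\bH'$ the set $A\setminus T$ has volume only about $\eps n^2/4$. Hence any cut with both volumes at least $\eps n^2/2$ must put $\Omega(\eps n)$ vertices of the big clique on each side, and so has sparsity $\Omega(1)$. Meanwhile $2\phi_{\eps}(G)+O(\alpha/\eps)=O(1/c)$.

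The paper's proof makes the same slip when it asserts $\vol_{\bH'}(S')\ge \vol_{\bH}(S)-\vol_{\bH}(T)$. The repair is to assume $\vol_{\bH}(T)\le \eps n^2/4$. Then each side keeps at least $\vol_G(\cdot)-\eps n^2/2\ge \tfrac12\vol_G(\cdot)\ge \eps n^2/2$, and your numerator bound gives $2\phi_{\eps}(G)+O(\alpha/\eps)$. The break threshold in the sparsest-cut algorithm, or the constants $2$ and $\eps/2$ in the conclusion, must be adjusted to match.
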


\begin{proof}
Notice that the number of edges in $\bH$ is at least $\eps n^2$ (it contains all edges of $G$, which is size at least $\eps n^2$), and since there are at most $\eps n^2/2$ edges incident on $T$ in $\bH$, $\vol_{\bH}(\ol{T}) \geq \eps n^2 / 2$. Degrees are at most $n$, so $\ol{T}$ is of size at least $\eps n/2$. In order to show that $\bH'$ is $(\alpha,\beta)$-dense, it suffices to show that any induced subgraph of $\bN$ containing at least $\eps n / 2$ vertices will be $(\alpha, \beta)$-dense (as $\bH'$ can only include more edges than any such subgraph). We show this as a standard application of Chernoff bounds: consider any fixed subset $L \subset [n]$ of size $t \geq \eps n /2$ and any fixed subset $B \subset L$ of size $\beta t$. We first upper bound the probability that the set $B$ consists of vertices of degree at most $\alpha t$ within the subgraph induced by $L$:
\begin{align*}
\Prx_{\bG'}\left[ \forall j \in B : \sum_{i \in L \setminus \{ j \}} \ind\left\{ (i,j) \in \bN \right\} \leq \alpha t \right] \leq \left( \Prx\left[ \sum_{i=1}^{t-1} \bX_i \leq \alpha t\right] \right)^{|B|} \leq e^{-\Omega(\alpha t |B|)},
\end{align*}
where since $\alpha t$ is a constant-factor less than half of the expected number of edges incident on any particular vertex $j \in L$ in the induced subgraph of $\bN$.\footnote{We take half of the edges as a convenience, to de-correlate the degrees of $B$.} We take a union bound over at most $2^{2n}$ choices of $L$ and $B$, since $\alpha t |B| \geq \alpha \beta t^2 \geq \alpha \beta \eps^2 n^2$ is significantly larger than $n$ for our setting of $\eps$. This allows us to conclude the union bound.

In order to upper bound $\phi_{\eps/2}(\bH')$, consider the subset $S \subset [n]$ which realizes the bound on $\phi_{\eps}(G)$ and satisfies $\eps n^2 \leq \vol_G(S) \leq \vol_{G}(\ol{S})$, and let $S' = S \cap \ol{T}$. We upper bound $\phi_{\bH'}(S')$ in terms of $\phi_{G}(S)$. The fact $\vol_{\bH}(T) \leq \eps n^2/2$ implies 
\[ \vol_{\bH'}(S') \geq \vol_{\bH}(S) - \vol_{\bH}(T) \geq \vol_{G}(S) - \eps n^2/2 \geq \vol_{G}(S)/2, \]
which is at least $\eps n^2/2$, and this lower bounds the denominator of $\phi_{\bH'}(S')$. Then, the fact $S'$ is a subset of $S$ and $\ol{T} \setminus S'$ is a subset of $\ol{S}$ implies that $E_{\bH'}(S', \ol{T} \setminus S')$ (which is the numerator of $\phi_{\bH'}(S')$) is at most $E_{\bH}(S, \ol{S}) \leq E_{G}(S, \ol{S}) + 8\alpha n^2$, with high probability (since $\bN$ contains at most $8\alpha n^2$ edges with high probability). This implies $\phi_{\eps/2}(\bH') \leq 2\phi_{\eps}(G) + 16\alpha /\eps$.

For the second item, any cut $S \subset [n]$ with $\eps n^2/8 \leq \vol_{\bH}(S) \leq \vol_{\bH}(\ol{S})$ must necessarily satisfy $\min\{ \vol_{G}(S), \vol_G(\ol{S}) \} \geq (\eps/8 - 8\alpha) n^2$ (once, more since $\bN$ contains at most $8\alpha n^2$ edges with high probability). Thus, 
\begin{align*}
\phi_G(S) = \dfrac{E_{G}(S, \ol{S})}{\min\{ \vol_{G}(S), \vol_{G}(\ol{S}) \}} \leq \dfrac{E_{\bH}(S, \ol{S})}{\min\{ \vol_{G}(S), \vol_{G}(\ol{S})\}} &\leq \dfrac{E_{\bH}(S, \ol{S})}{(1 - 64\alpha /\eps) \min\{ \vol_{\bH}(S), \vol_{\bH}(\ol{S}) \}} \\
	&\leq \phi_{\bH}(S) + O(\alpha /\eps)
\end{align*}
as desired.
\end{proof}

\newcommand{\EstimateVol}{\texttt{Estimate-Vol}}

We will assume there exists a local computation algorithm $\calA$ which finds a sparse cut assuming the underlying graph is $(\alpha,\beta)$-dense (Theorem~\ref{thm:cheeger-dense}). Using Lemma~\ref{lem:noise-addition}, we analyze the algorithm in Figure~\ref{fig:prepro-cheeger} to prove Theorem~\ref{thm:cheeger-intro}, which proceeds by iteratively cutting out sparse cuts and analyzing the induced subgraph of remaining vertices. Figure~\ref{fig:prepro-cheeger} assumes access to a subroutine $\EstimateVol(S, H, \xi, \delta)$ which receives query access of the form ``$i \in S$'', as well as query access to the adjacency matrix of a graph $H$ on vertex set $[n]$ and outputs an estimate $\hat{\boldeta}$ of $\vol_{H}(S)$ which is off by at most $\xi n^2$ with probability $1-\delta$. Such an algorithm $\EstimateVol(S, H, \xi, \delta)$ simply works via random sampling, and has query complexity which is a factor of $O(\log(1/\delta) / \xi^2)$ overhead on the query complexity of determining ``$i \in S$,'' and at most $O(\log(1/\delta)/\xi^2)$ queries to the adjacency matrix of $H$.

\begin{theorem}\label{thm:cheeger-dense}
There exists a (randomized) local computation algorithm $\calA$ with query access to the adjacency matrix $G = (V, E)$, an accuracy parameter $\eps > 0$, and a failure probability $\delta > 0$. 
\begin{itemize}
\item $\calA$ draws a subset $\bQ \subset V$ of size $\poly(\log(1/\delta) / (\alpha \beta \eps))$ uniformly at random, and queries the induced subgraph of $\bQ$. On query $i \in V$, $\calA$ queries entries $(i, j)$ for all $j \in \bQ$.  
\item Whenever $G$ is $(\alpha,\beta)$-dense for $\alpha, \beta \leq \eps^c$ for sufficiently large $c$, letting $\bS \subset V$ be the underlying cut that $\calA$ generates,
\begin{align*}
\Prx_{\calA}\left[ \begin{array}{c} \phi_{G}(\bS) \leq O(\sqrt{\phi_{\eps}(G)}) + \eps \\ \alpha^3 n^2 \lsim \vol_{G}(\bS) \leq \vol_G(\ol{\bS}) \end{array} \right] \geq 1-\delta.
\end{align*}
\end{itemize}
\end{theorem}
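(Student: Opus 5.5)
We will implement a local version of the constructive proof of Cheeger's inequality, with the top-eigenvector primitive of Theorem~\ref{thm:main} doing the spectral work.

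\textbf{Step 1: reduce to a bounded-entry matrix.} Let $D$ be the degree matrix. Since $G$ is $(\alpha,\beta)$-dense, all but a $\beta$-fraction of vertices have degree at least $\alpha n$. Let $\bQ$ be the sampled set. For every vertex $i$ we estimate the degree by $\hat{d}(i) = (n/|\bQ|)\,|N(i)\cap \bQ|$. For high-degree vertices this is accurate to within a $(1\pm \poly(\eps))$ factor.

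Now form the deflated, rescaled normalized adjacency
\[ M'_{ij} = \alpha n\left( \frac{\ind\{(i,j)\in E\}}{\sqrt{\hat{d}(i)\hat{d}(j)}} - \frac{\sqrt{\hat{d}(i)\hat{d}(j)}}{\sum_k \hat{d}(k)} \right). \]
For vertices with degree estimate below $\alpha n/2$, zero out the row and column. This makes $\|M'\|_\infty = O(1)$.

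\textbf{Step 2: run the eigenvector algorithm.} Running Theorem~\ref{thm:main} on $M'$ with accuracy $\eps' = \poly(\eps)$ gives local access to $\bx$ with $\langle \bx, M'\bx\rangle/\|\bx\|_2^2 \geq \lambda_{\max}(M') - \eps' n$. The queries needed are exactly rows of $M'$ restricted to the columns selected by $\bT$. These are simulable from edges $(i,j)$ with $j\in\bQ$, where we take $\bT$ to be the sample $\bQ$ itself, together with the stored degree estimates.

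Translating back to the unnormalized scale:
\begin{itemize}
\item The Rayleigh quotient error becomes $\eps'/\alpha$.
\item $\lambda_{\max}(M')/(\alpha n) \geq 1 - 2\phi_\eps(G) - \poly(\eps)$. This uses the sparse cut $S$ of $G$ as a test vector $D^{1/2}(\ind_S/\vol(S) - \ind_{\ol S}/\vol(\ol S))$. It is nearly orthogonal to $\sqrt{\hat d}$, and it has bounded entries because $\vol(S)\geq \eps n^2$.
\item The contribution of low-degree vertices is charged to $\beta$.
\end{itemize}

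\textbf{Step 3: sweep cut.} Set $\bv = \hat D^{-1/2}\bx$. The sweep is performed over $\poly(1/\eps)$ candidate thresholds $\theta$, each defining $\bS_\theta = \{i : \bv_i \geq \theta\}$. Membership in $\bS_\theta$ is a local query: one call to the oracle for $\bx_i$ plus the stored $\hat d(i)$.

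For each threshold, we estimate $E(\bS_\theta,\ol{\bS_\theta})$ and $\vol(\bS_\theta)$ by sampling pairs and vertices using $\EstimateVol$-style routines. Each estimate is accurate to additive $\poly(\eps)n^2$ with failure probability $\delta/\poly(1/\eps)$. We keep the threshold with the best estimated sparsity among those with volume in $[\alpha^3 n^2, \vol(V)/2]$.

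The analysis follows the standard Cheeger argument: a random threshold $\theta$, with $\theta^2$ chosen proportional to measure, yields sparsity $O(\sqrt{1-\lambda})$ plus error terms. Discretizing to a $\poly(\eps)$-grid costs an additive $\eps$. Boosting via the verification remark after Theorem~\ref{thm:main} gives failure probability $\delta$.

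\textbf{Main obstacle: the volume lower bound and balance.} Cheeger's sweep with the deflated vector only guarantees a cut of the smaller side after shifting $\bv$ by its median, so it alone does not give the volume condition. Worse, a Rayleigh-quotient-good $\bx$ could concentrate its mass on a few vertices, giving a tiny set.

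To handle this, I plan to use the facts that $\|\bx\|_\infty^2 \lesssim \|\bx\|_2^2/(\eps'^2 n)$, which comes from the structure of the lift, and that degrees are $\Theta(\alpha n)$ to within $1/\alpha$ factors. Together these give a lower bound of the form $\sum_{i\in S_\theta} d_i \bv_i^2 \gtrsim \alpha^3 n^2\theta^2$ for the chosen level set.

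I will shift by a weighted median estimated via sampling, which ensures $\vol(\bS)\leq \vol(\ol\bS)$. Any thresholds yielding volume below $\alpha^3 n^2$ contribute only $\poly(\alpha)$ of the Rayleigh mass and can be excluded, at an additive $\eps$ cost.
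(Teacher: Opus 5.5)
Your route is the paper's: a noisy deflated normalized adjacency rescaled by $\alpha n$, feasibility via the sparse-cut test vector, Theorem~\ref{thm:main}, $\bv = \hat{D}^{-1/2}\bx$, and a sampling-based choice among $\poly(1/\eps)$ discretized threshold cuts. However, the volume lower bound, which you correctly single out as the main obstacle, does not follow from what you wrote. Your bound $\|\bx\|_\infty^2 \lesssim \|\bx\|_2^2/(\eps'^2 n)$, combined with $\hat d(i) \geq \alpha n/2$, only gives $\|\bv\|_\infty^2 \lesssim \|\bx\|_2^2/(\eps'^2 \alpha n^2)$. So, even granting a numerator of order $\|\bx\|_2^2$, the random-threshold average $\Ex_{\theta}[\vol_G(\bS_\theta)] = \sum_i \deg_G(i)(\bv_i - \tau)_+^2 / \max_i (\bv_i-\tau)_+^2$ is only guaranteed to be $\gtrsim \eps'^2 \alpha n^2$. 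The Rayleigh error $\eps'/\alpha$ must be $O(\eps^2)$, so this is a $\poly(\eps)$ factor below $\alpha^3 n^2$. Discarding all thresholds of volume below $\alpha^3 n^2$ may therefore discard every threshold. Your displayed $\sum_{i \in \bS_\theta} d_i \bv_i^2 \gtrsim \alpha^3 n^2 \theta^2$ also points the wrong way: on $\bS_\theta$ one only has $\sum_{i\in\bS_\theta} d_i \bv_i^2 \geq \theta^2 \vol(\bS_\theta)$.

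What you need is the sharper bound the lift gives in this application: $|\bx_i| \leq \|(M'\bT^{\intercal})_i\|_2 \|\by\|_2 / \hat{\blambda}_{\max}$. Here $\|(M'\bT^{\intercal})_i\|_2 \lesssim \sqrt{n}$, $\hat{\blambda}_{\max} = \Omega(\alpha n)$ by your feasibility step, and $\|\bx\|_2 = \Theta(1)$ by Lemma~\ref{lem:2}. Hence $\|\bx\|_\infty \lesssim 1/(\alpha \sqrt{n})$ and $\|\bv\|_\infty^2 \lesssim 1/(\alpha^3 n^2)$, which is exactly where the paper's $\alpha^3$ comes from. Thresholds of volume below a small constant times $\alpha^3 n^2$ then carry at most a constant fraction of $\Ex_\theta[\vol_G(\bS_\theta)]$. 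They can be dropped at a constant-factor (not additive-$\eps$) loss in sparsity.

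The ``numerator of order $\|\bx\|_2^2$'' is the second missing piece, and the Cheeger ratio needs it too. You need $\sum_i \deg_G(i)(\bv_i-\tau)^2 \geq \sum_i \deg_G(i)(\bv_i-\mu)^2 = \Omega(\|\bx\|_2^2)$, with $\mu$ the degree-weighted mean. This says $\bx$ is nearly orthogonal to $\sqrt{\hat d}$; otherwise $\bv$ is nearly constant and shifting by the median yields nothing. Your plan never derives this, and with noisy degrees it is not automatic. It comes from the deflation. Before the $\alpha n$ rescaling, the undeflated noisy normalized adjacency has Rayleigh quotient at most $1 + O(\xi/\alpha^3)$ on $\bx$; controlling the $\xi n$ badly estimated vertices here again uses $\|\bx\|_\infty \lesssim 1/(\alpha\sqrt{n})$. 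So a deflated Rayleigh quotient of at least $1-\zeta$ forces $\langle \sqrt{\hat d}, \bx\rangle^2/(2\hat m) \leq (\zeta + O(\xi/\alpha^3))\|\bx\|_2^2$. This is the third item of Lemma~\ref{lem:output-vec}, which Lemma~\ref{lem:clean-guaran} turns into the variance lower bound.

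Two smaller repairs are also needed. First, $\sum_k \hat d(k)$ is not locally computable, so normalize the deflation by a sampled estimate $\hat m$ of $m$, as in (\ref{eq:noisy-deflate}). Second, do not take $\bT = \bQ$ when $\hat d$ is computed from $\bQ$. The matrix would then depend on $\bT$, while Theorem~\ref{thm:main} is stated for a fixed matrix, so use independent portions of the sample.
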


\begin{figure}[ht!]
\begin{framed}
\textbf{Local Computation Algorithm for Sparsest Cut.} The algorithm receives query access to the adjacency matrix of a graph $G = ([n], E)$, an accuracy parameter $\eps > 0$, and an index $i \in [n]$.
\begin{enumerate}
\item Use public randomness to generate $\bN \sim \calG(n, 4\alpha)$, and use query access to the adjacency matrix of $G$ to provide query access to the adjacency matrix of $\bH_0 = G \cup \bN$.
\item Initialize the following objects:
\begin{itemize}
\item A collection $\sfA$ (initially empty) of at most $t$ local computation algorithms ($\calA_1, \dots, \calA_t$) which provide query access to subsets of $[n]$ ($\bS$ will be the union of the first $\ell$ sets, for some $\ell \leq t$).
\item Using public randomness, $t$ independent draws $\bQ_1^{(0)}, \dots, \bQ_t^{(0)} \subset [n]$ where $j \in \bQ_{\ell}^{(0)}$ i.i.d. with probability $2\gamma$  for $\gamma \geq \poly(1/\eps)/n$ \footnotemark.
\end{itemize}
\item\label{ln:vol-check} For $\ell=1, \dots t$, repeat the following:
\begin{enumerate}
\item\label{ln:estimate-vol} Let $\bT_{\ell-1} \subset [n]$ denote the (implicit) set given by\footnotemark
\[ \bT_{\ell-1} = \left\{ j \in [n] : \exists \ell' \leq \ell - 1 \text{ s.t. $\calA_{\ell'}$ declares $j$ lies in the set } \right\}. \]
Execute $\EstimateVol(\bT_{\ell-1}, \bH_0, \eps/8, 1/(400 t))$ and let $\hat{\boldeta}$ be the output. If $\hat{\boldeta} \geq \eps n^2/4$, break out of the loop to Line~\ref{ln:default}. 
\item Otherwise, $\hat{\boldeta} < \eps n^2/4$ and continue. Let $\bV_{\ell} = [n] \setminus \bT_{\ell-1}$, which is the (implicit) vertex set obtained after removing the union of sets defined by $\calA_{1},\dots, \calA_{\ell-1}$. 
\item For every $\ell' \leq \ell-1$, query the local computation algorithm $\calA_{\ell'}$ with $j$ for $j \in \bQ^{(0)}_{\ell}$ in order to write down $\bQ_{\ell} = \bQ^{(0)}_{\ell} \cap \bV_{\ell}$. Note, $\bQ_{\ell}$ includes each $j \in \bV_{\ell}$ i.i.d. w.p $2\gamma$.
\item\label{ln:declare-output} Let $\bH_{\ell}$ be the subgraph of $\bH_0$ induced by $\bV_{\ell}$, and let $\calA_{\ell}$ be the local computation algorithm from Theorem~\ref{thm:cheeger-dense} with parameter $\eps' = \eps/2$, which uses query set $\bQ_{\ell} \subset \bV_{\ell}$. Query $\calA_{\ell}$ with $i$ (which must lie in $\bV_{\ell}$), and if $\calA_{\ell}$ declares that $i$ lies in the set, output ``$i \in \bS$'' and terminate.
\end{enumerate}
\item\label{ln:default} If the procedure has not produced an output yet, output ``$i \notin \bS$.''
\end{enumerate}
\end{framed}
\caption{Local Computation Algorithm for Sparsest Cut}\label{fig:prepro-cheeger}
\end{figure}

\paragraph{Correctness Guarantees.} We first establish the correctness guarantees of Theorem~\ref{thm:cheeger-intro}, where we show that the local computation algorithm specified in Figure~\ref{fig:prepro-cheeger} provides query access to a cut $\bS \subset [n]$ which satisfies $\phi_G(\bS)$ is at most $O(\sqrt{\phi_{\eps}(G)}) + \eps$ with probability at least $0.99$. By setting $\delta > \frac{1}{400 t}$, we assume that all executions of $\EstimateVol$, and uses of Lemma~\ref{lem:noise-addition} and Theorem~\ref{thm:cheeger-dense} result in conclusions which succeed; each event occurs with high enough probability, so by a union bound, all events occur with probability of at least $0.99$.  In what follows, we assume this is the case. Notice that Figure~\ref{fig:prepro-cheeger} defines a local computation algorithm which, on query $i \in [n]$ will also produce an output (either that ``$i \in \bS$'' in some execution of Line~\ref{ln:estimate-vol}, or that ``$i \notin \bS$'' in Line~\ref{ln:default}). Hence, we may define
\[ \bS = \left\{ i \in [n] : \text{ Figure~\ref{fig:prepro-cheeger} on input $G$, $\eps$, and $i \in [n]$ outputs ``$i \in \bS$''}\right\}, \]
and notice that $\bS$ naturally decomposes as a union $\bS_1 \cup \dots \cup \bS_{\bell_0-1}$, where $\bS_{\ell}$ consists of the indices $i \in [n]$ where $\calA_{\ell}$ in Line~\ref{ln:declare-output} outputs ``$i \in \bS$'' and $\ell$ is the smallest to do so. Note, $\bell_0$ represents the iteration of Line~\ref{ln:estimate-vol} which triggered the ``break'' condition, and we default $\bell_0 = t+1$ if Line~\ref{ln:estimate-vol} is never triggered (we will show this does not happen). Note, the following inductive definitions are a consequence of the iterative algorithm in Figure~\ref{fig:prepro-cheeger}:
\begin{itemize}
    \item $\bT_0 = \emptyset$.
    \item For $\ell=1, \dots, \bell_0-1$, $\bS_{\ell}$ is the (implicit) subset defined by $\calA_{\ell}$ executed on input graph $\bH_{\ell}$, which is the subgraph of $\bH_0$ induced by $\bV_{\ell} = [n] \setminus \bT_{\ell-1}$.
    \item For $\ell=2, \dots,\bell_0$, we have $\bT_{\ell-1} = \bS_1 \cup \dots \cup \bS_{\ell-1}$.
\end{itemize}
Notice that, for every $\ell = 1, \dots, \bell_0-1$ executed in Line~\ref{ln:vol-check} will have $\bT_{\ell-1}$ with $\vol_{\bH_0}(\bT_{\ell-1}) \leq \eps n^2/2$. This is because $\ell < \bell_0$, Line~\ref{ln:estimate-vol} results in $\hat{\boldeta} \leq \eps n^2 / 4$, and the fact that $\EstimateVol(\bT_{\ell-1}, \bH_0, \eps/8,\delta)$ is an $\eps n^2 / 8$-error approximation of $\vol_{\bH_0}(\bT_{\ell-1})$. Hence, we use the first item of Lemma~\ref{lem:noise-addition} and conclude the subgraph $\bH_{\ell}$ of $\bH_0$ induced by vertex set $\bV_{\ell}$ is $(\alpha, \beta)$-dense, and $\phi_{\eps/2}(\bH_{\ell}) \leq 2 \phi_{\eps}(G) + O(\alpha/\eps)$. Similarly to the proof of Lemma~\ref{lem:noise-addition}, $\bV_{\ell}$ must have size at least $(\eps/2)^{1/2} n$, since $\vol_{\bH_0}(\bV_{\ell}) \geq \eps n^2 - \vol_{\bH_0}(\bT_{\ell-1}) \geq \eps n^2/2$ (recall, $G$ contains at least $\eps n^2$ edges). As a result, the set $\bQ_{\ell} \subset \bV_{\ell}$, which includes each index $j \in \bV_{\ell}$ i.i.d. with probability $\gamma'$ satisfies $\gamma' \geq \poly(1/\eps)/n \geq \poly(1/\eps) / |\bV_{\ell}|$, and we may Theorem~\ref{thm:cheeger-dense}. The local computation algorithm $\calA_{\ell}$ in Line~\ref{ln:declare-output} will satisfy,
\begin{align} 
\phi_{\bH_{\ell}}(\bS_{\ell}) \leq O\left(\sqrt{\phi_{\eps/2}(\bH_{\ell})}\right) + \eps, \qquad \text{and}\qquad \alpha n^2 \leq \vol_{\bH_{\ell}}(\bS_{\ell}) \leq \vol_{\bH_{\ell}}(\ol{\bS}_{\ell}).  \label{eq:thm-dense-bounds}
\end{align}
\addtocounter{footnote}{-1}\footnotetext{Note that Theorem \ref{thm:main} samples a subset of a fixed size, while other sub-routines such as in Theorem \ref{thm:cheeger-dense} or $\EstimateVol(S, H, \xi, \delta)$ sample entries independently with fixed probability. These sampling methods are interchangeable, so it suffices to define query sets $\bQ_1,\dots \bQ_t$ in terms of either sampling model.}
\addtocounter{footnote}{+1}\footnotetext{We let $\bT_{0} = \emptyset$, and note that $\bT_{\ell-1}$ is implicit, since we can query $\calA_{1}, \dots, \calA_{\ell-1}$ in order to know whether $i \in \bT_{\ell-1}$.}

Therefore, we count the edges of $E_{\bH_0}(\bS,\ol{\bS})$ using the decomposition of $\bS = \bS_1 \cup \dots \cup \bS_{\bell_0-1}$ and applying the above bounds,
\begin{align*}
    E_{\bH_0}(\bS, \ol{\bS}) \leq \sum_{\ell=1}^{\bell_0-1} E_{\bH_{\ell}}(\bS_{\ell}, \bV_{\ell} \setminus \bS_{\ell}) &\leq \sum_{\ell=1}^{\bell_0-1}\left(O(\sqrt{\phi_{\eps/2}(\bH_{\ell}}) + \eps \right) \cdot \vol_{\bH_{\ell}}(\bS_{\ell}) \\
    &\leq \left( O\left(\sqrt{\phi_{\eps}(G)} \right) + O(\alpha/\eps) + \eps \right) \cdot \vol_{\bH_0}(\bS),
\end{align*}
where the first inequality uses the fact $\bH_{\ell}$ consist of induced subgraphs which iteratively remove $\bS_{\ell}$, the second inequality uses the bound on $\phi_{\bH_{\ell}}(\bS_{\ell})$, and the third inequality uses the conclusion of Lemma~\ref{lem:noise-addition} to relate to $\phi_{\eps}(G)$, and the fact volumes with respect to induced subgraphs are at most that of the entire graph. 

We now relate $\phi_{\bH_0}(\bS)$ to $\phi_G(\bS)$, by using the second condition of Theorem~\ref{lem:noise-addition}. For that, we need a lower bound on $\vol_{\bH_0}(\bS)$. For that, notice that if $\bell_0 < t$ (and the condition of Line~\ref{ln:estimate-vol} was triggered), then it was because $\hat{\boldeta} \geq \eps n^2 / 4$, which implies $\vol_{\bH_0}(\bT_{\bell_0-1}) = \vol_{\bH_0}(\bS)$ is at least $\eps n^2/8$. If such was the case, we apply the second item of Lemma~\ref{lem:noise-addition} as desired. In order to show that $\bell_0 < t$, and the condition of Line~\ref{ln:estimate-vol} eventually gets triggered, we use the fact the right-most inequality of (\ref{eq:thm-dense-bounds}), which implies that $\vol_{\bH_0}(\bS)$ is at least $(\bell_0 - 1) \cdot \alpha n^2$; once $\bell_0 - 1 \geq 1 / \alpha$, $\vol_{\bH_0}(\bS)$ would have constant volume and trigger Line~\ref{ln:estimate-vol}.

\paragraph{Query Complexity.} We now bound the query complexity of Figure~\ref{fig:prepro-cheeger}. First, note that a single query to the adjacency matrix of $\bH_0$ uses at most one query to the adjacency matrix of $G$. For any particular local computation algorithm $\calA_{\ell}$ initialized in Line~\ref{ln:declare-output}, Figure~\ref{fig:prepro-cheeger} makes:
\begin{itemize}
    \item Determine which $\bQ_{\ell}^{(0)} \subset [n]$ are part of the induced subgraph $\bH_{\ell}$, which requires making $|\bQ_{\ell}^{(0)}|$ queries to the (previous) local computation algorithms $\calA_1, \dots, \calA_{\ell-1}$. This is used to determine the set $\bQ_{\ell}$.
    \item Once we know $\bQ_{\ell}$, $\calA_{\ell}$ makes $|\bQ_{\ell}|^2$ queries during the preprocessing time, and $|\bQ_{\ell}|$ queries during any particular query. 
\end{itemize}
As a result, the total number of probes needed is:
\[ \sum_{\ell=1}^t |\bQ_{\ell}|^2 + \sum_{\ell=1}^t \sum_{\ell' < \ell} |\bQ_{\ell}^{(0)}| \cdot |\bQ_{\ell'}|, \]
where the first term constitutes all of the preprocessing queries (done after determining $\bQ_{\ell}$ for each $\ell$), plus the $|\bQ_{\ell}^{(0)}|$ queries which are done to $\calA_{\ell'}$ for $\ell' < \ell$, each of which uses $|\bQ_{\ell'}|$ probes to the adjacency matrix of $G$. The fact that all $\bQ_{\ell} \subset \bQ_{\ell}^{(0)}$ and all are of size $\poly(1/\eps)$, and the fact $t$ is at most $O(1/\alpha)$ gives the desired upper bound.

\subsection{Sparsest Cut on $(\alpha, \beta)$-dense Graphs with $\alpha,\beta = \poly(\eps)$: Proof of Theorem~\ref{thm:cheeger-dense}}

Fix an input graph $G = (V, E)$ which is $(\alpha,\beta)$-dense and has at least $\eps n^2$ edges.  The algorithm of Theorem~\ref{thm:cheeger-dense} will proceed by providing query access to a deflated normalized adjacency matrix $M$ (see Definition~\ref{def:deflated} below). As we discuss, we will execute our top eigenvector local computation algorithm from Theorem~\ref{thm:main-intro} on an input matrix which is an approximation to a deflated normalized adjacency matrix, in order to access a vector which is approximates the second-top eigenvector of the normalized adjacency.

\begin{definition}\label{def:deflated}
    For a graph $G = ([n], E)$, the deflated normalized adjacency matrix $M$ is the $n \times n$ matrix whose  $(i,j)$-entry is 
    \[ M_{ij} = \dfrac{\ind\{ (i,j) \in E \}}{\sqrt{\deg_G(i) \cdot \deg_G(j)}} - \dfrac{\sqrt{\deg_G(i) \cdot \deg_G(j)}}{2m}. \]
\end{definition}

The matrix $M$ is an ``exact'' deflated normalized adjacency (where we take the normalized adjacency matrix and remove the top eigenvector). Note, a local computation algorithm (whose query complexity is $\poly(1/\eps)$) cannot provide exact query access to $M$, since knowing $m$ (the number of edges), and $\deg_G(i)$ for any vertex exactly would require too many queries. Hence, we provide access to an approximation of $M$ in the following sense, where we will let $\xi$ be a parameter indicating the sampling error which will be set to a small enough polynomial of $\eps$:
\begin{itemize}
    \item \textbf{Estimate of Edges}: The algorithm executes a sampling-based estimation, to obtain an estimate $\hat{m}$ of $m$ which satisfies $m \leq \hat{m} \leq (1+\xi) m$. We use the fact $m \geq \eps n^2$ to obtain a multiplicative error estimate, using $\poly(1/(\eps \xi))$ queries to the adjacency matrix of $G$.
    \item \textbf{Degree Estimation}: Similarly to above, a sampling-based estimation of the degrees $d(i)$ for $\deg_G(i)$. The fact $G$ is $(\alpha,\beta)$-dense allows us to, (i) enforce $\alpha n \leq d(i)$ for all $i \in [n]$, and there will be at most $\beta n$ vertices where $\deg_G(i) \leq \alpha n$, and (ii) except for at most $\xi n$ vertices, the estimate $d(i)$ satisfies $\max\{ \alpha n, \deg_G(i)\} \leq d(i) \leq (1+\xi) \max\{\deg_G(i) , \alpha n\}$.\footnote{Importantly, we cannot hope to estimate the degrees of all vertices correctly, since that would require an ``union bound'' and incur $O(\log n)$-dependencies.}
\end{itemize}
While technically random variables, $\hat{m}$ and $d(i)$ are kept un-bolded, and we assume they satisfy above properties. Furthermore, note the above sampling is made non-adaptive (requiring uniform random draws from $[n]$), and incurs $\poly(1/(\xi\eps\alpha))$-additive overhead on the size of $\bQ$ (as specified in Theorem~\ref{thm:cheeger-dense}). With that, we assume the algorithm operates by receiving query access to the entries of the matrix $\hat{M}$:
\begin{align}
    \hat{M}_{ij} = \dfrac{\ind\{ (i,j) \in E \}}{\sqrt{d(i) \cdot d(j)}} - \dfrac{\sqrt{d(i) \cdot d(j)}}{2\hat{m}}, \label{eq:noisy-deflate}
\end{align}
which we will refer to as a noisy deflated normalized adjacency of $G$. 

\begin{lemma}\label{lem:feasibility}
    Let $\hat{M}$ be a noisy deflated normalized adjacency of an $(\alpha,\beta)$-dense graph $G = ([n], E)$ (as in (\ref{eq:noisy-deflate})). Then, $\| \hat{M} \|_{\infty} \leq 2/(\alpha n)$ and $\lambda_{\max}(\hat{M}) \geq 1 - 2\phi_{\eps}(G) - O((\xi + \alpha \beta)/\eps)$.
\end{lemma}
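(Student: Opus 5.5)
The plan has two parts: an entrywise bound, and a lower bound on $\lambda_{\max}(\hat{M})$ via an explicit test vector.

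\textbf{Entrywise bound.} Every estimate satisfies $d(i) \geq \alpha n$ by construction, so the first term of (\ref{eq:noisy-deflate}) is at most $1/(\alpha n)$. For the second term, $d(i) \leq (1+\xi) n$ and $\hat{m} \geq m \geq \eps n^2$. This gives
\[ \frac{\sqrt{d(i) d(j)}}{2\hat{m}} \leq \frac{(1+\xi) n}{2\eps n^2} \leq \frac{1}{\alpha n}, \]
since $\alpha \leq \eps/c$. So $\|\hat{M}\|_\infty \leq 2/(\alpha n)$.

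\textbf{Eigenvalue bound: choice of test vector.} I would follow the easy direction of Cheeger's inequality with a test vector built from the optimal $\eps$-dense cut. Let $S$ attain $\phi_\eps(G)$, with $\eps n^2 \leq \vol_G(S) \leq \vol_G(\ol{S})$. Put $D = \mathrm{diag}(d)$ and take the noisy analogue of $D^{1/2}(\ind_S/\vol(S) - \ind_{\ol{S}}/\vol(\ol{S}))$:
\[ v_i = \sqrt{d(i)}\,(\ind_S(i)/a - \ind_{\ol{S}}(i)/b), \qquad a = \textstyle\sum_{i \in S} d(i),\quad b = \sum_{i \in \ol{S}} d(i). \]
By construction $\langle v, \sqrt{d}\rangle = 0$, so the rank-one deflation term $-\sqrt{d}\sqrt{d}^{\intercal}/(2\hat{m})$ contributes nothing. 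The Rayleigh quotient therefore reduces to
\[ \frac{\sum_{(i,j)\in E} 2 f_i f_j}{\sum_i d(i) f_i^2}, \qquad f = \ind_S/a - \ind_{\ol{S}}/b. \]

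\textbf{Eigenvalue bound: evaluating the quotient.} Count edges inside $S$, inside $\ol{S}$, and across the cut:
\[ \text{numerator} = \frac{2e(S)}{a^2} + \frac{2e(\ol{S})}{b^2} - \frac{2E(S,\ol{S})}{ab}, \qquad \text{denominator} = \frac{1}{a} + \frac{1}{b}. \]
With the true degrees, $2e(S) = \vol(S) - E(S,\ol{S})$, and the quotient would be exactly $1 - E(S,\ol{S})(1/a + 1/b)^2/(1/a+1/b) \geq 1 - 2\phi(S)$ (using $a \leq b$). The remaining work is to replace $\vol(S)$ by $a$ and $\vol(\ol{S})$ by $b$ and control the error.

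\textbf{Main obstacle: degree perturbation.} The discrepancy between $\sum_{i\in S} d(i)$ and $\vol_G(S)$ has two sources. The good vertices contribute a multiplicative $(1+\xi)$ error, i.e.\ at most $\xi \vol(S)$. The bad vertices number at most $(\xi + \beta) n$; each has its degree off by at most $n$, or raised to $\alpha n$ when the true degree is below $\alpha n$. Their additive contribution is at most $O(\xi n^2 + \alpha \beta n^2)$. Since $a, b \gtrsim \eps n^2$, each term perturbs by a relative $O((\xi + \alpha\beta)/\eps)$. The edge counts $e(S)$, $e(\ol{S})$ are unchanged because the edges are exact. The step I expect to be delicate is tracking these relative errors through the cancellation $2e(S)/a^2 = (\vol(S) - E)/a^2$ without producing an error of order $1/\eps$ times something large. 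Because each error is relative to $a$ or $b$, the quotient shifts by only $O((\xi+\alpha\beta)/\eps)$, which yields $\lambda_{\max}(\hat{M}) \geq 1 - 2\phi_\eps(G) - O((\xi+\alpha\beta)/\eps)$.
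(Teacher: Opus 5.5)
Your proposal is correct and follows the paper's proof essentially step for step. Your test vector is the paper's $x_i = \sqrt{d(i)}\,y_i$ (with $y_i = v(\ol{S})$ on $S$ and $-v(S)$ on $\ol{S}$), rescaled by $1/(v(S)v(\ol{S}))$. The deflation term vanishes for the same reason, and the quotient reduces to the paper's expression $\frac{v(\ol{S})\vol_G(S)}{v(S)(v(S)+v(\ol{S}))} + \frac{v(S)\vol_G(\ol{S})}{v(\ol{S})(v(S)+v(\ol{S}))} - E_G(S,\ol{S})\bigl(\frac{1}{v(S)}+\frac{1}{v(\ol{S})}\bigr)$. That expression is then controlled by the same $O((\xi+\alpha\beta)/\eps)$ relative bounds between $v(\cdot)$ and $\vol_G(\cdot)$, and your entrywise bound matches the paper's argument.
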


\begin{proof}
    For the first item, notice that every $i \in [n]$ satisfies $\alpha n \leq d(i) \leq n$, where the lower bound is by definition of (\ref{eq:noisy-deflate}), and we always have $\hat{m} \geq \eps n^2$ (since we know $G$ contains at least $\eps n^2$ edges). Hence, each entry is at most $1/(\alpha n) + 1/(\eps n)$, which is at most $2/(\alpha n)$ by our setting of $\alpha$. For the lower bound on $\lambda_{\max}(\hat{M})$, consider a cut $S \subset [n]$ which realizes the bound on $\phi_{\eps}(G)$. We let $v(S) = \sum_{i \in S} d(i)$ and $v(\ol{S}) = \sum_{i \in \ol{S}} d(i)$ (which we note would be the volumes under $G$ if all degree estimates were perfect), and $y \in \R^n$ be the vector where $y_i = v(\ol{S})$ if $i \in S$ and $y_i = -v(S)$ if $i \in \ol{S}$. Consider the vector $x \in \R^n$ which sets $x_i = y_i \cdot \sqrt{d(i)}$. Then,
    \begin{align*}
        \| x\|_2^2 &= \sum_{i\in S} v(\ol{S})^2 \cdot d(i) + \sum_{i \in \ol{S}} v(S)^2 \cdot d(i) = v(S) v(\ol{S}) \left( v(S) + v(\ol{S})\right).
    \end{align*}    
    In order to lower bound $\langle x, \hat{M} x\rangle$, we note that the setting of $x$ implies that the second term of $\hat{M}$ (that which ``deflates'' the normalized adjacency) satisfies $1/(2\hat{m})\sum_{i,j\in [n]} \sqrt{d(i) \cdot d(j)} \cdot x_i x_j = 1/(2\hat{m}) (\sum_{i\in[n]} d(i) y_i)^2 = 0$. Hence, we have
    \begin{align*}
        \langle x, \hat{M} x \rangle &= \sum_{i=1}^n \sum_{j=1}^n \ind\left\{ (i,j) \in E \right\} \cdot y_i y_j \\
        &= v(\ol{S})^2 \sum_{i,j \in S} \ind\{ (i,j) \in E\} + v(S)^2 \sum_{i,j \in \ol{S}} \ind\{ (i,j) \in E\} - 2 v(S) v(\ol{S}) \sum_{i \in S, j \in \ol{S}} \ind\{ (i,j) \in E\}\\
            &= v(\ol{S})^2 \cdot \vol_G(S) + v(S)^2 \cdot \vol_G(\ol{S}) - E_G(S, \ol{S}) (v(S) + v(\ol{S}))^2            
    \end{align*}
    Hence, the Rayleigh quotient $\langle x, \hat{M} x \rangle / \|x\|_2^2$ is at least:
    \begin{align*}
        \dfrac{v(\ol{S}) \cdot \vol_G(S)}{v(S) (v(S) + v(\ol{S}))} + \dfrac{v(S) \cdot \vol_G(\ol{S})}{v(\ol{S}) (v(S) + v(\ol{S}))} - \dfrac{E_G(S, \ol{S}) \cdot (v(S) + v(\ol{S}))}{v(S) v(\ol{S})}.
    \end{align*}   
    It remains to re-write the above in terms of the conductance $\phi_{\eps}(G)$, using the fact that $v(S)$ and $v(\ol{S})$ will be close enough to $\vol_G(S)$ and $\vol_G(\ol{S})$. Specifically, recall that all but $\xi n$ vertices have $d(i)$ being $(1+\xi)$-multiplicative approximations to $\max\{ \alpha n, \deg_G(i)\}$, and at most $\beta n$ are less than $\alpha n$, and furthermore, $\vol_G(\ol{S}) \geq\vol_G(S) \geq \eps n^2$ by assumption. By upper bounding $v(S)$ and $v(\ol{S})$ in terms of $\vol_G(S)$ and $\vol_G(\ol{S})$, respectively, both $\vol_G(S) / v(S)$ and $\vol_G(\ol{S}) / v(\ol{S})$ are at least
    \[ \frac{1}{(1+\xi) (1 + \beta \alpha / \eps) + \xi / \eps} \geq 1 - O((\xi + \alpha \beta)/\eps),\]
    which implies that our Rayleigh quotient is at least
    \begin{align*}
        1 - O((\xi + \alpha \beta)/\eps) - \phi_{\eps}(G) \cdot \left(\frac{v(S) + v(\ol{S})}{v(\ol{S})} \right) \cdot \left( \frac{\vol_{G}(S)}{v(S)}\right) \geq 1 - 2\phi_{\eps}(G) - O((\xi + \alpha \beta)/\eps).
    \end{align*}
\end{proof}

Lemma~\ref{lem:feasibility} provides the ``feasibility'' condition, and shows that the easy direction of Cheeger's inequality can tolerate some error in the deflated normalized adjacency matrix. We will use the local computation algorithm for the top eigenvector (Theorem~\ref{thm:main}) with input matrix $(\alpha n/2) \cdot \hat{M}$ will result in entry access to $\bx \in \R^n$ which is an approximately top eigenvector to $\hat{M}$. In what follows, we will apply Theorem~\ref{thm:main} to the input matrix $(\alpha n / 2) \cdot \hat{M}$ and accuracy parameter $\eps_0$ (which will be a polynomial of $\eps$ which we set later). In order to simplify the notation, it will be helpful to define the parameter $\zeta$ which will be the error on the returned Rayleigh quotient of the vector $\bx$ in Lemma~\ref{lem:output-vec} below
\begin{align}
\zeta \leq 2 \phi_{\eps}(G) + O((\xi + \alpha \beta)/\eps + \eps_0 / \alpha). \label{eq:zeta-def}
\end{align}

\begin{lemma}\label{lem:output-vec}
    Let $\bx \in \R^n$ be the output vector of Theorem~\ref{thm:main} with input $(\alpha n/2) \hat{M}$ and accuracy $\eps_0 \in (0,1)$. With high probability over randomness of Theorem~\ref{thm:main}, 
    \begin{align*} 
    \frac{\langle \bx, \hat{M} \bx\rangle}{\|\bx\|_2^2} \geq 1 - \zeta \qquad \|\bx\|_{\infty} \lsim 1 / (\alpha \sqrt{n}) \qquad \frac{1}{2\hat{m}} \left( \sum_{i=1}^n \sqrt{d(i)} \cdot \bx_i \right)^2 \leq (\zeta + O(\xi/\alpha^3)) \cdot \| \bx\|_2^2.
    \end{align*}
\end{lemma}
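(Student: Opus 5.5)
We prove the three conclusions of Lemma~\ref{lem:output-vec} in order, applying Theorem~\ref{thm:main} to the rescaled matrix $B = (\alpha n/2)\hat{M}$ and then transferring each bound back to $\hat{M}$.

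\textbf{Rayleigh quotient.} By Lemma~\ref{lem:feasibility}, $\|B\|_\infty \le 1$, so $B$ is a valid bounded-entry input. Theorem~\ref{thm:main}, with accuracy $\eps_0$, returns $\bx$ satisfying
\[ \langle \bx, B\bx\rangle/\|\bx\|_2^2 \ge \lambda_{\max}(B) - \eps_0 n. \]
Dividing by $\alpha n/2$ gives
\[ \langle \bx,\hat{M}\bx\rangle/\|\bx\|_2^2 \ge \lambda_{\max}(\hat{M}) - 2\eps_0/\alpha. \]
Substituting the lower bound $\lambda_{\max}(\hat{M}) \ge 1 - 2\phi_\eps(G) - O((\xi+\alpha\beta)/\eps)$ from Lemma~\ref{lem:feasibility} yields the first claim with $\zeta$ as in (\ref{eq:zeta-def}).

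There is one subtlety: the assumption $\|B\|_2 = O(\lambda_{\max}(B))$. We have $\lambda_{\max}(B) \approx \alpha n/2$. Also $\|\hat{M}\|_2 = O(1)$, because the normalized adjacency has spectral norm at most $1$ (after accounting for degree estimates, which are at least the true degree up to $(1+\xi)$ on most vertices). So $\|B\|_2 = O(\alpha n)$, which is comparable to $\lambda_{\max}(B)$. This means $q$ is in the $1/(\eps_0^2 n)$ regime.

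\textbf{$\ell_\infty$ bound.} Here we use the explicit lift $\bx = B\bT^{\intercal}\by/\hat{\blambda}_{\max}$ with $\|\by\|_2 = 1$. Each coordinate satisfies
\[ |\bx_i| \le \|(B\bT^{\intercal})_i\|_2/\hat{\blambda}_{\max} \le \sqrt{\bt}\cdot q^{-1/2}/\hat{\blambda}_{\max}. \]
With $\bt = O(nq)$ and $\hat{\blambda}_{\max} = \Theta(\alpha n)$, this is $O(\sqrt{n}/(\alpha n)) = O(1/(\alpha\sqrt{n}))$. We normalize so that $\|\bx\|_2 = \Theta(1)$; Lemma~\ref{lem:2} gives $\|\bx\|_2^2 \approx \|\by\|_2^2 = 1$, so the comparison with $\|\bx\|_2$ is meaningful.

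\textbf{Near-orthogonality to $\sqrt{d}$.} This is the main step. Let $u_i = \sqrt{d(i)}$ and write
\[ \hat{M} = N - uu^{\intercal}/(2\hat{m}), \]
where $N$ is the (noisy) normalized adjacency. We then have
\[ \langle \bx,\hat{M}\bx\rangle = \langle \bx, N\bx\rangle - \frac{\langle u,\bx\rangle^2}{2\hat{m}}. \]
The plan is to show $\lambda_{\max}(N) \le 1 + O(\xi/\alpha^3)$ and combine it with the first claim, which gives
\[ \frac{\langle u,\bx\rangle^2}{2\hat{m}} \le \bigl(1 + O(\xi/\alpha^3) - (1-\zeta)\bigr)\|\bx\|_2^2. \]

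To bound $\lambda_{\max}(N)$, compare $N$ to the exact normalized adjacency $D^{-1/2}AD^{-1/2}$ with $D_{ii} = \max\{\alpha n, \deg_G(i)\}$, which has spectral norm at most $1$ (dominance of $D$ over the true degrees only shrinks it).
\begin{itemize}
\item On good vertices, $d(i) \ge D_{ii}$, so $N$ is entrywise dominated by $D^{-1/2}AD^{-1/2}$ and $\lambda_{\max}$ is monotone on these nonnegative matrices.
\item The at most $\xi n$ bad vertices contribute a perturbation supported on $\xi n$ rows and columns with entries $O(1/(\alpha n))$. Its Frobenius norm is $O(\sqrt{\xi n\cdot n}/(\alpha n)) = O(\sqrt{\xi}/\alpha)$.
\end{itemize}
This route yields $O(\sqrt{\xi}/\alpha)$ rather than $O(\xi/\alpha^3)$. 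Either bound suffices once $\xi$ is a small polynomial in $\eps$. If the claimed $\xi/\alpha^3$ form is needed, we instead bound the effect of the bad rows on $\bx$ directly: their $\ell_2$ mass is at most $\xi n \cdot \|\bx\|_\infty^2 \lesssim \xi/\alpha^2$, which multiplied by the entry scale $1/\alpha$ gives $\xi/\alpha^3$. We regard this perturbation accounting as the main obstacle.
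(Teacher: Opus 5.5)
Your proposal is correct, and the first two steps coincide with the paper's proof. The Rayleigh bound is Theorem~\ref{thm:main} plus Lemma~\ref{lem:feasibility} after dividing by $\alpha n/2$. The $\ell_\infty$ bound is the same Cauchy--Schwarz estimate on a row of $(\alpha n/2)\hat{M}\bT^{\intercal}$ with $\hat{\blambda}_{\max}=\Theta(\alpha n)$; as in the paper, this tacitly uses that $\zeta$ is bounded away from $1$.

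For the third claim, your fallback is exactly the paper's argument:
\begin{itemize}
\item bound the good--good part of $\langle \bx, N\bx\rangle$ by the true normalized adjacency applied to $|\bx|$;
\item charge the bad rows at most $\frac{2\xi n}{\alpha}\|\bx\|_\infty^2=O(\xi/\alpha^3)$;
\item turn this additive error into a multiplicative one via $\|\bx\|_2^2=\Omega(1)$ from Lemma~\ref{lem:2}.
\end{itemize}
Note that this controls the quadratic form at the delocalized vector $\bx$, not $\lambda_{\max}(N)$ as your stated plan says. Also, your primary route as written does not prove the lemma: $\sqrt{\xi}/\alpha$ exceeds $\xi/\alpha^3$ whenever $\xi<\alpha^4$, which is precisely the regime in which $\xi$ is chosen.

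The spectral route is easily repaired, though, and then beats the paper's bound. Do not add the whole bad-vertex perturbation via the triangle inequality. Instead, split a unit vector as $x=(x_{\mathrm{g}},x_{\mathrm{b}})$ over good and bad vertices, with $c=\|x_{\mathrm{g}}\|_2$ and $s=\|x_{\mathrm{b}}\|_2$.
\begin{itemize}
\item Your Perron--Frobenius domination bounds the good--good term by $c^2$.
\item Your Frobenius count (at most $\xi n$ bad vertices, every entry at most $1/(\alpha n)$) gives $\|N_{\mathrm{gb}}\|_2\le a:=\sqrt{\xi}/\alpha$ and $\|N_{\mathrm{bb}}\|_2\le b:=\xi/\alpha$.
\end{itemize}
Hence $\langle x,Nx\rangle\le c^2+2acs+bs^2$. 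Maximizing over $c^2+s^2=1$ yields $\lambda_{\max}(N)\le 1+a^2/(1-b)=1+O(\xi/\alpha^2)$. The large cross block only matters at second order, because it couples to a block of tiny norm.

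This gives the third claim with $O(\xi/\alpha^2)\le O(\xi/\alpha^3)$. It needs neither the $\ell_\infty$ bound nor the lower bound on $\|\bx\|_2$, so it is both cleaner and slightly sharper than the paper's accounting.
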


\begin{proof}
    From Theorem~\ref{thm:main}, the Rayleigh quotient of $\bx$ with $(\alpha n /2) \hat{M}$ will be at least $\lambda_{\max}((\alpha n/2)\cdot \hat{M}) - \eps_0 n$ (with high probability), which is at least $(\alpha n/2)(1 - \zeta)$ by Lemma~\ref{lem:feasibility}. Dividing by $(\alpha n/2)$ gives the desired bound on the Rayleigh quotient of $\hat{M}$. The upper bound on $\|\bx\|_{\infty}$ follows from the description of the algorithm in Theorem~\ref{thm:main} (namely, Figure~\ref{fig:preprocess}), where $\bx = A \bT^{\intercal} \by / \hat{\blambda}_{\max}$ for $A = (\alpha n/2) \hat{M}$. Note, $|\bx_i|$ is by Cauchy-Schwarz at most $1/\hat{\blambda}_{\max} \cdot (\sqrt{\bt} / \sqrt{q}) \cdot \|\by\|_2$, which is at most $O(1/(\alpha \sqrt{n}))$ with high probability (since $\bt = O(nq)$ and $\phi_{\eps}(G)$ is at most a small enough constant).  Finally, the Rayleigh quotient bound implies
    \begin{align*}
        (1 - \zeta) \|\bx\|_2^2 \leq \sum_{i,j \in [n]} \frac{\ind\{(i,j) \in E\}}{\sqrt{d(i) d(j)}} \cdot \bx_i \bx_j - \frac{1}{2\hat{m}} \left( \sum_{i=1}^n \sqrt{d(i)} \bx_i \right)^2,
    \end{align*}
    and because $\deg_G(i) \leq d(i)$ and $\deg_G(j) \leq d(j)$ for all but $\xi n$ vertices (where $d(i) \geq \alpha n$), the right-most quantity above is at most
    \begin{align*}
        - (1-\zeta) \|\bx\|_2^2 + \sum_{i, j \in [n]} \frac{\ind\{(i,j) \in E\}}{\sqrt{\deg_G(i) \deg_G(j)}} \cdot |\bx_i| \cdot |\bx_j| + \frac{2\xi n}{\alpha} \cdot \|\bx\|_{\infty}^2 \leq \zeta \| \bx\|_2^2 + O(\xi / \alpha^3)
    \end{align*}
    since the normalized adjacency matrix has all eigenvalues of magnitude at most $1$, and the bound on $\| \bx\|_{\infty}$. The final substitution follows from the fact $\|\bx\|_2^2$ will be at least a constant (from Lemma~\ref{lem:2}).
\end{proof}

In the next lemma, we show that the any vector satisfying the conclusions of Lemma~\ref{lem:output-vec} satisfies the necessary conditions which we need to execute the ``rounding algorithm'' (from a vector to a cut) in the hard direction of Cheeger's inequality. In the expressions below, the expectations over $(\bi, \bj)$ represent draws uniformly at random from the set of edges $E$ of the graph, and draws $\bi$ from the vertex set consist of the marginal distribution on vertices (note, this is not uniform over $[n]$, but rather proportional to $\deg_G(i)$).

\begin{lemma}\label{lem:clean-guaran}
    Any vector $x \in \R^n$ which satisfies the conclusions of Lemma~\ref{lem:output-vec} satisfies,
 as long as $\xi$ is smaller than $\poly(\alpha \eps)$, $\alpha, \beta$ smaller than $\poly(\eps)$, and $\zeta$ smaller than a fixed constant, 
    \begin{align*}
        \Ex_{(\bi, \bj)}\left[\left(\frac{x_{\bi}}{\sqrt{d(\bi)}} - \frac{x_{\bj}}{\sqrt{d(\bj)}}\right)^2 \right] &\leq \frac{2\|x\|_2^2}{m} \cdot \zeta \qquad \text{and} \qquad  \Ex_{\bi}\left[\left( \frac{x_{\bi}}{\sqrt{d(\bi)}} - \Ex_{\bj}\left[\frac{x_{\bj}}{\sqrt{d(\bj)}}\right] \right)^2\right] \geq \frac{\|x\|_2^2}{4m} 
    \end{align*}
\end{lemma}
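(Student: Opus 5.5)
Throughout, write $g_i = x_i/\sqrt{d(i)}$ and recall $m\le \hat m\le (1+\xi)m$, $\alpha n\le d(i)\le n$, and $\|x\|_\infty \lsim 1/(\alpha\sqrt n)$. Call a vertex \emph{good} if $\max\{\alpha n,\deg_G(i)\}\le d(i)\le (1+\xi)\max\{\alpha n,\deg_G(i)\}$ and $\deg_G(i)\ge \alpha n$, and \emph{bad} otherwise. There are at most $(\xi+\beta)n$ bad vertices, and every bad vertex contributes at most $n\cdot\|x\|_\infty^2/(\alpha n)\lsim 1/(\alpha^3 n)$ to any of the sums below. Hence the bad vertices cost an additive $O((\xi+\beta)/\alpha^3)$ in total. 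Since $\|x\|_2^2$ is at least a constant (as used at the end of Lemma~\ref{lem:output-vec}, via Lemma~\ref{lem:2}), this is a $o(1)$ multiplicative error once $\xi,\beta\le\poly(\alpha\eps)$. The proof is the standard Cheeger bookkeeping, done with these errors tracked.

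\textbf{First inequality.} Summing over ordered pairs, I expand
\[
2m\cdot\Ex_{(\bi,\bj)}\bigl[(g_{\bi}-g_{\bj})^2\bigr]=\sum_{i,j}\ind\{(i,j)\in E\}(g_i-g_j)^2 = 2\sum_i \deg_G(i)g_i^2 - 2\sum_{i,j}\frac{\ind\{(i,j)\in E\}x_ix_j}{\sqrt{d(i)d(j)}}.
\]
For the first term, I use $\deg_G(i)\le d(i)$ on good vertices and the bad-vertex bound above to get $\sum_i\deg_G(i)g_i^2\le \|x\|_2^2+O(\xi/\alpha^3)$. For the second term, I write the adjacency sum as $\langle x,\hat M x\rangle + \frac{1}{2\hat m}(\sum_i\sqrt{d(i)}x_i)^2 \ge \langle x,\hat M x\rangle\ge (1-\zeta)\|x\|_2^2$ by Lemma~\ref{lem:output-vec}. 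Combining, the expectation is at most $(\zeta\|x\|_2^2+O(\xi/\alpha^3))/m$. Because $\zeta$ in (\ref{eq:zeta-def}) is only an upper bound and $\|x\|_2^2=\Omega(1)$, I absorb the $O(\xi/\alpha^3)$ term into $\zeta\|x\|_2^2$ (equivalently, enlarge $\zeta$ by $O(\xi/\alpha^3)$), which yields the claimed $2\zeta\|x\|_2^2/m$.

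\textbf{Second inequality.} The variance equals $\Ex_{\bi}[g_{\bi}^2]-(\Ex_{\bi}[g_{\bi}])^2$, with $\bi$ drawn proportionally to degree.
\begin{itemize}
\item For the first moment, $\Ex[g_{\bi}^2]=\frac{1}{2m}\sum_i\frac{\deg_G(i)}{d(i)}x_i^2$. On good vertices $\deg_G(i)/d(i)\ge 1/(1+\xi)$, so $\Ex[g_{\bi}^2]\ge \frac{1}{2m}\bigl((1-\xi)\|x\|_2^2-O((\xi+\beta)/\alpha^3)\bigr)\ge \frac{0.9\|x\|_2^2}{2m}$.
\item For the mean, $\Ex[g_{\bi}]=\frac{1}{2m}\sum_i\deg_G(i)x_i/\sqrt{d(i)}$. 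I compare this with $\frac{1}{2m}\sum_i\sqrt{d(i)}x_i$. On good vertices $|\deg_G(i)/\sqrt{d(i)}-\sqrt{d(i)}|\le\xi\sqrt{d(i)}$, so Cauchy--Schwarz bounds that part of the difference by $\xi\sqrt{\sum_i d(i)}\,\|x\|_2\lsim\xi\sqrt{m}\|x\|_2$, using $\sum_i d(i)\le(1+\xi)2m+\beta\alpha n^2$ and $m\ge\eps n^2$. Bad vertices contribute $O((\xi+\beta)n\cdot\sqrt n\cdot\|x\|_\infty)$, which is likewise $o(\sqrt m)\|x\|_2$.
\end{itemize}
Then by the third conclusion of Lemma~\ref{lem:output-vec},
\[
(\Ex[g_{\bi}])^2\le \frac{2}{4m^2}\Bigl(2\hat m(\zeta+O(\xi/\alpha^3))\|x\|_2^2+o(m)\|x\|_2^2\Bigr)\le \frac{(2\zeta+o(1))\|x\|_2^2}{m}.
\]
Since $\zeta$ is below a small fixed constant, this is at most $0.2\|x\|_2^2/(2m)$. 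Subtracting gives a variance of at least $0.7\|x\|_2^2/(2m)\ge\|x\|_2^2/(4m)$.

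The main obstacle is the bad-vertex bookkeeping in the mean term. The difference between $\deg_G(i)$ and $d(i)$ must be controlled in an $\ell_1$-weighted sum, not just a quadratic one, so I rely on Cauchy--Schwarz together with the $\ell_\infty$ bound $\|x\|_\infty\lsim 1/(\alpha\sqrt n)$. I also need $m=\Omega(\eps n^2)$ so that these errors are small relative to $\sqrt m\,\|x\|_2$; this is where the requirement $\xi,\beta\le\poly(\alpha\eps)$ is used.
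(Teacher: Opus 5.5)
Your proposal is correct and follows essentially the same argument as the paper: expand the edge energy using $\deg_G(i)\le d(i)$ off the $\xi n$ mis-estimated vertices together with the Rayleigh-quotient bound (absorbing the $O(\xi/\alpha^3)$ error into $\zeta$, as the paper also does), then lower bound the variance as second moment minus squared mean, comparing the mean to $\frac{1}{2m}\sum_i\sqrt{d(i)}\,x_i$, which the third conclusion of Lemma~\ref{lem:output-vec} controls. The only difference is bookkeeping: the paper bounds the correction $\frac{1}{2m}\sum_i\frac{\deg_G(i)-d(i)}{\sqrt{d(i)}}x_i$ with a single weighted Cauchy--Schwarz using $\sum_i(\deg_G(i)-d(i))^2/d(i)\le(\xi^2+4\xi/\alpha+4\alpha\beta)n^2$, which avoids your good/bad split and the $\ell_\infty$ bound there (in your bound, vertices with $d(i)<\deg_G(i)$ actually contribute a factor $\sqrt{n/\alpha}$ rather than $\sqrt{n}$, which is harmless).
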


\begin{proof}
    Consider setting $\gamma_1$ to $\zeta + O(\xi / \alpha^3)$ to simplify the final condition of Lemma~\ref{lem:output-vec} and $\gamma_2$ to $1/\alpha^2$ so $n \|x\|_{\infty}^2 \leq \gamma_2 \|x\|_2^2$ (recall, $\|x\|_2^2$ will $\Theta(1)$). For the first inequality, we use the fact that $d(i) \geq \deg_G(i)$ for all but at most $\xi n$ vertices, and that $d(i) \geq \alpha n$. In particular, expanding the square we obtain
    \begin{align*}
        \Ex_{(\bi,\bj)}\left[ \left(\frac{x_{\bi}}{\sqrt{d(\bi)}} - \frac{x_{\bj}}{\sqrt{d(\bj)}} \right)^2 \right] \leq \frac{2}{m} \sum_{i=1}^n \frac{\deg_G(i)}{d(i)} \cdot x_i^2 - \frac{2}{m} \langle x, \hat{M} x \rangle \leq \frac{2}{m} \left( \|x\|_2^2 - \langle x, \hat{M} x\rangle + (\xi/\alpha) n\|x\|_{\infty}^2\right),
    \end{align*}
    which is at most the desired bound. For the second inequality, we first note that letting $\mu = \Ex_{\bi}[x_{\bi} / \sqrt{d(\bi)}]$, any $\tau \in \R$ satisfies
    \begin{align}
        \Ex_{\bi}\left[ \left(\frac{x_{\bi}}{\sqrt{d(\bi)}} - \tau \right)^2 \right] \geq \Ex_{\bi}\left[ \frac{x_{\bi}^2}{d(\bi)} \right] - \mu^2 &= \frac{1}{2m} \sum_{i=1}^n \frac{\deg_G(i)}{d(i)} \cdot x_i^2 - \mu^2 \geq \frac{\|x\|_2^2}{2m} \left( \dfrac{1 - (\beta + \xi) \gamma_2}{1+\xi} \right) - \mu^2. \label{eq:bound-denom}
    \end{align}
    Finally,  we upper bound the magnitude of $\mu$ via the triangle inequality and Cauchy-Schwarz
    \begin{align*}
        |\mu| &\leq \frac{1}{2m} \left| \sum_{i=1}^n \frac{\deg_G(i) - d(i)}{\sqrt{d(i)}}\cdot x_i \right| + \frac{1}{2m} \left| \sum_{i=1}^n \sqrt{d(i)} \cdot x_i \right| \\   
        		&\leq \frac{\|x\|_2}{2m} \left( \sum_{i=1}^n \frac{(\deg_G(i) - d(i))^2}{d(i)} \right)^{1/2} + \frac{\sqrt{\gamma_1 \hat{m}} \cdot \|x\|_2}{2m} 
    \end{align*}
    where the final summation inside the square-root (from the fact $d(i)$ is oftentimes an approximation of $\deg_G(i)$) is at most $(\xi^2 + 4 \xi / \alpha + 4 \alpha \beta) n^2$. In particular, this implies that (\ref{eq:bound-denom}) is at least
    \[ \frac{\|x\|_2^2}{2m} \left(\dfrac{1 - (\beta + \xi) \gamma_2}{1 + \xi} - \frac{2(\xi^2 + 4 \xi / \alpha + 4 \alpha \beta)}{\eps} - 2\gamma_1(1+\xi) \right), \]
    where in the final line, we use the fact that $m \geq \eps n^2$ and that $\hat{m} \leq (1+\xi) m$, and this implies the desired inequality.
\end{proof}

Finally, we will use the proof of the hard direction of Cheeger's inequality, which gives a rounding algorithm from a vector to a cut. 
\begin{theorem}[Hard Direction of Cheeger's Inequality]\label{thm:cheeger}
    Fix any graph $G = ([n], E)$ as well as a vector $y \in \R^n$ satisfying
    \[ \Ex_{(\bi,\bj)}\left[ (y_{\bi} - y_{\bj})^2 \right] \leq a \qquad \text{and}\qquad \Ex_{\bi}\left[ \left(y_{\bi} - \Ex_{\bi}[y_{\bi}] \right)^2 \right] \geq b. \]
    There exists a cut $S \subset [n]$ of the form $S = \left\{ i \in [n] : \sigma y_i \geq \tau \right\}$ for $\tau \in \R$ and $\sigma \in \{-1,1\}$ which satisfies
    \begin{align*}
        \vol_G(\ol{S}) \geq \vol_G(S) \geq \Omega(m b / \| y\|_{\infty}^2) \qquad \text{and}\qquad \phi_G(S) \lsim \sqrt{ a / b}.
    \end{align*}
\end{theorem}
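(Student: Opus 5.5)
We will prove the theorem by the standard sweep-cut argument. Expectations over edges $(\bi,\bj)$ are uniform on $E$, and vertices are drawn with probability $\deg_G(i)/(2m)$. The plan is to normalize $y$, split it into a positive and a negative part, and show that some level set of one part has the stated sparsity and volume.

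\textbf{Step 1: normalization and splitting.} Pick a weighted median $\tau_0$ of $y$, so that $\vol_G(\{y_i > \tau_0\})$ and $\vol_G(\{y_i < \tau_0\})$ are each at most $m$. Set $z = y - \tau_0 \mathbf{1}$. Shifting does not change the edge term, so $\Ex[(z_{\bi}-z_{\bj})^2]\le a$. Since the mean minimizes squared deviation, $\Ex_{\bi}[z_{\bi}^2] \ge b$. Write $z = z^+ - z^-$. We have $(z_i-z_j)^2 \ge (z_i^+-z_j^+)^2 + (z_i^- - z_j^-)^2$, and the two parts satisfy $\Ex[(z^+)^2]+\Ex[(z^-)^2] = \Ex[z^2]$. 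Hence one sign $\sigma$ yields $w = (\sigma z)^+ \ge 0$ with $\Ex_{(\bi,\bj)}[(w_{\bi}-w_{\bj})^2]/\Ex_{\bi}[w_{\bi}^2] \le a/b$ and $\Ex_{\bi}[w_{\bi}^2] \ge b/2$. Because of the median, every level set $\{w_i \ge t\}$ with $t>0$ has volume at most $m$, which gives $\vol_G(S)\le\vol_G(\ol S)$ for free.

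\textbf{Step 2: random threshold.} Draw $\bt$ with $\bt^2$ uniform on $[0, \|w\|_\infty^2]$ and set $S_{\bt} = \{w_i \ge \bt\}$, which has the claimed form $\{\sigma y_i \ge \tau\}$. Then $\Ex[\vol_G(S_{\bt})] = 2m\,\Ex_{\bi}[w_{\bi}^2]/\|w\|_\infty^2$. For the crossing edges, use $|w_i^2 - w_j^2| = |w_i-w_j|(w_i+w_j)$ and Cauchy--Schwarz, together with $\Ex_{(\bi,\bj)}[(w_{\bi}+w_{\bj})^2]\le 4\Ex_{\bi}[w_{\bi}^2]$. This gives
\[ \Ex[E(S_{\bt},\ol{S_{\bt}})] \le \frac{m}{\|w\|_\infty^2}\sqrt{a\,\Ex[w^2]}\cdot 2\sqrt{\Ex[w^2]} . \]
So the ratio of expectations is $O(\sqrt{a/b})$.

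\textbf{Step 3: the volume lower bound (main obstacle).} We need a single threshold that is sparse and also has volume $\Omega(mb/\|y\|_\infty^2)$; this is where the difficulty lies. First note $\|w\|_\infty \le 2\|y\|_\infty$, since $|\tau_0|\le\|y\|_\infty$. The first thing to try is to consider $X = E(S_{\bt},\ol{S_{\bt}}) - C\sqrt{a/b}\,\vol_G(S_{\bt})$, whose expectation is at most $-(C/2)\sqrt{a/b}\,\Ex[\vol]$. Some $t$ then has $X<0$, but that $t$ might have tiny volume. To handle this, restrict to thresholds with $t^2 \ge \Ex[w^2]/4$. The mass of $w^2$ lying below $\Ex[w^2]/4$ is at most a quarter of $\Ex[w^2]$, so the truncated expected volume is still $\Omega(m\,\Ex[w^2]/\|w\|_\infty^2)$. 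Applying the same averaging argument to the truncated interval $[\Ex[w^2]/4,\|w\|_\infty^2]$ gives the same sparsity bound up to constants. Each admissible threshold then has $\vol_G(S_t)\cdot\|w\|_\infty^2 \ge \sum_{w_i\ge t}\deg(i)\, w_i^2$, and this sum is $\Omega(m\,\Ex[w^2])$ because of the truncation. If this volume bound does not go through cleanly, the fallback is to accept the conclusion only in expectation, via the ratio-of-expectations form of Step 2.
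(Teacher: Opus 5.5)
\textbf{There is a genuine gap in Step~3.} Your Steps~1 and~2 are the median-centering and random-threshold computation that the paper's remark sketches, and they are correct up to constants. There are two small slips there.
Choosing the sign with the better ratio need not also give $\Ex_{\bi}[w_{\bi}^2]\ge b/2$. Take the sign with the larger second moment instead: it still has $\Ex_{(\bi,\bj)}[(w_{\bi}-w_{\bj})^2]\le a$, so its ratio is at most $2a/b$.
The first factor in your Step~2 display should be $\sqrt{a}$, not $\sqrt{a\,\Ex[w^2]}$; as written the ratio comes out as $\sqrt{a}$.

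Now the gap. Write $\mu=\Ex_{\bi}[w_{\bi}^2]$, $M=\|w\|_\infty^2$, $S_s=\{i : w_i^2\ge s\}$ and $V(s)=\vol_G(S_s)$. The sum $\sum_{w_i\ge t}\deg(i)\,w_i^2$ is non-increasing in $t$. So truncating the thresholds from \emph{below} cannot give a volume lower bound at every admissible threshold. Your bound $\sum_{w_i\ge t}\deg(i)w_i^2> 3m\mu/2$ holds only at the smallest admissible value $t^2=\mu/4$. At $t=\|w\|_\infty$ the level set can be a single vertex of small degree. The averaging over $[\mu/4,M]$ gives no control over which threshold it selects, and it may land on exactly such a light set. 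So your argument yields sparsity and volume only for possibly different thresholds. Your fallback, volume only in expectation, is also all that the paper's remark computes. But the statement asks for one cut with both properties, so neither proves it.

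\textbf{The fix is to truncate from above}, discarding the thresholds whose level sets are light.
Since $\int_0^M V(s)\,ds=\sum_i\deg(i)w_i^2=2m\mu$, let $s^*$ be the largest $s$ with $V(s)\ge m\mu/M$. It is positive, because $\vol_G(\{w>0\})\ge 2m\mu/M$.
Then $V<m\mu/M$ on $(s^*,M]$, so $\int_0^{s^*}V(s)\,ds\ge m\mu$.
Your Cauchy--Schwarz computation gives $\int_0^{s^*}E(S_s,\ol{S_s})\,ds\le\sum_{(i,j)\in E}|w_i^2-w_j^2|\le 2m\sqrt{a\mu}$.
Averaging over $s\in(0,s^*)$ therefore yields a single $s$ with $E(S_s,\ol{S_s})\le 2\sqrt{a/\mu}\,V(s)\lsim\sqrt{a/b}\,V(s)$.
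This same $s$ automatically has $V(s)\ge m\mu/M\ge mb/(8\|y\|_\infty^2)$, using $\mu\ge b/2$ and $M\le 4\|y\|_\infty^2$.
It also has $V(s)\le m\le\vol_G(\ol{S_s})$ by the median choice. Hence $S_s$, which has the form $\{i:\sigma y_i\ge\tau\}$, satisfies all three requirements at once.
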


\begin{remark}[Volume of $S$] 
Standard proofs of Theorem \ref{thm:cheeger} may not explicitly compute a bound on $\vol_{G}(\bS)$, but such bounds follows directly from standard rounding schemes. The vector $y$ will be rounded by centering the vector around the median $\tau \in \R$ and then sampling a threshold $\gamma$ uniformly from $0$ to $\max_{i} (y_i-\tau)^2$. The setting of $\tau$ ensures that half the elements are positive and half negative. Therefore $\bS$ contains fewer vertices and has smaller volume than $\ol{\bS}$. The expectation can be computed explicitly, which is simply the degrees times the probability that half the elements $(y_i -\tau)^2$ are larger than $\bgamma$. 
\begin{align*}
\E[\min(\vol_{G}(\bS),\vol_{G}(\ol{\bS}))] = \E[\vol_{G}(\bS)] = \frac{1}{2}\sum_{i} d_i \Pr[i \in \bS] = \frac{\sum_{i} d_i (y_i -\tau)^2}{2\max_{i} (y_i-\tau)^2}
\end{align*} 
The final bound follows from the fact that $(y_i -\tau)^2 \leq \|y\|_{\infty}^2$ and the observation that the initial condition of Lemma \ref{lem:trevisan-rounding}, $\Ex_{\bi}\left[ \left(y_{\bi} - \Ex_{\bi}[y_{\bi}] \right)^2 \right]$ can be no greater than $\sum_{i} d_i (y_i - \tau)^2/m$, as $\Ex_{\bi}[y_{\bi}]$ is the minimizing setting of $\tau$.
\end{remark}

\paragraph{Putting Everything Together: Proof of Theorem~\ref{thm:cheeger-dense}.} First, note that, as discussed prior to (\ref{eq:noisy-deflate}), we assume the local computation algorithm receives query access to the entries of an $n \times n$ matrix $\hat{M}$ in (\ref{eq:noisy-deflate}), by using an additive $\poly(1/(\xi \eps))$-additive overhead on the size of the query set $\bQ$; in particular, for any $(i,j) \in [n]\times [n]$, we may estimate $\deg_G(i)$ and $\deg_G(j)$ via determining the number of edges to $\bQ$, and obtain $d(i)$ and $d(j)$, as well as the number of edges in $\bQ \times \bQ$ in order to let $\hat{m}$ be the estimate of $m$.

From Lemma~\ref{lem:feasibility}, the matrix $(\alpha n/2)\hat{M}$ is a bounded-entry matrix, whose top eigenvalue is is at least $(\alpha n/2) (1 - 2\phi_{\eps}(G) - O((\xi + \alpha \beta) / \eps))$, and as a result of Theorem~\ref{thm:main} with accuracy $\eps_0$, there is a (randomized) local computation algorithm $\calA$ which receives query access to the entries of a vector $\bx \in \R^n$. With high probability over the randomness of $\calA$, the output vector $\bx$ satisfies the three conditions of Lemma~\ref{lem:output-vec}. We consider the vector $\by \in \R^n$ where each $\by_i$ is given by rounding $\bx_i / \sqrt{d(i)}$ to the nearest multiple of $\poly(\eps) / n$. One can see that the vector $\by$ satisfies the conditions of Theorem~\ref{thm:cheeger} with $a = \zeta / m$ and $b = 1/(4m)$, and in addition, each coordinate is one of at most $\poly(1/\eps)$ many values. In other words, there exists a collection of at most $\poly(1/\eps)$ many cuts obtained by thresholding $\by$ such that at least one of them satisfies $\phi_G(S) \lsim \sqrt{\zeta}$ and $\vol_G(\ol{S}) \geq \vol_G(S)$ is at least $\Omega(1/\|\by\|_{\infty}^2)$, which is at least $\Omega(\alpha^3 n^2)$. This guarantees the existence of a cut $S \subset [n]$ among a family of at most $\poly(1/\eps)$ cuts; using $\poly(\log(1/\eps)/\alpha)$ additional queries, we may iterate through all possible cuts $S$, and estimate $\vol_G(S)$ and $E_G(S, \ol{S})$ to find the desired cut. After setting the parameters $\eps_0 \leq \poly(\eps \alpha)$, $\xi \leq \poly(\alpha \beta)$ and by the assumption of Theorem \ref{thm:cheeger-dense} that $\alpha,\beta \leq \poly(\eps)$, we conclude that the set $S$ returned satisfies $\phi_G(S) \lesssim \sqrt{\phi_{\eps}(G)} + \eps$.

\section{Local Computation Algorithm for Max-Cut} \label{sec:max-cut}
Similar to the local computation algorithm for sparsest cut, this section outlines the application of Theorem \ref{thm:main-intro} to provide local access to an approximate maximum cut of a dense and nearly bipartite graph. In particular, we prove the following theorem. 

\localtrevisan*

The algorithm is structurally quite similar to the LCA for the sparsest cut problem, in particular it reduces the problem of computing the maximum cut in approximately bipartite graphs to repeatedly computing close to bipartite cuts in a certain class of dense graphs. We define the following measure of distance to bipartiteness (also defined in \cite{L12}) which measures for a given partition $L,R$ of a subset $S$ of $V$, the fraction of edges incident to $S$ with one endpoint in $L$ and the other not in $R$ (and vice-versa).

\begin{definition}[Bipartiteness Ratio of a Graph]\label{def:bipartitness}
    For a fixed graph $G = ([n], E)$ and a subset $S \subset [n]$ split into two disjoint partitions $(L,R)$ such that $S = L \cup R$, we let
    \begin{itemize}
        \item The bipartiteness ratio of the partition is $\calB_G(L,R) = (2E(L,L)+ 2E(R,R)+E(S,\ol{S}))/\vol_G(L \cup R)$.
        \item The bipartiteness ratio of the entire graph $G$ is defined by 
    \[ \calB(G) = \min_{S \subseteq V} \calB_G(S,\ol{S}). \]
    \end{itemize}
    
\end{definition}

\subsection{Reduction to $(\alpha, \beta)$-Dense Graphs, for $\alpha, \beta = \poly(\eps)$: Proof of Theorem~\ref{thm:trevisan}}

As in the proof of Theorem \ref{thm:cheeger-intro}, we show that it suffices to construct a local computation algorithm for $(\alpha,\beta)$-dense graphs (see Definition \ref{def:dense-graph}). For this section, consider a fixed input graph $G = ([n], E)$ which we provide query access to, as well as a fixed accuracy parameter $\eps > 0$ (notice, we may assume that $G$ contains a max-cut of at least $\eps n^2$ edges, as otherwise, the $\eps n^2$ error  allows any cut $S$ to satisfy the approximation ratio of the Theorem).

\begin{lemma}\label{lem:trevisan-noise-addition}
For $\alpha, \beta \in (\eps^c,\eps/c)$ for some constant $c$, let $\bN \sim \calG(n, 4\alpha)$ be an Erd\"{o}s-Reyni graph and let $\bH = G \cup \bN$. The following occurs with high probability over $\bN$. 
\begin{itemize}
\item For any $T \subset [n]$ which satisfies $\vol_{\bH}(\ol{T}) \geq \eps n^2 / 4$. Define the parameter $\rho(\ol{T}) = \vol_{\bH}(\ol{T}) / \vol_{\bH}(\bH)$.
\item Let $\bH'$ be the subgraph of $\bH$ induced by $\ol{T}$. Then, $\bH'$ is $(\alpha, \beta)$-dense and satisfies $\calB(\bH') \leq \calB(G)/\rho(\ol{T}) + O(\alpha / (\eps  \rho(\ol{T})) )$.
\end{itemize}
\end{lemma}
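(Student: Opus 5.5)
The proof will follow the template of Lemma~\ref{lem:noise-addition}, in two parts: a density claim for $\bH'$ and a bound on its bipartiteness ratio.

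\textbf{Density.} Since degrees are at most $n$, the assumption $\vol_{\bH}(\ol{T}) \geq \eps n^2/4$ gives $|\ol{T}| \geq \eps n/4$. It therefore suffices to show that, with high probability, every induced subgraph of $\bN$ on $t \geq \eps n/4$ vertices is $(\alpha,\beta)$-dense. Adding the edges of $G$ only raises degrees, so $\bH'$ inherits this property. The argument is the Chernoff-plus-union-bound step from Lemma~\ref{lem:noise-addition}:
\begin{itemize}
\item Fix $L$ of size $t$ and $B \subset L$ of size $\beta t$.
\item Each $j \in B$ has expected degree about $4\alpha t$ into $L$. Restricting to half of the potential edges decorrelates the degrees of distinct vertices of $B$.
\item Hence the probability that every $j \in B$ has degree below $\alpha t$ is $e^{-\Omega(\alpha\beta t^2)}$.
\item Since $\alpha\beta\eps^2 n^2 \gg n$, a union bound over the at most $2^{2n}$ pairs $(L,B)$ goes through.
\end{itemize}
In the same high-probability event we also record that $\bN$ has at most $8\alpha n^2$ edges.

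\textbf{Bipartiteness ratio.} Let $(L,R)$ with $S = L \cup R$ be a partition achieving $\calB(G)$. As witnesses in $\bH'$ take $L' = L\cap \ol{T}$, $R' = R\cap\ol{T}$, and $S' = L'\cup R'$. We compare the numerator of $\calB_{\bH'}(L',R')$ with the numerator of $\calB_G(L,R)$, and the denominators separately.
\begin{itemize}
\item \emph{Numerator.} First, $E_{\bH'}(L',L') \leq E_{\bH}(L,L)$ and likewise for $R$. Second, an edge of $\bH'$ leaving $S'$ inside $\ol{T}$ has its other endpoint in $\ol{T}\setminus S' \subset \ol{S}$, so it is counted in $E_{\bH}(S,\ol{S})$. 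Therefore the numerator for $\bH'$ is at most the numerator for $G$ plus the edges of $\bN$, which contribute at most $O(\alpha n^2)$.
\item \emph{Denominator.} Here I would apply the normalization by $\rho(\ol{T})$: the analysis controls the normalized quantity, and the denominator is lower-bounded as $\vol_{\bH'}(S') \gtrsim \rho(\ol{T})\cdot \vol_G(S)$.
\end{itemize}
Dividing the numerator bound by this denominator bound gives $\calB(G)/\rho(\ol{T})$ plus an additive term of order $\alpha n^2/(\rho(\ol{T})\,\vol_G(S))$. Since $\vol_G(S) \gtrsim \eps n^2$, that additive term is $O(\alpha/(\eps\rho(\ol{T})))$, which is the claimed bound.

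\textbf{Main obstacle.} The delicate step is the lower bound on the denominator $\vol_{\bH'}(S')$. Removing $T$ can destroy a large part of $S$, and it also removes the edges from $S'$ into $T$, which count toward volume in $\bH$ but not in $\bH'$.

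My first attempt would not use $G$'s optimal $S$ directly. Instead I would use the trivial witness over the entire surviving set: take $S' = \ol{T}$, with $(L',R')$ chosen as the restriction of a max-cut-like partition (namely $(L,R)$ extended to a partition of all of $[n]$). Then every edge of $\bH'$ is either cut or counted in $2E(L',L')+2E(R',R')$, and the denominator is exactly $\vol_{\bH'}(\ol{T})$. The bad edges are at most those of $G$ plus $O(\alpha n^2)$. The total degree mass is at most $\vol_{\bH}$, which is comparable to $\vol_{\bH'}(\ol{T})/\rho(\ol{T})$ up to the edges into $T$. That produces exactly the $1/\rho(\ol{T})$ factor.

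The remaining check is that the edges from $\ol{T}$ into $T$ do not spoil the comparison between $\vol_{\bH'}(\ol{T})$ and $\vol_{\bH}(\ol{T})$. I would handle this by working with $\calB(G)$ normalized against $\vol_{\bH}$, but I expect this bookkeeping to need care.
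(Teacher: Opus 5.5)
The density part and your eventual witness match the paper's. Since $\calB(G)$ is defined as a minimum over full partitions $(S,\ol{S})$ of $[n]$, the optimal pair already covers $[n]$, and its restriction to $\ol{T}$ is exactly what the paper uses. Your partial-partition first attempt can be dropped; its bound $\vol_{\bH'}(S')\geq\Omega(\rho(\ol{T})\vol_G(S))$ is unsupported and fails, e.g., when $S\subseteq T$.

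The genuine gap is the step you leave open. With this witness the denominator is $\vol_{\bH'}(\ol{T})=\vol_{\bH}(\ol{T})-E_{\bH}(\ol{T},T)$, whereas $\rho(\ol{T})\vol_{\bH}(\bH)=\vol_{\bH}(\ol{T})$. The paper handles this step by asserting $\vol_{\bH'}(S')+\vol_{\bH'}(\ol{S}')\geq\rho(\ol{T})\vol_{\bH}(\bH)$, but that is the reverse of the true inequality, so the paper's proof has the same hole. No bookkeeping can close it, because the statement is false as written.

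Here is a counterexample. Let $G$ be $K_{n/2,n/2}$ on sides $A,B$ together with a clique on a set $K\subseteq A$ of $k=\eps n$ vertices, and take $\ol{T}=K$.
\begin{itemize}
\item $\vol_{\bH}(K)\geq kn/2=\eps n^2/2$ and $\vol_{\bH}(\bH)\leq n^2$, so the hypothesis holds and $\rho(K)\geq\eps/2$.
\item $\calB(G)\leq\calB_G(A,B)\leq k^2/(n^2/2)=2\eps^2$.
\item Hence the claimed bound is $O(\eps+\alpha/\eps^2)$, which is $O(\eps)$ for $\alpha\leq\eps^3$ (e.g.\ $\alpha=\poly(\eps)$, as in the application).
\item Yet $\bH'=\bH[K]$ is the complete graph on $k$ vertices for every draw of $\bN$. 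Any partition of it leaves at least $k^2/4-k/2$ edges uncut, so $\calB(\bH')\geq(k-2)/(2(k-1))\approx 1/2$.
\end{itemize}
The failure is that almost all of $\vol_{\bH}(K)$ goes into $T$: $\vol_{\bH'}(K)=k(k-1)\approx\eps^2n^2$.

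What your witness does prove is the version normalized by the surviving edge mass, $\rho'(\ol{T})=\vol_{\bH'}(\ol{T})/\vol_{\bH}(\bH)$, which is the normalization Trevisan's recursion actually uses. Using $\vol(G)\leq\vol_{\bH}(\bH)$ and $\vol(G)\geq\eps n^2$, you get $\calB(\bH')\leq(\calB(G)\vol(G)+O(\alpha n^2))/(\rho'(\ol{T})\vol_{\bH}(\bH))\leq\calB(G)/\rho'(\ol{T})+O(\alpha/(\eps\rho'(\ol{T})))$. The downstream max-cut reduction would then have to track $\rho'$ rather than $\rho$.
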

\begin{proof}
The proof of $\bH'$ being $(\alpha,\beta)$-dense follows from the proof of Lemma \ref{lem:noise-addition}. We simply need to upper bound $\calB(\bH')$. As before, consider the subset $S \subset [n]$ which realizes the bound on $\calB(G)$, and let $S' = S \cap \ol{T}$ and $\ol{S}' = \ol{S} \cap \ol{T}$. We upper bound $\calB_{\bH'}(S',\ol{S}')$ in terms of $\calB_{G}(S,\ol{S})$. We have that $\vol_{\bH'}(S') + \vol_{\bH'}(\ol{S}') \geq \rho(\ol{T})\cdot \vol_{\bH}(H\b)$
lower bounding the denominator of $\calB_{\bH'}(S',\ol{S}')$. Then, we have the with high probability over the draw of $\bN$ (since $\bN$ has at most $8\alpha n^2$ edges), that  $E_{\bH'}(S',S') \leq E_{G}(S, S) + 8\alpha n^2$ and $E_{\bH'}(\ol{S}',\ol{S}') \leq E_{G}(\ol{S}, \ol{S}) + 8\alpha n^2$. Since $\vol(G) \geq \eps n^2$, this implies $\calB(\bH') \leq \calB(G)/\rho + 16\alpha /(\eps\rho)$.
\end{proof}

Similarly to the sparsest cut LCA, we will assume there exists a local computation algorithm $\calA$ which finds a cut with nearly optimal bipartiteness ratio assuming the underlying graph is $(\alpha,\beta)$-dense (Theorem~\ref{thm:trevisan-dense}). Using Lemma~\ref{lem:trevisan-noise-addition}, we analyze the algorithm in Figure~\ref{fig:prepro-trevisan} to prove Theorem~\ref{thm:trevisan}, which proceeds by iteratively cutting out approximately bipartite cuts and repeating on the induced subgraph of remaining vertices. The main difference from the sparsest cut LCA is that the volume of the final cut needs to be close to $2\vert E \vert$, ensuring it cuts at least $(1-\eps)\vert E \vert$ many edges. Any remaining edges have their vertices assigned randomly. In order to determine when sufficently many edges have been cut, we will assume access to two subroutines, $\EstimateVol(S, H, \xi, \delta)$ defined in Theorem \ref{thm:cheeger-intro}, which receives query access of the form ``$i \in S$'', as well as query access to the adjacency matrix of a graph $H$ on vertex set $[n]$ and outputs an estimate $\hat{\boldeta}$ of $\vol_{H}(S)$ which is off by at most $\xi n^2$ with probability $1-\delta$.

\begin{theorem}\label{thm:trevisan-dense}
There exists a (randomized) local computation algorithm $\calA$ with query access to the adjacency matrix $G = (V, E)$, an accuracy parameter $\eps > 0$, and a failure probability $\delta > 0$. 
\begin{itemize}
\item For any $\gamma \geq \poly(\log(1/\delta)/\eps)/|V|$, $\calA$ draws a subset $\bQ \subset V$ where $j \in \bQ$ i.i.d. with probability $\gamma$, and queries the induced subgraph of $\bQ$. On query $i \in V$, $\calA$ queries entries $(i, j)$ for all $j \in \bQ$.  
\item Whenever $G$ is $(\alpha,\beta)$-dense, letting $\bL,\bR \subset [n]$ be the cut that $\calA$ generates,
\begin{align*}
\Prx_{\calA}\left[ \begin{array}{c} \calB_{G}(\bL,\bR) \leq O(\sqrt{\calB(G)}) + \eps \\ \alpha^3 n^2 \leq \vol_{G}(\bL \cup \bR)  \end{array} \right] \geq 1-\delta.
\end{align*}
\end{itemize}
\end{theorem}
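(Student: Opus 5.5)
Theorem~\ref{thm:trevisan-dense} can be proved by mirroring the proof of Theorem~\ref{thm:cheeger-dense}, using the signless analogue of the normalized adjacency. We give query access to the noisy matrix
\[ \hat{M}_{ij} = -\frac{\ind\{(i,j)\in E\}}{\sqrt{d(i)d(j)}}, \]
where $d(i)$ is the degree estimate (clamped below at $\alpha n$) computed from the sample $\bQ$, exactly as before. No deflation term is needed. The top eigenvector of $\hat{M}$ is the bottom eigenvector of the normalized adjacency, and the quantity $1-\lambda_{\max}(\hat M)$ controls $\calB(G)$ via Trevisan's inequality. Since $\|(\alpha n)\hat{M}\|_\infty\le 1$, the matrix $(\alpha n)\hat{M}$ is a bounded-entry matrix. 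Because the normalized adjacency has spectrum in $[-1,1]$ (up to the degree-estimation perturbation), $\lambda_{\min}((\alpha n)\hat M) = O(\lambda_{\max})$, so Theorem~\ref{thm:main} applies with its $\poly(1/\eps_0)$ complexity.

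\textbf{Feasibility (analogue of Lemma~\ref{lem:feasibility}).} Take $(L,R)$ realizing $\calB(G)$ and set $x_i=\sqrt{d(i)}$ on $L$, $x_i=-\sqrt{d(i)}$ on $R$, and $x_i=0$ elsewhere. Then
\[ \langle x,\hat M x\rangle = 2E(L,R)-2E(L,L)-2E(R,R) \quad\text{and}\quad \|x\|_2^2=v(L\cup R). \]
Using $\vol(L\cup R)=2E(L,L)+2E(R,R)+2E(L,R)+E(S,\ol S)$, the Rayleigh quotient is at least $1-2\calB(G)-O((\xi+\alpha\beta)/\eps)$; here we use that $\vol(L\cup R)\ge \eps n^2$, which the max-cut assumption provides. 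Running Theorem~\ref{thm:main} with accuracy $\eps_0=\poly(\alpha\eps)$ then yields $\bx$ with $\langle \bx,\hat M\bx\rangle/\|\bx\|_2^2\ge 1-\zeta$, where $\zeta = O(\calB(G))+O((\xi+\alpha\beta)/\eps+\eps_0/\alpha)$. As in Lemma~\ref{lem:output-vec}, we also get $\|\bx\|_\infty\lsim 1/(\alpha\sqrt n)$.

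\textbf{Clean conditions and rounding.} Next we translate $\bx$ into Trevisan's form. With $y_i=\bx_i/\sqrt{d(i)}$, the argument of Lemma~\ref{lem:clean-guaran}, using $d(i)\ge\deg_G(i)$ for all but $\xi n$ vertices, gives
\[ \Ex_{(\bi,\bj)}[(y_{\bi}+y_{\bj})^2]\lsim \zeta\,\Ex_{\bi}[y_{\bi}^2] + O(\xi/\alpha^3)/m. \]
Trevisan's rounding lemma (the hard direction) then yields a threshold cut: for some $\tau>0$, take $L=\{i: y_i\ge \tau\}$ and $R=\{i: y_i\le -\tau\}$, giving $\calB_G(L,R)\lsim\sqrt{\zeta}$. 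To get the volume lower bound, we use the same computation as in the Remark after Theorem~\ref{thm:cheeger}, with $\tau^2$ drawn uniformly from $[0,\|y\|_\infty^2]$. The expected volume of $L\cup R$ is $\sum_i d_i y_i^2/\|y\|_\infty^2\gtrsim m\,\Ex[y^2]/\|y\|_\infty^2 \gtrsim \alpha^3 n^2$. Because this is an averaging argument, some good threshold simultaneously achieves the ratio bound and volume at least a constant fraction of this expectation.

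\textbf{Locality and the main obstacle.} To make the rounding local, we round $y_i$ to multiples of $\poly(\eps)/n$, so only $\poly(1/\eps)$ thresholds are candidates. For each candidate we estimate $E(L,L)$, $E(R,R)$, $E(S,\ol S)$ and $\vol(L\cup R)$ by sampling, using query access to $\bx$; a union bound with failure probability $\delta$ costs only a $\log(1/\delta)$ factor. We then fix the best candidate threshold. A query for $i$ computes $\bx_i$ via Figure~\ref{fig:query} and reports $L$, $R$, or neither.

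The main obstacle is robustness of Trevisan's rounding to the noisy degrees and to the additive errors in $\zeta$. The threshold argument requires comparing $\sum_{(i,j)\in E}|y_i+y_j|\,(|y_i|+|y_j|)$ against $\sum_i \deg_G(i)y_i^2$, whereas our control is in terms of $d(i)$. I would handle this as in Lemma~\ref{lem:clean-guaran}. Vertices with bad estimates contribute at most $(\xi+\beta) n\|y\|_\infty^2\lsim(\xi+\beta)/\alpha^3\cdot\|x\|^2/n$, which is absorbed by choosing $\xi,\beta\le\poly(\alpha\eps)$. The remaining error is then swallowed into the additive $\eps$ after the square root.
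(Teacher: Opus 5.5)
Your proposal is correct and follows the paper's proof essentially step for step. The shared steps are: the noisy negative normalized adjacency $\hat N$ with clamped degree estimates; the feasibility bound via the $\pm\sqrt{d(i)}$ test vector; Theorem~\ref{thm:main} applied to $(\alpha n)\hat N$ together with the bound $\|\bx\|_\infty \lesssim 1/(\alpha\sqrt{n})$; the clean conditions of Lemma~\ref{lem:trevisan-vol-bound}; and Trevisan's threshold rounding, discretized to $\poly(1/\eps)$ candidate thresholds and selected by sampling. One thing to make explicit, as the paper does with its ``(and boosting)'', is that Theorem~\ref{thm:main} alone only succeeds with probability $0.99$. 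To reach $1-\delta$, you should run $O(\log(1/\delta))$ independent copies and feed all of their candidate thresholds into your sampling-based selection.
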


\begin{figure}[ht!]
\begin{framed}
\textbf{Local Computation Algorithm for Max Cut.} The algorithm receives query access to the adjacency matrix of a graph $G = ([n], E)$, an accuracy parameter $\eps > 0$, and an index $i \in [n]$.
\begin{enumerate}
\item Use public randomness to generate $\bN \sim \calG(n, 4\alpha)$, and use query access to the adjacency matrix of $G$ to provide query access to the adjacency matrix of $\bH_0 = G \cup \bN$.
\item Initialize the following objects:
\begin{itemize}
\item A collection $\sfA$ (initially empty) of at most $t$ local computation algorithms ($\calA_1, \dots, \calA_t$) which provide query access to subsets of $[n]$ ($\bL$ and $\bR$, which are the union of the first $\ell$ assignments, for some $\ell \leq t$).
\item Using public randomness, $t$ independent draws $\bQ_1^{(0)}, \dots, \bQ_t^{(0)} \subset [n]$ where $j \in \bQ_{\ell}^{(0)}$ i.i.d. with probability $2\gamma$ for $\gamma \geq \poly(1/\eps)/n$ \footnotemark.
\item Execute $\EstimateVol([n], \bH_0, \eps/8, 1/400)$ and let $\hat{\bv}$ be the output.
\end{itemize}
\item\label{ln:trev-vol-check} For $\ell=1, \dots t$, repeat the following:
\begin{enumerate}
\item\label{ln:trev-estimate-vol} Let $\bT_{\ell-1} \subset [n]$ denote the set given by
\begin{align*}
    \bT_{\ell-1} = \left\{ j \in [n] : \exists \ell' \leq \ell - 1 \text{ s.t. $\calA_{\ell'}$ declares $j$ lies in $\bL_{\ell'}$ or $\bR_{\ell'}$ } \right \}
\end{align*} 
Execute $\EstimateVol(\bT_{\ell-1}, \bH_0, \eps/8, 1/(400 t))$ and let $\hat{\boldeta}$ be the output. If $\hat{\boldeta} \geq \hat{\bv}-\eps n^2/2$, break out of the loop to Line~\ref{ln:trev-default}. 
\item Otherwise, $\hat{\boldeta} < \hat{\bv}-\eps n^2/2$ and continue. Let $\bV_{\ell} = [n] \setminus \bT_{\ell-1}$, which is the (implicit) vertex set obtained after removing the union of sets defined by $\calA_{1},\dots, \calA_{\ell-1}$. 
\item For every $\ell' \leq \ell-1$, query the local computation algorithm $\calA_{\ell'}$ with $j$ for $j \in \bQ^{(0)}_{\ell}$ in order to write down $\bQ_{\ell} = \bQ^{(0)}_{\ell} \cap \bV_{\ell}$. Note, $\bQ_{\ell}$ includes each $j \in \bV_{\ell}$ i.i.d. w.p $2\gamma$.
\item\label{ln:trev-declare-output} Let $\bH_{\ell}$ be the subgraph of $\bH_0$ induced by $\bV_{\ell}$, and let $\calA_{\ell}$ be the local computation algorithm from Theorem~\ref{thm:trevisan-dense} with parameter $\eps' = \eps/2$, which uses query set $\bQ_{\ell} \subset \bV_{\ell}$. Query $\calA_{\ell}$ with $i$ (which must lie in $\bV_{\ell}$), and if $\calA_{\ell}$ declares that $i$ lies in the partial cut, output ``$i \in \bL$'' or ``$i \in \bR$'' and terminate.
\end{enumerate}
\item\label{ln:trev-default} If the procedure has not produced an output yet, using public randomness output ``$i \in \bL$'' with probability $1/2$ and otherwise ``$i \in \bR$''.
\end{enumerate}
\end{framed}
\caption{Local Computation Algorithm for Max Cut}\label{fig:prepro-trevisan}
\end{figure}
\footnotetext{Note that Theorem \ref{thm:main} samples a subset of a fixed size, while other sub-routines such as in Theorem \ref{thm:trevisan-dense} or $\EstimateVol(S, H, \xi, \delta)$ sample entries independently with fixed probability. These sampling methods are interchangeable, so it suffices to define query sets $\bQ_1,\dots \bQ_t$ in terms of either sampling model.}

\paragraph{Correctness Guarantees.} We first establish the correctness guarantees of Theorem~\ref{thm:trevisan}, where we show that the local computation algorithm specified in Figure~\ref{fig:prepro-trevisan} provides query access to a cut $\bS \subset [n]$ which satisfies $E(\bS,\ol{\bS}) \geq (1-\sqrt{\varphi})\vert E \vert - \eps^2 n$ with probability at least $0.99$. Observe that by the assumptions of Theorem~\ref{thm:trevisan}, we know that $G$ contains a cut of volume $2\vert E \vert$, which has bipartiteness ratio at most $\varphi$. Therefore, our goal will be to output a cut of volume at least $(1-\eps)\vert E \vert$ with bipartiteness ratio at most $\sqrt{\varphi} + \eps$. 

By setting $\delta > \frac{1}{400 t}$, we assume that all executions of $\EstimateVol$, and uses of Lemma~\ref{lem:trevisan-noise-addition} and Theorem~\ref{thm:trevisan-dense} result in conclusions which succeed; each event occurs with high enough probability, so by a union bound, all events occur with probability at least $0.99$. In the remainder of the proof, we assume this is the case. Notice that Figure~\ref{fig:prepro-trevisan} defines a local computation algorithm which, on query $i \in [n]$ will also produce an output (either that ``$i \in \bL$'' or ``$i \in \bR$'' in some execution of Line~\ref{ln:trev-declare-output}, or that ``$i \in \bL$'' or ``$i \in \bR$'' in Line~\ref{ln:trev-default})\footnote{Either $\bL$ or $\bR$ can be the cut $\bS$ specified in Theorem \ref{thm:trevisan}, as all vertices are contained in $\bL \cup \bR$}. Hence, we may define
\[ \bL = \left\{ i \in [n] : \text{ Figure~\ref{fig:prepro-trevisan} on input $G$, $\eps$, and $i \in [n]$ outputs ``$i \in \bL$''}\right\}, \]
and $\bR$ analogously, notice that $\bL$ naturally decomposes as a union $\bL_1 \cup \dots \cup \bL_{\bell_0-1} \cup \bL_{\bell_0}$, where for $\ell \in \{1, \dots, \bell_0 - 1\}$, $\bL_{\ell}$ consists of the indices $i \in [n]$ where $\calA_{\ell}$ in Line~\ref{ln:trev-declare-output} outputs ``$i \in \bL$'' and $\ell$ is the smallest to do so, and $\bL_{\ell_0}$ consists of all vertices which reach Line~\ref{ln:trev-default} and declare ``$i \in \bL$''. The analogous decomposition also defines $\bR$ as $\bR_1 \cup \dots \cup \bR_{\ell_0-1} \cup \bR_{\ell_0}$. Note, $\bell_0$ represents the iteration of Line~\ref{ln:trev-estimate-vol} which triggered the ``break'' condition, and we default $\bell_0 = t+1$ if Line~\ref{ln:trev-estimate-vol} is never triggered (we will show this does not happen). Note, the following inductive definitions are a consequence of the iterative algorithm in Figure~\ref{fig:prepro-trevisan}:
\begin{itemize}
    \item $\bT_0 = \emptyset$.
    \item For $\ell=1, \dots, \bell_0-1$, $\bL_{\ell},\bR_{\ell}$ are the (implicit) subsets defined by $\calA_{\ell}$ executed on input graph $\bH_{\ell}$, which is the subgraph of $\bH_0$ induced by $\bV_{\ell} = [n] \setminus \bT_{\ell-1}$.
    \item Let $\bL_{\ell_0},\bR_{\ell_0}$ be the subsets of $\bH_{\ell_0}$ induced by $\bV_{\ell_0} = [n]\setminus \bT_{\ell_0 -1}$ which are output by Line \ref{ln:trev-default}. 
    \item For $\ell=2, \dots,\bell_0$, we have $\bT_{\ell-1} = \bL_1 \cup \bR_1 \cup \dots \cup \bL_{\ell-1} \cup \bR_{\ell-1}$.
\end{itemize}
Notice that, for every $\ell = 1, \dots, \bell_0-1$ executed in Line~\ref{ln:trev-vol-check} will have $\bT_{\ell-1}$ with $\vol_{\bH_0}(\bT_{\ell-1}) \leq 2\vert E \vert-\eps n^2/4$. This is because $\hat{\bv} \geq 2\vert E \vert-\eps n^2/8$ and for $\ell < \bell_0$, Line~\ref{ln:trev-estimate-vol} results in $\hat{\boldeta} \leq \hat{\bv} -\eps n^2/2$, and the fact that $\EstimateVol(\bT_{\ell-1}, \bH_0, \eps/8,\delta)$ is an $\eps n^2 / 8$-error approximation of $\vol_{\bH_0}(\bT_{\ell-1})$. Hence, define $\rho_{\ell}$ such that 
\[ \rho_{\ell} \eqdef \vol_{\bH_0}(\bV_{\ell}) / \vol_{\bH_0}(\bH_0), \] 
and notice that $\rho_{\ell} \geq \eps n^2 / (4 \vol_{\bH_0}(\bH_0))$ for any $\ell < \bell_0$, so we can use Lemma~\ref{lem:trevisan-noise-addition} and conclude the subgraph $\bH_{\ell}$ of $\bH_0$ induced by vertex set $\bV_{\ell}$ is $(\alpha, \beta)$-dense, and $\calB(\bH_{\ell}) \leq \calB(G)/\rho_{\ell} + O(\alpha/(\eps\rho_{\ell}))$. We also have that $\rho_{\ell_0}$ is at most $3\eps n^2/(4 \vol_{\bH_0}(\bH_0))$. We also have that $\bV_{\ell}$ must have size at least $(\eps/4)^{1/2} n$, since $\vol_{\bH_0}(\bV_{\ell}) \geq \vol(\bH_0) - \vol_{\bH_0}(\bT_{\ell-1}) \geq \eps n^2/4$.  As a result, the set $\bQ_{\ell} \subset \bV_{\ell}$, which includes each index $j \in \bV_{\ell}$ i.i.d. with probability $\gamma'$ satisfies $\gamma' \geq \poly(1/\eps)/n \geq \poly(1/\eps) / |\bV_{\ell}|$, and we may apply Theorem~\ref{thm:trevisan-dense}. The local computation algorithm $\calA_{\ell}$ in Line~\ref{ln:trev-declare-output} will satisfy,
\begin{align} 
\calB_{\bH_{\ell}}(\bL_{\ell}, \bR_{\ell}) \leq O\left(\sqrt{\calB(\bH_{\ell})}\right) + \eps, \qquad \text{and}\qquad \alpha^3 \vert \bV_{\ell} \vert^2 \leq \vol_{\bH_{\ell}}(\bL_{\ell} \cup \bR_{\ell}).  \label{eq:thm-trev-dense-bounds}
\end{align}
We aim to bound the total number of edges not cut by $\bL$ and $\bR$. For a graph $G_0$ on a vertex set $V$ and disjoint subsets $A, B \subset V$ which are not necessarily a partition of $V$, let $U_{G_0}(A,B)$ be sum of edges in $G_0$ incident on $A \cup B$, which are not in $A \times B$. We can upper bound $U_{G_0}(A,B)$ by $\calB_{G_0}(A, B) \cdot \vol_{G_0}(A \cup B)$. Note, for every edge $(u, v) \in \bH_0$ which does not cross $\bL, \bR$, must either be fully contained in $\bL$ or fully contained in $\bR$ (since $\bL \cup \bR = [n])$. Suppose $u, v \in \bL$, then assume without loss of generality that $u \in \bL_{\ell}$ and that $v \in \bL_{\ell'}$ for $\ell' \geq \ell$, so $(u, v)$ is counted in $U_{\bH_{\ell}}(\bL_{\ell}, \bR_{\ell})$. The similar argument holds for $\bR$, and it allows us to upper bound $U_{\bH_0}(\bL, \bR)$ by looking at each of the individual subgraphs found:
\begin{align*}
    U_{\bH_0}(\bL, \bR) \leq \sum_{\ell=1}^{\bell_0} U_{\bH_{\ell}}(\bL_{\ell},\bR_{\ell}) &\leq \sum_{\ell=1}^{\bell_0-1}\left(O(\sqrt{\calB(\bH_{\ell}}) + \eps \right) \cdot \vol_{\bH_{\ell}}(\bL_{\ell} \cup \bR_{\ell}) + \vol_{\bH_{\ell_0}}(\bH_{\ell_0})\\
    &\leq \sum_{\ell=1}^{\bell_0-1}\left(O(\sqrt{\calB(G)/\rho_{\ell} + \alpha/(\eps\rho_{\ell})}) + \eps \right) \cdot \vol_{\bH_{\ell}}(\bL_{\ell} \cup \bR_{\ell}) + \rho_{\ell_0} \vol_{\bH_0}(\bH_0) 
\end{align*}
where the second inequality uses the bound on $\calB_{\bH_{\ell}}(\bL_{\ell},\bR_{\ell})$ and the third inequality uses Lemma~\ref{lem:trevisan-noise-addition}. We can factor out the terms involving $\eps$, to get $\eps \sum_{\ell=1}^{\bell_0-1} \vol_{\bH_{\ell}}(\bL_{\ell} \cup \bR_{\ell}) \leq \eps \cdot \vol_{\bH_0}(\bH_0)$. We now repeat the proof of Theorem 4 in \cite{L12} to bound the entire sum. Consider the $\ell$-th cut $\bL_{\ell}, \bR_{\ell}$ for any $\ell < \bell_0$, by definition $\vol_{\bH_{\ell}}(\bL_{\ell} \cup \bR_{\ell}) = (\rho_{\ell} - \rho_{\ell+1}) \vol_{\bH_0}(\bH_0)$. Recall that $\rho_{\ell_0} \cdot \vol_{\bH_0}(\bH_0) \leq O(\eps n^2)$ as a consequence of Line \ref{ln:trev-estimate-vol}. In order to combine the terms together smoothly, we can bound the sum over $\ell=1$ to $\ell_0-1$ of $(\rho_{\ell} - \rho_{\ell+1})\frac{1}{\sqrt{\rho_{\ell}}}$ by the integral $ \int_{\rho_{\ell+1}}^{\rho_{\ell}}1/\sqrt{x} \ dx$. We can thus apply the following upper bound 
\begin{align*}
    (\rho_{\ell} - \rho_{\ell+1}) \vol_{\bH_0}(\bH_0) \cdot O(\sqrt{\calB(G)/\rho_{\ell} + \alpha/(\eps \rho)}) \leq O(\vol_{\bH_0}(\bH_0)) \cdot \int_{\rho_{\ell+1}}^{\rho_{\ell}} (\sqrt{\frac{\calB(G)}{x} + \frac{\alpha}{\eps x}}) dx
\end{align*}
Combining the integrals together and integrating yields a final bound of 
\begin{align*}
    U_{\bH_0}(\bL, \bR) \leq O(\vol_{\bH_0}(\bH_0) ) \cdot \int_{0}^{1} \left(\sqrt{\frac{\calB(G)}{x} + \frac{\alpha}{\eps x}}\right) dx + O(\eps n^2) \leq O(\sqrt{\calB(G) + \frac{\alpha }{\eps}}) \cdot\vol_{\bH_0}(\bH_0) + O(\eps n^2).
\end{align*}
We now have that  $U_{\bH_0}(\bL, \bR) \geq U_{G}(\bL, \bR)$, as removing the edges of $\bN$ can only reduce the number of uncut edges. After setting $\alpha,\beta \leq \poly(\eps)$, we can conclude that if the optimal max-cut of $G$ cuts $(1-\varphi)\vert E \vert$ edges, then $\calB(G) = \varphi$, and the cut $\bL,\bR$ returned by Theorem \ref{thm:trevisan} cuts at least $\left ( 1-O(\sqrt{\varphi}) \right )\vert E \vert - \eps n^2$ edges.  In order to show that $\bell_0 < t$, and the condition of Line~\ref{ln:trev-estimate-vol} eventually gets triggered, we use the fact the right-most inequality of (\ref{eq:thm-trev-dense-bounds}), which implies that $\vol_{\bH_0}(\bL \cup \bR)$ is at least $(\bell_0 - 1) \cdot \alpha^3 \vert \bV_{\ell} \vert^2$. Recall we have that $\vert \bV_{\ell} \vert \geq \sqrt{\eps/4} \ n$, so once $\bell_0 - 1 \geq 4 / (\eps \alpha^3)$, $\vol_{\bH_0}(\bL \cup \bR)$ would be large enough to trigger Line~\ref{ln:trev-estimate-vol}.

\paragraph{Query Complexity.} We turn to the query complexity of Figure~\ref{fig:prepro-trevisan}. The proof is analogous to that of Theorem~\ref{thm:cheeger-intro}. A single query to the adjacency matrix of $\bH_0$ uses at most one query to the adjacency matrix of $G$. For any particular local computation algorithm $\calA_{\ell}$ initialized in Line~\ref{ln:trev-declare-output}, Figure~\ref{fig:prepro-trevisan} does the following:
\begin{itemize}
    \item Determine which $\bQ_{\ell}^{(0)} \subset [n]$ are part of the induced subgraph $\bH_{\ell}$, which requires making $|\bQ_{\ell}^{(0)}|$ queries to the (previous) local computation algorithms $\calA_1, \dots, \calA_{\ell-1}$. This is used to determine the set $\bQ_{\ell}$.
    \item Once we know $\bQ_{\ell}$, $\calA_{\ell}$ makes $|\bQ_{\ell}|^2$ queries during the preprocessing time, and $|\bQ_{\ell}|$ queries during any particular query. 
\end{itemize}
As a result, the total number of probes needed is:
\[ \sum_{\ell=1}^t |\bQ_{\ell}|^2 + \sum_{\ell=1}^t \sum_{\ell' < \ell} |\bQ_{\ell}^{(0)}| \cdot |\bQ_{\ell'}|, \]
where the first term constitutes all of the preprocessing queries (done after determining $\bQ_{\ell}$ for each $\ell$), plus the $|\bQ_{\ell}^{(0)}|$ queries which are done to $\calA_{\ell'}$ for $\ell' < \ell$, each of which uses $|\bQ_{\ell'}|$ probes to the adjacency matrix of $G$. The fact that all $\bQ_{\ell} \subset \bQ_{\ell}^{(0)}$ and all are of size $\poly(1/\eps)$, and the fact $t$ is at most $O(\frac{1}{\eps \alpha^3})$ with $\alpha = \poly(\eps)$ gives the desired upper bound.


\subsection{Max Cut on $(\alpha,\beta)$-dense Graphs with $\alpha,\beta = \poly(\eps)$: Proof of Theorem \ref{thm:trevisan-dense}}

Fix an input graph $G = (V, E)$ which is $(\alpha,\beta)$-dense and has at least $\eps n^2$ edges. As in the algorithm of Theorem~\ref{thm:cheeger-dense}, the algorithm of Theorem~\ref{thm:trevisan-dense} will start by providing query access to the negative normalized adjacency matrix $N$ (defined in Definition~\ref{def:negative-normalized} below) on which we will execute our top eigenvector local computation algorithm from Theorem~\ref{thm:main-intro}, in order to access a approximation of the top vector.

\begin{definition}\label{def:negative-normalized}
    For a graph $G = ([n], E)$, the negative normalized adjacency matrix $N$ is the $n \times n$ matrix whose  $(i,j)$-entry is 
    \[ N_{ij} = \dfrac{-\ind\{ (i,j) \in E \}}{\sqrt{\deg_G(i) \cdot \deg_G(j)}}. \]
\end{definition}

As with definition \ref{def:deflated} in Theorem \ref{thm:cheeger-dense}, we  cannot provide exact access to the matrix $N$, and instead we provide access to an approximation of $N$ using the same degree estimates as in (\ref{eq:noisy-deflate}). The sampling to estimate $\deg(u)$ is non-adaptive (requiring uniform random draws from $[n]$), and incurs $\poly(1/(\xi\eps))$-additive overhead on the size of $\bQ$ (as specified in Theorem~\ref{thm:trevisan-dense}). Hence the algorithm receives query access to the matrix $\hat{N}$:
\begin{align}
    \hat{N}_{ij} = \dfrac{-\ind\{ (i,j) \in E \}}{\sqrt{d(i) \cdot d(j)}}, \label{eq:noisy-negative}
\end{align}
which we will refer to as a noisy negative normalized adjacency of $G$.

\begin{lemma}[Feasibility]\label{lem:trevisan-feasibility} Let $\hat{N}$ be a noisy negative normalized adjacency matrix of an $(\alpha,\beta)$-dense graph $G=([n],E)$. Then, $\|\hat{N}\|_{\infty} \leq 1/(\alpha n)$ and $\lambda_{\max}(\hat{N}) \geq 1-2\calB(G) - O((\xi + \alpha \beta)/\eps)$ 
\end{lemma}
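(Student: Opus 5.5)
The plan is to copy the proof of Lemma~\ref{lem:feasibility}, replacing the two-sided cut vector by a signed indicator of a near-bipartite partition. The entry bound is immediate. Every $d(i)$ satisfies $d(i) \geq \alpha n$ by construction, since we enforce the floor $\max\{\alpha n, \cdot\}$. Hence each nonzero entry of $\hat{N}$ has magnitude $1/\sqrt{d(i)d(j)} \leq 1/(\alpha n)$. There is no deflation term here, so this single term is the only contribution.

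For the eigenvalue lower bound, fix the pair $(L,R)$ realizing $\calB(G)$ (Definition~\ref{def:bipartitness}), and write $S = L \cup R$. Define $y_i = 1$ for $i \in L$, $y_i = -1$ for $i \in R$, and $y_i = 0$ otherwise. Set $x_i = y_i\sqrt{d(i)}$. Then
\[ \|x\|_2^2 = \sum_{i \in S} d(i) =: v(S), \]
and
\[ \langle x, \hat{N} x\rangle = -\sum_{i,j}\ind\{(i,j)\in E\}\, y_iy_j = 2E(L,R) - 2E(L,L) - 2E(R,R), \]
counting ordered pairs appropriately.

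Next I express this in terms of $\vol_G(S)$. The identity $\vol_G(S) = 2E(L,L) + 2E(R,R) + 2E(L,R) + E(S,\ol{S})$ gives
\[ 2E(L,R) = \vol_G(S) - 2E(L,L) - 2E(R,R) - E(S,\ol{S}). \]
Substituting,
\[ \langle x, \hat{N}x\rangle = \vol_G(S) - 2\bigl(2E(L,L)+2E(R,R)\bigr) - E(S,\ol{S}) \geq \vol_G(S)\bigl(1 - 2\calB_G(L,R)\bigr). \]
Therefore the Rayleigh quotient is at least
\[ \frac{\vol_G(S)}{v(S)}\,\bigl(1 - 2\calB(G)\bigr). \]

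It remains to show that $\vol_G(S)/v(S) \geq 1 - O((\xi+\alpha\beta)/\eps)$, which is the same estimate used in Lemma~\ref{lem:feasibility}. All but $\xi n$ vertices have $d(i) \leq (1+\xi)\max\{\deg_G(i),\alpha n\}$. At most $\beta n$ vertices have $\deg_G(i) < \alpha n$, and each of these contributes at most $\alpha n$ excess. The $\xi n$ badly estimated vertices contribute at most $\xi n^2$. Combining these,
\[ v(S) \leq (1+\xi)\vol_G(S) + \alpha\beta n^2 + \xi n^2. \]
Dividing by $\vol_G(S)$ gives the claimed ratio.

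The main obstacle is this last division, which needs $\vol_G(S) \gtrsim \eps n^2$. I would justify it as follows. In the setting of this section $G$ has at least $\eps n^2$ edges. The optimizer of $\calB(G)$ can be taken to be a bipartition of all of $V$, namely the max-cut, for which $\vol_G(S) = 2m \geq 2\eps n^2$. If the minimizer of $\calB$ instead has small volume, I would work with the max-cut partition directly. Its bipartiteness ratio equals the quantity used in the application, so the bound holds for it as well.
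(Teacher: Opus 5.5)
Your proposal is correct and follows essentially the paper's proof: the same test vector $x_i=\pm\sqrt{d(i)}$ on the optimal partition, the same evaluation $\langle x,\hat N x\rangle=\vol_G(L\cup R)-4E(L,L)-4E(R,R)$, and the same bound $\sum_i d(i)\le\bigl(1+O((\xi+\alpha\beta)/\eps)\bigr)\vol(G)$ via $m\ge\eps n^2$. Your worry about a small-volume minimizer does not arise, because Definition~\ref{def:bipartitness} takes $\calB(G)$ as a minimum over full bipartitions $(S,\ol{S})$, so $L\cup R=[n]$ and $\vol_G(L\cup R)=2m$ automatically, as the paper uses.
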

\begin{proof}
    The proof is quite similar to that of Lemma \ref{lem:feasibility}. We have that by definition (\ref{eq:noisy-negative}) for all $i \in [n]$ satisfies $\alpha n \leq d(i) \leq n$. Hence, each entry is at most $1/(\alpha n)$. For the lower bound on $\lambda_{\max}(\hat{N})$, consider the partition $L,R \subset [n]$ which realizes the bound on $\calB(G)$. Recall $L \cup R = [n]$, so all edges are explicitly cut.  Consider the vector $x \in \R^n$ which sets $x_i = \sqrt{d(i)}$ if $ i \in L$ and $x_i = -\sqrt{d(i)}$ if $ i \in R$. 
    To bound the denominator, recall that $\xi n$ vertices have $d(i)$ within a  $(1+\xi)$-multiplicative approximations of $\max\{ \alpha n, \deg_G(i)\}$, and at most $\beta n$ are less than $\alpha n$. Since $\vert E \vert \geq \eps n^2$, the sum of $d(i)$ is within a $(1+\xi)(1+\beta\alpha/\eps) + \xi/\eps$ of the sum of the true degrees, we have
    \begin{align*}
        \|x\|_2^2 \leq \sum_{i=1}^n d(i) &\leq \left ( (1+\xi)(1+\beta\alpha/\eps) + \xi/\eps \right )\vol(G)
    \end{align*}
    In order to lower bound $\langle x, \hat{N} x\rangle$, we can rewrite the sum over edges crossing the cut as the volume minus the sum of internal edges.
    \begin{align*}
        \langle x, \hat{N} x \rangle &= 2\sum_{i \in L} \sum_{j \in R} \ind\left\{ (i,j) \in E \right\} - 2\sum_{i \in L} \sum_{j \in L} \ind\left\{ (i,j) \in E \right\} - 2\sum_{i \in R} \sum_{j \in R} \ind\left\{ (i,j) \in E \right\}  \\
        &= \vol(G) - 4 \left ( E(L,L) + E(R,R)\right )
    \end{align*}
    Observe that since all vertices are partitioned into either $L$ or $R$, the bipartiteness ratio is exactly the sum of internal edges. Hence, the Rayleigh quotient $\langle x, \hat{N} x \rangle / \|x\|_2^2$ is at least:
    \begin{align*}
        \dfrac{ \vol(G) - 4 \left ( E(L,L) + E(R,R)\right )}{\left ( (1+\xi)(1+\beta\alpha/\eps) + \xi/\eps \right )\vol(G)} \geq \frac{1 - 2 \calB(G)}{\left  (1+\xi)(1+\beta\alpha/\eps) + \xi/\eps \right )} \geq 1-2\calB(G) - O((\xi+\alpha\beta)/\eps).
    \end{align*}   
\end{proof}
Lemma \ref{lem:trevisan-feasibility} shows that the noisy version of the negative adjacency matrix still preserves the relation between the bipartiteness of the graph and the top eigenvalue.  Hence we can apply Theorem \ref{thm:main} with accuracy parameter $\eps_0\in (0,1)$ on the matrix $\alpha n \cdot \hat{N}$ to get local access to an approximate top eigenvector $\bx \in \R^n$. As in the proof of Theorem \ref{thm:cheeger-dense}, we will define an error parameter $\zeta$ which captures the error in the Rayleigh quotient of the vector $\bx$ with $\hat{N}$.
\begin{align}
\zeta \leq 2\calB(G) + O((\xi + \alpha \beta)/\eps + \eps_0/\alpha) \label{eq:trevisan-error}
\end{align}
\begin{lemma}\label{lem:trevisan-output-vec} Let $\bx \in \R^n$ be the output of Theorem \ref{thm:main} on the matrix $\alpha n \cdot \hat{N}$ with accuracy parameter $\eps_0 \in (0,1)$. Then with high probability over the randomness of Theorem \ref{thm:main},
\begin{align*}
   \dfrac{\la \bx , \hat{N} \bx\ra }{\|\bx\|_2^2}\geq 1-\zeta \qquad \text{ and } \qquad \| \bx\|_{\infty} \lesssim 1/(\alpha \sqrt{n})
\end{align*}
\end{lemma}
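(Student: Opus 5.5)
The plan mirrors the proof of Lemma~\ref{lem:output-vec}, leaving out the deflation term since $\hat{N}$ has no rank-one correction. There are two claims to prove: a lower bound on the Rayleigh quotient, and a bound on $\|\bx\|_\infty$.

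\textbf{Rayleigh quotient.} By Lemma~\ref{lem:trevisan-feasibility}, every entry of $\alpha n\cdot \hat{N}$ is bounded by $1$, so Theorem~\ref{thm:main} applies to this matrix with accuracy $\eps_0$. We first check the spectral hypothesis. The matrix $\hat{N}$ is entrywise dominated in magnitude by the normalized adjacency matrix, taken with degrees $d(i)\ge \deg_G(i)$ on all but $\xi n$ vertices. Hence $\|\hat{N}\|_2 \le 1 + O(\xi/\alpha)$: handle the exceptional rows by a Frobenius bound, using $d(i)\ge \alpha n$. Lemma~\ref{lem:trevisan-feasibility} also gives $\lambda_{\max}(\hat{N}) \ge 1 - 2\calB(G) - O((\xi+\alpha\beta)/\eps)$, which is a constant when $\calB(G)$ is small. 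So $\lambda_{\min} = O(\lambda_{\max})$, and the $\poly(1/\eps_0)$ query bound applies. If $\calB(G)$ is a large constant, the claimed bound is vacuous once $\zeta \ge 2$, so this case can be ignored.

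Theorem~\ref{thm:main} (boosted as in the remark following it) then gives
\[
\frac{\langle \bx, \alpha n \hat{N}\bx\rangle}{\|\bx\|_2^2} \ge \lambda_{\max}(\alpha n\hat N) - \eps_0 n \ge \alpha n\bigl(1-2\calB(G) - O((\xi+\alpha\beta)/\eps)\bigr) - \eps_0 n .
\]
Dividing by $\alpha n$ yields $1-\zeta$, where $\zeta$ is as in (\ref{eq:trevisan-error}).

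\textbf{Sup-norm bound.} We use the explicit form of the lift, $\bx = A\bT^{\intercal}\by/\hat{\blambda}_{\max}$ with $A=\alpha n\hat N$. Each row of $A\bT^{\intercal}$ has $\bt$ entries, each of magnitude at most $1/\sqrt{q}$. By Cauchy--Schwarz with $\|\by\|_2=1$,
\[
|\bx_i| \le \frac{\sqrt{\bt/q}}{\hat{\blambda}_{\max}} .
\]
With high probability $\bt = O(nq)$. Under event $\calbE$, $\hat{\blambda}_{\max} \ge \lambda_{\max}(A)-\alpha L = \Omega(\alpha n)$. Together these give $|\bx_i| \lesssim \sqrt{n}/(\alpha n) = 1/(\alpha\sqrt n)$.

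\textbf{Main obstacle.} The delicate step is certifying that $\lambda_{\max}(\alpha n \hat N)$ is $\Omega(\alpha n)$ and comparable to $\|\alpha n\hat N\|_2$. Both the $\Theta(1)$ lower bound on $\hat{\blambda}_{\max}$ and the applicability of Theorem~\ref{thm:main} without degradation depend on this. The first attempt would be the Frobenius/row-norm argument described above for the exceptional vertices, combined with the standard bound $\|D^{-1/2}A_GD^{-1/2}\|_2\le 1$ on the remaining block. If that does not close, one can instead allow a constant loss and absorb it into the constant hidden in $\lesssim$.
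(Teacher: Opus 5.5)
Your proposal is correct and follows the paper's proof. The Rayleigh-quotient bound comes from applying Theorem~\ref{thm:main} to $\alpha n\cdot \hat{N}$ together with Lemma~\ref{lem:trevisan-feasibility} and then dividing by $\alpha n$; the sup-norm bound comes from Cauchy--Schwarz on the lift $\bx = A\bT^{\intercal}\by/\hat{\blambda}_{\max}$, using $\bt = O(nq)$ and $\hat{\blambda}_{\max} = \Omega(\alpha n)$. Your extra check that $\lambda_{\min} = O(\lambda_{\max})$ affects only the query complexity, not the correctness guarantee proved here, so the paper omits it. Note that both arguments implicitly need $\calB(G)$ bounded away from $1/2$ so that $\hat{\blambda}_{\max} = \Omega(\alpha n)$. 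Since $\calB(G) \leq 1/2$ always, your ``vacuous once $\zeta \geq 2$'' remark does not dispose of the large-$\calB(G)$ case.
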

\begin{proof}
By Theorem  \ref{thm:main}, the Rayleigh quotient of $\bx$ on $(\alpha n) \cdot \hat{N}$ is at least $\lambda_{\max}((\alpha n) \hat{N}) - \eps_0 n$, which is a least $\alpha n \cdot (1 - \zeta)$ by \ref{lem:trevisan-feasibility}. Dividing by $\alpha n$ gives the desired bound. Now note that  $\|\bx\|_{\infty}$ can be bounded by rewriting the output of the algorithm in Theorem~\ref{thm:main} (namely, Figure~\ref{fig:preprocess}), as  $\bx = A \bT^{\intercal} \by / \hat{\blambda}_{\max}$ for $A = (\alpha n) \hat{N}$. Therefore by Cauchy-Schwarz  $|\bx_i|$ is at most $1/\hat{\blambda}_{\max} \cdot (\sqrt{\bt} / \sqrt{q}) \cdot \|\by\|_2$, which is at most $O(1/(\alpha \sqrt{n}))$ with high probability.
\end{proof}

\begin{lemma}\label{lem:trevisan-vol-bound}
    Any vector $x \in \R^n$ which satisfies conditions of Lemma \ref{lem:trevisan-output-vec} with $\xi$ smaller than $\poly(\alpha \eps)$, $\alpha,\beta$ smaller than $\poly(\eps)$ and $\zeta$ smaller than some fixed constant, also satisfies
    \begin{align*}
        \Ex_{(\bi, \bj)}\left[\left(\frac{x_{\bi}}{\sqrt{d(\bi)}} + \frac{x_{\bj}}{\sqrt{d(\bj)}}\right)^2 \right] \leq \frac{2\zeta}{m}  \cdot \|x\|_2^2 \qquad \text{ and }\qquad   \Ex_{(\bi)}\left[\left(\frac{x_{\bi}}{\sqrt{d(\bi)}} \right)^2 \right] \geq \frac{\|x\|_2^2}{2m}.
    \end{align*}
where expectation over $(\bi, \bj)$ is over uniform draws from the set of edges $E$, and expectation over $\bi$ to be sampled proportionally to $\deg_G(i)$.
\end{lemma}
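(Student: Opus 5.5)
The plan is to work with the rescaled vector $a \in \R^n$, $a_i = x_i/\sqrt{d(i)}$, and to express both expectations through the two quantities $\sum_i \deg_G(i)\, a_i^2$ and $\langle x, \hat{N} x\rangle$. Expectations over edges $(\bi,\bj)$ are normalised by $m = |E|$, and expectations over vertices $\bi$ (drawn proportionally to degree) are normalised by $2m$. Since $\hat{N}_{ij} = -\ind\{(i,j)\in E\}/\sqrt{d(i)d(j)}$, and each unordered edge appears twice in the quadratic form, the basic identity is
\[ \langle x, \hat{N} x\rangle = -2\sum_{(i,j)\in E} a_i a_j . \]
This mirrors the proof of Lemma~\ref{lem:clean-guaran}, with $+$ in place of $-$ because of the negative sign in $\hat{N}$.

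\emph{First inequality.} I would expand the square over edges:
\[ m\cdot \Ex_{(\bi,\bj)}\left[(a_{\bi}+a_{\bj})^2\right] = \sum_i \deg_G(i)\, a_i^2 - \langle x, \hat{N}x\rangle = \sum_i \frac{\deg_G(i)}{d(i)} x_i^2 - \langle x,\hat{N}x\rangle . \]
Next I would split the vertices into good and bad ones. For all but at most $\xi n$ vertices we have $d(i) \ge \deg_G(i)$, so on these the ratio is at most $1$. On the remaining exceptional vertices, $d(i)\ge \alpha n$ and $\deg_G(i)\le n$ give a ratio of at most $1/\alpha$. Bounding each exceptional $x_i^2$ by $\|x\|_\infty^2 \lsim 1/(\alpha^2 n)$ (Lemma~\ref{lem:trevisan-output-vec}) yields
\[ \sum_i \frac{\deg_G(i)}{d(i)} x_i^2 \le \|x\|_2^2 + O(\xi/\alpha^3). \]
Combining this with $\langle x,\hat{N}x\rangle \ge (1-\zeta)\|x\|_2^2$ bounds the expanded expression by $\zeta\|x\|_2^2 + O(\xi/\alpha^3)$. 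Because $\|x\|_2^2 = \Theta(1)$ (as in the proof of Lemma~\ref{lem:output-vec}, via Lemma~\ref{lem:2}) and $\xi \le \poly(\alpha\eps)$, the additive term can be absorbed into $\zeta$, which is defined in (\ref{eq:trevisan-error}) only up to $O(\cdot)$ slack. Even without this absorption the total is at most $2\zeta\|x\|_2^2$, which gives the first inequality after dividing by $m$.

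\emph{Second inequality.} Here $\Ex_{\bi}[a_{\bi}^2] = \frac{1}{2m}\sum_i \deg_G(i)\, a_i^2$, so it suffices to prove
\[ \sum_i \deg_G(i)\, a_i^2 \ge \|x\|_2^2 . \]
The natural route is the dual expansion: nonnegativity of $\sum_{(i,j)\in E}(a_i - a_j)^2$ gives
\[ \sum_i \deg_G(i)\, a_i^2 \ge 2\sum_{(i,j)\in E} a_i a_j , \]
and adding this to the identity above gives
\[ \sum_i \deg_G(i)\, a_i^2 \ge \bigl|\langle x,\hat{N}x\rangle\bigr| \ge (1-\zeta)\|x\|_2^2 . \]
An alternative is the direct comparison of $\deg_G(i)$ with $d(i)$ on good vertices, using $d(i)\le (1+\xi)\max\{\deg_G(i),\alpha n\}$. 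That route loses a factor $1/(1+\xi)$ together with an additive $(\beta+\xi)\, n\|x\|_\infty^2 = O((\beta+\xi)/\alpha^2)$.

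\emph{Main obstacle.} The hardest step is matching the exact constant $1/(2m)$ in the second inequality. Both routes above give $(1-o(1))\|x\|_2^2/(2m)$: the first loses $\zeta$, the second loses $O(\xi + (\beta+\xi)/\alpha^2)$. To close this gap I would first try to use the first inequality itself. It shows that $a_i \approx -a_j$ across almost all edge mass, so $\sum_i \deg_G(i)\, a_i^2$ and $\langle x,\hat{N}x\rangle$ differ by only $O(\zeta)\|x\|_2^2$. I would then check whether the normalisation of $\|x\|_2^2$ coming from Lemma~\ref{lem:2} leaves enough slack to absorb this loss. If it does not, the lemma should be used downstream only in the form the rounding step (Theorem~\ref{thm:cheeger}, with $b = \Theta(\|x\|_2^2/m)$) actually requires, where a constant-factor loss in $b$ is harmless.
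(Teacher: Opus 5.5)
Your first inequality follows the paper's argument. You expand the square over edges, bound $\sum_i \frac{\deg_G(i)}{d(i)} x_i^2 \le \|x\|_2^2 + (\xi/\alpha)\,n\|x\|_\infty^2$ by splitting off the at most $\xi n$ badly estimated vertices, and absorb the resulting $O(\xi/\alpha^3)$ into $\zeta$. Your $1/m$ normalisation is the correct one for uniform edges; the paper writes $2/m$. Do not rely on the remark that the total is at most $2\zeta\|x\|_2^2$ ``even without absorption''. That would need $\zeta\|x\|_2^2$ to dominate $\xi/\alpha^3$, and nothing in the hypotheses guarantees this. The honest fix is to enlarge $\zeta$ by $O(\xi/\alpha^3)$, which is what the paper does implicitly.

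For the second inequality your main route is genuinely different. The paper uses your ``alternative'': it compares $d(i)$ with $\deg_G(i)$ directly. Good vertices have $\deg_G(i)/d(i) \ge 1/(1+\xi)$, and the at most $(\beta+\xi)n$ others carry at most $(\beta+\xi)n\|x\|_\infty^2 \lesssim (\beta+\xi)\|x\|_2^2/\alpha^2$ of the mass, giving the factor $\frac{1-(\beta+\xi)/\alpha^2}{1+\xi}$. You instead use only the Rayleigh-quotient hypothesis. The inequality $\sum_i \deg_G(i)a_i^2 \ge \langle x,\hat N x\rangle$ is exactly nonnegativity of the left side of your first identity, $\sum_{(i,j)\in E}(a_i+a_j)^2 \ge 0$ (the signless Laplacian is PSD). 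So your $(a_i-a_j)^2$ step, which only yields $\sum_i\deg_G(i)a_i^2 \ge -\langle x,\hat N x\rangle$, is unnecessary. Your bound $(1-\zeta)\|x\|_2^2/(2m)$ uses neither $\|x\|_\infty$ nor any relation like $\beta+\xi \ll \alpha^2$. The paper's route needs that relation, but the stated hypotheses do not make it explicit. The price is that your loss is only a constant factor when $\zeta$ is merely bounded by a constant, whereas the paper's loss is $1-o(1)$ in the degree-estimation parameters.

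On your ``main obstacle'': the gap cannot be closed, because the exact constant is false. Take $G = K_{n/2,n/2}$, set $d(i) = (1+\xi/2)\deg_G(i)$ for every $i$ (allowed by the estimation guarantee), and let $x$ be the unit vector equal to $\pm 1/\sqrt{n}$ according to the side. Then $\langle x,\hat N x\rangle = 1/(1+\xi/2)$ and $\|x\|_\infty = 1/\sqrt{n}$, so the hypotheses of Lemma~\ref{lem:trevisan-output-vec} hold. Yet $\Ex_{\bi}[x_{\bi}^2/d(\bi)] = \|x\|_2^2/(2m(1+\xi/2))$.

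Neither of your proposed fixes can help. The inequality is homogeneous in $x$, so no normalisation from Lemma~\ref{lem:2} changes anything. The first inequality only pins $\sum_i \deg_G(i)a_i^2$ near $\langle x,\hat N x\rangle \le \lambda_{\max}(\hat N)\|x\|_2^2$, which may be below $\|x\|_2^2$.

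Your fallback is therefore the right resolution, and it is all the paper's own proof delivers too. The relevant rounding step here is Lemma~\ref{lem:trevisan-rounding}, not Theorem~\ref{thm:cheeger}. There $b$ enters only through $\sqrt{a/b}$, so any $b = \Omega(\|x\|_2^2/m)$ suffices.
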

\begin{proof}
   The proof proceeds analogously to the proof of Lemma \ref{lem:clean-guaran}. For the first inequality, we use the fact that $d(i) \geq \deg_G(i)$ for all but at most $\xi n$ vertices, and that $d(i) \geq \alpha n$. In particular, expanding the square we obtain
    \begin{align*}
        \Ex_{(\bi,\bj)}\left[ \left(\frac{x_{\bi}}{\sqrt{d(\bi)}} + \frac{x_{\bj}}{\sqrt{d(\bj)}} \right)^2 \right] \leq \frac{2}{m} \sum_{i=1}^n \frac{\deg_G(i)}{d(i)} \cdot x_i^2 - \frac{2}{m} \langle x, \hat{N} x \rangle \leq \frac{2}{m} \left( \|x\|_2^2 - \langle x, \hat{N} x\rangle + (\xi/\alpha) n\|x\|_{\infty}^2\right),
    \end{align*}
    which is at most the desired bound. The second inequality follows from
    \begin{align}
        \Ex_{\bi}\left[ \left(\frac{x_{\bi}}{\sqrt{d(\bi)}} \right)^2 \right] \geq \Ex_{\bi}\left[ \frac{x_{\bi}^2}{d(\bi)} \right] &= \frac{1}{2m} \sum_{i=1}^n \frac{\deg_G(i)}{d(i)} \cdot x_i^2  \geq \frac{\|x\|_2^2}{2m} \left( \dfrac{1 - (\beta + \xi) \gamma_2}{1+\xi} \right). \label{eq:trev-bound-denom}
    \end{align}
\end{proof}

\begin{lemma}[Lemma 3 in \cite{L12}]\label{lem:trevisan-rounding}
    Fix any graph $G = ([n], E)$ as well as a vector $y \in \R^n$ satisfying
    \[ \Ex_{(\bi,\bj)}\left[ (y_{\bi} + y_{\bj})^2 \right] \leq a  \qquad \text{and}\qquad \Ex_{\bi}\left[ y_{\bi}^2 \right] \geq b. \]
    There exists disjoint subsets $L,R \subset [n]$ of the form $L = \left\{ i \in [n] : y_i \leq -\tau \right\}$ and $R = \left\{ i \in [n] :  y_i \geq \tau \right\}$ for $\tau \in \R$ which satisfies
    \begin{align*}
        \calB_G(L,R) \leq 4 \sqrt{ a / b} \qquad \text{ and } \qquad \vol_G(L \cup R) \geq \Omega(m b / \| y\|_{\infty}^2).
    \end{align*}
\end{lemma}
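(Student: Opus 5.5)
The plan is to follow Trevisan's threshold-rounding argument, which is Lemma 3 in \cite{L12}. Take a random threshold $\btau$ with $\btau^2$ uniform on $[0, \max_i y_i^2]$, and set $\bL = \{i : y_i \leq -\btau\}$ and $\bR = \{i : y_i \geq \btau\}$. We will show
\[ \Ex[2E(\bL,\bL) + 2E(\bR,\bR) + E(\bL\cup\bR, \ol{\bL\cup\bR})] \leq 4\sqrt{a/b}\cdot \Ex[\vol_G(\bL\cup\bR)]. \]
A ratio-of-expectations argument then gives a threshold $\tau$ achieving $\calB_G(L,R) \leq 4\sqrt{a/b}$. Throughout, write $M = \max_i y_i^2 = \|y\|_\infty^2$.

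\textbf{Step 1: the denominator.} A vertex $i$ lies in $\bL\cup\bR$ exactly when $\btau^2 \leq y_i^2$, which happens with probability $y_i^2/M$. Hence
\[ \Ex[\vol_G(\bL\cup\bR)] = \sum_i \deg(i)\, y_i^2/M = 2m\,\Ex_{\bi}[y_{\bi}^2]/M \geq 2mb/M. \]

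\textbf{Step 2: the numerator, edge by edge.} Fix an edge $(i,j)$. It contributes to the numerator (with weight one per offending endpoint, matching the factors of $2$ in Definition~\ref{def:bipartitness}) only in three situations: both endpoints land on the same side, or exactly one endpoint is included in $\bL\cup\bR$. We claim this happens with probability at most
\[ |y_i+y_j|\,(|y_i|+|y_j|)/M. \]
To see this, split into cases by sign.
\begin{itemize}
\item If $y_i$ and $y_j$ have the same sign, the bad event is that $\btau \leq \max(|y_i|,|y_j|)$. Its probability is $\max(y_i^2,y_j^2)/M \leq |y_i+y_j|(|y_i|+|y_j|)/M$, since $|y_i+y_j| = |y_i|+|y_j|$ here.
\item If the signs are opposite, the edge is correctly cut when both endpoints are included. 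It is bad only when $\btau$ lies between $|y_i|$ and $|y_j|$, which has probability $|y_i^2 - y_j^2|/M = |y_i+y_j|\cdot\big||y_i|-|y_j|\big|/M$. This is again within the claimed bound.
\end{itemize}

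\textbf{Step 3: Cauchy--Schwarz.} Summing over edges, the expected numerator is at most
\[ \frac{2}{M}\sum_{(i,j)\in E} |y_i+y_j|(|y_i|+|y_j|) \leq \frac{2}{M}\sqrt{\textstyle\sum_E (y_i+y_j)^2}\,\sqrt{\textstyle\sum_E 2(y_i^2+y_j^2)}. \]
The two sums are at most $ma$ and $2m\,\Ex_{\bi}[y_{\bi}^2]$ respectively. So the expected numerator is bounded by $O(\sqrt{a\,\Ex[y^2]}\cdot m)/M$. Dividing by the denominator bound from Step 1 gives
\[ \frac{\Ex[\text{numerator}]}{\Ex[\text{denominator}]} \leq O\big(\sqrt{a/\Ex[y^2]}\big) \leq O\big(\sqrt{a/b}\big). \]
Tracking constants carefully, as in \cite{L12}, yields the factor $4$.

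\textbf{Step 4: extracting a good threshold with large volume.} This is the step that needs care, because we need one threshold $\tau$ that simultaneously has $\calB_G(L,R) \leq 4\sqrt{a/b}$ and $\vol_G(L\cup R) \geq \Omega(mb/M)$. Ratio-of-expectations alone only guarantees the first condition.

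The approach is to work with the nonnegative quantity $\Ex[c\,\vol - \text{num}]$ with $c$ twice the bound from Step 3. This quantity is at least $\tfrac12 c\,\Ex[\vol]$, so it is positive on a set of thresholds. Combine this with the fact that $\vol_G(\bL\cup\bR)$ is monotone decreasing in $\btau^2$. Then restrict attention to thresholds with $\btau^2 \leq \Ex[\vol]\cdot M/(4m)$-type ranges, where the volume is at least a constant fraction of $mb/M$. The loss in the constant is absorbed either by doubling the constant in $4\sqrt{a/b}$, or by noting, as the paper's remark does, that the volume bound follows directly from the expectation computation. The constant in the volume bound is only asserted up to $\Omega(\cdot)$, so this loss is acceptable.
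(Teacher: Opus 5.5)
Your Steps 1--3 take the same route as the paper, which cites Lemma~3 of \cite{L12} for the ratio and computes $\E[\vol_G(\bL\cup\bR)]=\sum_i\deg(i)\,y_i^2/\|y\|_\infty^2$ in the accompanying remark. Two small slips there. For opposite signs, $|y_i^2-y_j^2|=|y_i+y_j|\,(|y_i|+|y_j|)$ exactly. The second Cauchy--Schwarz sum is $4m\,\Ex_{\bi}[y_{\bi}^2]$, not $2m\,\Ex_{\bi}[y_{\bi}^2]$. With these corrected, Steps 1--3 give $\E[\mathrm{num}]\le\rho\,\E[V]$ with $\rho=2\sqrt{a/b}$, where $V(\tau)=\vol_G(L_\tau\cup R_\tau)$.

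The gap is Step~4. The range you propose, $\tau^2\le \E[V]\,M/(4m)=\Ex_{\bi}[y_{\bi}^2]/2$, does force large volume. But you never show that $c\,V-\mathrm{num}$ is positive somewhere inside it, and it need not be. Here is a counterexample:
\begin{itemize}
\item Let $G$ be the disjoint union of three pieces: a complete bipartite graph $A$ with $y=\pm1$ on its two sides, a clique $C$ with $y\equiv\eta$, and a remainder with $y\equiv 0$.
\item Set $\vol_G(A)=\vol_G(C)=2mp$ and $\eta^2=p$. Then $M=1$, and taking $b=\Ex_{\bi}[y_{\bi}^2]=p(1+p)$ and $a=4p^2$ gives $4\sqrt{a/b}\approx 8\sqrt p$.
\item Every $\tau$ with $\tau^2\le p(1+p)/2<\eta^2$ selects $A\cup C$, so $\calB_G=\vol_G(C)/\vol_G(A\cup C)=1/2$. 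This is far above $8\sqrt p$ for small $p$.
\item The good thresholds are $\tau^2\in(p,1]$. They select only $A$, with ratio $0$ and volume $2mp=\Omega(mb)$, and all lie outside your range.
\end{itemize}
Appealing to the expected-volume identity, as the paper's remark does, does not close this either. That identity shows large-volume thresholds exist, not that a good-ratio threshold has large volume.

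The missing idea is to restrict to a volume level set rather than a $\tau$-range, and to use $\mathrm{num}\ge 0$ to bound what is discarded. With $c=2\rho$ and $\theta=\E[V]/4$,
\[ \E\big[(cV-\mathrm{num})\,\ind\{V\ge\theta\}\big]\ \ge\ \E[cV-\mathrm{num}]-c\theta\ \ge\ \tfrac{c}{2}\E[V]-\tfrac{c}{4}\E[V]\ >\ 0, \]
since on $\{V<\theta\}$ the integrand is at most $cV<c\theta$. So some $\tau>0$ satisfies both bounds at once:
\[ V(\tau)\ \ge\ \E[V]/4\ \ge\ mb/(2\|y\|_\infty^2) \qquad\text{and}\qquad \calB_G(L,R)\ \le\ c=4\sqrt{a/b}. \]
Monotonicity of $V$ is not needed, and with your constant $\rho=2\sqrt{a/b}$ this recovers exactly the stated factor $4$.
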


\begin{remark}[Volume of $L \cup R$] 
This remark outlines the bound on the volume of $L$ and $R$ in Lemma \ref{lem:trevisan-rounding}, which is not explicitly computed in Lemma 3 of \cite{L12}, but follows easily from the rounding scheme. The vector $y$ is rounded by sampling a threshold $\tau $ uniformly from $0$ to $\|y\|_{\infty}^2$, then setting $\bL = \{i \in [n] : y_i \geq \sqrt{\tau} \}$ and $\bR = \{i \in [n] : y_i \leq -\sqrt{\tau} \}$. The expected volume of $\bL \cup \bR$ can be computed explicitly, which is simply sum over $i \in [n]$ of the $\deg(i)$ times the probability that $y_i^2$ is larger than $\tau$. 
\begin{align*}
\E[\vol_{G}(\bL \cup \bR)] = \sum_{i} d_i \Pr[i \in \bL \cup \bR] = \frac{1}{\|y\|_{\infty}^2}\sum_{i} d_i (y_i)^2
\end{align*} 
The final bound follows from the observation that the expectation $\Ex_{\bi}\left[ y_{\bi}^2 \right]$ evaluates to $\sum_{i} d_i (y_i)^2/m$.
\end{remark}

\paragraph{Proof of Theorem \ref{thm:trevisan-dense}.} As in the proof of Theorem \ref{thm:cheeger-dense} and as discussed in (\ref{eq:noisy-negative}), we assume that the algorithm receives query access to the $n \times n$ matrix $\hat{N}$ defined in (\ref{eq:noisy-negative}), which can be done using an additive $\poly(1/(\xi \eps))$ queries to the size of the query  set $\bQ$; for any $(i,j) \in [n]\times[n]$ we may estimate $\deg_G(i)$ and $\deg_G(j)$ by checking the edges in $\bQ$ to obtain $d(i)$ and $d(j)$. From lemma \ref{lem:trevisan-feasibility}, the matrix $\alpha n \hat{N}$ is bounded entry and has a top eigenvalue of at least $\alpha n \cdot \left ( 1 - 2\calB(G) - O((\xi + \alpha \beta)/\eps)\right )$, and as a result of running (and boosting) the algorithm of Theorem \ref{thm:main}, we receive a local computation algorithm $\calA$ which with probability $1-\delta$ provides access to a vector $\bx \in \R^n$ which satisfies the conditions of lemma \ref{lem:trevisan-output-vec}. Now we can round the vector $\bx/\sqrt{d(i)}$ into a vector $\by \in \R^n$ where we round $\by$ to the nearest multiple of $\poly(\eps)/n$. A simple calculation confirms that $\by$ satisfies the conditions of Lemma \ref{lem:trevisan-rounding} with $a = 2\zeta/m$ and $b = 1/(2m)$, and each coordinate lies within a discrete set of $\poly(1/\eps)$ possible values. Hence there are at most $\poly(1/\eps)$ many possible thresholds $\tau$ and by Lemma \ref{lem:trevisan-rounding} at least one of them satisfies $\calB_G(L,R) \leq \sqrt{\calB(G)}$ and $\vol_G(L \cup R) \geq \Omega(1/\|\by\|_{\infty}^2)$. Therefore we may use $\poly(1/\eps)$ additional queries to iterate through all possible cuts and estimate $\vol_G(L \cup R)$ as well as $2E_G(L,L)$, $2E_G(R,R)$, and $ E_G(L \cup R, \ol{L \cup R})$ in order to find the desired cut. After setting the parameters $\eps_0 \leq \poly(\eps \alpha)$, $\xi \leq \poly(\alpha \beta)$ and by the assumption of Theorem \ref{thm:cheeger-dense} that $\alpha,\beta = \poly(\eps)$, we conclude that the partial cut $L,R$ returned satisfies $\calB_G(L,R) \lesssim \sqrt{\calB{\eps}(G)} + \eps$.

\section*{Acknowledgements}

The authors would like to thank Aaron Sidford for suggesting we explore local computation algorithms for top eigenvectors. 

\appendix
\section{Spectral Norm Decay}\label{app:spectral-norm-decay}

We give a proof of the main theorem of~\cite{RV07}, as well as the subsequent application in~\cite{T12}. These bounds were used in~\cite{BDDMR24,SW25}, but were quoted with a $\log n$-multiplicative factor, which introduced a dependence on $n$ into the query-complexity of the algorithm---\cite{SW25} then used a clever recursive application of the algorithm to remove this $O(\log n)$-factor. In this appendix, we show that the results of~\cite{RV07,T12} may be used directly. The reason is that, upon closer inspection, the expected spectral norm incurs a factor of $\log(n\delta)$, where $\delta$ is the probability that any row or column is included. In the application, $\delta$ is set to $1/(n\cdot \poly(\eps))$, and therefore the logarithmic factor overhead becomes $\log(n\delta) = \Theta(\log(1/\eps))$. We include a proof which loses a $O(\log(1/\eps))$-factor, which will suffice for our purposes, with the benefit that it is simpler.

\begin{theorem}[Spectral Norm Decay~\cite{RV07}]\label{thm:rv}
Let $A$ be a matrix with $n$ columns, and let $\bQ$ be a random $n\times n$ matrix with i.i.d. diagonal entries $\bdelta_i \sim \Ber(\delta)$. Then,
\[ \Ex_{\bQ}\left[ \left\| A \bQ \right\|_2\right] \lsim \left( \delta \|AA^{\intercal}\|_2 + \Ex_{\bQ}\left[ \log|\bQ| \cdot \| A \bQ \|_{1\to 2}^2\right] \right)^{1/2},\]
where $\|A\bQ\|_{1\to 2}$ is the maximum column of $A \bQ$.
\end{theorem}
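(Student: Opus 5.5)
We follow the Rudelson--Vershynin route: symmetrization, then a noncommutative Khintchine (or Rudelson) inequality, then solving a quadratic inequality. Throughout we write $A\bQ A^{\intercal} = \sum_i \bdelta_i a_i a_i^{\intercal}$, where $a_i$ denotes the $i$-th column of $A$. Since $\|A\bQ\|_2^2 = \|A \bQ A^{\intercal}\|_2$ (because $\bQ^2=\bQ$), it suffices to bound $E \eqdef \Ex\|\sum_i \bdelta_i a_i a_i^{\intercal}\|_2$.

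\textbf{Step 1 (centering).} By the triangle inequality,
\[ E \leq \delta\|AA^{\intercal}\|_2 + \Ex\Big\|\sum_i (\bdelta_i-\delta) a_ia_i^{\intercal}\Big\|_2 . \]

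\textbf{Step 2 (symmetrization).} We introduce an independent copy $\bdelta'$ and Rademacher signs $\bsigma_i$. Standard symmetrization then bounds the second term by $2\,\Ex_{\bdelta}\Ex_{\bsigma}\|\sum_i \bsigma_i \bdelta_i a_ia_i^{\intercal}\|_2$.

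\textbf{Step 3 (Rudelson's inequality, conditionally on $\bdelta$).} We now fix $\bdelta$ and apply Rudelson's lemma to the rank-one terms indexed by the active set $S=\{i:\bdelta_i=1\}$. Its key feature is that the dimension factor is $\log|S|$ and not $\log n$. This holds because all the vectors $a_i$ with $i\in S$ lie in a subspace of dimension at most $|S|$, so the noncommutative Khintchine inequality may be applied in that subspace (equivalently, with a trace moment of order $p\asymp\log|S|$). This gives
\[ \Ex_{\bsigma}\Big\|\sum_{i\in S}\bsigma_i a_ia_i^{\intercal}\Big\|_2 \lsim \sqrt{\log|S|}\,\max_{i\in S}\|a_i\|_2\,\Big\|\sum_{i\in S}a_ia_i^{\intercal}\Big\|_2^{1/2} . \]
Here $\max_{i\in S}\|a_i\|_2=\|A\bQ\|_{1\to2}$ and $\|\sum_{i\in S}a_ia_i^{\intercal}\|_2=\|A\bQ\|_2^2$. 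This step is the main point where care is needed: the whole improvement over the $\log n$ bound comes from working in dimension $|S|$ rather than $n$, and we have to make sure that Rudelson's bound is stated in a form that depends on the number of summands (or the rank) rather than the ambient dimension.

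\textbf{Step 4 (closing the inequality).} Taking expectation over $\bdelta$ and applying Cauchy--Schwarz gives
\[ E \lsim \delta\|AA^{\intercal}\|_2 + \big(\Ex[\log|\bQ|\,\|A\bQ\|_{1\to2}^2]\big)^{1/2}\,(\Ex\|A\bQ\|_2^2)^{1/2}, \]
where $\Ex\|A\bQ\|_2^2=E$. An inequality of the form $E\leq \alpha+\beta\sqrt{E}$ implies $E\lsim \alpha+\beta^2$. Therefore
\[ \Ex\|A\bQ\|_2^2 \lsim \delta\|AA^{\intercal}\|_2+\Ex[\log|\bQ|\,\|A\bQ\|_{1\to2}^2], \]
and Jensen's inequality, $\Ex\|A\bQ\|_2\leq(\Ex\|A\bQ\|_2^2)^{1/2}$, yields the stated bound. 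Terms with $|S|\le 1$ need the convention $\log|\bQ|\geq 1$, absorbed into constants.
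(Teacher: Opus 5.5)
Your proposal is correct and follows the paper's proof step for step. Both arguments go through centering, symmetrization with Rademacher signs, and then bound the Rademacher sum conditionally on the selected set, inside the span of the selected columns. The paper does this by taking an orthonormal basis of that span, of dimension $r \le |\bQ|$, and integrating Tropp's matrix Rademacher tail bound, which is exactly your Rudelson step in dimension $|S|$. Both then finish with Cauchy--Schwarz, the $E \lsim \alpha + \beta\sqrt{E}$ trick, and Jensen. Your explicit convention for $|S|\le 1$ covers an edge case that the paper's $\sqrt{\log r}$ bound glosses over.
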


The proof begins by expressing the matrix $A$ as a sum of outer-products over its columns. This will make it easy to evaluate what the sampling experiment is, and how to evaluate the expected spectral norm. In particular, let $x_1, \dots, x_n$ be the columns of $A$, and write
\[ A = \sum_{i=1}^n x_i e_i^{\intercal} \qquad \text{and}\qquad A\bQ = \sum_{i=1}^n \bdelta_i\cdot x_i e_i^{\intercal}. \]
The key here is that the spectral norm of a matrix can be computed by computing the square-root of the spectral norm of applying the matrix twice---namely, that $\|A\|_2 = \| AA^{\intercal} \|_2^{1/2} = \|A^{\intercal} A \|_2^{1/2}$, for any matrix $A$. This means, in particular, that we may compute the spectral norm of the matrices $AA^{\intercal}$ and $(A\bQ)(A\bQ)^{\intercal}$, which may be expressed as:
\begin{align*}
A A^{\intercal} = \left( \sum_{i=1}^n x_i e_i^{\intercal} \right) \left( \sum_{j=1}^n x_j e_j^{\intercal}\right)^{\intercal} = \left( \sum_{i=1}^n x_i e_i^{\intercal} \right) \left( \sum_{j=1}^n e_j x_j^{\intercal} \right) = \sum_{i=1}^n \sum_{j=1}^n \langle e_i , e_j \rangle \cdot x_i x_j^{\intercal} = \sum_{i=1}^n x_i x_i^{\intercal}, \\
(A \bQ)(\bQ A)^{\intercal} = \left( \sum_{i=1}^n \bdelta_i x_i e_i^{\intercal} \right) \left( \sum_{j=1}^n \bdelta_j x_j e_j^{\intercal}\right)^{\intercal} = \left( \sum_{i=1}^n \bdelta_i x_i e_i^{\intercal} \right) \left( \sum_{j=1}^n \bdelta_j e_j x_j^{\intercal} \right) = \sum_{i=1}^n \bdelta_i x_i x_i^{\intercal}.
\end{align*}
Thus, the goal of the theorem is to understand to what extent we may compare
\begin{align*}
\Ex_{\bdelta_1,\dots, \bdelta_n}\left[ \left\| \sum_{i=1}^n \bdelta_i \cdot x_i x_i^{\intercal} \right\|_2^{1/2} \right] \qquad \text{vs.} \qquad \sqrt{\delta} \left\| \sum_{i=1}^n x_i x_i^{\intercal} \right\|_2^{1/2}.
\end{align*}
As we will see, it will become easier for the symmetrization argument to, instead of upper bounding the expected square-root of the spectral norm, to instead compare:
\begin{align}
\Ex_{\bdelta_1,\dots, \bdelta_n}\left[ \left\| \sum_{i=1}^n \bdelta_i \cdot x_i x_i^{\intercal} \right\|_2 \right] \qquad \text{vs.} \qquad \delta \left\| \sum_{i=1}^n x_i x_i^{\intercal} \right\|_2. \label{eq:compare}
\end{align}
By Jensen's inequality, for any real-valued random variable, the square-root of the $2$nd moment is always larger than the expected magnitude of the random variable---thinking of the spectral norm of $A \bQ$ as the random variable, we compare the spectral norm squared to the spectral norm squared of $A$, and will then take square roots.

\paragraph{Matrix Chernoff.} We relate the spectral norm of $\sum \bdelta_i \cdot x_i x_i^{\intercal}$ via matrix concentration inequalities---namely, the matrix we are interested in is a sum of i.i.d rank-$1$ psd matrices ($x_i x_i^{\intercal}$) whose expectation is $\delta AA^{\intercal}$. A direct application of the matrix Chernoff bound (Theorem 1.1 in~\cite{T12}) gives:
\begin{align*}
\Prx_{\bdelta_1,\dots, \bdelta_n}\left[ \left\| \sum_{i=1}^n \bdelta_i \cdot x_i x_i^{\intercal} \right\|_2 \geq (1+\chi) \mu_{\max} \right] \leq d \cdot \left(\frac{e^{\chi}}{(1+\chi)^{1+\chi}} \right)^{\mu_{\max} / R},
\end{align*}
where:
\begin{itemize}
\item $\mu_{\max} = \| \Ex[\sum_{i=1}^n \bdelta_i x_i x_i^{\intercal}] \|_2 = \|\delta A A^{\intercal} \|_2$, giving a multiplicative error guarantee on the spectral norm.
\item $d$ represents the dimensionality of the vectors $x_1, \dots, x_n \in \R^d$, so the rank-$1$ matrices $x_i x_i^{\intercal}$ are $d\times d$ matrices.  
\item The parameter $R$ is a uniform bound on $\| x_i x_i^{\intercal}\|_2$. In our setting, we have $R = \max_i \|x_i\|_2^2$.
\end{itemize}
The main issue with using this bound is the multiplicative factor of $d$ in the above inequality. Indeed, we'd like to apply the above inequality for $\chi$ which is a large constant factor; however, any non-trivial bound would incur an additional $\log d$ factor in order to overcome the factor of $d$. Note that, in our case, we will always be sampling very few columns, and do not want to incur the additional $\log d$ factor---to see that this $\log d$ factor should not be necessary, it simply follows from the triangle inequality that the spectral norm should always be at most $|\bQ| R$, which the above arguments cannot capture, since they lose this factor of $d$. A useful aspect of the argument in~\cite{RV07} is that it makes it clear where one should not be losing the factor of $d$.

\paragraph{Symmetrization.} We first perform the symmetrization argument, where we will consider the discussion which upper bounds the expected squared spectral norm of $A\bQ$. In other words, that which appears on the left-hand side of (\ref{eq:compare}). We first apply the triangle inequality (using that the spectral norm is a norm), as well as Jensen's inequality (since all norms induce convex functions):
\begin{align*}
\left\| \sum_{i=1}^n \bdelta_i \cdot x_i x_i^{\intercal} \right\|_2 &\leq \left\| \sum_{i=1}^n (\bdelta_i - \delta) x_i x_i^{\intercal} \right\|_2 + \left\| \delta \sum_{i=1}^n x_i x_i^{\intercal} \right\|_2\\
			&= \left\| \Ex_{\bdelta_1',\dots, \bdelta_n'} \left[ \sum_{i=1}^n (\bdelta_i - \bdelta_i') x_i x_i^{\intercal} \right] \right\|_2 + \delta \| A A^{\intercal} \|_2 \\
			&\leq \Ex_{\bdelta_1', \dots, \bdelta_n'}\left[ \left\| \sum_{i=1}^n (\bdelta_i - \bdelta_i') x_i x_i^{\intercal} \right\|_2\right] + \delta \| A A^{\intercal} \|_2.
\end{align*}
Hence, when we compute the expectation over $\bdelta_1, \dots, \bdelta_n$, it suffices to upper bound:
\begin{align*}
\Ex_{\bdelta_1,\dots, \bdelta_n}\left[ \left\| \sum_{i=1}^n \bdelta_i \cdot x_i x_i^{\intercal} \right\|_2 \right] - \delta \| A A^{\intercal} \|_2 &\leq \Ex_{\substack{\bdelta_1, \dots, \bdelta_n \\ \bdelta_1', \dots, \bdelta_n'}}\left[ \left\| \sum_{i=1}^n (\bdelta_i - \bdelta_i') \cdot x_i x_i^{\intercal} \right\|_2 \right] \\
&= \Ex_{\substack{\bdelta_1, \dots, \bdelta_n \\ \bdelta_1', \dots, \bdelta_n' \\ \beps_1,\dots, \beps_n}}\left[ \left\| \sum_{i=1}^n \beps_i (\bdelta_i - \bdelta_i') \cdot x_i x_i^{\intercal} \right\|_2 \right],
\end{align*}
where $\beps_1, \dots, \beps_n$ are i.i.d and uniform $\{-1,1\}$ variables, which we may add since $\bdelta_i - \bdelta_i'$ is symmetric around the origin (because $\bdelta_i, \bdelta_i'$ are identically distributed). Finally, we may apply the triangle inequality once more and have:
\begin{align*}
\Ex_{\substack{\bdelta_1, \dots, \bdelta_n \\ \bdelta_1', \dots, \bdelta_n' \\ \beps_1,\dots, \beps_n}}\left[ \left\| \sum_{i=1}^n \beps_i (\bdelta_i - \bdelta_i') \cdot x_i x_i^{\intercal} \right\|_2 \right] &\leq \Ex_{\substack{\bdelta_1, \dots, \bdelta_n \\ \bdelta_1', \dots, \bdelta_n' \\ \beps_1,\dots, \beps_n}}\left[ \left\| \sum_{i=1}^n \beps_i \bdelta_i \cdot x_i x_i^{\intercal} \right\|_2 + \left\| \sum_{i=1}^n \beps_i \bdelta_i' \cdot x_i x_i^{\intercal} \right\|_2 \right] \\
	&= 2 \Ex_{\substack{\bdelta_1,\dots, \bdelta_n \\ \beps_1, \dots, \beps_n}}\left[ \left\| \sum_{i=1}^n \beps_i \bdelta_i \cdot x_i x_i^{\intercal} \right\|_2\right].
\end{align*}

\paragraph{Reducing Dimension and Applying Matrix Rademacher Bounds.} Consider any fixed setting of $\bdelta_1, \dots, \bdelta_n$, so we remove the unbolded variables and refer to $\delta_1,\dots, \delta_n$, and we let $Q$ denote the set of selected indices. Then, we are left with upper bounding the expected spectral norm of under the Rademacher random variables
\begin{align}
\Ex_{\beps_i : i \in Q}\left[ \left\| \sum_{i \in Q} \beps_i \cdot x_i x_i^{\intercal} \right\|_2\right].  \label{eq:spect}
\end{align}
Importantly, there are only $Q$ of them. Thus, consider the $d \times r$ matrix $E$ whose columns form an orthonormal basis of $\Span(x_i : i \in Q)$. Furthermore, we can write each $x_i = E y_i$, where $y_i \in \R^{r}$, and crucially important is that $r \leq |Q|$. In this language, we have (\ref{eq:spect}) is exactly equivalent to:
\begin{align*}
\Ex_{\beps_i : i \in Q}\left[ \left\| \sum_{i\in Q} \beps_i \cdot (E y_i) (E y_i)^{\intercal}\right\|_2 \right] = \Ex_{\beps_i : i \in Q}\left[ \left\| E \left(\sum_{i\in Q} \beps_i \cdot y_i y_i^{\intercal}\right) E^{\intercal}\right\|_2 \right] = \Ex_{\beps_i : i \in Q}\left[ \left\| \sum_{i\in Q} \beps_i \cdot y_i y_i^{\intercal} \right\|_2 \right] ,
\end{align*}
since $E$ is orthonormal, and hence does not affect the spectral norm. We now apply the matrix Rademacher series bound (Theorem~1.2 in~\cite{T12}), and bound for any $u \geq 0$,
\begin{align*}
\Ex_{\beps_i : i \in Q} \left[ \left\| \sum_{i \in Q} \beps_i \cdot y_i y_i^{\intercal} \right\|_2 \right] &= \int_{t:0}^{\infty} \Prx_{\beps_i : i \in Q}\left[ \left\| \sum_{i\in Q} \beps_i \cdot y_i y_i^{\intercal} \right\|_2 \geq t \right] dt \leq u + \int_{t:u}^{\infty} r e^{-t^2 / (2\sigma^2)} dt,
\end{align*}
since probabilities are always at most $1$ (which allows us to add $u$ up to beginning the integral at $u$), and we have $\sigma^2$ given by
\begin{align*}
\sigma^2 = \left\| \sum_{i \in Q} \| y_i\|_2^2 \cdot y_i y_i^{\intercal}  \right\|_2 \leq \max_{i \in Q} \| y_i \|_2^2 \cdot \left\| \sum_{i \in Q} y_iy_i^{\intercal} \right\|_2 = \max_{i \in Q} \| x_i\|_2^2 \cdot \left\| \sum_{i \in Q} x_i x_i^{\intercal} \right\|_2.
 \end{align*}
Note, in the above, the first inequality says that if we add up positive semi-definite matrices with positive scalar multiples (namely, the p.s.d matrices are $y_iy_i^{\intercal}$ and their positive scalar multiples $\|y_i\|_2^2$), their spectral norm is at most the sum of these p.s.d matrices times the largest scalar multiple. The inequality can be seen immediately from considering the Lowner order on positive semi-definite matrices, which says $A \preceq B$ if $B - A$ is p.s.d; then, we have $B + \sum A_i \preceq (1+\eps) B + \sum A_i$, and the spectral norm is monotone in the Lowner order. 

We will upper bound the integral by summing over all intervals of the form $[u + (j-1) \sigma, u + j \sigma]$ over all $j \in \N$, and we obtain the following simplification of the integral:
 \begin{align*}
 \Ex_{\beps_i : i \in Q} \left[ \left\| \sum_{i \in Q} \beps_i \cdot x_i x_i^{\intercal} \right\|_2 \right] &\leq u + r \int_{t:u}^{\infty} \exp\left( \frac{-t^2}{2\sigma^2} \right) dt \leq u + r \sigma \sum_{j=1}^{\infty} \exp\left( -\frac{(u + (j-1) \sigma)^2}{2\sigma^2} \right) \\
 			&\leq u + r \sigma \cdot \exp\left(-\frac{u^2}{2\sigma^2}\right) \sum_{j=1}^{\infty} \exp\left(-(j-1)^2 \right) \lsim \sigma \sqrt{\log r},
 \end{align*}
 by setting $u = O(\sigma \sqrt{\log(r)})$. 
 
 \paragraph{The Final Trick of~\cite{RV07}.} We are now in good shape, since may consider back the random choice of $\bdelta_1, \dots, \bdelta_n$, and obtain the upper bound:
 \begin{align*}
 \Ex_{\bdelta_1,\dots, \bdelta_n}\left[ \left\| \sum_{i=1}^n \bdelta_i \cdot x_i x_i^{\intercal} \right\|_2 \right] - \delta \| A A^{\intercal} \|_2 &\lsim  \Ex_{\bdelta_1,\dots, \bdelta_n}\left[ \sqrt{\log |\bQ|} \cdot \max_{i \in \bQ} \| x_i\|_2 \cdot \left\| \sum_{i=1}^n \bdelta_i \cdot x_i x_i^{\intercal}  \right\|_2^{1/2} \right] \\
 	&\lsim \left( \Ex_{\bdelta_1,\dots, \bdelta_n}\left[\log |\bQ| \cdot \max_{i \in \bQ} \|x_i \|_2^2 \right]\right)^{1/2}\left( \Ex_{\bdelta_1,\dots, \bdelta_n}\left[ \left\|\sum_{i=1}^n \bdelta_i \cdot x_i x_i^{\intercal} \right\|_2 \right] \right)^{1/2},
 \end{align*}
where the second inequality is an application of Cauchy-Schwarz. As~\cite{RV07} write, we have given an expression of the form $E \lsim a + \sqrt{b E}$, which can only happen if $E$ is bounded (otherwise, the \emph{linear-in-}$E$ term on the left-hand side would grow faster than the \emph{square-root-in-}$E$ term on the right-hand side. This can only happen if $E \lsim a + b^2$, so we obtain:
 \begin{align*}
 \Ex_{\bdelta_1,\dots, \bdelta_n}\left[  \left\| \sum_{i=1}^n \bdelta_i \cdot x_i x_i^{\intercal} \right\|_2\right] \lsim \delta \| A A^{\intercal} \|_2 + \Ex_{\bdelta_1,\dots, \bdelta_n}\left[ \log |\bQ| \cdot \max_{i \in \bQ} \| x_i\|_2^2 \right].
 \end{align*}

Now given Theorem \ref{thm:rv}, we can analyze the case of random principle sub-matrices following~\cite{Tro08}.
\begin{theorem}[Spectral Norm of Random Principle Submatrices~\cite{Tro08}]\label{thm:tropp}
Let $A$ be a matrix and $\bQ \subset [n]$ be a random subset which includes each element i.i.d with probability $\delta$. Then,
\begin{align*}
\Ex_{\bQ}\left[ \left\| \bQ A \bQ \right\|_2 \right] \lsim \delta \| A\|_2 + \sqrt{\log(n\delta)\delta} \cdot \| A \|_{1\to2} + \log(n\delta) \cdot \| A \|_{\infty}.
\end{align*}
\end{theorem}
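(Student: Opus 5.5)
We reduce Theorem~\ref{thm:tropp} to Theorem~\ref{thm:rv} by decoupling and then applying the spectral norm decay bound twice, once on each side. We write $\bQ$ both for the random subset and for the diagonal $0/1$ selector matrix. The first step is to split $A = D + A_{\rm off}$, with $D$ the diagonal part. The diagonal contributes $\|\bQ D \bQ\|_2 \leq \|A\|_\infty$, which is absorbed by the last term. For the off-diagonal part we use the standard decoupling inequality for random principal submatrices (as in~\cite{Tro08}): with an independent copy $\bQ'$,
\[ \Ex\|\bQ A_{\rm off}\bQ\|_2 \lsim \Ex\|\bQ A_{\rm off} \bQ'\|_2 .\]
We expect to prove this by the usual trick of writing $\delta_i$ via a random partition into two halves, and restricting the off-diagonal sum to pairs $(i,j)$ split across the partition. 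Jensen then recovers the full off-diagonal sum up to a factor of $4$. After decoupling, the two selectors act independently, one on the rows and one on the columns.

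The second step conditions on $\bQ$ and applies Theorem~\ref{thm:rv} to the matrix $\bQ A$ with the column selector $\bQ'$:
\[ \Ex_{\bQ'}\|\bQ A\bQ'\|_2 \lsim \left(\delta\|\bQ A\|_2^2 + \Ex_{\bQ'}\left[\log|\bQ'|\cdot \|\bQ A\bQ'\|_{1\to2}^2\right]\right)^{1/2} \leq \sqrt{\delta}\,\|\bQ A\|_2 + \sqrt{\Ex\log|\bQ'|}\,\|\bQ A\|_{1\to 2}. \]
Here we dropped the column selection inside $\|\cdot\|_{1\to2}$, since removing columns only lowers the maximum column norm. The column norms of $\bQ A$ are the row-selected column norms. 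Since $\|\bQ A\|_2 = \|A^{\intercal}\bQ\|_2$, a second application of Theorem~\ref{thm:rv} gives $\Ex\|\bQ A\|_2 \lsim \sqrt{\delta}\|A\|_2 + \sqrt{\log(n\delta)}\,\|A^{\intercal}\|_{1\to2}$. For symmetric $A$ this last norm is $\|A\|_{1\to2}$. Multiplying by $\sqrt\delta$ produces the first two terms of the claim.

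The remaining piece, $\sqrt{\log(n\delta)}\,\Ex\|\bQ A\|_{1\to2}$, is where we expect the main difficulty. We need $\Ex\max_j\|\bQ A e_j\|_2 \lsim \sqrt{\delta}\|A\|_{1\to2} + \sqrt{\log(n\delta)}\|A\|_\infty$, but a naive union bound over all $n$ columns would reintroduce $\log n$. We would try the same trick as in Theorem~\ref{thm:rv}. Only columns in $\bQ'$ matter, so we should not drop $\bQ'$ in the previous step: we keep $\|\bQ A\bQ'\|_{1\to2}$, whose maximum ranges over only $|\bQ'| \approx n\delta$ columns. For each fixed column $j$, $\|\bQ Ae_j\|_2^2 = \sum_i \delta_i A_{ij}^2$ is a sum of independent terms, each bounded by $\|A\|_\infty^2$, with mean at most $\delta\|A\|_{1\to2}^2$. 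Bernstein's inequality together with a union bound over the roughly $n\delta$ selected columns then yields
\[ \Ex\|\bQ A\bQ'\|_{1\to2}^2 \lsim \delta\|A\|_{1\to2}^2 + \log(n\delta)\,\|A\|_\infty^2 . \]
Taking square roots and collecting all terms gives the stated bound. The $\sqrt{\log(n\delta)\delta}\,\|A\|_{1\to2}$ term appears twice, and the $\log(n\delta)\,\|A\|_\infty$ term comes from this final step. Throughout we use $\Ex\log|\bQ| \leq \log(n\delta)+O(1)$, which follows from Jensen's inequality.
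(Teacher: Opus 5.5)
Your proposal is correct and follows essentially the same route as the paper. You split off the diagonal, decouple, apply Theorem~\ref{thm:rv} conditionally on the row selector, and apply it again to $\sqrt{\delta}\,\Ex\|\bQ A\|_2$ via the transpose. You then bound the column-norm term while keeping the maximum restricted to the $|\bQ'|$ selected columns; your self-correction about not dropping $\bQ'$ is exactly what avoids the $\log n$.

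The only difference is in bounding $\Ex_{\bQ}\max_{j\in\bQ'}\sum_i \bdelta_i A_{ij}^2$. You use Bernstein's inequality plus a union bound, whereas the paper symmetrizes and uses the self-bounding trick $E\lsim a+\sqrt{bE}$. Both give $\delta\|A\|_{1\to2}^2+\log|\bQ'|\cdot\|A\|_\infty^2$.
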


As with the proof of Theorem~\ref{thm:rv}, we will instead consider $\bQ$ as an $n\times n$ diagonal matrix whose entries are i.i.d. Bernoulli r.v.'s e.g. a matrix where $\bQ_{i,i} = \bdelta_i \sim \Ber(\delta)$. The proof will proceed by first ``decoupling" the result into two i.i.d. sampling matrices, then by applying Theorem~\ref{thm:rv} twice.

\begin{proposition}[Decoupling Prop 2.1 in \cite{Tro08}]\label{prop:decoupling} Let $H$ be a symmetric matrix with zero-diagonal. Then for uniform diagonal sampling matrix $\bQ$,
\[\Ex_{\bQ}\left[ \left\|\bQ H \bQ \right\|_2 \right] \leq 2 \Ex_{\bQ,\bQ'}\left[ \left\| \bQ H \bQ' \right\|_2 \right] \]
where $\bQ$ and $\bQ'$ are independent and identically distributed. 
\end{proposition}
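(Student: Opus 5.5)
The plan is to prove Proposition~\ref{prop:decoupling} by the standard random-partition decoupling argument. Writing $\bQ = \diag(\bdelta_1,\dots,\bdelta_n)$, the zero-diagonal assumption gives
\[ \bQ H \bQ = \sum_{i \neq j} \bdelta_i \bdelta_j H_{ij}\, e_i e_j^{\intercal}. \]
I would introduce an auxiliary independent random partition of $[n]$. Let $\boldeta_1,\dots,\boldeta_n \sim \Ber(1/2)$ be i.i.d. and independent of $\bQ$, and let $\bP = \diag(\boldeta)$. For every $i \neq j$,
\[ \Ex_{\boldeta}\left[\boldeta_i(1-\boldeta_j) + \boldeta_j(1-\boldeta_i)\right] = 1/2, \]
and the diagonal terms never contribute. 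Setting $\bM = \bP \bQ H \bQ (I - \bP)$ and using the symmetry of $H$ (so that $\bM^{\intercal} = (I-\bP)\bQ H \bQ \bP$), this yields the exact identity
\[ \bQ H \bQ = 2\,\Ex_{\boldeta}\left[ \bM + \bM^{\intercal} \right]. \]

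The key steps, in order, are as follows. First, apply Jensen's inequality (convexity of the spectral norm) to get $\|\bQ H \bQ\|_2 \leq 2\,\Ex_{\boldeta}\|\bM + \bM^{\intercal}\|_2$. Second, observe that $\bM$ is supported on rows in $S = \{i : \boldeta_i = 1\}$ and columns in $\ol{S}$. Hence $\bM + \bM^{\intercal}$ is, up to permutation, the block matrix $\begin{pmatrix} 0 & B \\ B^{\intercal} & 0 \end{pmatrix}$, whose spectral norm equals $\|B\|_2 = \|\bM\|_2$. This step is exactly what saves the factor $2$ one would otherwise lose by bounding $\|\bM\|_2$ and $\|\bM^{\intercal}\|_2$ separately (which would give constant $4$).

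Third, fix $\boldeta$ and note that $\bM$ depends only on the variables $\{\bdelta_i : i \in S\}$ on the left and $\{\bdelta_j : j \in \ol{S}\}$ on the right. These are disjoint sets of independent Bernoullis, so
\[ \bM = \bP\bQ H \bQ(I-\bP) \quad \text{has the same distribution as} \quad \bP \bQ H \bQ' (I-\bP), \]
where $\bQ'$ is an independent copy of $\bQ$. Fourth, since multiplying by the coordinate projections $\bP$ and $I-\bP$ cannot increase the spectral norm, $\|\bP \bQ H \bQ' (I-\bP)\|_2 \leq \|\bQ H \bQ'\|_2$. Taking expectations over $\bQ$ and $\bQ'$ for each fixed $\boldeta$, and then over $\boldeta$ (via Fubini), gives $\Ex_{\bQ}\|\bQ H\bQ\|_2 \leq 2\,\Ex_{\bQ,\bQ'}\|\bQ H \bQ'\|_2$.

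The main point requiring care is the third step: the distributional equality is only valid after conditioning on $\boldeta$. The order of expectations (condition on $\boldeta$, swap in $\bQ'$, then average) must be handled explicitly so that the replacement of $\bQ$ by the independent copy $\bQ'$ on $\ol{S}$ is legitimate. The block-norm identity in the second step is the only other place where an exact constant matters.
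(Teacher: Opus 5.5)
Your proof is correct. It is the standard random-partition decoupling argument: mean-$1/2$ selectors $\boldeta$, Jensen, the observation that $\bM+\bM^{\intercal}$ is a permuted symmetric dilation of its $S\times\ol{S}$ block so that $\|\bM+\bM^{\intercal}\|_2=\|\bM\|_2$, an independent copy swapped in on $\ol{S}$ after conditioning on $\boldeta$, and removal of the coordinate projections. This is essentially the proof of Prop.~2.1 in \cite{Tro08}, which the paper imports by citation rather than reproving; your averaging over $\boldeta$ via Tonelli is equivalent to Tropp's choice of a single good partition.
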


\ignore{\begin{proposition}[Recoupling Prop 2.2 in \cite{Tro08}]\label{prop:recoupling} Let $A$ be a symmetric matrix with zero-diagonal. Then for a uniform diagonal sampling matrix $\bQ$,
\[\Ex_{\bQ,\bQ'}\left[ \left\|\bQ A \bQ' \right\|_{\infty} \right] \leq 4 \Ex_{\bQ}\left[ \left\| \bQ A \bQ \right\|_{\infty} \right] \]
where $\bQ$ and $\bQ'$ are independent and identically distributed. 
\end{proposition}}

\textbf{Applying Rudelson-Vershynin Twice.} The proof begins by splitting up the matrix $A$ into the diagonal $D$ and off diagonal part $H$, then applying a ``decoupling" lemma that allows us to relate the sampled principle submatrix to a randomly sampled submatrix (where the rows and columns are sampled independently) which will allow us to apply Theorem~\ref{thm:rv} to the off-diagonal part of the matrix. We are now left with the expected spectral norm of the \emph{row-sampled} matrix and an expected column norm bound in terms of entire submatrix.
\begin{align}
    \Ex_{\bQ}\left[ \left\|\bQ A \bQ \right\|_2 \right] 
    &\leq \Ex_{\bQ}\left[ \left\| \bQ H \bQ \right\|_2 \right]\ +  \Ex_{\bQ}\left[ \left\| \bQ D \bQ \right\|_2 \right] \leq 2 \Ex_{\bQ'} \Ex_{\bQ}\left[ \left\| \bQ H \bQ' \right\|_2 \right] +  \|D\|_{\infty} \nonumber \\
    &\lsim \Ex_{\bQ}\left[ \left( \delta \|\bQ H(\bQ H)^{\intercal}\|_2 + \Ex_{\bQ'}\left[ \log|\bQ'| \cdot \| \bQ H \bQ' \|_{1\to 2}^2\right] \right)^{1/2} \right] + \|D\|_{\infty} \nonumber \\
    &\leq \sqrt{\delta} \Ex_{\bQ}\left[ \|\bQ H\|_2 \right] + \Ex_{\bQ}\left[ \left( \Ex_{\bQ'} \left[\log|\bQ'| \cdot \| \bQ H \bQ' \|_{1\to 2}^2\right]\right)^{1/2}\right]  + \|D\|_{\infty}. \label{eq:first-rv}
\end{align}
We will deal with each term of (\ref{eq:first-rv}) separately. We start with $ \Ex_{\bQ}\left[\| \bQ H\|_2\right]$. Using the invariance of the spectral norm under transpose, and the fact that $\bQ$ and $H$ are symmetric, we can apply Theorem~\ref{thm:rv}, and we obtain
\begin{align*}
\sqrt{\delta} \Ex_{\bQ}\left[\| \bQ H\|_2\right] = \sqrt{\delta} \Ex_{\bQ}\left[\| H \bQ\|_2 \right] \lsim  \delta \| H\|_2 + \sqrt{\delta\log(n\delta)} \| H \|_{1\to2},
\end{align*}
and note that by the triangle inequality, $\| H \|_{2} \leq \|A\|_2 + \|D \|_{2} = \| A \|_2 + \|A\|_{\infty}$.

\textbf{Bounding the Column Norm.} Now we turn to the second term of (\ref{eq:first-rv}). By Jensen's inequality and the definition of the column norm bound, 
\begin{align}
\Ex_{\bQ}\left[ \left( \Ex_{\bQ'} \left[\log|\bQ'| \cdot \| \bQ H \bQ' \|_{1\to 2}^2\right]\right)^{1/2}\right] 
&\leq \left(  \Ex_{\bQ'}\left[ \log|\bQ'| \Ex_{\bdelta_1,\dots,\bdelta_n} \left[\max_{i \in \bQ'} \sum_{j=1}^n \bdelta_j \left (H_{j,i} \right )^2 \right] \right] \right)^{1/2}. \label{eq:val-2}
\end{align}
For any fixed $\bQ'$, an analogous symmetrization argument gives the following expression:
\begin{align*}
    \Ex_{\bdelta_1, \dots, \bdelta_n}\left[ \max_{i \in \bQ'} \sum_{j=1}^n \bdelta_j H_{j,i}^2 \right] &\leq \delta \| H \|_{1 \to 2}^2 + 2 \Ex_{\substack{\bdelta_1,\dots, \bdelta_n \\ \beps_1,\dots, \beps_n}}\left[ \max_{i \in \bQ'} \sum_{j=1}^n \beps_j \bdelta_j H_{j,i}^2 \right] \\ &\leq \delta \| H\|_{1\to 2}^2 + 2\sqrt{\log|\bQ'|} \cdot \| H \|_{\infty} \left(\Ex_{\bdelta_1,\dots, \bdelta_n}\left[ \max_{i \in \bQ'} \sum_{j=1}^n \bdelta_j H_{j,i}^2\right] \right)^{1/2},
\end{align*}
which again implies (via the same ``square-root-$E$'' upper bounds ``linear-in-$E$'' trick),
\[ \Ex_{\bdelta_1,\dots, \bdelta_n}\left[ \max_{i \in \bQ'} \sum_{j=1}^n \bdelta_j H_{j,i}^2 \right] \lsim \delta \| H\|_{1\to 2}^2 + \log|\bQ'| \cdot \| H \|_{\infty}^2. \]
Substituting into (\ref{eq:val-2}), the left-hand side of (\ref{eq:val-2}), and using Jensen's inequality to take expectation over $|\bQ'|$, we obtain a bound of (up to a constant factor)
\begin{align*}
\sqrt{\log(n\delta) \cdot \delta} \cdot \| H \|_{1\to 2} + \log(n\delta) \cdot \| H \|_{\infty}.
\end{align*}
Finally, note $\|H\|_{1\to 2} \leq \|A \|_{1\to2}$, and $\| H \|_{\infty} \leq \|A\|_{\infty}$.

\ignore{
\textbf{Symmetrization.} We consider a fixed setting of $\bQ'$ (and henceforth unbold the random variable to consider a set $Q'$). We seek to upper bound:
\begin{align*}
    \Ex_{\bdelta_1, \dots, \bdelta_n}\left[ \max_{i \in Q'} \sum_{j=1}^n \bdelta_j H_{j,i}^2 \right] &\leq \Ex_{\substack{\bdelta_1, \dots, \bdelta_n \\ \bdelta_1', \dots, \bdelta_n' \\ \beps_1,\dots, \beps_n}}\left[ \max_{i \in Q'} \sum_{j=1}^n \beps_j\left(\bdelta_j - \bdelta_j'\right) H_{j,i}^2 \right] + \delta \| H \|_{1\to 2}^2  \\
    &\leq 2 \Ex_{\substack{\bdelta_1, \dots, \bdelta_n \\ \beps_1,\dots, \beps_n}}\left[ \max_{i \in Q'} \sum_{j=1}^n \beps_j\bdelta_j H_{j,i}^2 \right] + \delta \| H \|_{1\to 2}^2 .
\end{align*}
Consider a fixed setting of $\bdelta_1, \dots, \bdelta_n$ (and hence unbold these), and we can upper bound using Bernstein's inequality,
\begin{align*}
\Ex_{\beps_1, \dots, \beps_n}\left[ \max_{i \in Q'} \sum_{j=1}^n \beps_j \delta_j H_{j,i}^2 \right] &\leq \int_{t:0}^{\infty} \sum_{i \in Q'} \Prx_{\beps_1, \dots, \beps_n}\left[\sum_{j=1}^n \beps_j \delta_j H_{j,i}^2 \geq t\right] dt \\
        &\leq \int_{t:0}^{\infty} \sum_{i \in Q'}\exp\left(-\dfrac{t^2/2}{ \sum_{j=1}^n \delta_j H_{j,i}^4 + \| H\|_{\infty} t/3 }\right) dt \\
        &\lsim \log|Q'| \cdot \| H \|_{\infty} + \sqrt{\log|Q'|} \cdot \max_{i \in Q'} \sum_{j=}   
\end{align*}

Now we perform the symmetrization argument, where we we first center the random variables, then use the equivalence of the expectation to insert i.i.d. copies of the random variables, before using the fact that the random variables have a new distribution in order to insert i.i.d. uniform $\{-1,1\}$ random variables. The last step just applies triangle inequality and removed the i.i.d. copies of $\bdelta_1,\dots,\bdelta_n$.
\begin{align*}
     &\Ex_{\bQ'} \left[\log|\bQ'| \cdot \Ex_{\bdelta_1,\dots,\bdelta_n} \left[ \max_{i \in \bQ'} \sum_j \bdelta_j \left (H_{j,i} \right )^2 \right] \right] \\ 
     &\qquad\qquad\qquad= \Ex_{\bdelta_1,\dots,\bdelta_n} \Ex_{\bQ'} \left[\log|\bQ'| \cdot  \max_{i \in \bQ'} \sum_j (\bdelta_j -\delta) \left (H_{j,i} \right )^2 \right] + \delta \Ex_{\bQ'} \left[\log{|\bQ'|}\cdot\|H\bQ'\|_{1\rightarrow 2}^2\right]\\
     &\qquad\qquad\qquad\leq \Ex_{\substack{\bdelta_1, \dots, \bdelta_n \\ \bdelta_1', \dots, \bdelta_n' \\ \beps_1,\dots,\beps_n}} \Ex_{\bQ'} \left[\log|\bQ'| \cdot  \max_{i \in \bQ'} \sum_j \beps_j(\bdelta_j -\bdelta_j') \left (H_{j,i} \right )^2 \right] + \delta\Ex_{\bQ'}\left[\log{|\bQ'|}\cdot\|H\bQ'\|_{1\rightarrow 2}^2\right]\\
     &\qquad\qquad\qquad\leq 2 \Ex_{\bQ'} \left[\log|\bQ'| \Ex_{\substack{\bdelta_1, \dots, \bdelta_n \\ \beps_1,\dots,\beps_n}}\left[  \max_{i \in \bQ'} \sum_j \beps_j \bdelta_j \left (H_{j,i} \right )^2 \right] \right]+ \delta\Ex_{\bQ'} \left[\log{|\bQ'|}\cdot\|H\bQ'\|_{1\rightarrow 2}^2 \right]\\
\end{align*}
We ignore the rightmost term and instead turn to the leftmost term. Consider any fixed setting of $\bQ'$ (call it $Q'$, since it is no longer a random variable). We first use an $\ell_s$-norm (for appropriately chosen $s$) instead of the ``max'', then applying Jensen's inequality, and Khintchine's inequality. 
\begin{align*}
    \Ex_{\substack{\bdelta_1, \dots, \bdelta_n \\ \beps_1,\dots,\beps_n}} \left[\max_{i \in Q'} \sum_j \beps_j \bdelta_j \left (H_{j,i} \right )^2 \right] 
    &\leq  \Ex_{\substack{\bdelta_1, \dots, \bdelta_n \\ \beps_1,\dots,\beps_n}} \left[ \left( \sum_{i \in Q'} \left( \sum_j \beps_j \bdelta_j \left (H_{j,i} \right )^2 \right)^s\right)^{1/s} \right] \\
    &\leq  \Ex_{\substack{\bdelta_1, \dots, \bdelta_n}} \left[\left( \sum_{i \in Q'} \Ex_{\beps_1,\dots,\beps_n} \left [ \left( \sum_j \beps_j \bdelta_j \left (H_{j,i} \right )^2 \right)^s \right]\right)^{1/s} \right]  \\
    &\leq |Q'|^{1/s} \Ex_{\substack{\bdelta_1, \dots, \bdelta_n}} \left[\max_{i \in Q'} \left\{ \Ex_{\beps_1,\dots,\beps_n} \left [ \left( \sum_j \beps_j \bdelta_j \left (H_{j,i} \right )^2 \right)^s  \right]^{1/s} \right\} \right]  \\
    &\lsim \sqrt{s} |Q'|^{1/s}  \Ex_{\substack{\bdelta_1, \dots, \bdelta_n}} \left[  \max_{i \in Q'} \left\{ \left( \sum_j \bdelta_j \left (H_{j,i} \right )^4  \right)^{1/2} \right\} \right].
\end{align*}
Now we set $p = \log(|Q'|)$ and bound $\Ex_{\beps_1, \dots, \beps_n} \max_{i \in Q'} \left( \sum_j \bdelta_j \left (H_{j,i} \right )^4  \right)^{1/2}$ by the entry-wise max in the sub-array we are considering times the maximum column norm. We will also release the randomness of $\bQ'$.

\begin{align*}
    \Ex_{\substack{\bdelta_1, \dots, \bdelta_n}} \left[ \log^2|Q'| \cdot \max_{i \in Q'} \left( \sum_j \bdelta_j \left (H_{j,i} \right )^4  \right)^{1/2} \right] 
    &\leq \Ex_{\substack{\bdelta_1, \dots, \bdelta_n}} \left[ \log^2|Q'| \cdot \max_{i \in Q'} \left( \sum_j \bdelta_j \max_{k,l}(H_{k,l})^2 H_{j,i}^2  \right)^{1/2} \right] \\
    &\leq \Ex_{\substack{\bdelta_1, \dots, \bdelta_n}} \left[ \log^2|Q'| \left ( \max_{\substack{i \in Q'\\ j \in \bQ}} |H_{j,i}| \right  )\cdot \max_{i \in Q'} \left( \sum_j \bdelta_j H_{j,i}^2  \right)^{1/2} \right] \\
    &\leq \Ex_{\bQ,\bQ'} \left[ \log^2|\bQ'| \cdot \|\bQ H\bQ'\|_{\infty} \cdot  \max_{i \in \bQ'} \|\bQ \cdot x_i\| \right] 
\end{align*}
Now we re-introduce the leftmost term and observe that we have found an inequality of the form $E^2 \leq \alpha \cdot E + \beta$, which must mean that $E$ is bounded by at most $\alpha + \sqrt{\beta}$.  Therefore we can bound the terms by 
\begin{align*}
    \Ex_{\bQ}\left (\Ex_{\bQ'} \log|\bQ'| \cdot \left ( \max_{i \in \bQ'} \|\bQ \cdot x_i\|\right )^2 \right )^{1/2}&\leq  \Ex_{\bQ,\bQ'} \left[ \log^2|\bQ'| \cdot \|\bQ H\bQ'\|_{\infty} \right ] + \Ex_{\bQ'} \sqrt{\delta\log{|\bQ'|}}\cdot\|H\bQ'\|_{1\rightarrow 2}
  \end{align*}
Now we can simplify further to bound these terms by 
\[\log^2(n\delta)\|H\|_{\infty} + \sqrt{\delta\log(n\delta)} \| A \|_{(1/\delta)}\]
concluding the proof. }

\section{Negative Eigenvalue Interference}\label{app:negative-interfere}

The goal of this section is to point the reader to exactly where the ``negative eigenvalue interference'' shows up, and we will do by giving a concrete example where the analysis leads to a complexity degradation. Let $u, v$ be two unit vectors with $\langle u, v \rangle = 0$, and consider the matrix $A = \lambda_{\max} u u^{\intercal} - \lambda_{\min} v v^{\intercal}$, where we will later set $\lambda_{\min} = \Theta(n)$ and $\lambda_{\max} = \Theta(\eps n)$. Note that the algorithm outputs $\bx = A \bT^{\intercal} \by / \hat{\blambda}_{\max}$, where $\by$ is the top eigenvector of $\bT A \bT^{\intercal}$. 

First, we note that in this rank-$2$ case, we may compute the Rayleigh quotient of $\bx$ with $A$ as a function of how $\bT$ acts on $u$ and $v$. Let $\tilde{\bu} = \bT u$ and $\tilde{\bv} = \bT v$ and $\delta = \langle \tilde{\bu}, \tilde{\bv}\rangle$. Note that $\bT A \bT^{\intercal} = \lambda_{\max} \tilde{\bu} \tilde{\bu}^{\intercal} + \lambda_{\min} \tilde{\bv} \tilde{\bv}^{\intercal}$, and write the vector $\by = \alpha (\tilde{\bu} + c \tilde{\bv})$, for some scale factor $\alpha$. Note, the fact $\by$ is an eigenvector means that it satisfies the following eigenvalue equation:
\begin{align*}
\bT A \bT^{\intercal} \by &= \alpha \lambda_{\max} (\|\tilde{\bu}\|_2^2 + c \delta) \cdot \tilde{\bu} - \alpha \lambda_{\min} (c\|\tilde{\bv}\|_2^2 + \delta) \cdot\tilde{\bv}  \\
	&= \hat{\blambda}_{\max} \by = \hat{\blambda}_{\max} \alpha \cdot \tilde{\bu} + \hat{\blambda}_{\max}  \alpha c \cdot \tilde{\bv}.
\end{align*}
As long as $\tilde{\bu}$ and $\tilde{\bv}$ span a two-dimensional subspace, the above implies that the individual coefficients on $\tilde{\bu}$ and $\tilde{\bv}$ must be equal, which implies
\begin{align*}
\lambda_{\max} \cdot \|\tilde{\bu}\|_2^2 + \lambda_{\max} \cdot c\delta &= \hat{\blambda}_{\max} \\
-\lambda_{\min} \cdot c \| \tilde{\bv}\|_2^2 - \lambda_{\min} \cdot \delta &= \hat{\blambda}_{\max} \cdot c,
\end{align*}
so the parameter $c$ must satisfy the following quadratic equation,
\begin{align*}
c \left( \lambda_{\max} \| \tilde{\bu}\|_2^2 + \lambda_{\min} \| \tilde{\bv}\|_2^2 \right) + \lambda_{\max} \cdot c^2 \delta = - \lambda_{\min} \cdot \delta,
\end{align*}
and this means that 
\begin{align*}
c = - \dfrac{1}{2\lambda_{\max} \delta} \left( (\lambda_{\max} \| \tilde{\bu}\|_2^2 + \lambda_{\min} \|\tilde{\bv}\|_2^2) + \sigma \left( (\lambda_{\max} \| \tilde{\bu}\|_2^2 + \lambda_{\min} \|\tilde{\bv}\|_2^2)^2 - 4 \lambda_{\max} \lambda_{\min} \delta^2 \right)^{1/2} \right),
\end{align*}
for a sign $\sigma \in \{-1,1\}$. As long as $\delta$ is smaller than a sufficiently small constant, $\hat{\blambda}_{\max} \geq \lambda_{\max} - \eps n$, and $\|\tilde{\bu}\|_2^2$ is close to $1$, it must be that $\sigma = - 1$; otherwise, $c$ becomes $-\Theta((\lambda_{\max}\|\tilde{\bu}\|_2^2 + \lambda_{\min}\|\tilde{\bv}\|_2^2) / (2 \lambda_{\max} \delta))$ and $\hat{\blambda}_{\max}$ would become too small (effectively, this setting leads to the negative eigenvalue close to $\lambda_{\min}$). Thus, taking a Taylor expansion of the inner-most square-root and using the fact $\delta$ is a sufficiently small constant, we obtain the following approximation for $c$, 
\begin{align*}
c &= - \frac{1}{2\lambda_{\max} \delta} \left( \dfrac{4 \lambda_{\max} \lambda_{\min} \delta^2}{2 (\lambda_{\max}\|\tilde{\bu}\|_2^2 + \lambda_{\min}\|\tilde{\bv}\|_2^2)} \right) \pm \Theta\left( \frac{(\lambda_{\max} \lambda_{\min} \delta^2)^2}{\lambda_{\max}\delta (\lambda_{\max}\|\tilde{\bu}\|_2^2 + \lambda_{\min}\|\tilde{\bv}\|_2^2)^3}\right) \\
	&= - \frac{\lambda_{\min} \delta}{(\lambda_{\max}\|\tilde{\bu}\|_2^2 + \lambda_{\min}\|\tilde{\bv}\|_2^2)} \pm \Theta\left( \dfrac{\lambda_{\max} \delta^3}{\lambda_{\min}}\right),
\end{align*}
where we further used the fact that $\| \tilde{\bv}\|_2^2$ will be close to $1$ and $\lambda_{\min} \geq \lambda_{\max}$. Note, it remains to understand what the output Rayleigh quotient will be with the vector $\bx$, which can be expressed by
\begin{align*}
\dfrac{\langle \bx,A \bx \rangle}{\|\bx\|_2^2} &= \lambda_{\max} \cdot (1 - \chi) - \lambda_{\min} \cdot \chi = \lambda_{\max} - \chi \left(\lambda_{\min} - \lambda_{\max} \right),
\end{align*}
where represents the ratio of squared-$\ell_2$-mass on coordinates of $v$ in $\bx$, so $\chi = \langle \bx, v\rangle^2 / (\langle \bx, u\rangle^2 + \langle \bx, v\rangle^2)$ where the inner products are:
\begin{align*}
\langle \bx, v\rangle &= \frac{1}{\hat{\blambda}_{\max}} \cdot \langle A \bT^{\intercal} \by, v \rangle = \frac{\lambda_{\min}}{\hat{\blambda}_{\max}} \cdot \langle \by, \bT v\rangle = \frac{\alpha \cdot \lambda_{\min}}{\hat{\blambda}_{\max}} \cdot (\delta + c \|\tilde{\bv}\|_2^2), \\
\langle \bx, u\rangle &= \frac{1}{\hat{\blambda}_{\max}} \cdot \langle A \bT^{\intercal} \by, u\rangle = \frac{\lambda_{\max}}{\hat{\blambda}_{\max}} \cdot \langle \by, \bT u \rangle = \frac{\alpha \cdot \lambda_{\max}}{\hat{\blambda}_{\max}} \cdot (\|\tilde{\bu}\|_2^2 + c\delta).
\end{align*}
Notice, we must certainly have $\langle \bx, v \rangle \leq \langle \bx, u \rangle$, since otherwise, we'd already be done as the Rayleigh quotient would be negative when $\lambda_{\min} \geq \lambda_{\max}$. Hence, the ratio $\chi$ will be, up to a constant factor, $\langle \bx, v\rangle^2 / \langle \bx, u\rangle^2$. Note, $\hat{\blambda}_{\max}$ and $\alpha$ cancel out, and plugging in $c$ gives us:
\begin{align*}
\dfrac{\langle \bx, v\rangle^2}{\langle \bx, u \rangle^2} \gsim  \dfrac{\lambda_{\min}^2 \cdot \delta^2 (\lambda_{\max}\|\tilde{\bu}\|_2^2)^2 / (\lambda_{\min}\|\tilde{\bv}\|_2^2)^2}{\lambda_{\max}^2 (\|\tilde{\bu}\|_2^2 + c\delta)^2} \gsim \delta^2.
\end{align*}
Recall $\delta = \langle \bT u, \bT v\rangle$, and here is where one can see the weakness in our analysis. The fact that $\bT$ is an $L / \lambda$-subspace embedding for the span of eigenvectors of magnitude at least $\lambda$, means we guarantee an upper bound $|\delta| \leq L / \lambda_{\max}$, but there our analysis cannot rule out $|\delta| = \Theta(L/\lambda_{\max})$, and this leads to a degradation. Namely, the Rayleigh quotient on $\bx$ would become
\[ \lambda_{\max} - \Omega(\delta^2) \cdot \lambda_{\min}.\]
If $\lambda_{\max} = \Theta(\eps n)$ and $\lambda_{\min} = \Theta(n)$, we'd need to set $L = \eps^{3/2} n$ for the error to be smaller than $\eps n$. This leads to query complexity $1/\eps^6$ during the preprocessing and $1/\eps^3$ during the query time. 

\bibliography{waingarten}
\bibliographystyle{alpha}

\end{document}